\documentclass[11pt]{article}

\usepackage{bm, commath}
\usepackage{natbib}
\usepackage{caption}
\usepackage{subcaption}
\usepackage{graphicx}
\usepackage{amsmath, amsfonts, amsthm}
\usepackage{float}
\usepackage{booktabs,siunitx}
\usepackage{url}
\usepackage{multirow}
\usepackage{amssymb}
\usepackage{hyperref}
\usepackage{blkarray}
\usepackage{bbm}  
\usepackage{multicol}
\usepackage{authblk}
\usepackage[shortlabels]{enumitem}
\usepackage[flushleft]{threeparttable}
\usepackage{enumitem}
\usepackage{thmtools}

\makeatletter

\newcommand{\Rmnum}[1]{\expandafter\@slowromancap\romannumeral #1@}
\makeatother

\renewcommand\thmcontinues[1]{Continued}

\usepackage{listings}
\usepackage{bbm}
\usepackage{textcomp}
\usepackage[margin=1in]{geometry}
\usepackage{authblk}
\usepackage[ruled]{algorithm2e}
\SetKwInput{KwParam}{Parameter}
\SetAlgoCaptionLayout{centerline}

\usepackage{sectsty}
\usepackage[doublespacing]{setspace}
\usepackage[utf8]{inputenc}
\usepackage{float}
\usepackage{tikz}
\usetikzlibrary{shapes,decorations,arrows,calc,arrows.meta,fit,positioning}

\usepackage[textsize=scriptsize, textwidth = 2cm, shadow]{todonotes}

\newcommand{\indep}{\mathrel{\perp\mspace{-10mu}\perp}}

\newtheorem{theorem}{Theorem}
\newtheorem{definition}{Definition}

\newtheorem{remark}{Remark}
\newtheorem{example}{Example}
\newtheorem{assumption}{Assumption}
\newtheorem*{assumption*}{Assumption}

\newtheorem{proposition}{Proposition}

\newtheorem{lemma}{Lemma}

\newenvironment{examplecont}[1]
  {\begin{trivlist}
   \item[\hskip \labelsep {\bfseries Example~\ref{#1} {\itshape (continued).}}]
   \itshape}
  {\end{trivlist}}

\usepackage{mathtools}

\usepackage{xcolor}

\makeatletter

\newcommand*\if@single[3]{%
  \setbox0\hbox{${\mathaccent"0362{#1}}^H$}%
  \setbox2\hbox{${\mathaccent"0362{\kern0pt#1}}^H$}%
  \ifdim\ht0=\ht2 #3\else #2\fi
  }
\newcommand*\rel@kern[1]{\kern#1\dimexpr\macc@kerna}
\newcommand*\widebar[1]{\@ifnextchar^{{\wide@bar{#1}{0}}}{\wide@bar{#1}{1}}}
\newcommand*\wide@bar[2]{\if@single{#1}{\wide@bar@{#1}{#2}{1}}{\wide@bar@{#1}{#2}{2}}}
\newcommand*\wide@bar@[3]{%
  \begingroup
  \def\mathaccent##1##2{%
    \let\mathaccent\save@mathaccent
    \if#32 \let\macc@nucleus\first@char \fi
    \setbox\z@\hbox{$\macc@style{\macc@nucleus}_{}$}%
    \setbox\tw@\hbox{$\macc@style{\macc@nucleus}{}_{}$}%
    \dimen@\wd\tw@
    \advance\dimen@-\wd\z@
    \divide\dimen@ 3
    \@tempdima\wd\tw@
    \advance\@tempdima-\scriptspace
    \divide\@tempdima 10
    \advance\dimen@-\@tempdima
    \ifdim\dimen@>\z@ \dimen@0pt\fi
    \rel@kern{0.6}\kern-\dimen@
    \if#31
      \overline{\rel@kern{-0.6}\kern\dimen@\macc@nucleus\rel@kern{0.4}\kern\dimen@}%
      \advance\dimen@0.4\dimexpr\macc@kerna
      \let\final@kern#2%
      \ifdim\dimen@<\z@ \let\final@kern1\fi
      \if\final@kern1 \kern-\dimen@\fi
    \else
      \overline{\rel@kern{-0.6}\kern\dimen@#1}%
    \fi
  }%
  \macc@depth\@ne
  \let\math@bgroup\@empty \let\math@egroup\macc@set@skewchar
  \mathsurround\z@ \frozen@everymath{\mathgroup\macc@group\relax}%
  \macc@set@skewchar\relax
  \let\mathaccentV\macc@nested@a
  \if#31
    \macc@nested@a\relax111{#1}%
  \else
    \def\gobble@till@marker##1\endmarker{}%
    \futurelet\first@char\gobble@till@marker#1\endmarker
    \ifcat\noexpand\first@char A\else
      \def\first@char{}%
    \fi
    \macc@nested@a\relax111{\first@char}%
  \fi
  \endgroup
}

\makeatother

\begin{document}

\sectionfont{\bfseries\large\sffamily}%

\subsectionfont{\bfseries\sffamily\normalsize}%

\title{Generalized projection tests for function-valued parameters with applications to testing structural causal assumptions}

\author[1]{Rui Wang}
\author[1]{Albert Osom}
\author[2]{Bo Zhang\thanks{Assistant Professor of Biostatistics, Vaccine and Infectious Disease Division, Fred Hutchinson Cancer Center. Email: {\tt bzhang3@fredhutch.org}.} }

\affil[1]{Department of Biostatistics, University of Washington}
%\affil[2]{Department of Biostatistics, University of Washington}
\affil[2]{Vaccine and Infectious Disease Division, Fred Hutchinson Cancer Center}

\date{}
\maketitle

\vspace{-0.8cm}

\noindent
\textsf{{\bf Abstract}: Structural assumptions are central to the causal inference literature. In practice, it is often crucial to assess their validity or to test implications that follow from them. In many settings, such tests can be framed as evaluating whether a function-valued parameter equals zero. In this paper, we propose a class of generalized projection tests based on series estimators for function-valued parameters. We establish conditions under which the proposed tests are valid and illustrate their applicability through examples from the data fusion and instrumental variables literature. Our approach accommodates flexible machine learning methods for estimating nuisance parameters. In contrast to many existing approaches, the limiting distributions of the proposed test statistics are straightforward to compute under the null hypothesis. We apply our method to test the equality of conditional COVID-19 risk across vaccine arms in the COVID-19 Variant Immunologic Landscape (COVAIL) trial, and to test an instrumental variable compatibility condition in a study of the effect of family size on female labor supply.}%

\noindent
\textsf{{\bf Keywords}: Causal inference; Debiased machine learning;  Model specification test; Neyman orthogonality; Semiparametric inference; Series estimation}

\section{Introduction}\label{sec: intro}
\subsection{Testing structural causal assumptions; related work}
\label{subsec: intro test causal assumption}
Structural assumptions are ubiquitous in causal inference to relate the observed data to target causal parameters. For instance, alignment assumptions, which specify that certain components of the observed data likelihood align with corresponding components of the target potential outcomes distribution, are often required in the data fusion literature \citep{dahabreh2019generalizing,li2023efficient,He01102024}. In the principal stratification literature \citep{frangakis2002principal}, principal ignorability, which asserts that principal stratum membership is conditionally independent of the potential outcomes given observed covariates, is commonly invoked to generalize the effect from latent principal stratum to the entire population \citep{jo2009use, jiang2022multiply}.

Structural causal assumptions may have testable implications, often expressed in the form of $f_{P} =0$, where $f_{P}$ is a function-valued parameter taking the form of $f_{P} = \mathbb{E}_{P}[g(O;\eta_{P})\mid X=x]$. Here, $P$ is a data-generating distribution, $\eta_{P}$ denotes one or more, possibly function-valued, nuisance parameters, $O$ is a set of observed variables, and $X$ is a subset of $O$. For example, suppose we want to test $H_0: \mathbb{E}_{P}[Y(a)\mid S=1,X=x] = \mathbb{E}_{P}[Y(a)\mid S=0,X=x]$ in a data-fusion problem, where $Y(a)$ is the potential outcome under treatment assignment $a \in\{0,1\}$, $S$ is a binary indicator of data source, and $X$ is a set of baseline covariates. Under the no unmeasured confounding assumption, testing $H_0$ can be reduced to testing $\mathbb{E}_{P}[g(O;\eta_{P})\mid X=x] = 0$, where $g(O;\eta_{P})$ equals $\mathbb{E}_{P}[Y\mid S=1,X=x,A=a] - \mathbb{E}_{P}[Y\mid S=0,X=x,A=a]$, with $\eta_{P} = \{\mathbb{E}_{P}[Y\mid S=1,X=x,A=a],\mathbb{E}_{P}[Y\mid S=0,X=x,A=a]\}$ denoting nuisance functions. 

A common approach to testing a Euclidean-valued parameter is the Wald test, which involves: (1) obtaining a consistent estimate of the target parameter and its standard error, and (2) constructing a test statistic based on estimator's asymptotic distribution after proper scaling. However, applying such a framework to function-valued parameters could be challenging. First, estimating a function-valued parameter is intrinsically difficult, particularly when function's argument is high-dimensional, as this leads to a slower convergence rate. Second, even if a consistent estimator can be obtained, consistently estimating its covariance function is even more complex than in the finite-dimensional case. Third, in many modern causal inference problems, the target function-valued parameter depends on additional nuisance parameters that are themselves function-valued. The presence of these nuisance functions adds another layer of complexity.

To circumvent these challenges, a common strategy is to transform the function-valued parameter into a pathwise-differentiable parameter \citep{bickel1993efficient} and apply tools from the semiparametric theory. Researchers have proposed multiple ways to derive such a pathwise-differentiable parameter. For instance, \citet{wang2024nested} proposed projecting $f_{P}$ onto the linear space spanned by $X$, and constructing a Wald test based on the estimated best linear projection. Another common approach involves evaluating a one-dimensional summary of the discrepancy between $f_p$ and zero. \citet{luedtke2019omnibus} proposed using the maximum mean discrepancy (MMD). Because the first-order influence function of MMD is degenerate under the null hypothesis, they employed the second-order influence function to derive the asymptotic properties of the MMD-based test statistic. Later, \citet{williamson2023general} considered a more general class of discrepancy measures that include the $L^2(P)$ norm of $f_{P}$. 

Another related framework is based on observing that $f_{P} = 0$ implies $\int f_{P} g dP = 0$ for all $g$ in some function class $\mathcal{G}$. Since  $\phi (g):=\int f_{P}gdP$ is usually $n^{1/2}$-estimable, one can first estimate $\phi (g)$, denoted as $\widehat{\phi} (g)$, and then construct test statistics based on some summary statistic of $\{\widehat{\phi} (g),~g\in \mathcal{G}\}$. For instance, if one let the test statistic be the supremum of $\widehat{\phi} (g)$ over $g\in \mathcal{G}$, then the limiting distribution of the test statistic is the supremum of a Gaussian process indexed by $\mathcal{G}$; see, e.g., \citet{hudson2021inference,westling2022nonparametric, jin2024class, dukes2024nonparametric}. One limitation of these approaches is that estimating the quantiles of the resulting test statistic often requires multiplier bootstrap-type procedures, which are computationally expensive, especially in high-dimensional or complex settings \citep{hudson2021inference,westling2022nonparametric, jin2024class}. 

More recently, \citet{luedtke2024one} developed a unified theory for estimation and hypothesis testing for Hilbert-valued parameters. Their proposed test statistic \citep[Equation 30]{luedtke2024one}, when specialized to the problem studied in this article, corresponds closely to an uncentered and unscaled version of our proposed statistic. However, their approach relies on bootstrap methods to approximate the distribution of the test statistic, whereas we explicitly derive and characterize the limiting distribution after properly centering and scaling the test statistic. In addition, we provide a more refined characterization of the asymptotic behavior of the test statistic.

The problem studied in this article is also closely related to the literature on model specification test, which traditionally focuses on assessing whether the conditional mean function lies within a specified parametric or semiparametric model; see, for example, \citet{hong1995consistent} and \citet{fan1996consistent}. More recently, \citet{breunig2015goodness} developed a series-estimator-based goodness-of-fit test for instrument validity within a nonparametric instrumental variable regression framework.

\subsection{Our contribution}
\label{subsec: intro contribution}
In this paper, we develop a \emph{generalized projection test} for function-valued parameters. We establish uniform control of type I error, demonstrate consistency against local alternatives, and illustrate its broad applicability by applying the proposed method to two problems in causal inference. 

The proposed test can be seen as an infinite-dimensional extension of \citet{wang2024nested}, who project $f_P$ onto a space using least squares. In contrast, we project $f_P$ onto a space whose number of basis functions grows with the sample size. This approach leads to substantial power gain under certain alternatives, particularly when $f_P$ lies in the null space of the least squares projection operator in \citet{wang2024nested}.

Our proposed test is also motivated by and extends the model specification tests proposed by \citet{hong1995consistent} and \citet{breunig2015goodness}. Compared to \citet{hong1995consistent} and \citet{breunig2015goodness}, our proposed test accommodates arbitrary function-valued parameters and can also be applied to the parametric conditional mean model specification considered in \citet{hong1995consistent} and the  instrumental variable model specification problem in \citet{breunig2015goodness}. Perhaps more importantly, our proposed test accommodates using modern machine learning tools to flexibly estimate nuisance functions in the target function-valued parameter, aligning with a growing body of work in causal inference that leverages such techniques; see, for example, \citet{2018ChernozhukovDML}, \citet{van2018targeted}, \citet{kennedy2024semiparametric}, and \citet{luedtke2024one}, among others.

Our methodology also relates to recent work on estimating function-valued causal parameters \citep{kennedy2023towards,yang2023forster,semenova2021debiased,luedtke2024one}. While this literature focuses on estimation, we instead address hypothesis testing. Under the null hypothesis, the function-valued parameter is known a priori, which provides additional structure that our approach exploits for inference. As a result, we can construct valid tests without requiring consistent estimation of the target function or the construction of valid confidence bands.

Finally, relative to approaches by \citet{westling2022nonparametric}, \citet{jin2024class}, \citet{dukes2024nonparametric}, and \citet{luedtke2024one}, our proposed test features a simple, analytically tractable limiting distribution under the null and is therefore computationally efficient. 

To the best of our knowledge, there is limited statistical software for conducting hypothesis tests involving function-valued parameters. One exception is the \textsf{npsigtest} function in the \textsf{np} package, which is designed to test a specific function-valued parameter in a particular setting, namely, the conditional mean independence problem \citep{JSSv027i05}. To address this gap, we implement the proposed hypothesis testing framework in statistical software \textsf{R}, and the implementation is publicly available from
{\centering
    \url{https://github.com/wangrui24/Generalized-Projection-tests}.
}

\subsection{Notation}
\label{subsec: intro notation}
We use $\Vert \cdot \Vert$ to denote the $L^2$ norm for vectors and the spectral norm for matrices, and $\Vert \cdot \Vert_F$ for the Frobenius norm of a matrix. The space of square-integrable functions with respect to a probability measure $P$ is denoted by $L^2(P)$, with corresponding norm $\Vert \cdot \Vert_{P,2}$. We write $o(1)$ for a sequence converging to zero and $O(1)$ for a bounded sequence. We write $a_n \asymp b_n$ if $a_n = O(b_n)$ and $b_n = O(a_n)$. Let $\delta_n, \zeta_n, \xi_n, \kappa_n, \Delta_n$ be positive sequences converging to zero. For a positive semidefinite matrix $M$ with spectral decomposition $M = \Gamma^\top \Lambda \Gamma$, we define $M^{1/2} =  \Lambda^{1/2}\Gamma$, then $M = (M^{1/2})^\top M^{1/2}$. Finally, $Q(X;\alpha)$ denotes the $\alpha$-quantile of a random variable $X$.
\section{A generalized projection test for function-valued parameters}\label{sec: general test}
\subsection{A fixed-dimension projection test} \label{subsec: method setup}
Let $\mathcal{P}$ denote a statistical model dominated by measure $\mu$. We consider testing the null hypothesis $H_0: P \in \mathcal{P}_0 :=  \{P\in \mathcal{P}:f_P(X) = 0\}$, where $f_P(X):=\mathbbm{E}_{P}[g(O;\eta_{P})\mid X] = 0$ almost surely $P_{X}$, $g$ is a known function, $O$ denotes the observed data, $X$ is a subset of $O$, $P_{X}$ is the marginal distribution of $X$ implied by $P$, and $\eta_{P}$ denotes one or more Euclidean- or function-valued nuisance parameters under distribution $P$. We further assume that $f_P(X) \in L^2(P)$. 

While the main focus of the paper is to test structural causal assumptions, the above problem setup is generic enough to encompass many well-studied problems in a literature broader than causal inference. We highlight a few notable examples in Supplementary Material \ref{ex:param}.

Before introducing the generalized projection test, we briefly review the projection test and related literature. A projection test is based on deriving a low-dimensional projection of a high-dimensional parameter. In estimation problems, researchers have investigated how to estimate the best least squares projection of a function-valued parameter; see, e.g., \citet{chernozhukov2018generic,ye2023instrumented}. \citet{wang2024nested} studied hypothesis testing for the null hypothesis $\mathbb{E}_{P}[g(O;\eta_{P})\mid X] = 0$ using a least squares projection approach. \citet{wang2024nested} first specify a finite-dimensional function class of $X$, denoted as $\Gamma = \{\gamma(x;\theta);\theta \in \Theta \subset \mathbb{R}^d \}$, satisfying $\gamma(x;\theta) = 0$ if $\theta=0$ and $0\in \Theta$. A typical example is to let $\{\gamma(x;\theta) = x^\top\theta,~\theta \in \mathbb{R}^d\}$ be the class of linear functions. The least squares projection of $f_P$ on $\Gamma$, denoted $\Pi[f_{P}\mid \Gamma]$,  can be written as $X^\top\theta_{\text{LS}}$, where
$\theta_{\text{LS}} = \{\mathbbm{E}_P[X X^\top]\}^{-1}\mathbbm{E}_{P}[Xf(X)] = \{\mathbbm{E}_P[X X^\top]\}^{-1}\mathbbm{E}_{P}[Xg(X;\eta_{P})].$

Under the null hypothesis $f_{P}(X) = 0$, it follows that $\theta_{\text{LS}} = 0$. To test the hypothesis, \citet{wang2024nested} proposed a Wald-type test statistic, defined as $\widehat{W}_n = n\widehat{\theta}_{\text{LS}}^\top\widehat{\Sigma}_{\text{LS}}^{-1}\widehat{\theta}_{\text{LS}}$, where (1) $\widehat{\theta}_{\text{LS}}$ is a $n^{1/2}$-consistent estimate for $\theta_{\text{LS}}$; and (2) $\widehat{\Sigma}_{\text{LS}}$ is a consistent estimator of the variance-covariance matrix $\Sigma_{\text{LS}}$ of $\widehat{\theta}_{\text{LS}}$. With suitable estimators of the nuisance functions $\widehat{\eta}_n$ and treating $g(O_i;\widehat{\eta}_n)$ as a pseudo-outcome, $\widehat{\theta}_{\text{LS}}$ can be obtained via ordinary least squares:
\begin{align*}
    & \widehat{\theta}_{\text{LS}} = \left\{\frac{1}{n}\sum_{i=1}^n X_i X_i^\top\right\}^{-1}\left\{\frac{1}{n}\sum_{i=1}^n X_ig(O_i;\widehat{\eta}_n)\right\},
\end{align*}
$\widehat{\Sigma}_{\text{LS}}$ can be obtained using the Huber-White heteroskedasticity-robust covariance estimator \citep{white1980heteroskedasticity}:
\begin{align*}
    & \widehat{\Sigma}_{\text{LS}} = \left\{\frac{1}{n}\sum_{i=1}^n X_i X_i^\top\right\}^{-1}\left\{\frac{1}{n}\sum_{i=1}^n X_i X_i^\top\left[g(O_i;\widehat{\eta}_n)-X_i^\top \widehat{\theta}_{\text{LS}}\right]^2\right\}\left\{\frac{1}{n}\sum_{i=1}^n X_i X_i^\top\right\}^{-1},
\end{align*}
and the test statistic $\widehat{W}_n$ simplifies to:
\begin{align*}
     \widehat{W}_n = n\underbrace{\left\{\frac{1}{n}\sum_{i=1}^nX_ig(O_i;\widehat{\eta}_n)\right\}^\top}_{\text{Projection term}}\underbrace{\left\{\frac{1}{n}\sum_{i=1}^n X_i X_i^\top\left[g(O_i;\widehat{\eta}_n)-X_i^\top\widehat{\theta}_{\text{LS}}\right]^2\right\}^{-1}}_{\text{A positive semidefinite weighting matrix}} \underbrace{\left\{\frac{1}{n}\sum_{i=1}^nX_ig(O_i;\widehat{\eta}_n)\right\}}_{\text{Projection term}}.
\end{align*}
The first term in $\widehat{W}_n$, referred to as the projection term, is an empirical analogue of the projection $\mathbbm{E}_{P}[Xg(O,\eta_{P})]$. The second term is a positive semidefinite (PSD) weighting matrix. Under suitable regularity conditions,  $\widehat{W}_n$ converges in distribution to a chi-square distribution with $d$ degrees of freedom \citep{wang2024nested}. This projection-based test offers several advantages: it rigorously controls the type I error, is computationally efficient, and can readily incorporate machine-learning-based estimators for nuisance functions. However, if $f_{P}$ is highly nonlinear, or worse, if it lies in the null space of the projection operator, then the test may have little or no power. This limitation motivates the development of a generalized projection test.

\subsection{The generalized projection test}
\label{subsec: method generalized proj test}
Let $B_{n}(X) = (b_{1}(X),...,b_{J_n}(X)^\top$ be a $J_n$-dimensional basis in $L^2(P)$. In contrast to the fixed-dimension projection test, we allow the dimension $J_n$ to grow with the sample size $n$ at a rate to be specified later. We consider the following test statistic as the basis for constructing a generalized projection test:
\begin{align}
\label{eq: test statistic}
     \widehat{S}_n = n\underbrace{\left(\frac{1}{n}\sum_{i=1}^nB_n(X_i)g(O_i;\widehat{\eta}_n)\right)^\top}_{\text{projection term}}{\large\underbrace{\widehat{\Omega}_n}_{\substack{\text{weighting} \\ \text{matrix}}}} \underbrace{\left(\frac{1}{n}\sum_{i=1}^nB_n(X_i)g(O_i;\widehat{\eta}_n)\right)}_{\text{projection term}},
\end{align}
where the projection term is an empirical analogue of $\mathbbm{E}_{P}[B_n(X)g(O;\eta_{P})]$, and $\widehat{\Omega}_n$ is a $J_n \times J_n$-dimensional PSD weighting matrix which can be either prespecified or data-dependent. The generalized projection test will be based on the asymptotic distribution of $\widehat{S}_n$ or a standardized version of it, as we will present formally in Section \ref{sec: asymptotic theory}. \textcolor{black}{We do not a priori impose assumptions on the eigenvalues of the PSD weighting matrix $\widehat{\Omega}_n$ when defining the test statistic $\widehat{S}_n$; however, when studying the asymptotic distribution of $\widehat{S}_n$, we will impose further regularity conditions on $\widehat{\Omega}_n$. } 

To understand why the generalized projection test may have better power under certain alternatives than the projection test reviewed in Section \ref{subsec: method setup}, it is useful to examine the behavior of the test statistic $\widehat{S}_n$ under a generic alternative hypothesis. Suppose first that $\widehat{\Omega}_n$ is an identity matrix. In this case, $\widehat{S}_n$ is the empirical counterpart of $n\sum_{j=1}^\infty \{\mathbbm{E}_P[b_j(X)g(O;\eta_{{P}})]\}^2$, which is nonzero whenever $f \in L^2(P) \backslash \{0\}$, provided that the basis $\{b_j(X)\}_{j=1}^\infty$ is a complete basis for $L^2(P_X)$. 

More generally, suppose $\widehat{\Omega}_n$ is a positive semidefinite matrix that converges to ${\Omega}_{n,P}$ (to be formalized in Assumption \ref{Assumption: weighting matrix} below), and that ${\Omega}_{n,P}$ admits a spectral decomposition of the form $\Gamma_{n,P}^\top\Lambda_{n,P} \Gamma_{n,P}$, where $\Gamma_{n,P}$ is an orthogonal matrix and $\Lambda_{n,P} = \text{diag}(\lambda_{1,P},..,\lambda_{J_n,P})$. In this case, $\widehat{S}_n$ is an empirical counterpart of $n\sum_{j=1}^\infty \left\{\mathbbm{E}_P[\Tilde{b}^{\lambda}_{j,P}(X)g(O;\eta_{{P}})]\right\}^2$, where $\Tilde{b}^{\lambda}_{j,P} = \lambda^{1/2}_{j,P}\tilde{b}_{j,P}$, and $\tilde{b}_{j,P}$ is the $j$-th row of the rotated basis $\Gamma_{n,P}B_n(X)$. This expression remains nonzero as long as  $f \in L^2(P) \backslash \{0\}$ and $\Omega_{n,P}$ is positive definite.

\subsection{Debiasing nuisance function estimation}
\label{subsec: debias nuisance}
A defining feature of many modern causal inference problems is the presence of function-valued nuisance parameters, denoted by $\eta_P$, which appear in the target functional $f_P$. Estimating these nuisance function components can introduce first-order bias into the estimation of the primary statistical functional. To mitigate this and enable valid inference while allowing for flexible, data-driven estimation of $\eta_P$, two key techniques are typically employed: Neyman orthogonality and cross-fitting \citep{2018ChernozhukovDML}. In our setting, the statistical functional of interest is a conditional expectation. Accordingly, we invoke a form of \emph{conditional Neyman orthogonality}; see, e.g., \citet{chernozhukov2024conditional}.

\begin{definition}
    \textbf{(Conditional Neyman orthogonality).} Let $\mathcal{T}$ be a set of perturbation directions. A function $g^\perp(o;\eta)$ is said to be conditional Neyman orthogonal, or Neyman orthogonal given $X$, if for any $\eta_1\in \mathcal{T}$,
    \begin{align*}
        \frac{\partial}{\partial t}\mathbb{E}_{P}\left[g^\perp(O;(1-t)\eta_{P}+t\eta_{1})\mid X\right]\Big\vert_{t=0} = 0.
    \end{align*}
    \label{definiftion: conditional neyman orthogonal}
\end{definition}

% \noindent A function $g^\perp(O;\eta)$ that satisfies the conditional Neyman orthogonal condition can often be obtained by calculating the gradient of the orthogonal function-valued parameter \citep{luedtke2024one,chernozhukov2024conditional}. The conditional Neyman orthogonality condition implies that the first-order bias of estimating nuisance parameter is zero.

{\color{black}

Supplemental Material~\ref{section:neyman orthogonality and conditional neyman orthogonality} establishes formal connections between Neyman orthogonal and conditional Neyman orthogonal functions. Under regularity conditions, the former is also the latter. Thus, for many problems, one may take a Neyman orthogonal function from the literature and directly verify Definition~\ref{definiftion: conditional neyman orthogonal}. An alternative way to construct a candidate conditional Neyman orthogonal function is to add a first-order influence function term to the original $g(o;\eta)$, that is, $g^\perp(o;\eta) = g(o;\eta) + h(o;\eta)$, where $h(\cdot;\eta)$ satisfies $\mathbb{E}_P[h(O;\eta_P) \mid X] = 0$ and can be constructed by computing the pathwise derivative of $\mathbb{E}_P[g(O;\eta) \mid X]$ with respect to $\eta$ \citep{yang2023forster,chernozhukov2024conditional}. Hence, under the null hypothesis, $\mathbb{E}_P[g^\perp(O;\eta) \mid X] = \mathbb{E}_P[g(O;\eta) \mid X] = f(X) = 0$. After constructing a candidate $g^\perp(o;\eta)$ from $g(o;\eta)$, one must still verify that it satisfies the conditional Neyman orthogonality condition.

Our conditional Neyman orthogonality condition is also related to \citet[Condition 3]{luedtke2019omnibus} and \citet[Assumptions of Theorem 2]{yang2023forster}, who assume that the parameter satisfies a conditional version of pathwise differentiability, whereas we formulate this condition in terms of the Gateaux derivative. These two formulations are related, as discussed in \citet{chernozhukov2022automatic, chernozhukov2024conditional, chen2026equivalence}, and recapitulated in Supplemental Material~\ref{subsec: additional discussion on neyman orthogonality}.

In the rest of the article, instead of $g(o; \eta)$, we work with $g^\perp(o; \eta)$, whose conditional Neyman orthogonality ensures that nuisance-parameter estimation contributes no first-order bias.} To construct the proposed statistic, we further employ cross-fitting \citep{klaassen1987consistent,schick1986asymptotically}, which alleviates restrictive assumptions on nuisance-model complexity, such as the Donsker conditions \citep{van2000asymptotic}. Under conditional Neyman orthogonality and using cross-fitting, nuisance functions $\eta$ in $g^\perp(o;\eta)$ can be replaced by its estimator obtained via flexible machine learning tools.

\subsection{\textcolor{black}{Hypothesis testing algorithms; choosing sieve dimensions}}
\label{subsec: Algorithm}
{\color{black}
We propose two hypothesis tests based on unstandardized and standardized test statistics. Let $\widehat{\rho}_n$ and $\widehat{\gamma}_n$ denote the trace and Frobenius norm of $\widehat{\Sigma}_n$, defined as
\[
\widehat{\Sigma}_n = \frac{1}{n} \sum_{i=1}^n {g^\perp}^2(O_i; \widehat{\eta}_n) \left( \widehat{\Omega}_n^{1/2} B_n(X_i) \right) \left( \widehat{\Omega}_n^{1/2} B_n(X_i) \right)^\top.
\]
The procedures are summarized in Algorithms \ref{alg: unstd} and \ref{alg: std}.

\begin{algorithm}[H]
\caption{Hypothesis testing based on the unstandardized test statistic}
\label{alg: unstd}
\textbf{Input:} $J_n$-dimensional basis for each continuous $X$; level $\alpha$\;
\textbf{Step I:} Derive the conditional Neyman orthogonal function $g^\perp(o; \eta)$\;
\textbf{Step II:} Split the data into $K$ folds. For each fold $k \in [K]$, use data from the other $K-1$ folds to estimate the nuisance function $\widehat{\eta}_n$. Then, compute $g^\perp(O_i; \widehat{\eta}_n)$ for all observations $i$ in the $k$-th fold. Repeat this process for all folds so that each observation has an associated value of $g^\perp(O_i; \widehat{\eta}_n)$ based on an out-of-fold estimate\;

\textbf{Step III:} Compute the test statistic $\widehat{S}_n$ as defined in Equation \eqref{eq: test statistic}\;

\textbf{Step IV:} Calculate $p$-value by comparing $\widehat{S}_n$ to the $1 - \alpha$ quantile of $\sum_{j=1}^{J_n} \widehat{\tau}_{j}\chi^2_j(1)$, where $\widehat{\tau}_{j}$ are eigenvalues of $\widehat{\Sigma}_n$, and $\chi^2_j(1)$ are independent chi-square random variables with one degree of freedom.
\end{algorithm}

\begin{algorithm}[H]
\caption{Hypothesis testing based on the standardized test statistic}
\label{alg: std}
\textbf{Input, Step I \& Step II:} Same as in Algorithm \ref{alg: unstd}\;

\textbf{Step III:} Calculate $\widehat{S}_n$ as defined in Equation \eqref{eq: test statistic} and then standardize it as follows:
\[
\widehat{T}_n = \frac{\widehat{S}_n - \widehat{\rho}_n}{\sqrt{2}\widehat{\gamma}_n}.
\]
\textbf{Step IV:} Compare $\widehat{T}_n$ to the $1 - \alpha$ quantile of the standard normal distribution to determine statistical significance.
\end{algorithm}

Both algorithms require specifying the sieve dimension $J_n$. For estimation problems, where the goal is typically to minimize a loss function, the optimal number of basis functions (or other hyperparameters) can be selected using cross-validation. In the hypothesis testing setting, however, the aim in choosing $J_n$ is to maximize the power of the test while controlling its type I error, and no cross-validation procedure is apparently applicable. Rather than selecting a single optimal $J_n$, we may adopt the approach in \citet{horowitz2001adaptive} and \citet{breunig2024adaptive} and combine test results across multiple sieve dimensions on a carefully constructed grid, using a Bonferroni correction to maintain overall type I error control.

Specifically, let $\mathcal{J}_n = \{J=\underline{J}_n 2^{j}, j = 0,...,j_{\max},\underline{J}_n 2^{j_{\max}}\leq J_{\max}\}$ denote an exponentially spaced grid of candidate numbers of basis functions, with  $\underline{J}_n$ and $j_{\max}$ chosen so that each $J_n \in \mathcal{J}_n$ satisfies the rate condition to be specified in Section \ref{sec: asymptotic theory}. For each $J_n \in \mathcal{J}_n$, we construct a test using either Algorithm \ref{alg: unstd} or Algorithm \ref{alg: std} evaluated at level $\alpha/\vert \mathcal{J}_n\vert$, resulting in $\vert \mathcal{J}_n\vert$ tests in total. We reject the null hypothesis if and only if at least one of these tests rejects. This exponentially spaced grid is chosen for two reasons: first, it keeps the number of simultaneous tests small ($\vert \mathcal{J}_n\vert = O(\log n)$); second, the resulting range of dimensions remains wide enough to detect alternatives across various smoothness levels. 

%Unlike estimation problem, where the goal is to minimize the loss and number of basis function can be chosen by using cross-validation, the goal for choosing number of basis functions here is to control type one error while maximize power. Here we adopt the similar method as it in \citet{breunig2024adaptive} to combine different tests. Let $\mathcal{J}_n$ be a set of candidate number of basis. For each $J_n \in \mathcal{J}_n$ we can construct a test using either standardized test or standardized test with level $\alpha/\vert \mathcal{J}_n\vert$, therefore we end up with $\vert \mathcal{J}_n\vert$ tests. We reject the null if and only if there is at least one test reject the null hypothesis. 
}
\section{Theoretical properties of the proposed tests}
\label{sec: asymptotic theory}

\subsection{Limiting behavior under the null hypothesis}
\label{subsec: theory level}
 We present results on uniform type I error control of \textcolor{black}{Algorithms \ref{alg: unstd} and \ref{alg: std} over all $P\in \mathcal{P}_0$.} We first discuss technical assumptions.
 
 Assumption \ref{Assumption: basis} imposes regularity conditions on the basis $B_n(x)$.
\begin{assumption}[Basis]
Let $I_{J_n}$ denote a $J_n\times J_n$ identity matrix and $B_n(x) = (b_1(x),b_2(x),\dots,b_{J_n}(x))^\top$ be a sequence of basis functions satisfying: 
\begin{enumerate}
    \item $C^{-1}\cdot I_{J_n}\leq  \mathbbm{E}_{P}[B_n(X)B^\top_n(X)]\leq C\cdot I_{J_n}$ for some constant $C$ for all $P\in \mathcal{P}$;
    \item For all $j\geq 1$, $\sup_{x\in \mathcal{X}} \vert b_j(x) \vert \leq \xi_n$ with $\xi_n \leq C{J_n}^{1/2}$, and $\sup_{x\in \mathcal{X}}\Vert B_n(x)\Vert_2 \leq \omega_n$ with $\omega_n \leq C{J_n}$.
\end{enumerate}
    \label{Assumption: basis}
\end{assumption}

When $B_n(X)$ consists of the first $J_n$ elements of an orthonormal basis of $L^2(P_{X})$, Assumption \ref{Assumption: basis}.1 holds trivially for $C = 1$. In practice, however, the orthonormal basis of $L^2(P_{X})$ is not available because $P_X$ is unknown. If the support of $X$, denoted $\mathcal{X}$, is a product of closed intervals, one can then take $B_n(X)$ to be the first $J_n$ elements of the orthonormal basis of $L^2({P}^{\text{uniform}}_{X})$, where $L^2({P}^{\text{uniform}}_{X})$ denotes the uniform distribution on $\mathcal{X}$. \citet[Proposition 2.1]{belloni2015some} show that, provided the density of $X$ is bounded, Assumption \ref{Assumption: basis}.1 holds for this choice of $B_n(X)$. 

Assumption \ref{Assumption: basis}.2  holds for many commonly used basis functions on compact intervals, including the Fourier series with $\xi_n \asymp C$ and $\omega_n \asymp {J_n}^{1/2}$, and orthonormal Legendre polynomials with $\xi_n \asymp {J_n}^{1/2}$ and  $\omega_n \asymp {J_n}$ \citep{belloni2015some}.

Assumption \ref{Assumption: weighting matrix} gives conditions on the weighting matrix $\widehat{\Omega}_{n}$ and the convergence rate of its estimator $\widehat{\Omega}_{n}$. When $\widehat{\Omega}_n = I_{J_n}$, Assumption \ref{Assumption: weighting matrix} trivially holds. 

\begin{assumption}[Weighting matrix]
For $P\in \mathcal{P}$, the weighting matrix $\widehat{\Omega}_n$ in $S_n$ and its population counterpart ${\Omega}_{n,P}$ satisfy the following:
\begin{enumerate}
    \item $\Omega_{n,P}$ is a positive definite matrix with eigenvalues $\{\lambda_{P,1},...,\lambda_{P,J_n}\}$; furthermore, there exists a constant $C$ such that  $\max\{\lambda_{P,1},...,\lambda_{P,J_n}\}\leq C$ holds for all $n$.
    \item With probability at least $1-\Delta_n$,
        $\Vert \widehat{\Omega}_n-\Omega_{n,P} \Vert \leq   \delta_n/{J_n}^{1/2}$.
\end{enumerate}
\label{Assumption: weighting matrix}
\end{assumption}

Assumption \ref{Assumption: nuisance} specifies the convergence rate for nuisance function estimation and serves as a key condition enabling the use of machine learning methods for estimating nuisance functions in the target statistical parameter.
\begin{assumption}[Nuisance function estimation]
The following conditions hold for $P\in \mathcal{P}$:
     \begin{enumerate}
         % \item Under the null hypothesis, $\mathbb{E}_{P_0}[h(O;\eta_{0})|X] = 0$;
         \item There exists a positive constant $C$, such that $\mathbb{E}_{P}[{g^\perp}^4(O;\eta_{P})\mid X] \leq C$ and $\mathbb{E}_{P}[{g^\perp}^2(O;\eta_{P})\mid X] \geq 1/C$;
         \item $g^{\perp}(O;\eta)$ is conditional Neyman orthogonal;
         \item The nuisance estimate $\widehat{\eta}_n$ belongs to a realization set $\mathcal{T}_n$ with probability {at least} $1-o(1)$, where $o(1)$ is some fixed sequence converging to $0$. Define the statistical rates \[
         r_{n,P}: = \sup_{\eta\in \mathcal{T}_n}\left(\mathbb{E}_{P}\left[\left\vert g^{\perp}(O;\eta)-g^{\perp}(O;\eta_{P})\right\vert^2\right]\right)^{1/2}
         \] and 
         \[r'_{n,P} : = \sup_{t\in[0,1],\eta_{P_1}\in \mathcal{T}_n}\mathbbm{E}_P\left\vert \frac{\partial^2}{\partial t^2}\mathbb{E}_{P}\left[g^{\perp}(O;(1-t)\eta_{P}+t\eta_{P_1})\mid X\right]\right\vert.\]
         We assume that for $\eta \in \mathcal{T}_n$,
         \begin{align*}
             &\mathbb{E}_{P}[(g^\perp)^4(O;\eta)\mid X] \leq C,\\
&\mathbbm{E}_P[(g^{\perp}(O;\eta)-g^{\perp}(O;\eta_P))^2(g^{\perp}(O;\eta)+g^{\perp}(O;\eta_P))^2]\leq C\mathbbm{E}_P[(g^{\perp}(O;\eta)-g^{\perp}(O;\eta_P))^2],\\
&
r_{n,P} \leq \zeta_n, r'_{n,P} \leq \zeta_n/{n}^{1/2},
         \end{align*}
         where 
         \begin{enumerate}
             \item $\zeta_n(\xi_n+\omega_n) = o(1/{J_n}^{1/2})$ for the standardized generalized projection test;
             \item $\zeta_n\xi_n = o(1/{J_n})$ for the unstandardized generalized projection test;
             \item $\zeta_n\xi_n = o\left(1/\text{tr}(\Omega_{n,P})\right)$ for the unstandardized generalized projection test and when $\widehat{\Omega}_n = \Omega_{n,P} = \text{diag}(\lambda_1,...,\lambda_{J_n})$ with known $\lambda_1,...,\lambda_{J_n}$.
         \end{enumerate}
     \end{enumerate}
    \label{Assumption: nuisance}
\end{assumption}

The condition $1/C \leq \sup_{\eta \in \mathcal{T}_n}\mathbb{E}_{P}[{g^\perp}^4(O;\eta)\mid X] \leq C$ is typical for controlling the conditional moment. A sufficient condition for $\mathbbm{E}_P[(g^\perp(O;\eta)-g^\perp(O;\eta_P))^2(g^\perp(O;\eta)+g^\perp(O;\eta_P))^2]\leq C\mathbbm{E}_P[(g^\perp(O;\eta)-g^\perp(O;\eta_P))^2]$ is that $\sup_{\eta \in \mathcal{T}_n}\vert g^\perp(o;\eta)\vert$ is uniformly bounded. 

\textcolor{black}{
The required convergence rates of the nuisance estimators depend mainly on the growth rate of the number of basis functions and on the choice of basis. For the hypothesis testing procedure based on the standardized test statistic (Algorithm \ref{alg: std}), a crude rate for nuisance parameter estimation is $o\left(1/((\xi_n+\omega_n)^{1/2}{(nJ_n)}^{1/4})\right)$, which is a stronger requirement than the typical rates in the double machine learning literature \citep{2018ChernozhukovDML}. Later we will let $J_n = o(n^{1/3})$; thus, a sufficient condition is that the nuisance estimator achieves a convergence rate faster than $1/(n^{(5/12-\epsilon/2)})$, assuming $J_n = n^{1/3-\epsilon}$ for some $0<\epsilon <1/3$, $\xi_n\asymp C$ and $\omega_n \asymp {J_n}^{1/2}$. When $\epsilon$ is close to $0$, the nuisance convergence rate condition is close to the parametric rate. When $\epsilon$ is close to $1/3$, the convergence rate condition is close to $n^{-1/4}$, which is the usual rate requirement for a debiased machine learning estimator \citep{chernozhukov2018generic} and can be achieved by assorted nonparametric and semiparametric estimators. In Supplemental Material \ref{sec: supp nuisance rates}, we give more detailed discussion on nuisance rates. Stronger convergence-rate requirements for nuisance parameter estimation in the presence of a growing number of basis functions or moment conditions have been noted before; see, e.g. \citet{semenova2021debiased} and \citet{wang2025gmm}.} Assumption \ref{Assumption: nuisance} also implies that $\left\Vert\frac{1}{n}\sum_{i=1}^n \left\{g^\perp(O_i;\widehat{\eta}_n)-g^\perp(O_i;\eta_{P})\right\}B_n(X_i)\right\Vert = o_p(1/{n}^{1/2})$, which ensures that nuisance estimation does not introduce first-order bias into the test statistic. When $\widehat{\Omega}_n$ is a known diagonal matrix with $\lim_{J_n \rightarrow \infty}\sum_{j=1}^{J_n}\lambda_{j} <\infty$, we only need $\zeta_n\xi_n = o(1)$ for unstandardized test. If we further assume $\xi_n \asymp C$, we only need $\zeta_n = o(1)$ , which then coincides with the classical rate requirement for nuisance estimation in the double machine learning literature \citep{chernozhukov2018generic}.

%Similar rate conditions arise in settings where the number of basis functions or moment conditions increases with sample size; see, e.g., \citet{semenova2021debiased}.

Theorem \ref{Theorem: asymptotic distribution chi-sqaure approximation} states that the unstandardized test statistic $\widehat{S}_n$ can be approximated by a weighted sum of $J_n$ independent chi-square random variables with one degree of freedom.

%where $J_n$ increases with the sample size.

\begin{theorem}[Weighted chi-square approximation of the unstandardized test statistic]
    Suppose the null hypothesis holds. Under Assumptions \ref{Assumption: basis} to \ref{Assumption: nuisance},  $\frac{J_n^{5/4}\omega_n}{{n}^{1/2}}\rightarrow 0$, $\delta_n\sqrt{J}_n = o(1)$, and $\inf_{P\in \mathcal{P}}\lambda_{\max}(\Omega_{n,P})>1/C$, we have
    \begin{align*}
        \sup_{P\in \mathcal{P}_0}\sup_{t\in \mathbb{R}} \left\vert P(\widehat{S}_n \leq t)-P\left(\sum_{j=1}^{J_n}\tau_j\chi^2_j(1)\leq t\right)\right\vert \rightarrow 0,
    \end{align*} 
    where $\tau_1,...,\tau_{J_n}$ are eigenvalues of 
    \begin{align*}
        \Sigma_{n,P}   = \mathbbm{E}_{P} \left[{g^\perp}^2(O_i; {\eta}_P) \left( {\Omega}_{n,P}^{1/2} B_n(X_i) \right) \left( {\Omega}_{n,P}^{1/2} B_n(X_i) \right)^\top\right].
    \end{align*}
    \label{Theorem: asymptotic distribution chi-sqaure approximation}
\end{theorem}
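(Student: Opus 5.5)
The plan is to reduce $\widehat{S}_n$ to a quadratic form in an oracle normalized sum of independent vectors. Then a high-dimensional Gaussian approximation over Euclidean balls (or ellipsoids) gives a weighted chi-square limit.

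Write $U_i = g^\perp(O_i;\eta_P)\,\Omega_{n,P}^{1/2}B_n(X_i)$. Under $H_0$ the vectors $U_i$ are i.i.d. with mean zero, because $\mathbb{E}_P[g^\perp(O;\eta_P)\mid X]=0$, and their covariance is $\Sigma_{n,P}$. Define the oracle statistic $S_n^\ast = \Vert n^{-1/2}\sum_i U_i\Vert^2$. The proof then has three steps.

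\textbf{Step 1: nuisance error.} Set $\widehat{V}_n = n^{-1/2}\sum_i B_n(X_i)g^\perp(O_i;\widehat{\eta}_n)$ and $V_n = n^{-1/2}\sum_i B_n(X_i)g^\perp(O_i;\eta_P)$. Working fold by fold and conditioning on the training folds, I decompose $\widehat V_n - V_n$ into two parts. The first is a centered empirical process term, whose conditional second moment is at most $\mathbb{E}\Vert B_n\Vert^2 (g^\perp(\widehat\eta)-g^\perp(\eta_P))^2 \lesssim \xi_n^2 J_n r_{n,P}^2$; alternatively one can bound it componentwise. The second is a bias term $n^{1/2}\mathbb{E}_P[B_n(X)\{f_{\widehat\eta}(X)-f_{\eta_P}(X)\}]$. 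By conditional Neyman orthogonality and a second-order Taylor expansion in $t$, this bias is bounded by $n^{1/2}\xi_n J_n^{1/2} r'_{n,P}\lesssim \xi_n J_n^{1/2}\zeta_n$. Together these give $\Vert \widehat V_n - V_n\Vert = o_p(J_n^{-1/2})$ under $\zeta_n\xi_n=o(1/J_n)$. Since $\Vert V_n\Vert = O_p(J_n^{1/2})$ by Assumption \ref{Assumption: basis}.1 and the moment bound, the cross terms give $|\widehat V_n^\top\widehat\Omega_n\widehat V_n - V_n^\top\widehat\Omega_n V_n| = o_p(1)$. Next, using Assumption \ref{Assumption: weighting matrix}.2, $|V_n^\top(\widehat\Omega_n-\Omega_{n,P})V_n|\le \delta_n J_n^{-1/2}\Vert V_n\Vert^2 = O_p(\delta_n J_n^{1/2}) = o_p(1)$. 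Hence $\widehat S_n = S_n^\ast + o_p(1)$, uniformly over $\mathcal{P}_0$, because every bound depends only on the constants in the assumptions.

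\textbf{Step 2: Gaussian approximation.} Let $Z\sim N(0,\Sigma_{n,P})$, so that $\Vert Z\Vert^2 \overset{d}{=} \sum_j \tau_j\chi^2_j(1)$. I will invoke a Berry--Esseen bound for sums of independent random vectors over Euclidean balls, in the form of Bentkus or of Zhilova, together with the nonsingular-covariance condition. Taking such a bound, $\sup_t|P(S_n^\ast\le t) - P(\Vert Z\Vert^2\le t)|$ is controlled by something like $J_n^{1/4}\,\mathbb{E}\Vert \Sigma_{n,P}^{-1/2}U\Vert^3/n^{1/2}$. To evaluate it, note that Assumptions \ref{Assumption: basis}, \ref{Assumption: weighting matrix}.1 and \ref{Assumption: nuisance}.1 give a bound on $\mathbb{E}\Vert U\Vert^3$ of order $\omega_n J_n$. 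Matching this to the stated rate $J_n^{5/4}\omega_n/n^{1/2}\to 0$ suggests it is exactly this type of bound that is intended. The eigenvalues of $\Sigma_{n,P}$ need care, because $\Omega_{n,P}$ may have small eigenvalues. If needed I will instead use a version of the bound that does not require whitening, for example one scaled by $\Vert\Sigma\Vert_F$ or a Lindeberg-type argument applied to the quadratic form directly.

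\textbf{Step 3: anti-concentration.} To pass the $o_p(1)$ perturbation from Step 1 to the Kolmogorov distance, I need $\sup_t P(|\sum_j\tau_j\chi_j^2 - t|\le\epsilon)\to0$ as $\epsilon\to0$, uniformly in $n$ and $P$. The condition $\lambda_{\max}(\Omega_{n,P})>1/C$, combined with Assumption \ref{Assumption: basis}.1 and $\mathbb{E}[{g^\perp}^2\mid X]\ge 1/C$, guarantees $\tau_1\ge c>0$. The weighted chi-square then has a bounded density: it is the convolution of $\tau_1\chi^2_1$ with an independent variable, and I will argue the resulting smoothness suffices. Conditioning on the remaining terms, $P(\tau_1\chi^2_1\in[a,a+\epsilon])\le C\sqrt{\epsilon/\tau_1}$, which gives anti-concentration of order $\sqrt\epsilon$. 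Combining the three steps yields the claim. I expect Step 2 to be the main obstacle: it requires choosing a Gaussian approximation result that is uniform over $P$ and compatible with a possibly ill-conditioned $\Sigma_{n,P}$.
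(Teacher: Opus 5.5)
Your three steps are the paper's three steps. Step 1 is Lemma~\ref{lemma: convergence rate of S_n}(1): the same fold-wise split into a centered term of order $\xi_n\zeta_n(J_n/n)^{1/2}$ and a second-order bias term controlled by $r'_{n,P}$, combined with $\Vert a_n\Vert=O_p((J_n/n)^{1/2})$ and $\delta_nJ_n^{1/2}=o(1)$. Step 2 is Lemma~\ref{lemma: chi-square approximation}, built on Rai\v{c}'s convex-set Berry--Esseen bound with constant $42d^{1/4}+16$. Step 3 is Remark~\ref{remark:sufficient conditions} plus the uniform Slutsky Lemma~\ref{lemma: uniform slusky}.

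The gap is the step you flag yourself, and as written it does not close. Bounding $\mathbb{E}\Vert\Sigma_{n,P}^{-1/2}U\Vert^3$ through $\mathbb{E}\Vert U\Vert^3\lesssim\omega_nJ_n$ costs a factor $\lambda_{\min}(\Sigma_{n,P})^{-3/2}$. This theorem gives no lower bound on $\lambda_{\min}(\Omega_{n,P})$: Assumption~\ref{Assumption: weighting matrix} bounds the eigenvalues only from above, the extra hypothesis concerns $\lambda_{\max}$, and a lower-eigenvalue condition appears only in Theorem~\ref{Theorem: asymptotic distribution under the null}. Your fallbacks (a Frobenius-scaled bound, a Lindeberg argument on the quadratic form) are too unspecified to check. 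Nothing indicates they would deliver the stated rate $J_n^{5/4}\omega_n/n^{1/2}$.

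The missing idea is that the convex-set bound is affine invariant, so the weighting matrix never has to be inverted. Put $A=\Lambda_{n,P}^{1/2}\Gamma_{n,P}$ and $H_n=\mathbb{E}_P[{g^\perp}^2(O;\eta_P)B_n(X)B_n(X)^\top]$. Then $\Sigma_{n,P}=AH_nA^\top$, so $M=\Sigma_{n,P,\mathrm{sym}}^{-1/2}AH_{n,\mathrm{sym}}^{1/2}$ satisfies $MM^\top=I_{J_n}$. Hence $\Vert\Sigma_{n,P,\mathrm{sym}}^{-1/2}U_i\Vert=\Vert H_{n,\mathrm{sym}}^{-1/2}g^\perp(O_i;\eta_P)B_n(X_i)\Vert$.

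Equivalently, apply the bound directly to the vectors $g^\perp(O_i;\eta_P)B_n(X_i)$ over the convex sets $\{v:v^\top\Omega_{n,P}v\le t\}$. If $Z\sim N(0,H_n)$, then $Z^\top\Omega_{n,P}Z=\Vert AZ\Vert^2$ has the law of $\sum_j\tau_j\chi^2_j(1)$.

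Assumptions~\ref{Assumption: basis}.1 and \ref{Assumption: nuisance}.1 give $H_n\succeq C^{-1}\mathbb{E}_P[B_nB_n^\top]\succeq C^{-2}I_{J_n}$, and $\mathbb{E}_P[{g^\perp}^4\mid X]\le C$ also holds. So the whitened third moment is at most a constant times $\mathbb{E}_P\Vert B_n\Vert^3\le\omega_n\,\mathrm{tr}\,\mathbb{E}_P[B_nB_n^\top]\lesssim\omega_nJ_n$. This holds uniformly in $P$, however ill-conditioned $\Omega_{n,P}$ is. Multiplied by the $J_n^{1/4}/n^{1/2}$ factor, it gives exactly the hypothesis $J_n^{5/4}\omega_n/n^{1/2}\to0$. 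This is how the paper's Lemma~\ref{lemma: chi-square approximation} handles it: only the conditioning of $H_n$ matters, and the assumptions control it.

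A minor point on Step 3: a uniformly bounded density for $\sum_j\tau_j\chi^2_j(1)$ is neither available (it fails when one weight dominates) nor needed. The $\sqrt{\epsilon/\tau_1}$ bound you get by conditioning on the other terms is the paper's argument and suffices.
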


\begin{proof}
    All proofs in this paper can be found in the Supplementary Materials \ref{sec:technical lemma} and \ref{sec:main results proof}. 
\end{proof}

Theorem \ref{Theorem: asymptotic distribution chi-sqaure approximation} implies that an asymptotically uniformly size-$\alpha$ test can be constructed as 
$$\phi_{\alpha, n} = \mathbbm{1}\left\{\widehat{S}_n\geq Q\left(\sum_{j=1}^{J_n}\tau_j\chi^2_j(1),1-\alpha\right)\right\}.$$ Since $\{\tau_j\}_{j=1}^{J_n}$ are typically unknown, they are replaced in practice by their estimators $\{\widehat{\tau}_j\}_{j=1}^{J_n}$, which correspond to the eigenvalues of $\widehat{\Sigma}_n$.

\begin{remark}[Comparison to \citet{breunig2015goodness}]
    Theorem \ref{Theorem: asymptotic distribution chi-sqaure approximation} differs from \citet[Theorem 2.2]{breunig2015goodness} in several important respects. First, our result establishes uniform control of the type I error rate, supported by a novel Berry Esseen bound with explicit constants \citep{raivc2019multivariate}. In contrast, \citet[Theorem 2.2]{breunig2015goodness} guarantees valid type I error control only for a fixed data-generating distribution. Second, Theorem \ref{Theorem: asymptotic distribution chi-sqaure approximation} does not require the limit $\lim_{J_n\rightarrow \infty}\sum_{j=1}^{J_n}\tau_j \chi_j^2(1)$ to be well defined, whereas \citet[Theorem 2.2]{breunig2015goodness} imposes the condition $\sum_{j=1}^{J_n}\tau_j = O(1)$ to ensure convergence of $\sum_{j=1}^{\infty}\tau_j \chi_j^2(1)$. Third, we impose a high-level rate condition on the nuisance estimators that accommodates a broad class of nonparametric and machine learning methods, while \citet{breunig2015goodness} focus on series estimators.
\end{remark}

%For example, suppose $\widehat{\Omega}_n = [\frac{1}{n}\sum_{i=1}^n B_n(X_i)B_n(X_i)^\top g^2(O;\eta_{\widehat{P}})]^{-1}$, then $\Omega_n = \mathbbm{E}_{P}[B_n(X_i)B_n(X_i)^\top g^2(O;\eta_{P})]$, $\Sigma = I_{J_n}$, $\tau_1 = \tau_2 = ...=\tau_{J_n}=1$. However, the distribution function of $\sum_{j=1}^{J_n}\tau_j \chi_j^2$ does not converge.

% \begin{theorem}
%     Under Assumptions \ref{Assumption: basis} to \ref{Assumption: nuisance}, furthermore, suppose that $\sum_{j=1}^{J_n}\lambda_n = O(1)$, then under the the null hypothesis,
%     \begin{align*}
%         nS_n \rightsquigarrow \sum_{j=1}^\infty \tau_j\chi_{1j}^2,
%     \end{align*}
%     where $\tau_j$ is the ordered eigen-values of the matrix $\mathbbm{E}_{P_0}[g^2(O_i;\eta_0)B_n(X)\Omega_nB^\top_n(X)]$, $\{\chi^2_{1j}\}_{j=1}^\infty$ is a sequence of independent chi-sqaure random variables with degree of freedom $1$.
% \end{theorem}

{\color{black} Theorem \ref{Theorem: asymptotic distribution chi-sqaure approximation} may be cumbersome to apply in practice for two reasons. First, $\sum_{j=1}^{J_n} \tau_j \chi_j^2(1)$ is an unknown distribution, since the $\tau_j$ are unknown and must be estimated via singular value decomposition of $\widehat{\Sigma}_n$; this can be more computationally intensive (with computation time $O(J_n^3)$) than computing trace and Frobenius norm (with computation time $O(J_n^2)$). Second, the null distribution must be simulated, which introduces additional Monte Carlo computation and simulation error. In contrast, the standardized test avoids both the singular value decomposition and the Monte Carlo approximation. Therefore, we will primarily focus on the procedure based on the standardized $\widehat{S}_n$.}

Theorem \ref{Theorem: asymptotic distribution under the null} characterizes the asymptotic distribution of the standardized test statistic under the null hypothesis and outlines how to construct an asymptotically valid size-$\alpha$ test.

\begin{theorem}[Uniform type I error control of the standardized test]
Under Assumptions \ref{Assumption: basis} to \ref{Assumption: nuisance}, and two additional conditions: (1) $\sum_{j=1}^{J_n}\lambda_{j,P} \leq c_n $, where $ c_n = o(n^{1/3})$, $J_n \rightarrow \infty$; and (2) $1/C \leq \min\{\lambda_{1,P},...,\lambda_{J_n,P}\}$ for some constant $C$, we have
\begin{align*}
    \sup_{P\in \mathcal{P}_0}\sup_{t\in \mathbb{R}}\left\vert P\left(\frac{\widehat{S}_n-\widehat{\rho}_n}{{2}^{1/2}\widehat{\gamma}_n} \leq t\right)-\Phi(t)\right\vert\rightarrow 0,
\end{align*}
where $\Phi(t)$ denotes the standard normal CDF. Therefore, the test \[
T_{\alpha,n} \equiv \mathbbm{1}\left\{\frac{\widehat{S}_n-\widehat{\rho}_n}{{2}^{1/2}\widehat{\gamma}_n}>z_{1-\alpha}\right\}
\]
is asymptotically uniformly size-$\alpha$ over $\mathcal{P}_0$.
\label{Theorem: asymptotic distribution under the null}
\end{theorem}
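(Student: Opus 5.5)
The proof reduces $\widehat{S}_n$ to an oracle quadratic form in i.i.d.\ vectors, proves a Gaussian approximation for that quadratic form, and then shows that the weighted chi-square law it inherits is asymptotically normal after centering and scaling. We treat the three pieces in turn and then handle the plug-in trace and Frobenius norm.

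\emph{Step 1 (oracle reduction).} Write $\widehat{S}_n = \|\widehat{\Omega}_n^{1/2}\widehat{U}_n\|^2$ with $\widehat{U}_n = n^{-1/2}\sum_i B_n(X_i)g^\perp(O_i;\widehat{\eta}_n)$. Define the oracle quantities
\[
U_n = n^{-1/2}\sum_i B_n(X_i)g^\perp(O_i;\eta_P), \qquad S_n = \|\Omega_{n,P}^{1/2}U_n\|^2 .
\]
\begin{enumerate}
\item Using cross-fitting, conditional Neyman orthogonality and the rates $r_{n,P}\le\zeta_n$, $r'_{n,P}\le \zeta_n/n^{1/2}$, we bound $\|\widehat U_n - U_n\|$. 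The conditional-mean-zero fluctuation has second moment at most $\xi_n^2 J_n\zeta_n^2$, and the bias is of order $J_n^{1/2}\xi_n\zeta_n$.
\item Under $H_0$, $\mathbb{E}\|U_n\|^2 \asymp J_n$.
\item Using Assumption \ref{Assumption: weighting matrix}.2, write $\widehat S_n - S_n$ as cross terms of the form $\|U_n\|\,\|\widehat U_n-U_n\|$ and $\|U_n\|^2\,\|\widehat\Omega_n-\Omega_{n,P}\|$. The condition $\zeta_n(\xi_n+\omega_n)=o(J_n^{-1/2})$ together with $\delta_n\to0$ should give $\widehat S_n - S_n = o_p(J_n^{1/2})$.
\item By condition (2) and Assumption \ref{Assumption: nuisance}.1, the normalizer satisfies $\gamma_n := \|\Sigma_{n,P}\|_F \gtrsim J_n^{1/2}$. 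Hence the error in Step 1.3 is $o_p(\gamma_n)$.
\end{enumerate}

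\emph{Step 2 (Gaussian approximation).} Set $V_n = \Omega_{n,P}^{1/2}U_n$, a normalized sum of i.i.d.\ mean-zero vectors with covariance $\Sigma_{n,P}$. Then $S_n = \|V_n\|^2$. We apply the Berry--Esseen bound of Rai\v{c} over convex sets (balls are convex), with the explicit constants giving uniformity in $P$. This yields
\[
\sup_t \bigl|P(\|V_n\|^2\le t) - P(\|Z\|^2\le t)\bigr| \lesssim J_n^{1/4}\, n^{-1/2}\, \mathbb{E}\|\Sigma_{n,P}^{-1/2}V_1\|^3,
\]
where $Z\sim N(0,\Sigma_{n,P})$. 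The smallest eigenvalue of $\Sigma_{n,P}$ is bounded below by conditions (2) and Assumption \ref{Assumption: basis}.1, and the largest eigenvalue of $\Omega_{n,P}$ is bounded. Together with $\|B_n\|\le\omega_n$, this bounds the third moment by $\omega_n J_n$.

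This is the main obstacle. The crude bound $J_n^{5/4}\omega_n/n^{1/2}$ need not vanish under only $c_n=o(n^{1/3})$. The first thing we would try is a sharper moment bound: compute $\mathbb{E}\|\Sigma^{-1/2}\Omega^{1/2}B g\|^3$ via a fourth-moment/trace argument, giving order $J_n^{3/2}$. If that still falls short, we would fall back on a Lyapunov-type CLT for the degenerate $U$-statistic
\[
\sum_{i\ne j}\epsilon_i^\top\epsilon_j
\]
(Hall's martingale CLT), with $\epsilon_i = \Omega_{n,P}^{1/2}B_n(X_i)g^\perp(O_i;\eta_P)$. In that route the diagonal part is handled by concentration around $\rho_n=\mathrm{tr}\,\Sigma_{n,P}$. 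The required bounds on $\mathrm{tr}(\Sigma^4)$ and on $\mathbb{E}(\epsilon_1^\top\epsilon_2)^4$ are where the trace bound $c_n$ enters, making uniformity in $P$ explicit through the rates.

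\emph{Step 3 (CLT for the weighted chi-square law).} We have $\|Z\|^2 = \sum_j\tau_j\chi_j^2(1)$ with mean $\rho_n$ and variance $2\gamma_n^2$. Its third cumulant ratio is
\[
\frac{\sum_j\tau_j^3}{\gamma_n^3} \le \frac{\max_j\tau_j}{\gamma_n} \lesssim J_n^{-1/2}\to0 .
\]
Here the $\tau_j$ are bounded above, and bounded below via condition (2), Assumption \ref{Assumption: basis}.1 and Assumption \ref{Assumption: nuisance}.1. Berry--Esseen for independent summands then gives $(\|Z\|^2-\rho_n)/(2^{1/2}\gamma_n)\Rightarrow N(0,1)$, uniformly in $P$.

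\emph{Step 4 (plug-in normalizers).} Finally we show
\[
\widehat\rho_n-\rho_n = o_p(\gamma_n), \qquad \widehat\gamma_n/\gamma_n\to_p1 .
\]
We bound $\|\widehat\Sigma_n-\Sigma_{n,P}\|_F$ using the same nuisance and weighting-matrix perturbation bounds as in Step 1, together with a matrix LLN; the fourth moments and $\omega_n^2 J_n/n\to0$ supply the rate. Slutsky's lemma and Pólya's theorem (the limit $\Phi$ is continuous) then give the uniform Kolmogorov convergence. Validity of $T_{\alpha,n}$ follows immediately.
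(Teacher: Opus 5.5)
The gap is in Step~2. Your Steps~1 and~4 track the paper's reduction lemmas: the bound $\widehat S_n-S_n=o_{\mathcal P_0}(J_n^{1/2})$, and consistency of the plug-in trace and Frobenius norm. But the primary route in Step~2 cannot prove the theorem in its stated generality, and the first repair you suggest cannot rescue it.

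\begin{itemize}
\item Conditions (1)--(2) give only $J_n/C\le\sum_j\lambda_{j,P}\le c_n$, that is, $J_n=o(n^{1/3})$.
\item The Rai\v{c} bound is never smaller than order $J_n^{7/4}n^{-1/2}$. Write $Y_1=g^\perp(O_1;\eta_P)\Omega_{n,P}^{1/2}B_n(X_1)$. Then $\mathbb{E}\|\Sigma^{-1/2}_{\text{sym}}Y_1\|^2=\operatorname{tr}(I_{J_n})=J_n$, so by Jensen the third moment is at least $J_n^{3/2}$, whatever fourth-moment or trace argument you use.
\item Hence for $n^{2/7}\lesssim J_n=o(n^{1/3})$ the bound does not vanish. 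With $\omega_n\asymp J_n$ and the cruder $\omega_nJ_n$ third-moment bound, you would even need $J_n=o(n^{2/9})$.
\end{itemize}

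This is why the paper imposes $J_n^{5/4}\omega_n/n^{1/2}\to0$ as a separate hypothesis only in the Rai\v{c}-based Theorem~1, and does not route the standardized test through it. A Gaussian approximation of the whole $J_n$-vector is more than you need; only one scalar statistic has to be normal.

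Your fallback is therefore not optional: it is the proof, and it is the paper's argument. With $V_i=g^\perp(O_i;\eta_P)\Omega_{n,P}^{1/2}B_n(X_i)$, split $S_n-\rho_{n,P}$ into two pieces.
\begin{itemize}
\item The diagonal part $n^{-1}\sum_i(\|V_i\|^2-\rho_{n,P})$ has variance $\lesssim\omega_n^2J_n/n=o(\gamma_{n,P}^2)$, so Chebyshev disposes of it.
\item The off-diagonal part equals $2^{1/2}\gamma_{n,P}\sum_iM_{ni}$, where $M_{ni}=2^{1/2}(\gamma_{n,P}n)^{-1}V_i^\top\sum_{k<i}V_k$ is a martingale difference array.
\end{itemize}
The paper applies Haeusler's martingale Berry--Esseen bound $C(L_n+N_n)^{1/5}$. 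It verifies $\sum_i\mathbb{E}M_{ni}^4\lesssim\omega_n^4/(J_nn^2)+\omega_n^2/n$ and $\mathbb{E}(\sum_iM_{ni}^2-1)^2\to0$.
\begin{itemize}
\item The quadratic-variation condition rests on $\operatorname{tr}(\Sigma_{n,P}^4)/\gamma_{n,P}^4\lesssim J_n^{-1}$; this is where $J_n\to\infty$ enters.
\item The moment conditions are where $J_n=o(n^{1/3})$ enters, as does your Hall condition $n^{-1}\mathbb{E}(\epsilon_1^\top\epsilon_2)^4/\gamma_{n,P}^4\lesssim\omega_n^4/(nJ_n)\lesssim J_n^3/n$.
\end{itemize}
An explicit martingale bound gives uniformity over $\mathcal P_0$ directly. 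If you use Hall's CLT instead, you need an extra contradiction argument along arbitrary sequences $P_n\in\mathcal P_0$; this works because your moment bounds are uniform in $P$. Step~3 then becomes superfluous.

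One smaller repair in Step~4: do not bound $\|\widehat\Sigma_n-\Sigma_{n,P}\|_F$ directly. With the convention $M^{1/2}=\Lambda^{1/2}\Gamma$, the factor $\widehat\Omega_n^{1/2}$ need not be close to $\Omega_{n,P}^{1/2}$, because eigenvectors are unstable (for example when $\Omega_{n,P}=I_{J_n}$). Instead compare the rotation-invariant quantities $\widehat\rho_n=\operatorname{tr}(\widehat\Omega_n\widehat G_n)$ and $\widehat\gamma_n^2=\operatorname{tr}(\widehat\Omega_n\widehat G_n\widehat\Omega_n\widehat G_n)$ with their population versions, as the paper does. Here $\widehat G_n=n^{-1}\sum_i g^\perp(O_i;\widehat\eta_n)^2B_n(X_i)B_n(X_i)^\top$.
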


Theorem \ref{Theorem: asymptotic distribution under the null} applies to general function-valued statistical parameters, whereas the results of \citet{breunig2015goodness} and \citet{hong1995consistent} focus specifically on regression and instrumental variable regression settings. Another key distinction is that \citet{breunig2015goodness} and \citet{hong1995consistent} rely on parametric models or series estimators for nuisance estimation, while our framework accommodates flexible, potentially high-dimensional machine learning methods for nuisance estimation. Furthermore, our theory allows for a data-adaptive weighting matrix $\widehat{\Omega}_n$. From a technical perspective, our theorem establishes uniform control of the type I error rate over $\mathcal{P}_0$ by leveraging a Berry--Esseen-type bound for martingale sums \citep{haeusler1988rate}, rather than the martingale central limit theorem used in prior work; see Supplementary Materials S2 and S3 for details. As a final remark, conditions (1) and (2) in Theorem \ref{Theorem: asymptotic distribution under the null} imply $J_n = o(n^{1/3})$.

\subsection{Limiting behavior under fixed and local alternatives}
\label{subsec: theory power}
We next study the limiting behavior of the proposed test $T_{\alpha, n}$ in Theorem \ref{Theorem: asymptotic distribution under the null} under a sequence of local alternatives $P_{1,n}\in \mathcal{P}$ of the following form:
\begin{align}\label{eq: local alternatives}
    \mathbbm{E}_{P_{1,n}}[g^\perp(O;\eta_{P_{1,n}})\mid X] = ({\gamma_{n,P_{1,n}}}/n)^{1/2}\phi(X),
\end{align}
where $\mathbbm{E}_{P_{1,n}}[\vert \phi(X)\vert^4]<C$ for some constant $C$. Theorem \ref{Theorem: local power analysis} derives the asymptotic distribution of the standardized test statistic under this local alternative.

\begin{theorem}
    Under the Assumptions of Theorem \ref{Theorem: asymptotic distribution under the null}, suppose further that $\mathbb{E}_{P_{1,n}}[\phi(X)\tilde{b}^{\lambda}_j(X)] < \infty$. Then, we have:
    \begin{align*}
        \frac{\widehat{S}_n-\widehat{\rho}_n}{{2}^{1/2}\widehat{\gamma}_n} -1/{2}^{1/2}\sum_{j=1}^{J_n} (\mathbbm{E}_{P_{1,n}}[\phi(X)\tilde{b}^{\lambda}_j(X)] )^2\rightsquigarrow N\left(0,1\right).
    \end{align*}    
    \label{Theorem: local power analysis}
\end{theorem}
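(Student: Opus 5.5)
Under $P_{1,n}$ the test statistic splits into a null-type piece plus a deterministic drift, and the plan is to run the null argument of Theorem~\ref{Theorem: asymptotic distribution under the null} on the first and track the second. Write $U_i = g^\perp(O_i;\eta_{P_{1,n}}) - \mathbbm{E}_{P_{1,n}}[g^\perp(O;\eta_{P_{1,n}})\mid X_i]$, so that $\mathbbm{E}[U_i\mid X_i]=0$, and set $m_n(X) = (\gamma_{n,P_{1,n}}/n)^{1/2}\phi(X)$. Put $\tilde B_n = \Omega_{n,P}^{1/2}B_n$, whose coordinates are the $\tilde b^\lambda_j$. Then
\[
n^{-1/2}\sum_{i=1}^n \widehat\Omega_n^{1/2}B_n(X_i)g^\perp(O_i;\widehat\eta_n) = Z_n + D_n + R_n,
\]
with the pieces defined as follows.
\begin{itemize}
\item[(a)] $Z_n = n^{-1/2}\sum_i \tilde B_n(X_i)U_i$ is the centered term, exactly as under the null.
\item[(b)] $D_n = n^{-1/2}\sum_i \tilde B_n(X_i)m_n(X_i)$ is the drift term.
\item[(c)] $R_n$ collects the nuisance error and the error from replacing $\widehat\Omega_n$ by $\Omega_{n,P}$.
\end{itemize}
The remainder $R_n$ is controlled as in the null proof: cross-fitting, conditional Neyman orthogonality, and the rates in Assumption~\ref{Assumption: nuisance}(a), together with Assumption~\ref{Assumption: weighting matrix}. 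The only new ingredient is that the conditional mean of $g^\perp$ no longer vanishes, so the second-order remainder term carries an extra factor of $m_n$. Since $m_n = O(n^{-1/2}\gamma_n^{1/2})$, this contributes only a lower-order term.

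\textbf{The drift.} We have $D_n = \gamma_n^{1/2}\,\mathbbm{E}[\tilde B_n\phi] + \gamma_n^{1/2}\,n^{-1}\sum_i\{\tilde B_n(X_i)\phi(X_i) - \mathbbm{E}[\tilde B_n\phi]\}$. The fluctuation part has squared norm of order $\gamma_n\,\mathbbm{E}\Vert\tilde B_n\phi\Vert^2/n \lesssim \gamma_n\,\omega_n^2\,(\mathbbm{E}\phi^2)/n$, using Assumption~\ref{Assumption: basis}.2 and the bounded fourth moment of $\phi$. This is $o(\gamma_n)$ when $\omega_n^2/n\to0$, which follows from $J_n=o(n^{1/3})$.

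\textbf{Expanding the quadratic form.} Writing $\mu_n = \mathbbm{E}[\tilde B_n\phi]$, we get
\[
\widehat S_n = \Vert Z_n\Vert^2 + 2\gamma_n^{1/2} Z_n^\top\mu_n + \gamma_n\Vert\mu_n\Vert^2 + o_p(\gamma_n).
\]
The third term, after dividing by $2^{1/2}\gamma_n$, is exactly the centering $2^{-1/2}\sum_j(\mathbbm{E}[\phi\tilde b^\lambda_j])^2$. The cross term has variance $4\gamma_n\,\mu_n^\top\Sigma_n\mu_n \le 4C\gamma_n\Vert\mu_n\Vert^2$, and $\Vert\mu_n\Vert^2\le \lambda_{\max}\,C\,\mathbbm{E}\phi^2 = O(1)$ by Bessel's inequality and Assumption~\ref{Assumption: basis}.1. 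So after dividing by $\gamma_n$ it is $O_p(\gamma_n^{-1/2})=o_p(1)$. Here we need $\gamma_n\to\infty$, which holds because $\gamma_n^2 = \Vert\Sigma_n\Vert_F^2\gtrsim J_n$ by condition (2) of Theorem~\ref{Theorem: asymptotic distribution under the null} and the lower bound on $\mathbbm{E}[{g^\perp}^2\mid X]$.

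\textbf{The centered part.} $(\Vert Z_n\Vert^2-\rho_n)/(2^{1/2}\gamma_n)\rightsquigarrow N(0,1)$ follows from the martingale Berry--Esseen argument used for Theorem~\ref{Theorem: asymptotic distribution under the null}. That argument uses only $\mathbbm{E}[U\mid X]=0$ and the moment bounds, which still hold for $U$ because $\mathbbm{E}[U^4\mid X]\lesssim C + m_n^4$.

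\textbf{Main obstacle: plug-in trace and Frobenius norm.} The step I expect to be hardest is showing that $\widehat\rho_n-\rho_n = o_p(\gamma_n)$ and $\widehat\gamma_n/\gamma_n\to_p 1$ under the alternative. Here $\widehat\Sigma_n$ uses ${g^\perp}^2 = (U+m_n)^2$ rather than $U^2$, which adds bias terms of size $\mathbbm{E}[m_n^2\Vert\tilde B_n\Vert^2]\lesssim \gamma_n\,\mathrm{tr}(\Omega_{n,P})/n\cdot O(1)$ and $2\,\mathbbm{E}[Um_n\Vert\tilde B_n\Vert^2]=0$. The first is $o(\gamma_n)$ because $c_n = o(n^{1/3})$. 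I would handle this by a direct trace and Frobenius expansion, reusing the null-case concentration bounds for $\widehat\Sigma_n-\Sigma_n$. Combining everything by Slutsky's lemma gives the claim.
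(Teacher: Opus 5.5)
Your proposal is correct and is essentially the paper's argument. The paper also writes $g_i=\varepsilon_i+(\gamma_{n,P}/n)^{1/2}\phi(X_i)$ and splits $S_{n,P}$ into three pieces: the centered quadratic form (normalized by the same martingale argument with $\Sigma^Z_{n,P}$ in place of $\Sigma_{n,P}$), the drift $\gamma_{n,P}\Vert\overline H_n\Vert^2=\gamma_{n,P}\Vert\mu_n\Vert^2+o_P(\gamma_{n,P})$, and a cross term with variance $O(\gamma_{n,P})$; it then shows that nuisance and weighting-matrix estimation cost only $o_P(\gamma_{n,P})$, because $\Vert n^{-1}\sum_i B_n(X_i)g^\perp(O_i;\eta_P)\Vert=O_P((J_n/n)^{1/2})$ still holds under the local alternative. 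Your explicit accounting for the $m_n^2$ bias in $\widehat\Sigma_n$ fills in a step the paper only asserts; just bound the weighting-matrix error at the level of the quadratic form, as the paper's null lemma does, or with symmetric square roots, since the non-symmetric root $\widehat\Lambda^{1/2}\widehat\Gamma$ need not be close to any fixed square root of $\Omega_{n,P}$.
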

\noindent Suppose that $\Xi_n :=\sum_{j=1}^{J_n} (\mathbbm{E}_{P_{1,n}}[\phi(X)\tilde{b}^{\lambda}_j(X)] )^2$ converges to some positive value $\Xi$, then by Slutsky’s theorem we know that $\frac{\widehat{S}_n-\widehat{\rho}_n}{{2}^{1/2}\widehat{\gamma}_n} \rightsquigarrow N\left(1/{2}^{1/2}\Xi,1\right)$. Suppose $\phi(X)$ is an element of the space spanned by $(b_j(X))_{j=1}^\infty$, since $\sum_{j=1}^{J_n} (\mathbbm{E}_{P_{1,n}}[\phi(X)\tilde{b}^{\lambda}_j(X)] )^2$ is always positive as long as $\phi(X)\neq 0$ almost surely, Theorem \ref{Theorem: local power analysis} implies that the proposed test exhibits nontrivial asymptotic power under local alternatives of the form \eqref{eq: local alternatives}. In particular, its asymptotic power under such alternatives exceeds that of a trivial test that rejects the null with probability $\alpha$.

Finally, we establish the uniform consistency result under a set of alternatives. To this end, define the collection of distributions
\begin{align*}
    \mathcal{P}_{1,n}(q) = \left\{P:\sum_{j=1}^{J_n}(\mathbbm{E}_{P}[g^\perp(O;\eta_P)\tilde{b}^{\lambda}_j(X)])^2>q\gamma_{n,P}/n\right\},
\end{align*}
where $\gamma_{n,P} = \Vert \Sigma_{n,P} \Vert_F$. Our proposed test $T_{\alpha, n}$ in Theorem \ref{Theorem: asymptotic distribution under the null} has asymptotic power one under any fixed alternative distribution in $\mathcal{P}_{1,n}(q)$, as formalized in Theorem \ref{Theorem: consistency}.
{\color{black}
\begin{theorem}
Under the Assumptions of Theorem \ref{Theorem: asymptotic distribution under the null}, we have that for any $\epsilon \in (0,1)$ and $\alpha\in (0,1)$, there is a $q$ such that
    \begin{align*}
    \liminf_{n\rightarrow \infty}\inf_{P\in\mathcal{P}_{1,n}(q)} P \left(\frac{\widehat{S}_n-\widehat{\rho}_n}{{2}^{1/2}\widehat{\gamma}_n} >z_{1-\alpha}\right)\geq 1-\epsilon.
\end{align*}

Let $q_n$ be a sequence such that $q_n\rightarrow \infty$, then we have
\begin{align*}
    \lim_{n\rightarrow \infty}\inf_{P\in\mathcal{P}_{1,n}(q_n)} P \left(\frac{\widehat{S}_n-\widehat{\rho}_n}{{2}^{1/2}\widehat{\gamma}_n} >z_{1-\alpha}\right)=1.
\end{align*}
\label{Theorem: consistency}
\end{theorem}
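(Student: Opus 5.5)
The plan is to decompose the standardized statistic into a null-type component, which the argument behind Theorem \ref{Theorem: asymptotic distribution under the null} controls, plus a signal term that grows with $q$. Write $\widehat{\Omega}_n^{1/2}$ and set $U_i = g^\perp(O_i;\eta_P)\widehat{\Omega}_n^{1/2}B_n(X_i)$. Define the centered version $\bar U_i = U_i - \mu_{n,P}$, where $\mu_{n,P} = \mathbbm{E}_P[g^\perp(O;\eta_P)\Omega_{n,P}^{1/2}B_n(X)]$, so that $\Vert\mu_{n,P}\Vert^2 = \sum_{j}(\mathbbm{E}_P[g^\perp\tilde b^\lambda_j])^2 =: D_{n,P}$. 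By the same nuisance and weighting-matrix arguments used for Theorem \ref{Theorem: asymptotic distribution under the null}, which rely on conditional Neyman orthogonality, cross-fitting, and Assumptions \ref{Assumption: weighting matrix}--\ref{Assumption: nuisance}, replacing $\widehat{\eta}_n$ by $\eta_P$ and $\widehat{\Omega}_n$ by $\Omega_{n,P}$ changes $\widehat{S}_n$ by a term that is $o_p(\gamma_{n,P})$ plus a cross term of order $O_p(\sqrt{n D_{n,P}}\cdot o(1))$. Under alternatives, however, $f_P\neq 0$, so the null's reliance on $\mathbbm{E}[g^\perp\mid X]=0$ must be revisited. I would check that the remainder $\frac1n\sum(g^\perp(\widehat\eta)-g^\perp(\eta_P))B_n$ is still $o_p(n^{-1/2})$, since that step used only the rates and orthogonality, not the null itself.

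Expanding $S_n = n\Vert \frac1n\sum_i U_i\Vert^2$ gives three pieces:
\[
S_n = \underbrace{n\Big\Vert\tfrac1n\textstyle\sum_i \bar U_i\Big\Vert^2}_{\text{null part}} + 2\sqrt n\,\mu_{n,P}^\top\Big(\tfrac{1}{\sqrt n}\textstyle\sum_i\bar U_i\Big) + nD_{n,P}.
\]
For the null part, the degenerate U-statistic/martingale Berry--Esseen argument from the proof of Theorem \ref{Theorem: asymptotic distribution under the null} applies verbatim to centered summands, with covariance $\bar\Sigma_{n,P}=\Sigma_{n,P}-\mu\mu^\top$. Subtracting $\rho$ and dividing by $\sqrt2\gamma$ then yields an $O_p(1)$ quantity uniformly in $P$. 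The cross term has variance at most $4n\,\mu^\top\Sigma_{n,P}\mu\le 4n\Vert\Sigma_{n,P}\Vert D_{n,P}$. Because $\Vert\Sigma_{n,P}\Vert\le \gamma_{n,P}$, this term is $O_p(\sqrt{nD_{n,P}\gamma_{n,P}})$ by Chebyshev, uniformly in $P$.

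Next I would handle the centering and scaling estimators. I need $\widehat\rho_n-\rho_{n,P} = o_p(\gamma_{n,P}) + O_p(D_{n,P})$ and $\widehat\gamma_n \le \gamma_{n,P}(1+o_p(1)) + O_p(D_{n,P})$. The extra $D_{n,P}$ enters because $\widehat\Sigma_n$ estimates the uncentered second moment, and bounding these terms reuses the consistency lemmas from the null proof. Combining everything and writing $x = nD_{n,P}/\gamma_{n,P} > q$, the standardized statistic is at least
\[
\frac{x - O_p(\sqrt{x}) - O_p(1)}{\sqrt2\,(1+o_p(1)+O_p(x/n))}.
\]
Since $nD/\gamma$ appears in the numerator while $D$ in the denominator is divided by $n$, this lower bound grows like $x$. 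So for any $\epsilon$ I can choose $q$ large enough that, with probability at least $1-\epsilon$ uniformly over $\mathcal P_{1,n}(q)$, it exceeds $z_{1-\alpha}$. When $q_n\to\infty$ the same bound diverges, which gives power tending to one.

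The main obstacle is uniformity together with the non-null nuisance step. The constants in the $O_p$ bounds must not depend on $P$, which means Chebyshev bounds with explicit moment constants from Assumptions \ref{Assumption: basis} and \ref{Assumption: nuisance}. I also have to control $\widehat\gamma_n$ from above when $D_{n,P}$ is large, since a large $D$ inflates $\widehat\gamma_n$. To handle this I would bound $\widehat\gamma_n\lesssim \widehat\rho_n$, and use trace $\le c_n$ and $\mathbbm{E}[g^{\perp 4}\mid X]\le C$ to show $\widehat\gamma_n = O_p(\gamma_{n,P}+D_{n,P})$ directly.
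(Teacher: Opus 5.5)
Your argument is essentially the paper's proof. Both split the oracle $S_n-\rho_{n,P}$ into a centered degenerate quadratic, a linear cross term controlled by Chebyshev, and the signal $nD_{n,P}$. Both then show that the nuisance and weighting-matrix error is $o_{\mathcal P}(\gamma_{n,P}+nD_{n,P})$, and reuse the centering/scaling lemma, whose proof never uses the null. The one difference in route is for the degenerate part: the paper gets its tightness directly from Chebyshev, via $\operatorname{Var}(I_n)\lesssim \omega_n^2J_n/n+\gamma_{n,P}^2$, so the martingale Berry--Esseen machinery is not needed there.

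One step you propose would fail, although you do not actually need it.
\begin{itemize}
\item[] \textbf{The fallback for $\widehat\gamma_n$.} The bound $\widehat\gamma_n\le\widehat\rho_n$ only gives $\widehat\gamma_n=O_p(J_n)=O_p(\gamma_{n,P}^2)$, because $\rho_{n,P}\asymp J_n$ while $\gamma_{n,P}\asymp J_n^{1/2}$. With a denominator of that size, the standardized statistic is only of order $x/\gamma_{n,P}$, which need not exceed $z_{1-\alpha}$ for fixed $q$.
\item[] \textbf{Why the worry behind it is moot.} A large $D_{n,P}$ cannot inflate $\widehat\gamma_n$, because the bounded-projection lemma with $f=\mathbbm{E}_P[g^\perp\mid X]$ gives $D_{n,P}\lesssim 1$ uniformly. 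Moreover, $\widehat\Sigma_n$ estimates the \emph{uncentered} $\Sigma_{n,P}$, whose trace and Frobenius norm are exactly $\rho_{n,P}$ and $\gamma_{n,P}$. Hence $|\widehat\gamma_n-\gamma_{n,P}|=o_{\mathcal P}(1)$ and $(\widehat\rho_n-\rho_{n,P})/\gamma_{n,P}=o_{\mathcal P}(1)$ hold under alternatives with no $O_p(D_{n,P})$ correction. Your primary route of reusing that lemma is therefore the right one; drop the fallback.
\end{itemize}

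Two smaller fixes are also needed.
\begin{itemize}
\item[] \textbf{Definition of $U_i$.} Define $U_i$ with $\Omega_{n,P}^{1/2}$ rather than $\widehat\Omega_n^{1/2}$. Otherwise the summands are not i.i.d.\ and $\bar U_i$ is not centered.
\item[] \textbf{Size of the weighting-matrix error.} The term $n\,a_n^\top(\widehat\Omega_n-\Omega_{n,P})a_n$ is in general $O_p(nD_{n,P}\delta_nJ_n^{-1/2})+o_p(\gamma_{n,P})$. That is $o_p(nD_{n,P})$, not the $o_p(\sqrt{nD_{n,P}})$ you claimed. It still suffices, since after dividing by $\gamma_{n,P}$ it is $o_p(x)$ and does not disturb your lower bound.
\end{itemize}
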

}

\section{Testing structural assumptions in causal inference: two examples}\label{sec: example}
\label{sec: examples}
%We demonstrate how to apply the proposed testing framework to the causal inference literature by examining two prominent use cases.
\subsection{Testing the mean exchangeability assumption in data fusion problems}\label{subsec: example mean exchangeability}
We consider the problem of testing the mean exchangeability assumption in the data fusion literature \citep{dahabreh2019generalizing}. Let $Y(a)$ denote the potential outcome under a binary treatment $A = a \in \{0, 1\}$, $X$ the vector of observed covariates, and $S$ an indicator of the data source.  For instance, $S=1$ may indicate an observational study and $S=0$ a randomized clinical trial. Combining data from these two sources can improve the efficiency of average causal effect estimation compared to using either source alone, provided certain alignment conditions hold \citep{li2023efficient}. One such condition is the following mean exchangeability assumption:
\begin{align*}
    \mathbb{E}_{P}[Y(a)\mid X,S=1] =  \mathbb{E}_{P}[Y(a)\mid X,S=0], ~ a\in \{0,1\},
\end{align*}
which states that the conditional means of the potential outcomes are equal across two data sources. Assuming no unmeasured confounding within each data source, i.e., $Y(a)\indep A\mid X,S=s,~s\in\{0,1\}$, $\mathbb{E}_{P}[Y(a)\mid X,S=1]$ can be identified from the observed data as $\mathbb{E}_{P}[Y\mid A=a,X,S=s]$. {\textcolor{black}{Thus, testing the mean exchangeability assumption reduces to testing the equality of two conditional means functions:} $\mathbb{E}_{P}[\tilde{g}(O;\tilde{\eta}_{P})\mid X]=0$, with $\tilde{g}(O;\eta_{P}) = \mathbb{E}_{P}[Y\mid A=a,X,S=1]-\mathbb{E}_{P}[Y\mid A=a,X,S=0]$, and $\tilde{\eta}_{P} = \{\mathbb{E}_{P}[Y\mid A=a,X,S=1],\mathbb{E}_{P}[Y\mid A=a,X,S=0]\}$.

Note that $\tilde{g}$ above does not satisfy the conditional Neyman orthogonality condition. As a result, it is necessary to construct an orthogonalized version so that nuisance functions can be estimated with flexible machine learning tools. The orthogonalized version of $\tilde{g}$, denoted ${g}_{\text{ME}}^{\perp}(O;\eta_{P})$, is:
\begin{equation}\label{eq: g for mean exchangeability}
\begin{split}
   {g}_{\text{ME}}^{\perp}(O;\eta_{P}) = &\frac{I\{A=a,S=1\}}{P(A=a,S=1\mid X)}\left(Y-\mathbb{E}_{P}[Y\mid A=a,X,S=1]\right)+\mathbb{E}_{P}[Y\mid A=a,X,S=1]\\
    -&\frac{I\{A=a,S=0\}}{P(A=a,S=0\mid X)}\left(Y-\mathbb{E}_{P}[Y\mid A=a,X,S=0]\right)-\mathbb{E}_{P}[Y\mid A=a,X,S=0],
    \end{split}
\end{equation}
where the nuisance functions consist of four elements: \[
\eta_{P} = \{P(A=a,S=1\mid X),P(A=a,S=0\mid X),\mathbb{E}_{P}[Y\mid A=a,X,S=1],\mathbb{E}_{P}[Y\mid A=a,X,S=0]\}.
\]
%The orthogonalized version still satisfies $\mathbb{E}_{P}[g^\perp_{\text{ME}}(O;{\eta}_{P})\mid X] =\mathbb{E}_{P}[\tilde{g}(O;{\eta}_{P})\mid X]= 0$ under the null hypothesis, so the mean exchangeability assumption can be tested using methods proposed in Section \ref{subsec: Algorithm} with $g(O;\eta_{P}) = {g}_{\text{ME}}^{\perp}(O;\eta_{P})$. 
\noindent Sufficient conditions for Assumption \ref{Assumption: nuisance} to hold are given in Supplementary Material \ref{sec: revisiting examples, supp}. {\color{black}Some other null hypotheses can also be formulated as the equality of two conditional counterfactual mean functions. For example, testing whether the conditional average treatment effect is zero can be written as $\mathbbm{E}_P[Y(1)|X] = \mathbbm{E}_P[Y(0)|X]$. Similar methods can be applied in such settings.}

%{\color{black}As a final remark, other causal testing problems may share the same statistical structure as the mean exchangeability example. For instance, one may be interested in testing whether a conditional treatment effect is zero, which amounts to testing equality of two conditional potential-outcome means. The first real-data application in Section \ref{sec: application 1} provides an example of such a problem.}

\subsection{Testing a compatibility condition in instrumental variable problems}\label{subsec: example IV}
Let $Z = (Z_1,Z_2)$, $Z_1,Z_2 \in \{0,1\}$, be a vector of binary instrumental variables. Let $X$ denote the vector of baseline covariates, $D$ the treatment uptake, and $Y$ the outcome of interest. Under standard IV assumptions, the conditional average treatment effect among $Z_1$-compliers, defined as those with $\{D(Z_1 = 0)=0, D(Z_1 = 1)=1\}$, can be identified as follows \citep{angrist1996identification,frolich2007nonparametric,angrist2010extrapolate}:
\[
\Psi_{Z_1,P}(X) = \frac{\mathbb{E}_{P}[Y\mid Z_1=1,X]-\mathbb{E}_{P}[Y\mid Z_1=0,X]}{\mathbb{E}_{P}[D\mid Z_1=1,X]-\mathbb{E}_{P}[D\mid Z_1=0,X]}.
\]

The conditional average treatment effect among $Z_2$-compliers can be analogously identified: 
\[
\Psi_{Z_2,P}(X) = \frac{\mathbb{E}_{P}[Y\mid Z_2=1,X]-\mathbb{E}_{P}[Y\mid Z_2=0,X]}{\mathbb{E}_{P}[D\mid Z_2=1,X]-\mathbb{E}_{P}[D\mid Z_2=0,X]}.
\]

\citet{angrist2010extrapolate} discussed the \emph{compatibility condition}: $\Psi_{Z_1,P}(X) = \Psi_{Z_2,P}(X).$ If the compatibility condition holds, then the difference between the complier average treatment effect under $Z_1$ and that under $Z_2$ is driven solely by differences in the observed characteristics of
compliers \citep{angrist2010extrapolate}.

Testing the compatibility assumption can be reduced to testing $\mathbb{E}_{P}[\tilde{g}(O;\tilde{\eta}_{P})\vert \boldsymbol{X}] = 0$, with $\tilde{g}(O;\tilde{\eta}_{P}) = \Psi_{Z_1,P}(X) - \Psi_{Z_2,P}(X)$ and $\tilde{\eta}_P = \{\mathbb{E}_{P}[D\mid Z_1 = z_1,X], \mathbb{E}_{P}[Y\mid Z_1=z_1,X], \mathbb{E}_{P}[D\mid Z_2 = z_2,X], \mathbb{E}_{P}[Y\mid Z_2=z_2,X],~z_1,z_2 \in \{0, 1\}\}$. Using the influence function results in \citet{wang2024nested}, the orthogonalized version of $\tilde{g}(O;\tilde{\eta}_{P})$, denoted as $\tilde{g}_{\text{COM}}^\perp(O;\eta_{P})$, is ${g}_{\text{COM}}^\perp(O;\eta_{P}) = g_{Z_1}(O;\eta_{P})-g_{Z_2}(O;\eta_{P})$,
where, for $j\in \{1,2\}$,
{\small\begin{align*}
    g_{Z_j}(o;\eta_{P})&=
(z,\boldsymbol{x},d,y)\mapsto\frac{1}{\pi_{j,P}(\boldsymbol{x})}\Bigg\{\frac{1\{z_j=1\}}{P(Z_j=1\mid\boldsymbol{X} = \boldsymbol{x})}\Big[y-\mathbb{E}_P[Y\mid Z_j=1,\boldsymbol{X} = \boldsymbol{x}]\Big] \\
        &-\frac{1\{z_j=0\}}{P(Z_j=0\mid\boldsymbol{X} = \boldsymbol{x})}\Big[y-\mathbb{E}_P[Y\mid Z_j = 0,\boldsymbol{X} = \boldsymbol{x}]\Big]\Bigg\}\\
        &- \frac{\mu_{j,P}(\boldsymbol{x})}{[\pi_{j,P}(\boldsymbol{x})]^2}\Bigg\{\frac{1\{Z_j = 1\}}{P(Z_j=1\mid\boldsymbol{X} = \boldsymbol{x})}\Big[d-\mathbb{E}_P[D\mid Z_j=1,\boldsymbol{X} = \boldsymbol{x}]\Big] \\
        &-\frac{1\{Z_j=0\}}{P(Z_j=0\mid\boldsymbol{X} = \boldsymbol{x})}\Big[d-\mathbb{E}_P[D\mid Z_j=0,\boldsymbol{X} = \boldsymbol{x}]\Big]\Bigg\}+\Psi_{Z_j}(X),\\
       \pi_{j,P} & = \mathbb{E}_{P}[D\mid Z_j = 1,X=x] - \mathbb{E}_{P}[D\mid Z_j = 0,X=x], \\
       \mu_{j,P} & = \mathbb{E}_{P}[Y\mid Z_j = 1,X=x] - \mathbb{E}_{P}[Y\mid Z_j = 0,X=x],
\end{align*}
}
and the nuisance functions consist of six elements: 
\[
\eta_{P} = \left\{\mathbb{E}_{P}[D\mid Z_j=z,X=x],~~\mathbb{E}_{P}[Y\mid Z_j=z,X=x], ~~\mathbb{E}_{P}[Z_j=z\mid X=x], j \in \{1,2\}\right\}.
\]

%\noindent\citeauthor{angrist2010extrapolate}'s \citeyearpar{angrist2010extrapolate} compatibility condition can then be tested using methods proposed in Section \ref{subsec: Algorithm} with $g(O;\eta_{P}) = {g}_{\text{COM}}^{\perp}(O;\eta_{P})$.  
\noindent Sufficient conditions for Assumption \ref{Assumption: nuisance} to hold are given in Supplementary Material \ref{sec: revisiting examples, supp}.

\section{Simulation}
\label{sec: simulation}

\subsection{Goal and structure}
\label{subsec: simulation overview}
{\color{black} We evaluate the size and power of the proposed generalized projection tests under different choices of the sieve dimension $J_n$ in finite-sample settings. We specify the basis function for each covariate separately and use ${J}^*_n$ to denote the number of basis function we use for each continuous covariate. In addition, we evaluate the size and power of an adaptive test that combines multiple tests with sieve dimensions on an exponentially spaced grid, and use ${\mathcal{J}}^*_n$ to denote the collection of candidate sieve dimensions for each continuous covariate. Throughout the simulation, we set $\mathcal{J}^*_n = \{J=\underline{J}^*_n, \underline{J}^*_n 2, \ldots, \underline{J^*}_n 2^{j_{\max}}\}$, with $\underline{J}^*_n = \lfloor 2 \sqrt{\log n} \rfloor$ and $\underline{J^*}_n 2^{j_{\max}} < \lfloor 6 n^{1/4}\rfloor$.} Section \ref{subsec:simulation mean exchangeability} and \ref{subsec:simulation IV} examine the performance of our proposed tests for testing the mean exchangeability assumption and the IV compatibility condition, respectively. {\color{black} For both simulation studies, we report results for the standardized tests using the Fourier basis in the main article. Parallel results for the unstandardized tests and based on the orthonormal Legendre basis are relegated to the Supplemental Material \ref{subsec:additional simulation}.}

\subsection{Testing the mean exchangeability assumption}
\label{subsec:simulation mean exchangeability}
We generate data from $N$ study participants. The first factor we vary is the sample size:
\begin{description}
    \item[Factor 1: Sample size.] $N = 250$, $500$, $1000$, and $1500$.
\end{description}
For each participant, we generate two observed covariates $(X_1, X_2)$, both are independently drawn from Uniform[-1, 1]. Each participant is from one of the two data sources, e.g., observational database or randomized controlled trial. We let $S = 1$ or $0$ denote the data source membership, and the binary indicator $S$ is generated according to
$P(S=1 \mid X) = \text{expit}(X_1-X_2).$
Then, within each data source, a participant is assigned to treatment ($A = 1$) or control ($A = 0$), and the treatment assignment follows $P(A=1\mid X,S) = S\text{expit}(1.5X_1-0.5X_2)+(1-S)\text{expit}(X_1+0.5X_2)$
in either data source. 

In this simulation study, we only consider testing $\mathbbm{E}_{P}[Y(0)\mid X,S=0] = \mathbbm{E}_{P}[Y(0)\mid X,S=1].$ Testing $\mathbbm{E}_{P}[Y(1)\mid X,S=0] = \mathbbm{E}_{P}[Y(1)\mid X,S=1]$ is analogous and hence omitted. We do not consider testing the mean exchangeability of the causal contrast, that is, $\mathbbm{E}_{P}[Y(1)-Y(0)\mid X,S=0] = \mathbbm{E}_{P}[Y(1)-Y(0)\mid X,S=1],$ because the method of \citet{racine2006testing} does not apply.

Lastly, for each participant, we simulate two potential outcomes, $Y(0)$ and $Y(1)$, as follows:
\begin{align*}
    Y(0)& = \mathbbm{1}\{S=0\}\cdot(X_1+X_2+\text{expit}(X_1))
    \\
    &+\mathbbm{1}\{S=1\}\cdot\left\{X_1+X_2+\text{expit}(X_1)+\alpha_1\{\cos(\pi X_1)+\cos(\pi X_2)\} + \alpha_2(X_1+X_2)\right\}+0.5\epsilon,\\
    Y(1)& = Y(0) + 2X_1 - 2X_2,
\end{align*}
where $\epsilon \sim N(0, 1)$ is a standard normal error. The observed outcome $Y$ satisfies $Y = A \cdot Y(1) + (1-A)\cdot Y(0)$ in either data source.

The second factor we vary is $(\alpha_1,\alpha_2)$, which controls the degree of alignment in $Y(0)$ between the two sources, $S = 1$ and $S = 0$:
\begin{description}
    \item[Factor 2: Degree of misalignment $(\alpha_1,\alpha_2)$.] Scenario I: $(\alpha_1,\alpha_2) = (0, 0)$ corresponds to perfect alignment, i.e., the null hypothesis holds; Scenario II: $(\alpha_1,\alpha_2) = (0.2, 0)$ corresponds to a weak misalignment, with a nonlinear difference; Scenario III: $(\alpha_1,\alpha_2) = (0, 0.2)$ corresponds to a weak misalignment, with a linear difference; Scenario IV: $(\alpha_1,\alpha_2) = (0.4, 0.2)$ corresponds to a strong misalignment, with a mostly nonlinear difference; Scenario V: $(\alpha_1,\alpha_2) = (0.2, 0.4)$ corresponds to a strong misalignment, with a mostly linear difference.
\end{description}
\noindent For each of the $4 \times 5 = 20$ data-generating processes, we generate $1000$ datasets and calculate the proportion of times the null hypothesis $\mathbbm{E}_{P}[Y(0)\mid X,S=0] = \mathbbm{E}_{P}[Y(0)\mid X,S=1]$ is rejected based on the generalized projection (GP) test described in Section \ref{subsec: example mean exchangeability}. When implementing the tests, we estimate all nuisance parameters using the \texttt{SuperLearner} package in \textsf{R} \citep{Polley2024superlearner}. For estimating the models of $Y$, we include both \texttt{glm} and \texttt{randomForest} in the \texttt{SuperLearner} library. For estimating the models of $(A,S)$, we only use \texttt{glm} to maintain the stability of inverse probability weights when the sample size is small. {\color{black} We compare the GP test---using different choices of the sieve dimension (denoted $J^\ast$), as well as an adaptive combination across dimensions---with the fixed-dimension projection test with $\{\gamma(x;\theta) = x^\top\theta,~\theta \in \mathbb{R}^d\}$ (referred to as the linear projection test, or LP test), and the conditional independence test of \citet{racine2006testing} (implemented in the \textsf{R} package \textsf{np}) .}

{\color{black}
Panel A of Table~\ref{tab:simulation_results} summarizes the simulation results for standardized GP tests using the Fourier basis. The performance of the GP test depends on both the sample size $N$ and the number of basis functions $J^\ast$ for each covariate. When $N$ is relatively small (e.g., $N = 250$), $J^\ast = 4$ provides good type I error control, whereas $J^\ast \in \{8, 16\}$ appears conservative. When $N$ is larger (e.g., $N = 1000$), $J^\ast=5 $ yields slightly inflated type I error, while $J^\ast = 10$ continues to provide good type I error control and $J^\ast = 20$ is  conservative. These results suggest that, in general, using more basis functions improves type I error control at the cost of reduced power. The adaptive test that combines multiple GP tests maintains good type I error control across almost all data-generating processes while preserving good power, making it our recommended default test. 

Between the two projection-based tests, when the misalignment is only nonlinear (Scenario II), the LP test has minimal power, whereas the GP test remains consistent, with its power approaching one as $N$ increases. When the misalignment is mostly linear (Scenarios III and V), the LP test exhibits higher power. When the misalignment contains both linear and nonlinear components but is predominantly nonlinear (Scenario IV), the GP test is more powerful than the LP test. 

Finally, the conditional independence test of \citet{racine2006testing} appears slightly anti-conservative, consistent with previous findings reported in \citet{racine2006testing} and \citet{luedtke2019omnibus}. In terms of power, the \citet{racine2006testing} test is either more powerful than or comparable to the projection tests across the simulation settings considered here.

In Supplemental Material \ref{subsec:high-dimensional simulation}, we conduct additional simulations with a 10-dimensional 
$X$. The GP test continues to control the type I error rate, although it becomes conservative. In Supplemental Material \ref{subsec:computation time}, we compare computation times across methods and settings; the GP test scales well in both covariate dimension and sample size.
}

\subsection{Testing the IV compatibility condition}
\label{subsec:simulation IV}

We generate data from $N$ study participants. The first factor we vary is the sample size:
\begin{description}
    \item[Factor 1: Sample size.] $N = 1000$, $2000$, $3000$ and $5000$.
\end{description}
For each participant, we generate two observed covariates $(X_1, X_2)$, both independently drawn from Uniform[-1, 1]. We then generate two binary instrumental variables, $Z_1$ and $Z_2$ independently of the other as follows: 
{
\small
\begin{align*}
    &P(Z_1 = 1\mid X) = \text{expit}(0.5+0.5I\{X_1>0\}-0.5\{X_2>0\}),\\
    &P(Z_2 = 1\mid X) = \text{expit}(0.5+0.5I\{X_1>0\}+0.5\{X_2>0\}).
\end{align*}}
\noindent Next, We generate an unmeasured confounder $U \sim N(-0.3,0.3^2)$. Each participant belongs to one of the following $5$ principal strata defined by $D(Z_1, Z_2)$ \citep{mogstad2021causal}: Always-Never-Takers (ANT): $\{D(Z_1=0,Z_2=0)=D(Z_1=1,Z_2=0)=D(Z_1=0,Z_2=1)=D(Z_1=1,Z_2=1)=0\}$; Strict $Z_1$-Compliers (SCO1): $\{D(Z_1=0,Z_2=0)=D(Z_1=0,Z_2=1)=0,D(Z_1=1,Z_2=0)=D(Z_1=1,Z_2=1)=1\}$; Strict $Z_2$-Compliers (SCO2): 
$\{D(Z_1=0,Z_2=0)=D(Z_1=1,Z_2=0)=0,D(Z_1=0,Z_2=1)=D(Z_1=1,Z_2=1)=1\}$; reluctant compliers (RCO): $\{D(Z_1=0,Z_2=0)=D(Z_1=1,Z_2=0)=D(Z_1=0,Z_2=1)=0,D(Z_1=1,Z_2=1)=1\}$; and eager compliers (ECO): $\{D(Z_1=0,Z_2=0)=0,D(Z_1=1,Z_2=0)=D(Z_1=0,Z_2=1)=D(Z_1=1,Z_2=1)=1\}$. 

We generate each participant's principal stratum membership $S$ as follows.  Let $X^*_1 = 1\{X_1>0\}$ and $X^*_2 = 1\{X_2>0\}$. Define
{\begin{equation*}
\begin{split}
    &g_{\text{ANT}} = \exp(1-X^*_2+0.3\mathbbm{1}\{U>0\}), \quad g_{\text{SCO1}} = \exp(3.5+0.5X_1^*+X^*_2+0.3\mathbbm{1}\{U>0\}),\\
    &g_{\text{SCO2}} = \exp(3.5+0.5X_1^*+X^*_2+0.3\mathbbm{1}\{U>0\}), \quad
    g_{\text{RCO}} = \exp(2+X_1^*+X^*_2+0.3\mathbbm{1}\{U>0\}),\\
    &g_{\text{ECO}} = \exp(2+X_1^*+0.5X^*_2+0.3\mathbbm{1}\{U>0\}).
    \end{split}
\end{equation*}}
We set $P(S_i = j) = \frac{g_j}{\sum_{j}g_j},$ $j\in \{\text{ANT}, \text{SCO1}, \text{SCO2}, \text{RCO},\text{ECO}\}$.

The instrumental variable assignment $(Z_1, Z_2)$ and the principal stratum membership $S$ together determine the treatment received $D$ for each participant. Finally, we simulate the potential outcomes, $Y(0)$ and $Y(1)$, based on each participant's principal stratum membership as follows:
\begin{equation*}
\begin{split}
Y(0) = &1+X_1+X_2+U+\epsilon, \\
    Y(1) = &\mathbbm{1}\{S\in\{ \text{ANT}\}\}\cdot(1+X_1+X_2+U)+\mathbbm{1}\{S\in\{ \text{SCO1}, \text{RCO}, \text{ECO}\}\}\cdot(1-X_1+X_2+U)\\
    &+\mathbbm{1}\{S= \text{SCO2}\}\cdot\left(1-X_1+X_2+ U + \beta_1\left\{\cos(\pi X_1)+\cos(\pi X_2)\right\}+\beta_2(X_1+X_2)\right\}+\epsilon,
\end{split}
\end{equation*}
where $\epsilon \sim N(0, 1)$. Under this data-generating process, the average treatment effect varies across principal strata. The second factor we vary is $(\beta_1, \beta_2)$, which controls the extent to which the compatibility condition is violated---that is, the degree to which the conditional average treatment effect among $Z_1$-compliers differs from that among  $Z_2$-compliers:
\begin{description}
    \item[Factor 2: Degree of misalignment $(\beta_1,\beta_2)$.] Scenario I: $(\beta_1,\beta_2) = (0, 0)$, that is, the compatibility condition holds; Scenario II: $(\beta_1,\beta_2) = (0.3, 0)$ corresponds to a mild violation of the compatibility condition, with a nonlinear difference between two conditional principal effects;  Scenario III: $(\beta_1,\beta_2) = (0, 0.3)$ corresponds to a mild violation of the compatibility condition, with a linear difference between two conditional principal effects; Scenario IV: $(\beta_1,\beta_2) = (0.6, 0.3)$ corresponds to a strong violation, with a mostly nonlinear difference between two conditional principal effects; Scenario V: $(\beta_1,\beta_2) = (0.3, 0.6)$ corresponds to a strong violation, with a mostly linear difference between two conditional principal effects.
\end{description}

For each of the $4 \times 5 = 20$ data-generating processes, we generate 1000 datasets. 
We use  \texttt{glm}, \texttt{randomForest} and \texttt{gam} in \texttt{SuperLearner} library for estimating for the models of $Y$ and use \texttt{glm} and \texttt{randomForest} \texttt{SuperLearner} library for estimating for the models of $Z$ and $D$.  The generalized projection test is compared with the fixed-dimension projection test, 
but not with the conditional independence test of \citet{racine2006testing}, as the 
latter is not applicable in this setting.

{\color{black}
Panel B of Table~\ref{tab:simulation_results} summarizes the simulation results for standardized GP tests using the Fourier basis. We observe patterns similar to those in Panel A. In general, the GP tests, when the number of basis functions is not too small relative to the sample size, and the adaptive test that combines GP tests across sieve dimensions appear to have good type I error control in most cases. In terms of power, GP tests tend to outperform LP tests when the alternative has a strong nonlinear components, but underperform when the alternative is mostly linear. Again, similar to results in Panel A, the GP tests remain powerful across all types of alternatives considered here, while the LP test has essentially no power when the alternative is purely nonlinear. Additional simulations with a 10-dimensional $X$ and a comparison of computational cost are reported in Supplemental Materials \ref{subsec:high-dimensional simulation} and \ref{subsec:computation time}.
}

\begin{table}[htbp]
\centering
\tiny
\caption{%
\textbf{Panel A}: Proportion of times the mean exchangeability assumption was
rejected across different parameter settings, based on the fixed-dimension, linear
projection (LP) test, the generalized projection (GP) tests, and the conditional
independence test of \citet{racine2006testing}.
\textbf{Panel B}: Proportion of times the IV compatibility condition was rejected
across different parameter settings, based on the LP test and the GP tests.
For the GP test, results correspond to the standardized tests using the Fourier basis.
$J^*$ denotes the number of basis functions for each covariate. Scenario I corresponds
to the null hypothesis case; Scenario II corresponds to a weak nonlinear misalignment;
Scenario III corresponds to a weak linear misalignment; Scenario IV corresponds to a
strong nonlinear misalignment; Scenario V corresponds to a strong linear misalignment.
}
\label{tab:simulation_results}
\resizebox{\textwidth}{!}{%
\begin{tabular}{cclccccc}
\toprule
& \begin{tabular}[c]{@{}c@{}}Sample \\ size\end{tabular}
& \begin{tabular}[c]{@{}c@{}}Testing \\ method\end{tabular}
& \begin{tabular}[c]{@{}c@{}}Scenario \\ I\end{tabular}
& \begin{tabular}[c]{@{}c@{}}Scenario \\ II\end{tabular}
& \begin{tabular}[c]{@{}c@{}}Scenario \\ III\end{tabular}
& \begin{tabular}[c]{@{}c@{}}Scenario \\ IV\end{tabular}
& \begin{tabular}[c]{@{}c@{}}Scenario \\ V\end{tabular} \\
\midrule
\multirow{24}{*}{Panel A}
& \multirow{6}{*}{250}
& LP test & 0.069 & 0.089 & 0.246 & 0.233 & 0.592 \\
& & GP test ($J^*=4$) & 0.051 & 0.128 & 0.088 & 0.486 & 0.325 \\
& & GP test ($J^*=8$) & 0.029 & 0.065 & 0.060 & 0.298 & 0.194 \\
& & GP test ($J^*=16$) & 0.012 & 0.019 & 0.015 & 0.112 & 0.087 \\
& & GP test (Combined) & 0.025 & 0.083 & 0.056 & 0.382 & 0.234 \\
& & Racine & 0.092 & 0.206 & 0.367 & 0.833 & 0.909 \\
\cmidrule{2-8}
& \multirow{6}{*}{500}
& LP test & 0.062 & 0.059 & 0.428 & 0.393 & 0.922 \\
& & GP test ($J^*=4$) & 0.052 & 0.319 & 0.160 & 0.916 & 0.809 \\
& & GP test ($J^*=8$) & 0.043 & 0.181 & 0.108 & 0.818 & 0.659 \\
& & GP test ($J^*=16$) & 0.024 & 0.087 & 0.061 & 0.620 & 0.447 \\
& & GP test (Combined) & 0.039 & 0.230 & 0.117 & 0.875 & 0.726 \\
& & Racine & 0.081 & 0.332 & 0.610 & 0.990 & 0.996 \\
\cmidrule{2-8}
& \multirow{6}{*}{1000}
& LP test & 0.065 & 0.062 & 0.739 & 0.695 & 0.996 \\
& & GP test ($J^*=5$) & 0.067 & 0.690 & 0.374 & 0.999 & 0.997 \\
& & GP test ($J^*=10$) & 0.045 & 0.505 & 0.257 & 0.996 & 0.989 \\
& & GP test ($J^*=20$) & 0.024 & 0.294 & 0.155 & 0.973 & 0.932 \\
& & GP test (Combined) & 0.053 & 0.592 & 0.283 & 0.996 & 0.994 \\
& & Racine & 0.082 & 0.696 & 0.871 & 1.000 & 1.000 \\
\cmidrule{2-8}
& \multirow{6}{*}{1500}
& LP test & 0.061 & 0.057 & 0.911 & 0.892 & 1.000 \\
& & GP test ($J^*=5$) & 0.064 & 0.903 & 0.561 & 1.000 & 1.000 \\
& & GP test ($J^*=10$) & 0.044 & 0.751 & 0.478 & 1.000 & 1.000 \\
& & GP test ($J^*=20$) & 0.039 & 0.550 & 0.299 & 0.999 & 0.996 \\
& & GP test (Combined) & 0.041 & 0.846 & 0.495 & 1.000 & 1.000 \\
& & Racine & 0.079 & 0.897 & 0.968 & 1.000 & 1.000 \\
\midrule
\multirow{22}{*}{Panel B}
& \multirow{5}{*}{1000}
& LP test & 0.039 & 0.036 & 0.072 & 0.076 & 0.157 \\
& & GP test ($J^*=5$) & 0.063 & 0.077 & 0.073 & 0.204 & 0.129 \\
& & GP test ($J^*=10$) & 0.060 & 0.062 & 0.061 & 0.116 & 0.108 \\
& & GP test ($J^*=20$) & 0.047 & 0.046 & 0.039 & 0.074 & 0.072 \\
& & GP test (Combined) & 0.057 & 0.054 & 0.060 & 0.162 & 0.111 \\
\cmidrule{2-8}
& \multirow{5}{*}{2000}
& LP test & 0.048 & 0.042 & 0.123 & 0.126 & 0.410 \\
& & GP test ($J^*=5$) & 0.064 & 0.129 & 0.110 & 0.550 & 0.354 \\
& & GP test ($J^*=10$) & 0.058 & 0.099 & 0.082 & 0.408 & 0.294 \\
& & GP test ($J^*=20$) & 0.053 & 0.070 & 0.060 & 0.263 & 0.183 \\
& & GP test (Combined) & 0.063 & 0.096 & 0.082 & 0.464 & 0.308 \\
\cmidrule{2-8}
& \multirow{6}{*}{3000}
& LP test & 0.038 & 0.055 & 0.175 & 0.200 & 0.662 \\
& & GP test ($J^*=5$) & 0.056 & 0.220 & 0.132 & 0.765 & 0.588 \\
& & GP test ($J^*=10$) & 0.049 & 0.145 & 0.108 & 0.638 & 0.482 \\
& & GP test ($J^*=20$) & 0.041 & 0.101 & 0.090 & 0.476 & 0.347 \\
& & GP test ($J^*=40$) & 0.039 & 0.081 & 0.060 & 0.296 & 0.234 \\
& & GP test (Combined) & 0.047 & 0.158 & 0.105 & 0.691 & 0.506 \\
\cmidrule{2-8}
& \multirow{6}{*}{5000}
& LP test & 0.045 & 0.038 & 0.359 & 0.385 & 0.917 \\
& & GP test ($J^*=5$) & 0.056 & 0.349 & 0.190 & 0.968 & 0.868 \\
& & GP test ($J^*=10$) & 0.060 & 0.260 & 0.133 & 0.923 & 0.776 \\
& & GP test ($J^*=20$) & 0.057 & 0.190 & 0.109 & 0.799 & 0.654 \\
& & GP test ($J^*=40$) & 0.047 & 0.130 & 0.086 & 0.604 & 0.483 \\
& & GP test (Combined) & 0.055 & 0.263 & 0.146 & 0.957 & 0.811 \\
\bottomrule
\end{tabular}%
}
\end{table}

\section{Two case studies}

\subsection{COVID-19 incidence in the COVAIL study}
\label{sec: application 1}
The COVID-19 Variant Immunologic Landscape (COVAIL) trial [NCT05289037] is a Phase 2 randomized study \citep{branche2023comparison,branche2023immunogenicity}. The trial proceeded in four sequential stages, with most participants enrolling in Stage 1 (variant Moderna mRNA-1273 vaccine) or Stage 2 (variant Pfizer--BioNTech BNT162b2 vaccine); within each stage, participants were randomized to different variant formulations. COVAIL's primary objective is to evaluate the immunogenicity and safety of a second COVID-19 booster using monovalent or bivalent variant vaccines \citep{branche2023comparison,branche2023immunogenicity}. The trial also collected clinical endpoints enabling assessment of cumulative COVID-19 incidence across vaccine recipients. \citet{DIEMERT2025127718} recently reported that Omicron-containing boosters were more protective against COVID-19 than the Prototype vaccine among Stage 2 (Pfizer--BioNTech) recipients, but not among Stage 1 (Moderna) recipients. Figure \ref{fig: km curve} displays Kaplan-Meier curves for Omicron-containing and Prototype vaccine recipients within each stage.

%Figure \ref{fig: km curve} displays Kaplan-Meier curves for four groups: Stage 1 recipients of an Omicron-containing vaccine (Omicron + Beta, Omicron + Delta, Omicron, Omicron + Prototype; $n = 407$), Stage 1 recipients of the Prototype vaccine ($n = 97$), Stage 2 recipients of an Omicron-containing vaccine (Omicron + Beta, Omicron, Omicron + Prototype; $n = 156$), and Stage 2 recipients of the Prototype vaccine ($n = 47$).

{\color{black}
We assess whether the conditional probability of acquiring COVID-19 by Month 6 differed between Omicron-containing and Prototype vaccine recipients within each stage:
\[
P(Y(A = \text{Omicron-containing}) = 1 \mid X, \text{Stage} = s) - P(Y(A = \text{Prototype}) = 1 \mid X, \text{Stage} = s) = 0,
\]
for $s = 1$ and $ 2$, where $X$ includes: (1) a risk score from an ensemble learning method \citep{zhang2025neutralizing,fong2025neutralizing}, (2) a binary indicator for prior COVID-19 infection, (3) an external force of infection score, and (4) baseline neutralizing antibody levels against Omicron BA.1.

We applied the adaptive generalized projection test, combining candidate sieve dimensions $J^\ast \in \{4, 8, 16\}$, as described in Sections~\ref{subsec: example mean exchangeability} and \ref{subsec:simulation mean exchangeability}. Among Stage 2 participants, the conditional probability of acquiring COVID-19 differed significantly by vaccine insert (combined-test $p=0.002$; constituent $p$-values $<0.001$, $0.002$, and $0.021$ for $J^\ast=4,8,16$, respectively). Among Stage 1 participants, we failed to reject the null hypothesis (combined-test $p=1$; constituent $p$-values $0.578$, $0.550$, and $0.751$ for $J^\ast=4,8,16$, respectively). These results corroborate the findings of \citet{DIEMERT2025127718}, who studied the marginal, but not conditional, cumulative incidence of COVID-19 among Prototype- and Omicron-containing-vaccine recipients.}

%The number of basis functions was set to $J^\ast = 3,6,12$ for each continuous covariate, as this choice yielded the most favorable simulation results for a sample size of approximately $200$ (Panel A of Table \ref{tab: simulation results}).  The results indicate that the probability of acquiring COVID-19 differed significantly between participants with the same baseline covariates who received the Pfizer--BioNTech Prototype vaccine and those who received the Pfizer--BioNTech Omicron-containing vaccine  ($p<0.001$ for both $J^*=3$ and $J^*=6$ tests, $p=0.029$ for $J^*=12$ test, $p<0.001$ for the combined test). In contrast, we failed to reject the null hypothesis of equal conditional COVID-19 incidence among participants who received the Moderna Prototype vaccine and those who received the Moderna Omicron-containing vaccine ($p=0.686$ for $J^*=3$,$p=0.471$ for $J^*=6$, $p=0.555$ for $J^*=12$ test, $p = 1$ for the combined test). 

%Lastly, we tested the null hypothesis that the probability of acquiring COVID-19 differed among Omicron-based vaccinees between two stages. The null hypothesis was forcefully rejected based on our proposed generalized projection test ($p$-value $< 0.001$). 
\begin{figure}[ht]
    \centering
    \includegraphics[width=0.95\linewidth]{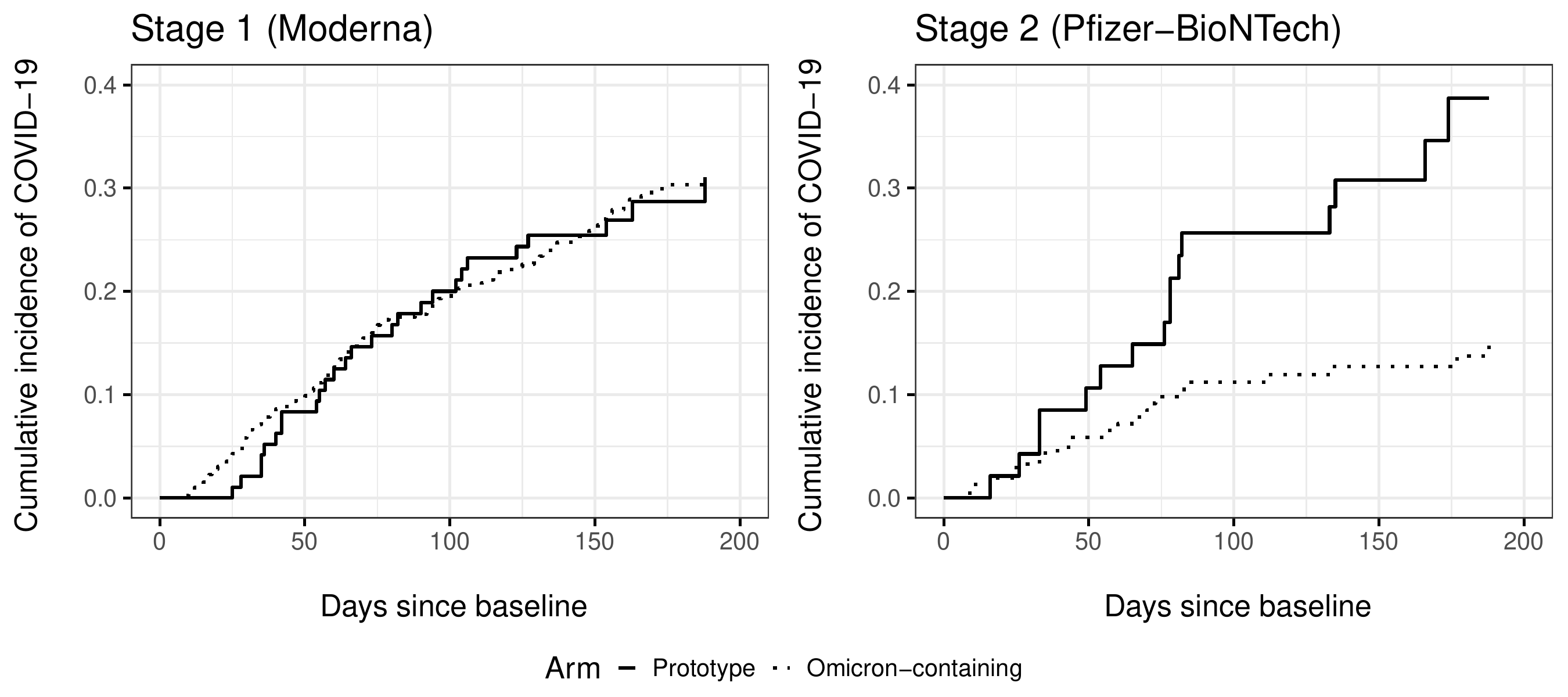}
\captionsetup{width=\textwidth}
    \caption{Kaplan-Meier estimates of the cumulative incidence among Stage 1 Moderna participants who received an Omicron-containing vaccine (Omicron + Beta, Omicron + Delta, Omicron, Omicron + Prototype; $n = 407$ in total), Stage 1 Moderna participants who received a Prototype vaccine ($n = 97$), Stage 2 Pfizer--BioNTech participants who received an Omicron-containing vaccine (Omicron + Beta, Omicron, Omicron + Prototype; $n = 156$ in total), and Stage 2 Pfizer--BioNTech participants who received a Prototype vaccine ($n = 47$).}
    \label{fig: km curve}
\end{figure}

{\color{black}
\subsection{The effects of family size on female labor supply}
\label{sec: application 2}
We revisited the empirical study of \citet{angrist2010extrapolate}, which examines the effect of family size on female labor supply. Specifically, \citet{angrist2010extrapolate} studied the effect of a third birth on mothers' number of working weeks, using two instrumental variables for childbearing: twin births and sibling sex composition. Our goal here is to test whether the IV compatibility assumption holds for these two IVs within Washington State, controlling for mother's age, age at first birth, ages of the first two children, and a binary indicator for race (White versus other), as in \citet{angrist2010extrapolate}. Using $N=7649$ samples from Washington State, we found no evidence against the compatibility condition (combined-test $p=1$; constituent $p$-values $0.752$, $0.775$, $0.725$, and $0.722$ for $J^\ast=5,10,20,40$, respectively). Failing to reject the IV compatibility assumption implies no evidence against \citeauthor{angrist2010extrapolate}'s conditional effect ignorability assumption for both IVs \citeyearpar{angrist2010extrapolate}, a key assumption in their original data analysis.

}

\section*{Acknowledgements}
We thank the Editor, Associate Editor, and two reviewers for their constructive feedback. We thank the participants, site staff, and investigators of the COVAIL trial sponsored by the Division of Microbiology and Infectious Diseases at  the National Institute Of Allergy and Infectious Diseases of the National Institutes of Health (NIAID NIH). This work is funded by the NIAID/NIH (R01AI192632 to BZ). The content is solely the responsibility of the authors and does not necessarily represent the official views of the National Institutes of Health.

{\singlespacing
\bibliographystyle{apalike}
\bibliography{paper-ref}
}

\newpage
\begin{center}
    \Large Supplemental Materials to ``\emph{Generalized projection tests for function-valued parameters with applications to testing structural causal assumptions}" \\by Rui Wang, Albert Osom and Bo Zhang 
\end{center}
\renewcommand{\thesection}{S\arabic{section}}
\setcounter{page}{1}
\setcounter{section}{0}

\renewcommand{\thefigure}{S.\arabic{figure}}

\setcounter{figure}{0}

\renewcommand{\thetable}{S\arabic{table}}
\setcounter{table}{0}

\renewcommand{\thelemma}{S\arabic{lemma}}
\setcounter{lemma}{0}
\renewcommand{\thedefinition}{S\arabic{definition}}
\setcounter{definition}{0}

\renewcommand{\theexample}{S\arabic{example}}
\setcounter{example}{0}

\renewcommand{\theproposition}{S\arabic{proposition}}
\setcounter{proposition}{0}

\renewcommand{\thetheorem}{S\arabic{theorem}}
\setcounter{theorem}{0}

\renewcommand{\theremark}{S\arabic{remark}}
\setcounter{remark}{0}

\setcounter{figure}{0}

Throughout the supplemental material, we will omit subscript $n$ or $P$ when there is no ambiguity. For example, $\Omega_{n,P}$ may be written as $\Omega_{n}$,  $\widehat{\Omega}_n$ may be written as $\widehat{\Omega}$ for simplicity. For nuisance estimator $\widehat{\eta}_n$, we will also write it as ${\eta}_{\widehat{P}_n}$. To simplify notation, we will write $g$ rather than $g^\perp$ in the proof when there is no confusion.

\section{Examples outside of causal inference}
\begin{example}[\textbf{Parametric model specification}]\label{ex:param}
     A classical topic in econometrics is to test model specification \citep{hong1995consistent}. Let $O=(X,Y)$, where $Y$ is the outcome of interest and $X$ is a vector of predictors. Let $\{h(x;\eta);\eta \in T \subset \mathbb{R}^p\}$ be a class of functions indexed by a Euclidean-valued parameter $\eta$. A model specification test concerns testing $H_0: \mathbbm{E}_{P}[Y - h(X;\eta_{P}) \mid X] = 0$, where $\eta_{P} = \arg\min_{\eta \in T} \mathbbm{E}_{P}\{[Y-h(X;\eta)]^2\}$. By letting $g(O;\eta_{P}) = Y-h(X;\eta_{P})$, the null hypothesis can be rewritten as testing $\mathbbm{E}_{P}[g(O;\eta_{P})\mid X] = 0$.
\end{example}

\begin{example}[\textbf{Conditional covariance}]\label{ex:conditional covariance}
    Let $O=(X,Y,Z)$. A conditional covariance test concerns testing if $Y$ is uncorrelated with $Z$ given $X$, that is, $H_0: \text{cov}_{P}(Y,Z\mid X) = 0$, which is a testable implication of conditional independence \citep{lundborg2024projected}. By letting $g(O;\eta_{P}) = (Y-\mathbbm{E}_P[Y\mid X])\cdot Z$, the null hypothesis can be rewritten as testing $\mathbbm{E}_{P}[g(O;\eta_{P})\mid X] = 0$, where  $\eta_{P} = \mathbbm{E}_P[Y\mid X]$ is a function-valued nuisance parameter.
\end{example}

\iffalse
\begin{examplecont}{ex:param}
Suppose the influence function of $\eta_P$ is $\psi_p(o)$, then an orthogonal version of $g$ in $H_0$, denoted as $g^\perp$, is
\begin{align*}
    g^\perp(O;\eta_P) = Y-h(X;\eta_P)-\left[\frac{\partial h(X;\eta_P)}{\partial \eta_P}\right]^\top \psi_p(O),
\end{align*}
where $\psi_p(o)$ can be derived from standard M-estimation theory. Since $\mathbbm{E}_P[\psi_p(O)\mid X] = 0$, we have $\mathbbm{E}_P[g^\perp(O;\eta_P)\mid X] = \mathbbm{E}_P[g(O;\eta_P)\mid X]$. 
Therefore, to test $H_0: \mathbbm{E}_P[g(O;\eta_P)\mid X]=0$ is equivalent to testing $H'_0: \mathbbm{E}_P[g^\perp(O;\eta_P)\mid X] = 0$.
\end{examplecont}

\begin{examplecont}{ex:conditional covariance}
The orthogonal version of $g$ is
\begin{align*}
    g^\perp(O;\eta_P) = (Y-\mathbbm{E}_P[Y|X])(Z-\mathbbm{E}_P[Z|X]).
\end{align*}
In this case, an additional nuisance parameter, $\mathbbm{E}_P[Z|X]$ needs to be estimated.
\end{examplecont}
\fi

\section{Revisiting the two examples in Section \ref{sec: example}}
\label{sec: revisiting examples, supp}
We specialize the high-level conditions into the two examples we studied in Section \ref{sec: examples}. For the mean exchangeability example in Section \ref{subsec: example mean exchangeability}, we have the following conditions
\begin{proposition}
    Let $\mu_{P,a,s}(x) = \mathbbm{E}_P[Y|A=a,X=x,S=s]$, $\pi_{P,a,s}(x) = P(A=a,S=s|X=x)$. Let $\mu_{\widehat{P},a,s}$ and $\pi_{\widehat{P},a,s}$ be the estimators for $\mu_{P,a,s}(x)$, $\pi_{P,a,s}(x)$, respectively. Suppose the following conditions hold for some constants $C,\epsilon$: (a) Each coordinate of $X$ and $Y$ are bounded by the absolute constant $C$, $\mathbbm{E}_P[g^\perp(O;\eta_P)^2|X]\geq 1/C$. (b) For all $P\in \mathcal{P}$, $1-\epsilon>\pi_{P,a,s}(x)>\epsilon$. (c) Let $I\subset \{1,...,n\}$ be a random subset of size $n/L$. Suppose that the nuisance parameter estimator $\eta_{\widehat{P}} = \{\mu_{\widehat{P},a,s},\pi_{\widehat{P},a,s}:s\in \{0,1\}\}$ is constructed using only observations outside of $I$, satisfying the following conditions: with $P$-probability at least $1-\Delta_n$,$\epsilon<\pi_{\widehat{P},a,s}<1-\epsilon,\Vert \mu_{{P},a,s}-\mu_{\widehat{P},a,s}\Vert_{P,\infty}\leq C$,  $\Vert \mu_{{P},a,s}-\mu_{\widehat{P},a,s}\Vert_{P,2}\leq \zeta_n$,  $\Vert \pi_{{P},a,s}-\pi_{\widehat{P},a,s}\Vert_{P,\infty}\leq C$, $\Vert \pi_{{P},a,s}-\pi_{\widehat{P},a,s}\Vert_{P,2}\leq \zeta_n$, $\Vert \mu_{{P},a,s}-\mu_{\widehat{P},a,s}\Vert_{P,2}\times \Vert \pi_{{P},a,s}-\pi_{\widehat{P},a,s}\Vert_{P,2}\leq \zeta_n/{n}^{1/2}$. Assuming one of the conditions for $\zeta_n\xi_n$ listed in Assumption \ref{Assumption: nuisance}.3 hold. Then Assumption \ref{Assumption: nuisance} holds with $g=g^\perp_{\text{ME}}$. 
    \label{prop: example 1}
\end{proposition}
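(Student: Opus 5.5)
We verify the three parts of Assumption \ref{Assumption: nuisance} in turn for $g^\perp_{\text{ME}}$ in \eqref{eq: g for mean exchangeability}. Take the realization set $\mathcal{T}_n$ to be the set of nuisance tuples $\eta=\{\mu_{a,s},\pi_{a,s}\}$ satisfying the bounds in condition (c). By (c), $\widehat{\eta}_n\in\mathcal{T}_n$ with probability at least $1-\Delta_n$.

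\textbf{Part 1 (moment bounds).} By (a) and (b), every term of $g^\perp_{\text{ME}}(O;\eta_P)$ is bounded: $|Y|\le C$, $|\mu_{P,a,s}|\le C$ and $1/\pi_{P,a,s}\le 1/\epsilon$. Hence $\sup_o|g^\perp_{\text{ME}}(o;\eta_P)|\le C'$, which gives the fourth-moment bound. The lower bound on the conditional second moment is assumed directly in (a).

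For $\eta\in\mathcal{T}_n$, the same bound holds uniformly. This uses $\epsilon<\pi_{a,s}<1-\epsilon$ and $|\mu_{a,s}|\le|\mu_{P,a,s}|+C\le 2C$, so $\sup_{\eta\in\mathcal{T}_n}\sup_o|g^\perp(o;\eta)|\le C''$. This yields both the conditional fourth-moment bound and the inequality $\mathbb{E}_P[(g^\perp(\eta)-g^\perp(\eta_P))^2(g^\perp(\eta)+g^\perp(\eta_P))^2]\le 4C''^2\,\mathbb{E}_P[(g^\perp(\eta)-g^\perp(\eta_P))^2]$, as the paper already remarks.

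\textbf{Part 2 (conditional Neyman orthogonality).} Write $\eta_t=(1-t)\eta_P+t\eta_1$. Condition on $X$ and use $\mathbb{E}[I\{A=a,S=s\}Y\mid X]=\pi_{P,a,s}\mu_{P,a,s}$. For the $s$-th block this gives
\[
\mathbb{E}_P\!\left[\frac{I\{A=a,S=s\}}{\pi_{t}}(Y-\mu_{t})+\mu_t\,\middle|\,X\right]=\mu_{P,a,s}+\frac{\pi_{P,a,s}-\pi_t}{\pi_t}\,(\mu_{P,a,s}-\mu_t).
\]
This is the familiar doubly robust product structure. Since $\pi_P-\pi_t=-t(\pi_1-\pi_P)$ and $\mu_P-\mu_t=-t(\mu_1-\mu_P)$, the remainder equals $t^2(\pi_1-\pi_P)(\mu_1-\mu_P)/\pi_t$. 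Its derivative at $t=0$ vanishes, which is the required orthogonality.

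\textbf{Part 3 (rates).} For $r_{n,P}$, decompose $g^\perp(\eta)-g^\perp(\eta_P)$ blockwise into three pieces: $I\{\cdot\}(Y-\mu_P)(1/\pi-1/\pi_P)$, $-(I\{\cdot\}/\pi)(\mu-\mu_P)$, and $(\mu-\mu_P)$. Using boundedness and $\pi,\pi_P>\epsilon$, the $L^2(P)$ norm is bounded by $C(\|\pi-\pi_P\|_{P,2}+\|\mu-\mu_P\|_{P,2})\lesssim\zeta_n$.

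For $r'_{n,P}$, differentiate the product remainder twice in $t$. Note that $\frac{\partial^2}{\partial t^2}[t^2 h/\pi_t]$ with $h=(\pi_1-\pi_P)(\mu_1-\mu_P)$ equals $h$ times a bounded function. This is because $\pi_t\ge\epsilon$, $|\partial_t\pi_t|=|\pi_1-\pi_P|\le C$, and $t\in[0,1]$. Therefore
\[
r'_{n,P}\le C\,\mathbb{E}_P|(\pi_1-\pi_P)(\mu_1-\mu_P)|\le C\|\pi_1-\pi_P\|_{P,2}\|\mu_1-\mu_P\|_{P,2}\le C\zeta_n/n^{1/2}
\]
by Cauchy--Schwarz and the product-rate condition in (c). The remaining requirement on $\zeta_n\xi_n$ is assumed outright, and constants can be absorbed into $\zeta_n$.

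The main obstacle is the bound on $r'_{n,P}$. The second derivative involves $1/\pi_t^3$ terms, and we must confirm that every term still carries the full product $(\pi_1-\pi_P)(\mu_1-\mu_P)$ rather than a squared single error. The explicit $t^2$ factorization above is the route I would use to check this.
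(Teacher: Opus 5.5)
Your proposal is correct and follows essentially the same route as the paper's proof. Both arguments use uniform boundedness of $g^\perp_{\text{ME}}$ on the realization set, the same three-term decomposition for $r_{n,P}$, and the doubly robust identity $m_s(t,x)=\mu_{P,s}(x)+t^2\Delta_{\mu,s}(x)\Delta_{\pi,s}(x)/\pi_{s,t}(x)$, which yields both conditional orthogonality and, via Cauchy--Schwarz, the product bound on $r'_{n,P}$. The paper differs only in computing the second derivative explicitly as $2\pi_{P,s}^2\Delta_{\mu,s}\Delta_{\pi,s}/\pi_{s,t}^3$, which confirms your claim that every term carries the full product of the two nuisance errors.
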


For the IV compatibility example in section \ref{subsec: example IV}, we have the following conditions.

\begin{proposition}
    Let $\mu^{Y,(j)}_{P,z_j}(x) = \mathbbm{E}_P[Y|Z_j=z_j,X=x]$,$\mu^{D,(j)}_{P,z_j}(x) = \mathbbm{E}_P[D|Z_j=z_j,X=x]$ $\pi^{(j)}_{P}(x) = P(Z_j=1|X=x)$. Let $\mu^{Y,(j)}_{\widehat{P},z_j}$, $\mu^{D,(j)}_{\widehat{P},z_j}$ and $\pi^{(j)}_{\widehat{P}}$ be the estimators for $\mu^{Y,(j)}_{{P},z_j}$, $\mu^{D,(j)}_{{P},z_j}$ and $\pi^{(j)}_{{P}}$ , respectively. Suppose the following conditions hold for some constants $C>0$ and $\epsilon\in (0,1)$: (a) Each coordinate of $X$ and $Y$ are bounded by the absolute constant $C$, $\mathbbm{E}[g(O;\eta_P)^2|X]\geq 1/C$. (b)  There is a $\epsilon \in (0,1)$ such that $\epsilon<\pi^{(j)}_{P}<1-\epsilon$, $ \mu^{D,(j)}_{P,1}(x)-\mu^{D,(j)}_{{P},0}(x) >\epsilon$. (c) Let $I\subset \{1,...,n\}$ be a random subset of size $n/L$. Suppose that the nuisance parameter estimator $\eta_{\widehat{P}} = \{\mu^{Y,(j)}_{\widehat{P},s}, \mu^{D,(j)}_{\widehat{P},s} ,\pi^{(j)}_{\widehat{P}}:s\in \{0,1\}, j \in \{1,2\}\}$ is constructed using only observations outside of $I$, satisfying the following conditions: with $P$-probability at least $1-\Delta_n$,$1-\epsilon/2>\pi^{(j)}_{\widehat{P}}>\epsilon/2, \mu^{D,(j)}_{\widehat{P},1}(x)-\mu^{D,(j)}_{\widehat{P},0}(x)>\epsilon/2,\Vert \mu^{Y,(j)}_{P,z_j}-\mu^{Y,(j)}_{\widehat{P},z_j}\Vert_{P,\infty}\leq C$, $\Vert \mu^{D,(j)}_{P,z_j}-\mu^{D,(j)}_{\widehat{P},z_j}\Vert_{P,\infty}\leq C$, $\Vert \pi^{(j)}_{{P}}-\pi^{(j)}_{\widehat{P}}\Vert_{P,\infty}\leq C$,$\Vert \mu^{Y,(j)}_{P,z_j}-\mu^{Y,(j)}_{\widehat{P},z_j}\Vert_{P,2}\leq q_n$, $\Vert \mu^{D,(j)}_{P,z_j}-\mu^{D,(j)}_{\widehat{P},z_j}\Vert_{P,2}\leq q_n$, $\Vert \pi^{(j)}_{{P}}-\pi^{(j)}_{\widehat{P}}\Vert_{P,2}\leq q_n$, $q_n \leq \zeta_n,q_n^2\leq \zeta_n/n^{1/2}$. Assuming one of the conditions for $\zeta_n\xi_n$ listed in Assumption \ref{Assumption: nuisance}.3 holds. Then Assumption \ref{Assumption: nuisance} holds with $g=g^\perp_{\text{COM}}$. 
     \label{prop: example 2}
\end{proposition}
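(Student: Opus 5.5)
The plan is to verify the three parts of Assumption \ref{Assumption: nuisance} one at a time for $g=g^\perp_{\text{COM}}=g_{Z_1}-g_{Z_2}$. By the triangle inequality, each of the bounds needed for the difference follows from the same bound for each $g_{Z_j}$ separately. So it is enough to treat a single $j$ and write $g_j(O;\eta)$ for the one-instrument Wald-type orthogonal score.

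\textbf{Part 1 (moments).} We work on the event, of probability at least $1-\Delta_n$, on which the estimated propensities lie in $(\epsilon/2,1-\epsilon/2)$ and the estimated first stage satisfies $\pi_{\widehat P}\ge \epsilon/2$. On this event we define $\mathcal T_n$ as the set of nuisance values satisfying these bounds and the stated $L^\infty$ and $L^2$ rate bounds. For every $\eta\in\mathcal T_n\cup\{\eta_P\}$, boundedness of $X$, $Y$ and $D$ (note $D\in\{0,1\}$), bounded estimated regressions, propensities bounded away from $0$ and $1$, and first stage bounded below together make $|g_j(O;\eta)|$ uniformly bounded by a constant depending only on $C$ and $\epsilon$. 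This gives the conditional fourth-moment bounds directly. The lower bound on $\mathbb{E}[g^2\mid X]$ is assumed in (a). A uniform bound on $|g(O;\eta)+g(O;\eta_P)|$ gives the weighted second-moment inequality, as noted after Assumption \ref{Assumption: nuisance}.

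\textbf{Part 2 (conditional Neyman orthogonality).} We differentiate $t\mapsto \mathbb{E}_P[g_j(O;\eta_P+t(\eta_1-\eta_P))\mid X]$ at $t=0$. The score has the form $\mu/\pi$ plus the augmentation terms $\frac{1}{\pi}\{\mathrm{IPW\ residual\ for\ }Y\}-\frac{\mu}{\pi^2}\{\mathrm{IPW\ residual\ for\ }D\}$. The argument has three pieces.
\begin{itemize}
\item Perturbations of the propensity multiply residuals whose conditional mean given $(Z_j,X)$ is zero, so they contribute nothing.
\item Perturbations of the outcome regressions $\mathbb{E}[Y\mid Z_j=z,X]$ produce $\pm\frac{1}{\pi}\big(1-\frac{1\{Z_j=z\}}{P(Z_j=z\mid X)}\big)\delta$. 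Its conditional mean is zero, and similarly for $D$.
\item Perturbations of $\pi$ and $\mu$ through the plug-in $\Psi=\mu/\pi$ are exactly cancelled, at first order, by the derivative of the augmentation. The augmentation itself has conditional mean zero at $\eta_P$, so only the plug-in derivative survives, and it equals $\delta_\mu/\pi-\mu\delta_\pi/\pi^2$. This is offset by the derivative of the IPW terms with respect to the regressions, which yields $-\delta_\mu/\pi+\mu\delta_\pi/\pi^2$.
\end{itemize}
This is the influence-function computation from \citet{wang2024nested}, so we can cite it and only check that the cancellation holds conditionally on $X$ and not merely in expectation.

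\textbf{Part 3 (rates).} We expect this step to be the main obstacle.
\begin{itemize}
\item For $r_{n,P}$, we write $g_j(O;\eta)-g_j(O;\eta_P)$ as a sum of terms. Each is a bounded factor times a difference of a single nuisance component. This uses the lower bounds on the propensities and on $\pi$, and a mean-value expansion of $1/\pi$ and $\mu/\pi^2$. Hence $\|g(\eta)-g(\eta_P)\|_{P,2}\lesssim \sum\|\eta-\eta_P\|_{P,2}\lesssim q_n\le\zeta_n$.
\item For $r'_{n,P}$, we compute the second Gateaux derivative of the conditional mean along $\eta_P+t(\eta_1-\eta_P)$, for $t\in[0,1]$. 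Since the conditional mean of the score is a smooth rational function of the nuisance values with denominators bounded away from zero along the whole segment (by convexity of the bounds), the second derivative is a bounded quadratic form in the differences $\delta_\mu$, $\delta_\pi$, $\delta_{\text{prop}}$ and the differences of the regressions. Examples are $\delta_{\text{prop}}\cdot\delta_{\text{reg}}/\text{prop}^2$ and $\delta_\pi^2\mu/\pi^3$.
\item Taking $\mathbb{E}_P|\cdot|$ and applying Cauchy–Schwarz bounds every term by a constant times $q_n^2\le\zeta_n/n^{1/2}$.
\end{itemize}
The delicate points are keeping the constants uniform in $t$ and in $P$, and handling the cross products between the two ratio structures, $1/\pi$ and $\mu/\pi^2$. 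Once these are in place, combining the three parts with the assumed condition on $\zeta_n\xi_n$ completes Assumption \ref{Assumption: nuisance}.
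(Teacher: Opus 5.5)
Your proposal is correct and follows essentially the same route as the paper. You bound each $g_{Z_j}$ uniformly on the realization set to get the moment conditions. You show that the Gateaux derivative of $\mathbb{E}_P[g_{Z_j}(O;\eta_t)\mid X]$ vanishes at $t=0$ by first-order cancellation, organized componentwise, whereas the paper writes the conditional mean explicitly as $A_Y(t)/\Delta_D(t)-\Delta_Y(t)A_D(t)/\Delta_D(t)^2+\Delta_Y(t)/\Delta_D(t)$ and differentiates. A mean-value bound then gives $r_{n,P}\lesssim q_n\le\zeta_n$, and a bounded-Hessian argument for the rational conditional mean along the segment, with denominators kept away from zero by convexity, gives $r'_{n,P}\lesssim q_n^2\le\zeta_n/n^{1/2}$.
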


\section{Additional discussions on Definition \ref{definiftion: conditional neyman orthogonal}}
\label{sec: additional discussion on Def 1}

{\color{black}
\subsection{Additional discussion on conditional Neyman orthogonality}
\label{subsec: additional discussion on neyman orthogonality}
Our formulation of Conditional Neyman orthogonality is based on a directional (Gateaux) derivative with respect to the nuisance parameter.  For a perturbation direction \(\eta_1-\eta_P\), we consider the path $\eta_t=(1-t)\eta_P+t\eta_1$ and require the derivative $\frac{\partial}{\partial t}
\mathbb{E}_P\{g(O;\eta_t)\mid X\}
|_{t=0}$ to exist for the relevant perturbation directions \(\eta_1\in\mathcal T\). A simple sufficient condition is that, for each such direction, \(t\mapsto g(O;\eta_t)\) is differentiable in a neighborhood of zero and its derivative is dominated by an integrable envelope, so that differentiation and conditional expectation can be interchanged. Under such conditions,

$$
\left.
\frac{\partial}{\partial t}
\mathbb{E}_P\{g(O;\eta_t)\mid X\}
\right|_{t=0}
=
\mathbb{E}_P
\left[
\left.
\frac{\partial}{\partial t}
g(O;\eta_t)
\right|_{t=0}
\,\middle|\,X
\right].
$$

More generally, our assumptions can be stated directly in terms of existence of the above directional derivative, without requiring pointwise differentiability of \(g\).

Our condition is closely related to, but distinct from, Condition 3 of \citet{luedtke2019omnibus}. Their condition is formulated in terms of conditional pathwise differentiability of a statistical functional. Specifically, for each regular parametric submodel \(P_t\) with score \(S\), they require the derivative of the conditional parameter, denoted as $T_P(X)$, along \(P_t\) to admit a conditional score representation of the form

$$
\left.
\frac{\partial}{\partial t}T_{P_t}(X)
\right|_{t=0}
=
\mathbb{E}_P\{D_P^T(O)S(O)\mid X\},
$$
where \(D_P^T\) is conditionally mean zero, together with additional boundedness conditions. Thus, their formulation perturbs the entire data-generating distribution through regular parametric submodels, whereas ours perturbs the nuisance parameter directly while holding the underlying distribution \(P\) fixed. Consequently, our definition requires differentiability only along the collection of nuisance directions \(\mathcal T\) that are relevant for nuisance estimation, rather than along all regular parametric submodels of a specified statistical model.

The assumption in Theorem 2 of \citet{yang2023forster} is more directly related to our construction. They consider a regular parametric submodel \(P_t\), with corresponding nuisance parameter \(\eta_{P_t}\), and assume that

$$
\left.
\frac{\partial}{\partial t}
\mathbb{E}_{P}\{g(O;\eta_{P_t})\mid X\}
\right|_{t=0}
$$
admits a conditional inner-product representation involving the score of the submodel. Their formulation is therefore also pathwise: the nuisance perturbation \(\eta_{P_t}\) is induced by a perturbation of the underlying distribution \(P_t\). In contrast, our definition directly considers nuisance paths such as $\eta_t=(1-t)\eta_P+t\eta_1, \eta_1\in\mathcal T,$ which is analogous to the Gateaux-derivative formulation of Neyman orthogonality in \citet{2018ChernozhukovDML}. Hence, our condition does not require specifying a statistical model or verifying differentiability along every regular parametric submodel; it only requires differentiability along the nuisance perturbations relevant to our analysis.

These two formulations are closely connected when perturbations of the nuisance parameter can be generated by sufficiently rich regular parametric submodels, but neither imply the other without additional regularity conditions between the statistical model and the nuisance parameter space. Recent work by \citet{chen2026equivalence} establishes formal equivalence results between Neyman orthogonality formulated through Gateaux derivatives and pathwise differentiability under additional local product-structure conditions. To our knowledge, a corresponding general equivalence result for the conditional formulations considered here has not yet been established.

Finally, in Supplemental Material \ref{section:neyman orthogonality and conditional neyman orthogonality}, we provide sufficient conditions connecting ordinary and conditional Neyman orthogonality. These results allow us, in many of the examples considered in this paper, to start from a previously derived Neyman-orthogonal score and verify that it also satisfies Definition \ref{definiftion: conditional neyman orthogonal}, rather than deriving a new conditional score from scratch. Our formulation of the assumption is also easy to verify, see Supplemental Material \ref{section:neyman orthogonality and conditional neyman orthogonality}. We also want to mention that our condition is actually easy to verify. For each $x$, define $m_x(t) = \mathbbm{E}_P[g^\perp(O;\eta_t)|X=x]$, note that for each $x$, $m_x(t)$ is just a scalar function. Then the condition is naturally interpreted as $m'_x(0) = 0$ for $P_X$ a.e. $x$. Consequently, verifying the condition amounts to checking the usual one-dimensional directional derivative with respect to $t$, pointwise in $x$, rather than establishing pathwise differentiability along regular parametric submodels of the full statistical model.

\subsection{Additional results on the relationship between conditional Neyman orthogonality and Neyman orthogonality}
\label{section:neyman orthogonality and conditional neyman orthogonality}

In this section, we establish connections between conditional Neyman orthogonality and Neyman orthogonality. Define $\tilde{\mathcal{T}} = \mathcal{T}-\eta_P$. We first restate the definition of Neyman orthogonality and conditional Neyman orthogonality below:
\begin{definition}
    \textbf{(Conditional Neyman orthogonality).} A function $g^\perp(o;\eta)$ is said to be conditional Neyman orthogonal, or Neyman orthogonal given $X$, if for any $v\in \tilde{\mathcal{T}}$,
    \begin{align*}
        \frac{\partial}{\partial t}\mathbb{E}_{P}\left[g^\perp(O;\eta_{P}+tv)\mid X\right]\Big\vert_{t=0} = 0.
    \end{align*}
\end{definition}

\begin{definition}
    \textbf{(Neyman Orthogonality).} A function $g^\perp(o;\eta)$ is said to be Neyman orthogonal, if for any $v\in \tilde{\mathcal{T}}$,
    \begin{align*}
        \frac{\partial}{\partial t}\mathbb{E}_{P}\left[g^\perp(O;\eta_{P}+tv)\right]\Big\vert_{t=0} = 0.
    \end{align*}
\end{definition}

Next we state a formal result on the
equivalence between Conditional Neyman orthogonality and Neyman orthogonality. To ease notation, for every $P\in \mathcal{P}$ we define $m_P(\eta)(X) = \mathbbm{E}_P[g^\perp(O;\eta)|X]$. Denote the Gateaux derivative of $m_P(\eta)(X)$ at $\eta_P$ with direction $v$ as   $\dot{m}_{\eta_P}[v](X)$, which is defined as
\begin{align*}
    \dot{m}_{\eta_P}[v] := \lim_{t\rightarrow 0}\frac{m_P(\eta_P+tv)-m_P(\eta_P)}{t}.
\end{align*}

\begin{theorem} (Equivalence between Conditional Neyman orthogonality and Neyman orthogonality). \label{theorem: equivalance between conditional neyman orthogonality and neyman orthogonality}
Suppose the following conditions hold. (a) For every  $v\in \tilde{\mathcal{T}}$ , the Gateaux derivative $\dot{m}_{\eta_P}[v](X)$ exists in $L_1(P_X)$. (b) For every $v\in \tilde{\mathcal{T}}$ and every scalar function $a:\mathcal{X}\rightarrow \mathbb{R}$ which satisfies $\vert a(x)\vert \leq 1$, there exists a constant $c>0$, which may depend on $v$ and $a$, such that $cav\in \tilde{\mathcal{T}}$. (c) Suppose the nuisance function $\eta:\mathcal{X}\rightarrow \mathbb{R}^d$. There exists a map $\widebar{m}_P:\mathcal{X}\times  \mathbb{R}^d \rightarrow \mathbb{R}$, such that for every $v\in \tilde{\mathcal{T}}$ and all sufficiently small $t$, $m_P(\eta_P+tv)(x) = \widebar{m}_P(x,\eta_P(x)+tv(x))$ for $P_X$ almost every $x$, Moreover, the map $u \mapsto \widebar{m}_P(x,u)$ is differentiable at $u = \eta_P(x)$. 

% (d) We assume the following differentiation and expectation can be exchanged:
% \begin{align*}
%     \mathbbm{E}_P\left[\frac{\partial}{\partial t}m_P(\eta_P+tv)\right] = \frac{\partial}{\partial t}\mathbbm{E}_P[m_P(\eta_P+tv)]
% \end{align*}
Then the conditional Neyman orthogonality and Neyman orthogonality are equivalent.
\end{theorem}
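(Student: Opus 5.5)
The equivalence rests on comparing two objects: the conditional derivative $\dot{m}_{\eta_P}[v](X)$ and its marginal counterpart $\frac{\partial}{\partial t}\mathbb{E}_P[m_P(\eta_P+tv)(X)]\big\vert_{t=0}$. Neyman orthogonality asks that the latter vanish, and by the tower property it equals the derivative of $\mathbb{E}_P[g^\perp(O;\eta_P+tv)]$. Conditional Neyman orthogonality asks that the former vanish $P_X$-a.e. So I will show that, under (a)--(c), the marginal derivative equals $\mathbb{E}_P\{\dot{m}_{\eta_P}[v](X)\}$, and that the class of directions is rich enough for a vanishing integral to force a vanishing integrand.

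\emph{Conditional implies Neyman.} First I interpret condition (a) as convergence in $L_1(P_X)$: the difference quotients $\{m_P(\eta_P+tv)-m_P(\eta_P)\}/t$ converge to $\dot{m}_{\eta_P}[v]$ in $L_1(P_X)$. Integrating then gives
\[
\frac{1}{t}\Big\{\mathbb{E}_P[g^\perp(O;\eta_P+tv)]-\mathbb{E}_P[g^\perp(O;\eta_P)]\Big\}\to \mathbb{E}_P\{\dot{m}_{\eta_P}[v](X)\},
\]
because $\vert \mathbb{E} f\vert\le \mathbb{E}\vert f\vert$. This shows that the marginal Gateaux derivative exists and equals $\mathbb{E}_P\{\dot{m}_{\eta_P}[v](X)\}$. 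If $\dot{m}_{\eta_P}[v]=0$ a.e., this expectation is zero, which is Neyman orthogonality.

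\emph{Neyman implies conditional.} Fix $v\in\tilde{\mathcal{T}}$. By (c), $t\mapsto \widebar m_P(x,\eta_P(x)+tv(x))$ is differentiable at $0$ with derivative $\nabla_u\widebar m_P(x,\eta_P(x))^\top v(x)$, so $\dot m_{\eta_P}[v](x)=\nabla_u\widebar m_P(x,\eta_P(x))^\top v(x)$. For the direction $cav$ this gives
\[
\dot m_{\eta_P}[cav](x)=c\,a(x)\,\dot m_{\eta_P}[v](x),
\]
so the pointwise derivative is linear under multiplication by bounded functions of $x$. This is the crucial role of (c): scaling the direction by a function $a(x)$ scales the conditional derivative by the same function. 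Now take $a(x)=\mathrm{sign}(\dot m_{\eta_P}[v](x))$, which satisfies $\vert a\vert\le1$. Condition (b) gives $cav\in\tilde{\mathcal{T}}$ for some $c>0$. Applying Neyman orthogonality in direction $cav$, together with the identity from the first step (valid for $cav$ by (a)), yields
\[
0=\mathbb{E}_P\{\dot m_{\eta_P}[cav](X)\}=c\,\mathbb{E}_P\vert \dot m_{\eta_P}[v](X)\vert .
\]
Hence $\dot m_{\eta_P}[v]=0$ a.e., which is conditional Neyman orthogonality.

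\emph{Expected obstacle.} The delicate point is identifying the $L_1$ Gateaux derivative from (a) with the pointwise derivative from (c). Since $L_1$ convergence gives a.e. convergence along a subsequence $t_k\to0$, and the pointwise limit exists by (c), the two limits agree a.e., which suffices. A related subtlety is that the Gateaux derivative of $\eta\mapsto m_P(\eta)$ is defined through the map $t\mapsto m_P(\eta_P+tv)$, and this agrees with the conditional-orthogonality derivative along $(1-t)\eta_P+t\eta_1$ with $v=\eta_1-\eta_P$.
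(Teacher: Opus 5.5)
Your proposal is correct and follows essentially the same route as the paper's proof. You identify the $L_1$ Gateaux derivative from (a) with the pointwise linear form $\nabla_u\bar{m}_P(x,\eta_P(x))^\top v(x)$ from (c), pass the derivative through the expectation to get the forward implication, and then use (b) with $a=\mathrm{sign}(\dot{m}_{\eta_P}[v])$ to force $\mathbb{E}_P|\dot{m}_{\eta_P}[v](X)|=0$. Your subsequence argument for the identification step, which also takes care of the $t$-dependent null sets in (c), is a slightly more careful version of the paper's convergence-in-probability argument.
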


Theorem \ref{theorem: equivalance between conditional neyman orthogonality and neyman orthogonality} implies that a Neyman orthogonal moment function is also a conditional Neyman orthogonal moment function under regularity conditions.  Assumption (a) is a weak assumption on Gateaux differentiability in $L_1(P_X)$. The definition of conditional Neyman orthogonality requires such Gateaux derivative not only exists but also equals 0. By inspecting the proof, we can see that Assumption (a) can be replaced by pointwise Gateaux differentiability plus a condition on exchangeability of derivative and expectation. Assumption (b) is a local richness condition on the nuisance direction set, requiring it to be stable, up to rescaling, under pointwise sign changes. It is naturally satisfied in sufficiently rich nonparametric nuisance models. Assumption (c) is a condition on the conditional moment with respect to the nuisance functions. It requires that, conditional on $X=x$, the moment \(m_P(\eta)(x)=E_P[g^\perp(O;\eta)\mid X=x]\) depends on the nuisance function \(\eta\) only through its pointwise value \(\eta(x)\). This condition is natural in many of the settings considered here because the nuisance parameters are themselves conditional regression or probability functions indexed by \(X\), and the target moment is conditioned on the same \(X\).

The primary purpose of Theorem \ref{theorem: equivalance between conditional neyman orthogonality and neyman orthogonality} is to clarify the relationship between ordinary and conditional Neyman orthogonality, rather than to introduce an additional set of conditions that must be verified whenever our testing framework is applied. In particular, verifying Assumption (a) typically requires analyzing the Gateaux derivative of the conditional moment explicitly; in many examples, once this derivative has been calculated, one can directly verify whether it is equal to zero and hence establish conditional Neyman orthogonality without appealing to the theorem. The theorem is therefore most useful conceptually: it shows that, under suitable regularity conditions, ordinary Neyman orthogonality implies its conditional counterpart. Consequently, in applications one may often begin with a Neyman-orthogonal score or moment function already established in the literature and then directly verify the conditional Neyman orthogonality condition required by our framework.

\begin{proof}
    We first establish a linear representation result for $\dot{m}_{\eta_P}[v]$. Denote $A_P(x) = \frac{\partial}{\partial u}\widebar{m}_P(x,\eta_P(x))$, which exists by (c). Fix $v\in \tilde{\mathcal{T}}$, by (c) we have $m_P(\eta_p+tv)(X) = \widebar{m}_P(X,\eta_P(X)+tv(X))$. Therefore, for $P_X$ almost all $X$,
    \begin{align*}
        \frac{m_P(\eta_P+tv)(X)-m_P(\eta_P)(X)}{t} = \frac{\widebar{m}(X,\eta_P+tv(X))-\widebar{m}(X,\eta_P)}{t} \rightarrow A_P(X)v(X).
    \end{align*}

    On the other hand, (a) states that the same object converges to  $\dot{m}_{\eta_P}[v]$, hence it also converges in probability to  $\dot{m}_{\eta_P}[v]$. Therefore, we must have with probability one $\dot{m}_{\eta_P}[v](X) = A_P(X)v(X)$.

    let $M_P(\eta) = \mathbbm{E}_P[m_P(\eta)(X)]$. Then 
    \begin{align*}
        \frac{d}{dt}M_P(\eta_P+tv) \Big\vert_{t = 0}= \mathbbm{E}_P[\dot{m}_{\eta_P}[v](X)].
    \end{align*}
    This interchange between differentiation and expectation is a result of Assumption (a). This identity also directly shows that conditional Neyman orthogonality implies Neyman orthogonality. Next we aim to show that Neyman orthogonality implies conditional orthogonality.

    Let $a(X)$ be arbitrary scalar function with $\vert a(x)\vert \leq 1$, by (b), there exists a constant $c$ such that $cav \in \tilde{\mathcal{T}}$. The Neyman orthogonality gives
    \begin{align*}
        \frac{d}{dt}M_P(\eta_P+tv)\Big\vert_{t = 0} = 0,
    \end{align*}
    which implies $\mathbbm{E}_P[\dot{m}_{\eta_P}[v](X)] = 0$. We apply this to the direction $cav$, then we have $\mathbbm{E}_P[\dot{m}_{\eta_P}[cav](X)] = 0$. Since $ \dot{m}_{\eta_P}[cav](X) = cA_P(X) a(X)v(X) =c a(X)\dot{m}_{\eta_P}[v](X)$, we have $\mathbbm{E}_P[a(X)\dot{m}_{\eta_P}[v](X)]=0$ for any $a$ with $\vert a(x)\vert \leq 1$. Take $a = \text{sign}(\dot{m}_{\eta_P}[v](X))$, we conclude that $\dot{m}_{\eta_P}[v](X) = 0$, therefore the conditional Neyman orthogonality holds.
\end{proof}

We now verify Conditions (b) and (c) of Theorem~\ref{theorem: equivalance between conditional neyman orthogonality and neyman orthogonality} for the usual AIPW
estimating function
\[
\psi(O;\eta)
=
\frac{\mathbbm{1}(A=1)}{\mu_a(X)}
\{Y-\mu_y(X)\}
+\mu_y(X),
\]
where $\eta=(\mu_a,\mu_y)$, $\mu_{a,P}(x)=P(A=1\mid X=x)$, $\mu_{y,P}(x)=\mathbbm{E}_P[Y\mid A=1,X=x]$. The moment function in example 1 takes the similar form. We first specify the nuisance parameter space. Let $\mathcal T
=
\mathcal T_a\times\mathcal T_y$,
where for some constants $\epsilon\in(0,1/2)$ and $K<\infty$, $\mathcal T_a
=
\left\{
\mu_a:\epsilon\leq \mu_a(x)\leq 1-\epsilon
\right\},$ and $\mathcal T_y
=
\left\{
\mu_y:|\mu_y(x)|\leq K
\right\}.$
Suppose that the true nuisance functions are uniformly in the
interior of $\mathcal T$: there exist constants
$\kappa_a>0$ and $\kappa_y>0$ such that $\epsilon+\kappa_a
\leq
\mu_{a,P}(x)
\leq
1-\epsilon-\kappa_a$ and $|\mu_{y,P}(x)|
\leq
K-\kappa_y$ for $P_X$-almost every $x$. Recall that $\widetilde{\mathcal T}
=
\mathcal T-\eta_P
=
\{\,\eta-\eta_P:\eta\in\mathcal T\,\}.$

\noindent\textbf{Verification of Condition (a).}
Fix an arbitrary direction
$v=(v_a,v_y)\in\widetilde{\mathcal T}$ and let
$\eta_t=\eta_P+tv$. Direct calculation gives
\begin{align*}
m_P(\eta)(x)
&=
\mathbbm{E}_P
\left[
\frac{\mathbbm{1}(A=1)}{\mu_a(x)}
\{Y-\mu_y(x)\}
+\mu_y(x)
\;\middle|\;
X=x
\right]\\
&=
\frac{1}{\mu_a(x)}
\mathbbm{E}_P
\left[
\mathbbm{1}(A=1)\{Y-\mu_y(x)\}
\mid X=x
\right]
+\mu_y(x)\\
&=
\frac{\mu_{a,P}(x)}{\mu_a(x)}
\{\mu_{y,P}(x)-\mu_y(x)\}
+\mu_y(x).
\end{align*}
we have
\begin{align*}
m_P(\eta_P+tv)(x)
&=
\frac{\mu_{a,P}(x)}
{\mu_{a,P}(x)+tv_a(x)}
\{-tv_y(x)\}
+
\mu_{y,P}(x)+tv_y(x).
\end{align*}
Therefore,
\begin{align*}
\frac{
m_P(\eta_P+tv)(x)-m_P(\eta_P)(x)
}{t}
&=
\frac{
t v_a(x)v_y(x)
}{
\mu_{a,P}(x)+tv_a(x)
}.
\end{align*}
Since the true propensity score is uniformly bounded away from
zero and $v_a$ is uniformly bounded, for all sufficiently small
$|t|$, $\mu_{a,P}(x)+tv_a(x)
\geq
\epsilon+\kappa_a/2$ for $P_X$-almost every $x$. Moreover, $v_a$ and $v_y$ are uniformly bounded because $v\in\widetilde{\mathcal T}$. Hence
\begin{align*}
\left\|
\frac{
m_P(\eta_P+tv)-m_P(\eta_P)
}{t}
\right\|_{P_X,1}
&\leq
\frac{
|t|\|v_a\|_\infty\|v_y\|_\infty
}{
\epsilon+\kappa_a/2
}
\longrightarrow 0.
\end{align*}
Thus the Gateaux derivative exists in $L_1(P_X)$ for every
$v\in\widetilde{\mathcal T}$ (and, in this example, is equal to
zero). Therefore Condition (a) is satisfied.

\noindent\textbf{Verification of Condition (b).}
Fix an arbitrary $v=(v_a,v_y)\in\widetilde{\mathcal T}$ and an arbitrary scalar function $a:\mathcal X\to\mathbb R$
satisfying $|a(x)|\leq1$.
Since $v\in\widetilde{\mathcal T}$, there exists
$\eta_1=(\mu_{a,1},\mu_{y,1})\in\mathcal T$ such that $v_a=\mu_{a,1}-\mu_{a,P},
v_y=\mu_{y,1}-\mu_{y,P}.$ In particular, $v_a$ and $v_y$ are uniformly bounded. Indeed, $\|v_a\|_\infty\leq 1,
\|v_y\|_\infty\leq 2K.$ Choose $c>0$ sufficiently small such that $c\|v_a\|_\infty\leq \kappa_a,c\|v_y\|_\infty\leq \kappa_y.$
For example, one may take $c
=
\min\left\{
1,\,
\frac{\kappa_a}{\|v_a\|_\infty},\,
\frac{\kappa_y}{\|v_y\|_\infty}
\right\}.$

Since $|a(x)|\leq 1$, $|ca(x)v_a(x)|\leq \kappa_a,
|ca(x)v_y(x)|\leq \kappa_y.$ Therefore, $\epsilon
\leq
\mu_{a,P}(x)+ca(x)v_a(x)
\leq
1-\epsilon$ and $\left|
\mu_{y,P}(x)+ca(x)v_y(x)
\right|
\leq K$ for almost every $x$. Hence $\eta_P+cav\in\mathcal T,$ which is equivalent to $cav\in\widetilde{\mathcal T}.$ Thus Condition~(b) is satisfied.

\noindent\textbf{Verification of Condition (c).}
For a generic nuisance value
$\eta=(\mu_a,\mu_y)\in\mathcal T$, define $m_P(\eta)(x)
=
\mathbbm{E}_P[\psi(O;\eta)\mid X=x].$
Direct calculation gives
\begin{align*}
m_P(\eta)(x)
&=
\mathbbm{E}_P
\left[
\frac{\mathbbm{1}(A=1)}{\mu_a(x)}
\{Y-\mu_y(x)\}
+\mu_y(x)
\;\middle|\;
X=x
\right]\\
&=
\frac{1}{\mu_a(x)}
\mathbbm{E}_P
\left[
\mathbbm{1}(A=1)\{Y-\mu_y(x)\}
\mid X=x
\right]
+\mu_y(x)\\
&=
\frac{\mu_{a,P}(x)}{\mu_a(x)}
\{\mu_{y,P}(x)-\mu_y(x)\}
+\mu_y(x).
\end{align*}
Define $\bar m_P(x,u_a,u_y)
=
\frac{\mu_{a,P}(x)}{u_a}
\{\mu_{y,P}(x)-u_y\}
+u_y, u_a>0.$ Then $m_P(\eta)(x)
=
\bar m_P\{x,\eta(x)\}.$
In particular, for every $v\in\widetilde{\mathcal T}$ and all sufficiently
small $t$ such that $\eta_P+tv\in\mathcal T$,
\[
m_P(\eta_P+tv)(x)
=
\bar m_P\{x,\eta_P(x)+tv(x)\}.
\]
Thus the conditional moment is local in the nuisance function: conditional
on $X=x$, it depends on the entire nuisance function $\eta$ only through its
pointwise value $\eta(x)$.

Moreover, because the propensity-score component is uniformly bounded away
from zero, the map $(u_a,u_y)\mapsto \bar m_P(x,u_a,u_y)$ is a scalar function on $\mathbb{R}^2$, which is also continuously differentiable in a neighborhood of $\eta_P(x)
=
\{\mu_{a,P}(x),\mu_{y,P}(x)\}.$ 
Therefore Condition~(c) is satisfied. The conditions can be shown to  hold for the IV compatibility example in a similar way.

\section{Additional details on the nuisance estimation rates}
\label{sec: supp nuisance rates}
\begin{table}[htbp]

\centering
\caption{Crude sufficient nuisance convergence-rate requirements for the standardized generalized projection test under various growth rates of $J_n$ when $\xi_n\asymp C$ and $\omega_n\asymp J_n^{1/2}$. Here, we report a common $L_2(P)$ convergence rate $q_n$ for each nuisance component under the product-rate conditions described in the Supplementary Materials. We use $\epsilon>0$ to denote a small positive constant.}
\begin{tabular}{lcc}

\toprule
Growth of $J_n$
& $\xi_n\asymp C,\omega_n\asymp J_n^{1/2}$
& $\xi_n\asymp J_n^{1/2}, \omega_n\asymp J_n$ \\
\midrule
$n^{1/3-\epsilon}$
& $o(n^{-5/12+\epsilon/2})$
& $o(n^{-1/2+3\epsilon/4})$ \\

$n^{1/6}$
& $o(n^{-1/3})$
& $o(n^{-3/8})$\\

$n^{1/12}$
& $o(n^{-7/24})$
& $o(n^{-5/16})$ \\

$n^{\epsilon}$
& $o(n^{-1/4-\epsilon/2})$
& $o(n^{-1/4-3\epsilon/4})$ \\
\bottomrule
\end{tabular}
\label{table: nuisance rates}
\end{table}

Table \ref{table: nuisance rates} makes explicit the tradeoff between the growth of the sieve dimension and the required nuisance convergence rate. When $J_n$ grows close to the maximal rate allowed by our theory for the standardized test, $J_n=n^{1/3-\epsilon}$, the requirement for nuisance convergence rate is relatively strong and can, for example, be satisfied by correctly specified regular parametric estimators, which typically converge at the $n^{-1/2}$ rate. As $J_n$ grows more slowly, the nuisance convergence requirement becomes weaker. For example, with $\xi_n\asymp C$ and $\omega_n\asymp J_n^{1/2}$, taking $J_n=n^{1/6}$ requires $q_n=o(n^{-1/3})$, while taking $J_n=n^{1/12}$ requires only $q_n=o(n^{-7/24})$. Such rates may also be attainable by suitable nonparametric or semiparametric estimators like the additive models \citep{stone1985additive} and the highly adaptive lasso \citep{bibaut2019fast}, under appropriate smoothness or structural assumptions. When $J_n=n^\epsilon$ with $\epsilon$ close to zero, the sufficient rate approaches the familiar $n^{-1/4}$ rate in double/debiased machine learning.

\section{The causal interpretation of the local alternatives}
\label{sec:causal interpretation of the local alternatives}
For the mean exchangeability example in Section \ref{subsec: example mean exchangeability}, the causal restriction of interest is
\[
\mathbbm{E}_P[Y(a)\mid X,S=1]
-
\mathbbm{E}_P[Y(a)\mid X,S=0]
=
0.
\]
Accordingly, a sequence of \textbf{causal} local violations of mean exchangeability can be written as
\[
\mathbbm{E}_{P_{1,n}}[Y(a)\mid X,S=1]
-
\mathbbm{E}_{P_{1,n}}[Y(a)\mid X,S=0]
=
\left(
\frac{\gamma_{n,P}}{n}
\right)^{1/2}
\phi(X).
\]
Under the treatment exchangeability assumptions used for identification in Section 4.1, we have $\mathbbm{E}_P[Y(a)\mid X,S=s]
=
\mathbbm{E}_P[Y\mid A=a,X,S=s]$. Moreover, the orthogonal score $g^\perp_{\mathrm{ME}}$ constructed in Equation (3) satisfies $\mathbbm{E}_P
\left[
g^\perp_{\mathrm{ME}}(O;\eta_P)
\mid X
\right]
=
\mathbbm{E}_P[Y\mid A=a,X,S=1]
-
\mathbbm{E}_P[Y\mid A=a,X,S=0].$ Therefore, the above local \textbf{causal} violation implies
\[
\mathbbm{E}_{P_{1,n}}
\left[
g^\perp_{\mathrm{ME}}(O;\eta_{P_{1,n}})
\mid X
\right]
=
\left(
\frac{\gamma_{n,P}}{n}
\right)^{1/2}
\phi(X),
\]
which is exactly the form of the local alternative considered we formulated using observed data in the first example.

Similarly, for the IV compatibility example in Section 4.2, the causal restriction compares the conditional treatment effects among the complier populations associated with the two instruments.
A sequence of causal local violations can be represented as
\[
\mathbbm{E}_{P_{1,n}}
\left[
Y(1)-Y(0)
\mid Z_1\text{-complier},X
\right]
-
\mathbbm{E}_{P_{1,n}}
\left[
Y(1)-Y(0)
\mid Z_2\text{-complier},X
\right]
=
\left(
\frac{\gamma_n}{n}
\right)^{1/2}
\phi(X).
\]
Under the IV identification assumptions invoked in Section 4.2, these two conditional complier average treatment effects are identified respectively by $\Psi_{Z_1,P}(X)$ and $\Psi_{Z_2,P}(X)$.
The orthogonal score $g^\perp_{\mathrm{COM}}$ is constructed such that $\mathbbm{E}_P
\left[
g^\perp_{\mathrm{COM}}(O;\eta_P)
\mid X
\right]
=
\Psi_{Z_1,P}(X)-\Psi_{Z_2,P}(X)$.
Therefore, the local causal violation above translates directly into the following local alternative formulated using observed data:
\[
\mathbbm{E}_{P_{1,n}}
\left[
g^\perp_{\mathrm{COM}}(O;\eta_{P_{1,n}})
\mid X
\right]
=
\left(
\frac{\gamma_{n,P}}{n}
\right)^{1/2}
\phi(X),
\]
which is again the local alternative analyzed in example 2.

Thus, although the hypothesis test and Theorem 3 are expressed in terms of identified observed-data functionals, under the identification assumptions specific to each application these functionals coincide with the corresponding causal discrepancies. The local power result can therefore be interpreted as describing power against increasingly small violations of the underlying causal restrictions.}

\section{Technical lemmas}
\label{sec:technical lemma}
In this section, we describe a collection of useful lemmas which will be used in the following proofs.

\begin{lemma}
    Let $X_1,..,X_n$ be i.i.d. $\mathbbm{R}^d$-valued random vectors with mean $0$ and positive definite covariance matrix $\Sigma$. Then for $W\sim N(0,\Sigma)$ and any convex sets $A\subset \mathbbm{R}^d$, we have
    \begin{align*}
      \left \vert P\left(\frac{1}{{n}^{1/2}}\sum_{i=1}^n X_i\in A\right)-P\left(W\in A\right)\right\vert\leq \frac{42d^{1/4}+16}{{n}^{1/2}} \mathbbm{E}_{P}\Vert \Sigma_{\text{sym}}^{-1/2}X_i\Vert^3,
    \end{align*}
    \label{lemma: Berry--Esseen}
    where, for a PD matrix $\Sigma$ with decomposition $\Sigma = \Gamma^\top\Lambda\Gamma$, we define $\Sigma^{1/2}_{\text{sym}} = \Gamma^\top\Lambda^{1/2}\Gamma$.
\end{lemma}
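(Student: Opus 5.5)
The bound is the multivariate Berry--Esseen inequality of \citet{raivc2019multivariate} with explicit constants, and I would obtain it by reducing to the isotropic case and then invoking that result. The only inputs are that convexity is preserved under invertible linear maps and that the symmetric square root whitens $\Sigma$. No new probabilistic estimate is needed; the work is bookkeeping of the normalization.

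\textbf{Step 1: reduce to identity covariance.} Since $\Sigma$ is positive definite, $\Sigma^{1/2}_{\text{sym}}=\Gamma^\top\Lambda^{1/2}\Gamma$ is symmetric and invertible. Its inverse $\Sigma_{\text{sym}}^{-1/2}$ satisfies $\Sigma_{\text{sym}}^{-1/2}\Sigma\,\Sigma_{\text{sym}}^{-1/2}=I_d$. Define $Y_i=\Sigma_{\text{sym}}^{-1/2}X_i$. These are i.i.d., have mean zero and satisfy $\mathrm{Cov}(Y_i)=I_d$. Likewise $Z:=\Sigma_{\text{sym}}^{-1/2}W\sim N(0,I_d)$.

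For any convex $A$, the set $A':=\Sigma_{\text{sym}}^{-1/2}A$ is convex, being the image of a convex set under a linear bijection. The events also correspond:
\[
\Big\{n^{-1/2}\textstyle\sum_i X_i\in A\Big\}=\Big\{n^{-1/2}\textstyle\sum_i Y_i\in A'\Big\},\qquad \{W\in A\}=\{Z\in A'\}.
\]
Hence the left-hand side of the lemma equals $\big|P(n^{-1/2}\sum_i Y_i\in A')-P(Z\in A')\big|$. Because $A'$ ranges over convex sets, it suffices to prove the isotropic statement uniformly over convex sets.

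\textbf{Step 2: apply Rai\v{c}'s theorem.} \citet{raivc2019multivariate} proves the following. If $\xi_1,\dots,\xi_n$ are independent, mean-zero $\mathbb{R}^d$-valued vectors with $\sum_i\mathrm{Cov}(\xi_i)=I_d$, then for every measurable convex $A'$,
\[
\Big|P\Big(\textstyle\sum_i\xi_i\in A'\Big)-P(Z\in A')\Big|\le (42d^{1/4}+16)\textstyle\sum_i\mathbb{E}\Vert\xi_i\Vert^3 .
\]
Take $\xi_i=n^{-1/2}Y_i$. Then $\sum_i\mathrm{Cov}(\xi_i)=I_d$, and
\[
\sum_i\mathbb{E}\Vert\xi_i\Vert^3=n\cdot n^{-3/2}\,\mathbb{E}\Vert Y_1\Vert^3=n^{-1/2}\,\mathbb{E}_P\Vert\Sigma_{\text{sym}}^{-1/2}X_1\Vert^3 .
\]
Substituting this gives exactly the claimed bound.

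\textbf{Remarks.} If $\mathbb{E}\Vert\Sigma_{\text{sym}}^{-1/2}X_1\Vert^3=\infty$, the inequality holds trivially, so finite third moments may be assumed. The one point needing care is that Rai\v{c}'s constant is stated for the standardized (identity-covariance) sum. This is why the lemma is written with $\Sigma_{\text{sym}}^{-1/2}$ rather than the nonsymmetric factor $\Sigma^{1/2}=\Lambda^{1/2}\Gamma$ from the Notation section. Any whitening matrix would work, since $\Vert\cdot\Vert$ is invariant under the orthogonal factor $\Gamma$, but the symmetric one keeps the statement clean. Beyond this there is no real obstacle: Step 2 is a direct citation.
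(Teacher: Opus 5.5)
Your proposal is correct and follows essentially the same route as the paper. Both whiten with $\Sigma_{\text{sym}}^{-1/2}$, use that this linear bijection maps convex sets to convex sets so the events for $n^{-1/2}\sum_i X_i$ and for $n^{-1/2}\sum_i Y_i$ (with $Y_i=\Sigma_{\text{sym}}^{-1/2}X_i$) correspond, and then apply Rai\v{c}'s Theorem 1.1 to $\xi_i=n^{-1/2}Y_i$; your bookkeeping of the transformed set $\Sigma_{\text{sym}}^{-1/2}A$ and of $\sum_i\mathbb{E}\Vert\xi_i\Vert^3=n^{-1/2}\mathbb{E}\Vert Y_1\Vert^3$ is, if anything, stated more carefully than in the paper's short proof.
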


\begin{proof}
    Let $Y_i = \Sigma^{-1/2}_{\text{sym}}X_i$, then by Theorem 1.1 of \citet{raivc2019multivariate}, we have for $Z\sim N(0,I_d)$,
    \begin{align*}
         \left \vert P\left(\frac{1}{{n}^{1/2}}\sum_{i=1}^n Y_i\in A\right)-P\left(Z\in A\right)\right\vert\leq \frac{42d^{1/4}+16}{{n}^{1/2}} \mathbbm{E}_{P}\Vert Y_i\Vert^3.
    \end{align*}

    Since $\{x:\Sigma_{\text{sym}}^{-1/2}x  \in A\} =\{x: x  \in \Sigma_{\text{sym}}^{1/2} A\}$, where $\Sigma_{\text{sym}}^{1/2} A = \{\Sigma_{\text{sym}}^{1/2}x:x\in A\}$, we have
    \begin{align*}
      \left \vert P\left(\frac{1}{{n}^{1/2}}\sum_{i=1}^n X_i\in A\right)-P\left(W\in A\right)\right\vert\leq \frac{42d^{1/4}+16}{{n}^{1/2}} \mathbbm{E}_{P}\Vert Y_i\Vert^3,
    \end{align*}
    which proves this result.
\end{proof}

\begin{lemma}
Let $\Sigma_n = \mathbbm{E}_P[g^2(O;\eta_P)\tilde{B}^\lambda_n[\tilde{B}_n^\lambda]^\top ]$, let $\tau_1,...,\tau_{J_n}$ be the eigenvalues of $\Sigma_n$, $Z_1,...,Z_{J_n}$ be $J_n$ independent $\chi^2$ distributed random variables with degree of freedom 1. Furthermore, suppose the null hypothesis is true and $\frac{J_n^{5/4}\omega_n}{{n}^{1/2}}\rightarrow 0$, $\mathbbm{E}[g(O;\eta_P)^2|X]\geq c$, $\mathbbm{E}[g(O;\eta_P)^4|X]\leq C$, $\mathbbm{E}_P[B_n^\top B_n]\succeq cI$,  then we have
    \begin{align*}
        \sup_{P\in \mathcal{P}_0}\sup_{t\in \mathbb{R}} \left\vert P(W^\top W \leq t)-P\left(\sum_{j=1}^{J_n}\tau_jZ_j\leq t\right)\right\vert \rightarrow 0,
    \end{align*}
    where
    \begin{align*}
        W = \left(\frac{1}{{n}^{1/2}}\sum_{i=1}^ng(O_i;\eta_P)\tilde{b}^\lambda_1,...,\frac{1}{{n}^{1/2}}\sum_{i=1}^ng(O_i;\eta_P)\tilde{b}^\lambda_{J_n}\right)
    \end{align*}
    \label{lemma: chi-square approximation}
\end{lemma}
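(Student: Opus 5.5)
The plan is to apply the multivariate Berry--Esseen bound of Lemma~\ref{lemma: Berry--Esseen} to the vector $W = n^{-1/2}\sum_{i=1}^n Y_i$, where $Y_i = g(O_i;\eta_P)\tilde{B}^\lambda_n(X_i)$. The class of convex sets to use is the family of Euclidean balls $A_t=\{w: w^\top w\le t\}$. Under the null, $\mathbbm{E}_P[g(O;\eta_P)\mid X]=0$, so the $Y_i$ are i.i.d. and mean zero with covariance $\Sigma_n$. If $V\sim N(0,\Sigma_n)$, then rotating by the eigenvectors of $\Sigma_n$ gives $V^\top V \overset{d}{=} \sum_{j=1}^{J_n}\tau_j Z_j$ with $Z_j$ independent $\chi^2(1)$. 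Lemma~\ref{lemma: Berry--Esseen} therefore gives, uniformly in $t$,
\[
\sup_{t}\left|P(W^\top W\le t)-P\Big(\sum_{j}\tau_jZ_j\le t\Big)\right| \le \frac{42J_n^{1/4}+16}{n^{1/2}}\,\mathbbm{E}_P\Vert \Sigma_{n,\text{sym}}^{-1/2}Y_i\Vert^3 ,
\]
and it remains to bound the third moment uniformly over $P\in\mathcal{P}_0$.

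First I need $\Sigma_n$ to be positive definite with smallest eigenvalue bounded below uniformly in $P$. By conditioning on $X$,
\[
\Sigma_n=\Omega^{1/2}\,\mathbbm{E}_P\!\left[\mathbbm{E}[g^2\mid X]\,B_nB_n^\top\right](\Omega^{1/2})^\top \succeq c\,\Omega^{1/2}\mathbbm{E}_P[B_nB_n^\top](\Omega^{1/2})^\top \succeq c^2\,\Omega^{1/2}(\Omega^{1/2})^\top .
\]
The last matrix has the eigenvalues of $\Omega_{n,P}$. To get a constant lower bound I will use the positive-definiteness and eigenvalue control on $\Omega_{n,P}$ from Assumption~\ref{Assumption: weighting matrix} together with the theorem's hypotheses. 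If only $\lambda_{\max}$ is bounded below, I would instead restrict to the range of $\Omega$ and work with $\Sigma_n$ in the reduced coordinates. I would also double-check whether a cruder bound that avoids $\lambda_{\min}$ suffices; see the last paragraph.

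Second, with $\lambda_{\min}(\Sigma_n)\ge c'$, use $\Vert\Sigma^{-1/2}_{\text{sym}}Y\Vert\le c'^{-1/2}\Vert Y\Vert$ and $\Vert \tilde{B}^\lambda_n(x)\Vert\le \Vert\Omega^{1/2}\Vert\,\Vert B_n(x)\Vert \le C^{1/2}\omega_n$. This gives
\[
\mathbbm{E}\Vert Y_i\Vert^3\le C^{1/2}\omega_n\,\mathbbm{E}\!\left[|g|^3\Vert \tilde{B}^\lambda_n\Vert^2\right].
\]
By conditional Jensen, $\mathbbm{E}[|g|^3\mid X]\le C^{3/4}$. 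Also $\mathbbm{E}\Vert\tilde{B}^\lambda_n\Vert^2=\operatorname{tr}(\Omega^{1/2}\mathbbm{E}[B_nB_n^\top]\Omega^{1/2\top})\le C^2 J_n$. Hence $\mathbbm{E}\Vert\Sigma^{-1/2}_{\text{sym}}Y_i\Vert^3\lesssim \omega_n J_n$, and the Berry--Esseen bound becomes $\lesssim J_n^{5/4}\omega_n/n^{1/2}\to0$. This matches the assumed rate exactly, which suggests the argument is the intended one. All constants are uniform over $P$, giving the $\sup_{P\in\mathcal{P}_0}$.

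The main obstacle is the first step: controlling $\lambda_{\min}(\Sigma_n)$ uniformly, since the statement lists a lower bound on $\mathbbm{E}[B_nB_n^\top]$ but only an upper bound on the eigenvalues of $\Omega$. If a direct lower bound is unavailable, the fallback is to absorb the condition number into the constant via a lower bound on $\Omega$'s eigenvalues. Alternatively, I would apply the Berry--Esseen bound in a normalized form, bounding $\mathbbm{E}\Vert\Sigma^{-1/2}Y\Vert^3$ through $\mathbbm{E}\Vert\Sigma^{-1/2}Y\Vert^2=J_n$ times a sup bound $\sup\Vert\Sigma^{-1/2}Y\Vert/|g|\lesssim\omega_n$ obtained from $\Sigma^{-1/2}\Omega^{1/2}B_n$ with $\Sigma\succeq c\,\Omega^{1/2}\mathbbm{E}[B_nB_n^\top]\Omega^{1/2\top}$. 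This second route cancels $\Omega$ and needs only the lower bound on $\mathbbm{E}[B_nB_n^\top]$, so I would try it first.
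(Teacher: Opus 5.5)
Your second route is essentially the paper's proof. Writing $\Sigma_n=A_nH_nA_n^\top$ with $A_n=\Lambda_n^{1/2}\Gamma_n$ and $H_n=\mathbbm{E}_P[g^2B_nB_n^\top]\succeq c^2I$, the paper observes that $\Sigma_{n,\text{sym}}^{-1/2}A_nH_{n,\text{sym}}^{1/2}$ is orthogonal, so $\Vert\Sigma_{n,\text{sym}}^{-1/2}Y_i\Vert=\Vert H_{n,\text{sym}}^{-1/2}g_iB_n(X_i)\Vert$, and then bounds $\mathbbm{E}[|g|^3\Vert B_n\Vert^3]\lesssim\omega_n\mathbbm{E}\Vert B_n\Vert^2\lesssim\omega_nJ_n$ as you do. Your Loewner bound $A_n^\top\Sigma_n^{-1}A_n\preceq c^{-2}I$ is the same identity in disguise, since $A_n^\top\Sigma_n^{-1}A_n=H_n^{-1}$ once $\Omega_{n,P}$ is positive definite.

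You are right to drop the first route: neither the lemma nor Theorem~\ref{Theorem: asymptotic distribution chi-sqaure approximation} gives a lower bound on $\lambda_{\min}(\Omega_{n,P})$. One piece of bookkeeping in the second route needs fixing. The bound $\Vert\Sigma_{n,\text{sym}}^{-1/2}Y\Vert\lesssim|g|\,\omega_n$ leaves you with $\omega_n\mathbbm{E}[|g|^3\Vert\Sigma_{n,\text{sym}}^{-1/2}A_nB_n\Vert^2]$, not $\omega_n\mathbbm{E}\Vert\Sigma_{n,\text{sym}}^{-1/2}Y\Vert^2$. You close this by conditioning on $X$ and using $\mathbbm{E}[|g|^3\mid X]\le C^{3/4}$.
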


\begin{proof}
    $W$ can be written as $\frac{1}{{n}^{1/2}}\sum_{i=1}^n Y_i$, where
    \begin{align*}
        Y_i = \left(g(O_i;\eta_P)\tilde{b}^\lambda_1,...,g(O_i;\eta_P)\tilde{b}^\lambda_{J_n}\right)^\top.
    \end{align*}
    Therefore, $Y_1,...,Y_n$ are i.i.d. random $\mathbbm{R}^{J_n}$-valued random vectors with covariance matrix $\Sigma$. By Lemma \ref{lemma: Berry--Esseen}, we have for any convex subset $A$,
    \begin{align*}
      \left \vert P\left(W\in A\right)-P\left(\tilde{W}\in A\right)\right\vert\leq \frac{42d^{1/4}+16}{{n}^{1/2}} \mathbbm{E}_{P}\Vert \Sigma_{n,\text{sym}}^{-1/2}Y_i\Vert^3,
    \end{align*}
    where $\tilde{W}$ is a $N(0,\Sigma)$ random vector. Since the sets in the form $\{x:x^\top x\leq t\}$ are convex sets, we have
    \begin{align}
     \sup_{t\in \mathbb{R}} \left \vert P\left(W^\top W\leq t\right)-P\left(\tilde{W}^\top \tilde{W}\leq t\right)\right\vert\leq \frac{42{J_n}^{1/4}+16}{{n}^{1/2}} \mathbbm{E}_{P}\Vert \Sigma^{-1/2}Y_i\Vert^3. \label{eq: BE bound}
    \end{align}
    Furthermore, since $\tilde{W}^\top \tilde{W}$ has the same distribution as $\sum_{j=1}^{J_n}\tau_jZ_j$,we only need to prove the right hand side of \eqref{eq: BE bound} converges to $0$ uniformly in $P\in \mathcal{P}$.

Define $H_n = \mathbbm{E}_P[g^2(O;\eta_P)B_n(X)B_n^T(X)]$, $A_n = \Lambda_n^{1/2}\Gamma_n$, so that $\tilde{B}_n = A_nB_n$, $Y_i = A_n g_i B_i$, $\Sigma_n = A_n H_n A_n^\top$. Therefore,
\begin{align*}
    \Sigma_{n,\text{sym}}^{-1/2}Y_i = \Sigma_{n,\text{sym}}^{-1/2}A_n g_i B_i = \Sigma_{n,\text{sym}}^{-1/2}A_n H_{n,\text{sym}}^{1/2}H_{n,\text{sym}}^{-1/2}g_i B_i.
\end{align*}

Note that
\begin{align*}
    (\Sigma_{n,\text{sym}}^{-1/2}A_n H_{n,\text{sym}}^{1/2})(\Sigma_{n,\text{sym}}^{-1/2}A_n H_{n,\text{sym}}^{1/2})^\top = 1,
\end{align*}
we have 
\begin{align*}
    \Vert \Sigma_{n,\text{sym}}^{-1/2}Y_i \Vert  = \Vert H^{-1/2}_{n,\text{sym}} g_i B_i\Vert.
\end{align*}

Now it remains to derive an explicit upper bound of $\Vert H^{-1/2}_{n,\text{sym}} g_i B_i\Vert$. Since $H_n = \mathbbm{E}_P[\mathbbm{E}_P[g^2|X]B_nB_n^\top] \succeq 1/C\mathbbm{E}_P[B_nB_n^\top] \succeq 1/C^2 I_{J_n}$, we have $\Vert H^{-1}_n\Vert \leq C^2$. Therefore,
\begin{align*}
    &\mathbbm{E}_P[\Vert H^{-1/2}_{n,\text{sym}} g_i B_i\Vert^3] \leq \Vert H_{n,\text{sym}}^{-1/2}\Vert ^{3} \mathbbm{E}_P[\vert g_i\vert^3\Vert B_i\Vert^3]\\
    \lesssim &\mathbbm{E}_P[\vert g_i\vert^3\Vert B_i\Vert^3] \lesssim \mathbbm{E}_P[\Vert B_i\Vert^3] \lesssim \omega_n \mathbbm{E}_P\Vert B_n(X)\Vert ^2 \lesssim \omega_n J_n
\end{align*}
    
%     \begin{align*}
%         \mathbbm{E}_{P}\Vert \Sigma^{-1/2}Y_i\Vert^3
%         \leq  \Vert \Sigma\Vert^{-3/2} \mathbbm{E}_P\Vert Y_i\Vert^3
%     \end{align*}
%     We first consider how to bound $\Vert \Sigma \Vert^{-3/2}$. The key here is to give a lower bound for $\Vert \Sigma \Vert$.
%     \begin{align*}
%         &\Vert \Sigma \Vert\\
% =&\left\Vert \mathbbm{E}_{P}\left[g^2(O_i; {\eta}_P) \left( {\Omega}_n^{1/2} B_n(X_i) \right) \left( {\Omega}_n^{1/2} B_n(X_i) \right)^\top \right] \right\Vert\\
% \gtrsim & \left\Vert{\Omega}_n^{1/2} \mathbbm{E}_{P}\left[ B_n(X_i)   B_n(X_i) ^\top \right] ({\Omega}_n^{1/2})^\top \right\Vert\\
% \gtrsim & \left\Vert{\Omega}_n^{1/2} ({\Omega}_n^{1/2})^\top \right\Vert\\
% = & \lambda_{\max}
%     \end{align*}
%     Therefore we have $\Vert \Sigma \Vert^{-3/2}\leq \lambda_{\max}^{-3/2}$. Next, we give a bound on $\mathbbm{E}_P\Vert Y_i\Vert^3$:
%     \begin{align*}
%         & \mathbbm{E}_P\Vert Y_i\Vert^3\\
%         = & \mathbbm{E}_P\left[\left(\sum_{j=1}^{J_n}\lambda_jg^2(O_i;\eta_P)\tilde{b}^2_j(X)\right)^{3/2}\right] \\
%         \lesssim & \mathbbm{E}_P[g^3(O_i;\eta_P)](\text{tr}(\Omega))^{3/2}\\
%         \lesssim & (\text{tr}(\Omega))^{3/2}
%     \end{align*}

    Therefore, the right-hand side of \eqref{eq: BE bound} reduces to 
    \begin{align*}
        C\frac{(42J_n^{1/4}+16)\omega_n J_n}{{n}^{1/2}},
    \end{align*}
    which converges to $0$ by the assumption that $\frac{J_n^{5/4}\omega_n}{{n}^{1/2}}\rightarrow 0$.
\end{proof}

Next, we show a uniform version of Slutsky's theorem.
\begin{lemma}
    Let $X_n$ be a sequence of random variables that converges  uniformly in distribution over a set of distribution $\mathcal{P}$ to the distribution of a sequence of random variables $\tilde{X}_n$  in the sense that $\sup_{P\in \mathcal{P}}\sup_{t\in \mathbbm{R}}\vert P(X_n\leq t)-P(\tilde{X}_n\leq t)\vert \rightarrow 0$, and $Y_n$ be a sequence of random variables converges in probability uniformly to $0$ in the sense that $\sup_{P\in\mathcal{P}}P(\vert Y_n\vert>\epsilon)\rightarrow 0$ for any $\epsilon>0$. Suppose the following condition holds
    \begin{align}
        \lim_{\epsilon\downarrow 0} \limsup_{n\rightarrow \infty} \sup_{P\in \mathcal{P}}\sup_{t} \{P(\tilde{X}_n\leq t+\epsilon) - P(\tilde{X}_n\leq t-\epsilon)\} = 0. \label{eq: equicontinuity}
    \end{align}
    Then we have $X_n+Y_n$ converges  uniformly in distribution over a set of distributions $\mathcal{P}$ to the distribution of $\tilde{X}_n$. \label{lemma: uniform slusky}
\end{lemma}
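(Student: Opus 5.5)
The plan is the classical sandwich argument for Slutsky's theorem, run uniformly in $P$. Fix $\epsilon>0$, $t\in\mathbb{R}$ and $P\in\mathcal{P}$. On the event $\{\vert Y_n\vert\le\epsilon\}$ we have $\{X_n\le t-\epsilon\}\subset\{X_n+Y_n\le t\}\subset\{X_n\le t+\epsilon\}$. This gives
\begin{align*}
P(X_n\le t-\epsilon)-P(\vert Y_n\vert>\epsilon)\le P(X_n+Y_n\le t)\le P(X_n\le t+\epsilon)+P(\vert Y_n\vert>\epsilon).
\end{align*}

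Next I replace $X_n$ by $\tilde X_n$ in both outer terms. Write $a_n:=\sup_{P\in\mathcal{P}}\sup_{s}\vert P(X_n\le s)-P(\tilde X_n\le s)\vert$, which tends to $0$ by hypothesis. Then $P(X_n\le t\pm\epsilon)$ differs from $P(\tilde X_n\le t\pm\epsilon)$ by at most $a_n$. Subtracting $P(\tilde X_n\le t)$ from each side, I get
\begin{align*}
\vert P(X_n+Y_n\le t)-P(\tilde X_n\le t)\vert\le a_n+\sup_{P\in\mathcal{P}}P(\vert Y_n\vert>\epsilon)+\{P(\tilde X_n\le t+\epsilon)-P(\tilde X_n\le t-\epsilon)\}.
\end{align*}
Here I use that both $P(\tilde X_n\le t+\epsilon)-P(\tilde X_n\le t)$ and $P(\tilde X_n\le t)-P(\tilde X_n\le t-\epsilon)$ are bounded by the width-$2\epsilon$ increment.

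Take the supremum over $t$ and $P$, and then $\limsup_{n\to\infty}$. The first two terms vanish, by uniform convergence in distribution and by uniform convergence in probability respectively. What remains is $\limsup_n\sup_P\sup_t\{P(\tilde X_n\le t+\epsilon)-P(\tilde X_n\le t-\epsilon)\}$. The left-hand side does not depend on $\epsilon$, so I may let $\epsilon\downarrow0$, and condition \eqref{eq: equicontinuity} sends this term to zero. That proves the lemma.

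The only delicate point, and the real content beyond pointwise Slutsky, is the continuity step. Since $\tilde X_n$ changes with $n$ and $P$, pointwise continuity of a fixed limit law is unavailable, and the uniform anti-concentration condition \eqref{eq: equicontinuity} is exactly what replaces it. In the later applications this condition has to be checked separately, for instance for the standardized weighted chi-square or normal limits.
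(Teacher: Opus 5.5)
Your proof is correct and follows essentially the same route as the paper: the sandwich bound $P(X_n\le t-\epsilon)-P(\vert Y_n\vert>\epsilon)\le P(X_n+Y_n\le t)\le P(X_n\le t+\epsilon)+P(\vert Y_n\vert>\epsilon)$, then replacing $X_n$ by $\tilde X_n$ at uniform cost $a_n$, then using the anti-concentration condition after taking $\limsup_n$ and letting $\epsilon\downarrow 0$. Your bookkeeping is slightly cleaner than the paper's write-up: you use a single width-$2\epsilon$ increment, which matches the hypothesis exactly, and you compare with $P(\tilde X_n\le t)$ throughout, whereas the paper splits the increment into two one-sided pieces.
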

\begin{proof}
    For any fixed $\epsilon>0$ and $t\in \mathbb{R}$, for any $P\in \mathcal{P}$, we have $\{X_n+Y_n\leq t\}\subset \{X_n\leq t+\epsilon\}\cup\{ \vert Y_n\vert >\epsilon\}$, therefore $P(X_n+Y_n\leq t)\leq P(X_n\leq t+\epsilon)+P(\vert Y_n\vert >\epsilon)$. Similarly, we have $P(X_n\leq t-\epsilon)\leq P(X_n+Y_n\leq t)+P(\vert Y_n\vert >\epsilon)$. Combining those two bounds, we have 
    \begin{equation*}
        P(X_n\leq t-\epsilon)-P(\vert Y_n\vert >\epsilon)\leq P(X_n+Y_n\leq t)\leq P(X_n\leq t+\epsilon)+P(\vert Y_n\vert >\epsilon).
    \end{equation*}
    Let 
    \begin{align*}
        &A_{n,\epsilon}=\sup_{P\in \mathcal{P}}\sup_{t\in \mathbb{R}}\max\{\vert P(X_n\leq t+\epsilon)-P(\tilde{X}_{n}\leq t)\vert,\vert P(X_n\leq t-\epsilon)-P(\tilde{X}_{n}\leq t)\vert\}\\
        &B_{n,\epsilon}=\sup_{P\in \mathcal{P}}P(\vert Y_n\vert >\epsilon).
    \end{align*}
    We have $\sup_{P\in \mathcal{P}}\sup_{t\in \mathbb{R}}\vert P(X_n+Y_n\leq t)-P(X\leq t)\vert \leq A_{n,\epsilon}+B_{n,\epsilon}$. Since $Y_n$ converges uniformly to $0$, we have $\lim_{n\rightarrow \infty}B_{n,\epsilon} = 0$. Furthermore, $A_{n,\epsilon} \leq \sup_{P\in \mathcal{P}}\sup_{t\in R}\vert P(X_n\leq t)-P(\tilde{X}_n \leq t)\vert +\sup_{P\in \mathcal{P}}\sup_{t\in R}\vert P(\tilde{X}_n\leq t+\epsilon)-P(\tilde{X}_n\leq t)\vert+\sup_{P\in \mathcal{P}}\sup_{t\in R}\vert P(\tilde{X}_n\leq t-\epsilon)-P(\tilde{X}_n\leq t)\vert$. The first term converges to $0$ by the lemma assumptions, the second term goes to $0$ when $\epsilon \rightarrow 0$ due to \eqref{eq: equicontinuity}. 
\end{proof}

\begin{remark}
    It's clear that if $\tilde{X}_n\sim N(0,1)$, condition \eqref{eq: equicontinuity} holds. Furthermore, suppose $\inf_{n,P}\max_{j} \tau_{j,n,P}>1/C$, then $G_{n,P} := \sum_{j=1}^{J_n}\tau_{j,n,P}\chi_j^2$ also satisfy the condition, where $\chi_j^2$ are independent $\chi(1)$ distributed random variables. To see this, let $\tau_* = \max_{j}\tau_{j,n,P}$, and write $G_{n,P} = \tau_*\chi_1^2+R$, where $\tau_*\chi_1^2 \indep R$. For any interval of length $2\epsilon$, 
    \begin{align*}
        P(G_{n,P} \in I) = &\mathbbm{E}[P(\tau_*\chi_1^2\in I-R|R)]\\
        \leq &P(\tau_*\chi_1^2\in [s,s+2\epsilon]).
    \end{align*}
    Furthermore, the density of $\chi_1^2$ is decreasing on $(0,\infty)$, therefore the largest probability of an interval of a fixed length $h$ occurs near zero, and
    \begin{align*}
        \sup_{s}P(\chi_1^2\in [s,s+h])\leq P(\chi_1^2\leq h) \lesssim \sqrt{h}.
    \end{align*}
    Therefore,
    \begin{align*}
        \sup_{t}P(t-\epsilon<G_{n,P}\leq t+\epsilon) \lesssim \sqrt{\epsilon/\tau_*}.
    \end{align*}
    The right hand side converges to $0$ when $\epsilon$ converges to $0$ and $\tau_*$ is bounded below by a constant $1/C$.
    \label{remark:sufficient conditions}
\end{remark}

\begin{lemma}
    Let $X_1,...,X_n$ be a fourth-moment-integrable mean zero martingale difference sequence with respect to the filtration $\{\mathcal{F}_n\}$. Let $\Phi(\cdot)$ be the distribution function of standard normal distribution. Then we have
    \begin{align*}
        \sup_{t\in \mathbb{R}}\left\vert P\left(\sum_{i=1}^n X_i\leq t\right)-\Phi(t)\right\vert \leq C(L_{n}+N_{n})^{1/5},
    \end{align*}
    where $C$ is an absolute constant,
    \begin{align*}
        &L_n = 3\sum_{i=1}^n \mathbbm{E}_P[X_i^4],\\
        &N_n = 2\mathbbm{E}_P\left[\left(\sum_{i=1}^n X_i^2-1\right)^2\right].\\
    \end{align*}
    \label{lemma: martingale bound}
\end{lemma}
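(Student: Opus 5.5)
The plan is to obtain this bound as a direct consequence of the Berry--Esseen-type inequality for martingales of \citet{haeusler1988rate}. The only extra step is to translate his conditional-variance term into the realized quadratic variation $\sum_{i=1}^n X_i^2$ that appears in $N_n$.

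\textbf{Step 1 (invoke Haeusler's bound).} Recall Haeusler's Theorem 1 with moment parameter $\delta=1$, i.e.\ fourth moments. For a square-integrable martingale difference sequence $X_1,\dots,X_n$ adapted to $\{\mathcal{F}_i\}$, set $V_n^2=\sum_{i=1}^n \mathbbm{E}_P[X_i^2\mid\mathcal{F}_{i-1}]$. The theorem then gives
\begin{align*}
\sup_{t\in\mathbb{R}}\Big|P\Big(\sum_{i=1}^n X_i\le t\Big)-\Phi(t)\Big| \le C_1\Big(\sum_{i=1}^n\mathbbm{E}_P[X_i^4]+\mathbbm{E}_P\big[(V_n^2-1)^2\big]\Big)^{1/(3+2\delta)},
\end{align*}
with $C_1$ an absolute constant. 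For $\delta=1$ the exponent is $1/5$, which matches the statement. The first term is already $L_n/3$, so it remains to control the second.

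\textbf{Step 2 (replace $V_n^2$ by $\sum_i X_i^2$).} Define $D_i=X_i^2-\mathbbm{E}_P[X_i^2\mid\mathcal{F}_{i-1}]$. Then $\sum_i X_i^2 - V_n^2=\sum_i D_i$, and the $D_i$ are themselves martingale differences; they are square integrable because the $X_i$ have fourth moments. Orthogonality of martingale increments gives
\begin{align*}
\mathbbm{E}_P\Big[\Big(\sum_{i}D_i\Big)^2\Big]=\sum_i\mathbbm{E}_P[D_i^2]\le\sum_i\mathbbm{E}_P[X_i^4],
\end{align*}
where the inequality holds because a conditional variance is at most the conditional second moment. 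Combining this with $(a+b)^2\le 2a^2+2b^2$ yields
\begin{align*}
\mathbbm{E}_P[(V_n^2-1)^2]\le 2\,\mathbbm{E}_P\Big[\Big(\sum_i X_i^2-1\Big)^2\Big]+2\sum_i\mathbbm{E}_P[X_i^4] = N_n+\tfrac{2}{3}L_n .
\end{align*}

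\textbf{Step 3 (combine).} Substituting Step 2 into Step 1, the quantity inside the parentheses is at most $L_n/3+N_n+2L_n/3=L_n+N_n$. Therefore
\begin{align*}
\sup_t\Big|P\Big(\sum_i X_i\le t\Big)-\Phi(t)\Big|\le C_1(L_n+N_n)^{1/5},
\end{align*}
which is the claim with $C=C_1$. The bound is trivially true whenever $L_n+N_n$ is large, since the left side never exceeds $1$.

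The only step that needs care is Step 1. There I must confirm that Haeusler's theorem applies exactly as stated, namely with no normalization beyond the target variance $1$ and with the power $1/(3+2\delta)$ for $\delta=1$, rather than a variant requiring conditional Lindeberg-type quantities. Once that is settled, Step 2 is a routine orthogonality computation and Step 3 is arithmetic.
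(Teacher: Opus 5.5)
Your proposal is correct and follows essentially the same route as the paper: apply Theorem 1 of \citet{haeusler1988rate} with $\delta=1$, then bound $\mathbbm{E}_P[(V_n^2-1)^2]$ by $2\mathbbm{E}_P[(\sum_i X_i^2-1)^2]+2\sum_i\mathbbm{E}_P[X_i^4]$. This uses orthogonality of the martingale differences $X_i^2-\mathbbm{E}_P[X_i^2\mid\mathcal{F}_{i-1}]$, and it is exactly where the factor $3$ in $L_n$ comes from. Your bookkeeping ($L_n/3+N_n+2L_n/3=L_n+N_n$) and your general exponent $1/(3+2\delta)$ are slightly more careful than the paper's write-up, which writes the exponent as $1/5$ for general $\delta$ and identifies $\sum_i\mathbbm{E}_P[X_i^4]$ with $L_n$ directly.
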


\begin{proof}
    By Theorem 1 of \citet{haeusler1988rate}, we know that for any $\delta>0$,
    \begin{align*}
        \sup_{t\in \mathbb{R}}\left\vert P\left(\sum_{i=1}^n S_n\leq t\right)-\Phi(t)\right\vert \leq C_\delta(L_{n,2\delta}+N_{n,2\delta})^{1/5},
    \end{align*}
    where $C$ is an absolute constant,
    \begin{align*}
        &L_{n,2\delta} = \sum_{i=1}^n \mathbbm{E}_P[\vert X_i\vert^{2+2\delta}],\\
        &N_{n,2\delta} = \mathbbm{E}_P\left[\left(\sum_{i=1}^n \mathbbm{E}_P[X_i^2|\mathcal{F}_{i-1}]-1\right)^{1+\delta}\right].\\
    \end{align*}

    Let $\delta = 1$, then $L_{n,2\delta}=L_n$,
    \begin{align*}
        &N_{n,2\delta} = \mathbbm{E}_P\left[\left(\sum_{i=1}^n \mathbbm{E}_P[X_i^2|\mathcal{F}_{i-1}]-1\right)^{2}\right]\\
        \leq & 2\mathbbm{E}_P\left[\left(\sum_{i=1}^n X_i^2-1\right)^{2}\right]+2\mathbbm{E}_P\left[\left(\sum_{i=1}^n ( X_i^2-\mathbbm{E}_P [X_i^2|\mathcal{F}_{i-1}])\right)^{2}\right]\\
        = & 2\mathbbm{E}_P\left[\left(\sum_{i=1}^n X_i^2-1\right)^{2}\right]+2\sum_{i=1}^n\mathbbm{E}_P\left[\left(  X_i^2-\mathbbm{E}_P [X_i^2|\mathcal{F}_{i-1}]\right)^{2}\right] \\
        \leq & 2\mathbbm{E}_P\left[\left(\sum_{i=1}^n X_i^2-1\right)^{2}\right]+2\sum_{i=1}^n\mathbbm{E}_P\left[X_i^4\right].
    \end{align*}
    We use the elementary inequality $(x+y)^2\leq 2(x^2+y^2)$ for the second line. For the third line, we use the fact that $\mathbbm{E}_P[(X_i^2-\mathbbm{E}_P [X_i^2|\mathcal{F}_{i-1}])(X_j^2-\mathbbm{E}_P [X_j^2|\mathcal{F}_{j-1}])] = 0$ for $i\neq j$. The last line follows from the fact that $\mathbbm{E}_P\left[\left(  X_i^2-\mathbbm{E}_P [X_i^2|\mathcal{F}_{i-1}]\right)^{2}\right] = \text{Var}(X_i^2|\mathcal{F}_{i-1})\leq \mathbbm{E}_P[X_i^4]$. Therefore we have proved the result.
\end{proof}

\begin{lemma}
    Let $B_n(X)$ be a $J_n-$dimensional basis function such that $\mathbbm{E}_P[B_n(X)B^\top _n(X)]\prec CI_{J_n}$. Then for any function $f(X)$ such that $\mathbbm{E}_P[f^2(X)]\leq C$, and any positive semi-definite matrix $\Omega_n$ such that the maximal and minimal eigenvalues are bounded by $C$ and $1/C$, we have for all $n$,
    \begin{align*}
        \sum_{j=1}^{J_n} (\mathbbm{E}_P[f(X)\tilde{b}^\lambda_j(X)])^2\leq C'.
    \end{align*}
    for some constants $C'$, where we assume $\Omega_n$ yields spectral decomposition $\Gamma^\top _n\Lambda_n\Gamma_n$, $\tilde{B}^\lambda_n = \Lambda^{1/2}_n\Gamma_nB_n$, $\tilde{b}^\lambda_j(X)$ is the $j-$th element of $\tilde{B}^\lambda_n$.
    \label{lemma: bounded projection}
 \end{lemma}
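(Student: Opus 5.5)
The plan is to read the sum as the squared norm of a vector of inner products, and then bound that norm by a Gram-matrix inequality. Write $\tilde{B}^\lambda_n = A_n B_n$ with $A_n = \Lambda_n^{1/2}\Gamma_n$, and set $v := \mathbbm{E}_P[f(X)B_n(X)] \in \mathbb{R}^{J_n}$. Then
\[
\sum_{j=1}^{J_n} (\mathbbm{E}_P[f(X)\tilde{b}^\lambda_j(X)])^2 = \Vert A_n v\Vert^2 = v^\top \Gamma_n^\top\Lambda_n\Gamma_n v = v^\top \Omega_n v \leq \lambda_{\max}(\Omega_n)\Vert v\Vert^2 \leq C\Vert v\Vert^2 .
\]
This step uses only the upper eigenvalue bound on $\Omega_n$; the lower bound $1/C$ is not needed.

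It therefore suffices to show $\Vert v\Vert^2 \leq C\,\mathbbm{E}_P[f^2(X)]$ uniformly in $n$. Let $G_n = \mathbbm{E}_P[B_nB_n^\top]$, so that $G_n \preceq C I_{J_n}$. If $G_n$ is invertible, which holds under Assumption \ref{Assumption: basis}.1, let $\Pi f = B_n^\top G_n^{-1} v$ be the $L^2(P)$ projection of $f$ onto the span of $B_n$. Then
\[
v^\top G_n^{-1} v = \Vert \Pi f\Vert_{P,2}^2 \leq \Vert f\Vert_{P,2}^2 \leq C .
\]
Since $G_n^{-1} \succeq C^{-1} I_{J_n}$, we get $\Vert v\Vert^2 \leq C\, v^\top G_n^{-1}v \leq C^2$.

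If I want to avoid assuming invertibility, I will use a duality (Bessel-type) argument instead. For any unit vector $u$,
\[
u^\top v = \mathbbm{E}_P[f\,u^\top B_n] \leq \Vert f\Vert_{P,2}\,(u^\top G_n u)^{1/2} \leq C^{1/2}\cdot C^{1/2} = C .
\]
Taking the supremum over $u$ gives $\Vert v\Vert \leq C$.

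Combining the two displays yields $\sum_j(\mathbbm{E}_P[f\tilde{b}^\lambda_j])^2 \leq C\cdot C^2 =: C'$, which does not depend on $n$ or $P$. I do not expect a real obstacle here. The only point needing care is the identification $\Vert A_n v\Vert^2 = v^\top\Omega_n v$, which follows from the convention $M^{1/2} = \Lambda^{1/2}\Gamma$ with $(M^{1/2})^\top M^{1/2} = M$. The Cauchy–Schwarz duality bound then gives the Bessel inequality directly, without any orthonormality of the basis.
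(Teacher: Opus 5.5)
Your proposal is correct and follows essentially the paper's argument. The paper sets $c_n = m_n^\top B_n$ with $m_n = \mathbbm{E}_P[fB_n]$ and applies Cauchy--Schwarz to $\|m_n\|^2 = \mathbbm{E}_P[f c_n]$ together with $\mathbbm{E}_P[c_n^2] \le C\|m_n\|^2$, which is your duality bound evaluated at $u = v/\|v\|$; it then multiplies by $\|\Lambda_n^{1/2}\Gamma_n\|^2 \le C$, as you do, and likewise never uses the lower eigenvalue bound.
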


\begin{proof}
    Let $m_j:=\mathbbm{E}_P[f(X)b_j(X)]$, $m_n = (m_1,...,m_{J_n})$. Let $c_n(X) = \sum_{j=1}^{J_n}m_jb_j(X)$. Then we have $\mathbbm{E}_P c^2_n(X) =  m_n^\top \mathbbm{E}_{P}[B_n(X)B^\top _n(X)]m_n \leq C\Vert m_n\Vert^2$. 
    \begin{align*}
        &\sum_{j=1}^{J_n}(\mathbbm{E}_{P}[f(X)b_j(X)])^2 \\
    =& \mathbbm{E}_P\left[f(X)\sum_{j=1}^{J_n}\mathbbm{E}_P[f(X)b_j(X)]b_j(X)\right] \\
    =& \mathbbm{E}_P[f(X)c_n(X)]\\
    \leq & ({\mathbbm{E}_P[f^2(X)]\mathbbm{E}_P[c^2_n(X)]})^{1/2}\\
    \lesssim & \Vert m_n\Vert 
    \end{align*}
    Therefore, we have $\sum_{j=1}^{J_n}(\mathbbm{E}_{P}[f(X)b_j(X)])^2$ can be bounded by an absolute constant. Then we have
    \begin{align*}
        &\sum_{j=1}^{J_n} (\mathbbm{E}[f(X)\tilde{b}^\lambda_j(X)])^2\\
        =&\Vert \mathbbm{E}_P[f(X)\tilde{B}^\lambda_n(X)] \Vert^2\\
        =&\Vert \Lambda^{1/2}_n\Gamma_n\mathbbm{E}_P[f(X){B}_n(X)] \Vert^2\\
        \leq &\Vert \Lambda^{1/2}_n\Gamma_n \Vert^2 \Vert\mathbbm{E}_P[f(X){B}_n(X)] \Vert^2
    \end{align*}
    Since the operator norm of $ \Lambda^{1/2}_n\Gamma_n$ is bounded by a constant $C$, we have $\sum_{j=1}^{J_n} (\mathbbm{E}_P[f(X)\tilde{b}^\lambda_j(X)])^2$ is bounded by a constant.
\end{proof}

\begin{lemma}
    Under Assumptions \ref{Assumption: basis} to \ref{Assumption: nuisance}, we have
    \begin{align*}
         \left\Vert \frac{1}{n}\sum_{i=1}^nB_n(X_i)g(O_i;\eta_{{P}}) - \frac{1}{n}\sum_{i=1}^nB_n(X_i)g(O_i;{\eta}_{\widehat{P}_n}) \right\Vert= O_{\mathcal{P}}(\zeta_n\xi_n({J_n/n})^{1/2})
    \end{align*}
    \label{lemma: second order impact of nuisance}
\end{lemma}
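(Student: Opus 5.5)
We would prove Lemma \ref{lemma: second order impact of nuisance} by working fold by fold with cross-fitting and then adding the $K$ pieces. Fix a fold $I_k$ of size $n/K$. Let $\widehat{\eta}_{(-k)}$ be the nuisance estimate built from the other folds, and assume it lies in the realization set $\mathcal{T}_n$; by Assumption \ref{Assumption: nuisance}.3 this happens with probability $1-o(1)$, uniformly in $P$. Write $\Delta_i = g(O_i;\widehat{\eta}_{(-k)}) - g(O_i;\eta_P)$. The target is the vector $n^{-1}\sum_{i\in I_k} B_n(X_i)\Delta_i$. Conditionally on the out-of-fold data, its summands are i.i.d. We decompose it as
\[
\frac{1}{n}\sum_{i\in I_k}\Big\{B_n(X_i)\Delta_i-\mathbbm{E}_P[B_n(X)\Delta\mid \widehat{\eta}_{(-k)}]\Big\}+\frac{|I_k|}{n}\mathbbm{E}_P[B_n(X)\Delta\mid \widehat{\eta}_{(-k)}],
\]
that is, an empirical-process term plus a bias term. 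We bound the two terms separately.

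\textbf{Empirical-process term.} We bound its conditional second moment by Chebyshev's inequality applied coordinatewise. Its conditional expected squared norm is at most
\[
\frac{1}{n}\mathbbm{E}_P\big[\Vert B_n(X)\Vert^2\Delta^2\big]\le \frac{1}{n}\sum_{j=1}^{J_n}\sup_x|b_j(x)|^2\,\mathbbm{E}_P[\Delta^2]\le \frac{J_n\xi_n^2 r_{n,P}^2}{n}.
\]
By Assumption \ref{Assumption: basis}.2 we have $|b_j|\le\xi_n$, and by Assumption \ref{Assumption: nuisance}.3 we have $r_{n,P}\le\zeta_n$. Conditional Markov's inequality then gives $O_P(\zeta_n\xi_n(J_n/n)^{1/2})$, and this holds uniformly in $P$ because every constant involved is uniform. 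This term produces exactly the rate in the statement.

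\textbf{Bias term.} Write $m(t)(X)=\mathbbm{E}_P[g(O;(1-t)\eta_P+t\widehat{\eta})\mid X]$. Then the bias equals $\mathbbm{E}_P[B_n(X)\{m(1)-m(0)\}]$. Expanding by Taylor's theorem,
\[
m(1)-m(0)=m'(0)+\int_0^1(1-t)\,m''(t)\,dt .
\]
Conditional Neyman orthogonality (Assumption \ref{Assumption: nuisance}.2) gives $m'(0)=0$. The remainder is bounded in absolute value by $\sup_t|m''(t)|$, so each coordinate of the bias is bounded by $\xi_n\,\mathbbm{E}_P\sup_t|m''(t)|$. Here we need care, because $r'_{n,P}$ is defined as a supremum over $t$ outside the expectation. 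We would either read the assumption as bounding $\mathbbm{E}_P|m''(t)|$ uniformly in $t$ and apply Fubini to the integral remainder, or use the mean-value form coordinatewise. Either way each coordinate is $O(\xi_n\zeta_n/n^{1/2})$.

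\textbf{Main obstacle.} Summing $J_n$ coordinates naively gives a Euclidean norm of order $\xi_n\zeta_n(J_n/n)^{1/2}$, which is exactly the stated rate, so the naive bound already suffices. The alternative is a duality argument: bound $\sup_{\Vert a\Vert\le1}\mathbbm{E}_P[a^\top B_n(X)\,R(X)]$, where $R$ is the Taylor remainder, by $\sup_x|a^\top B_n(x)|\,\mathbbm{E}_P|R|\le\omega_n\zeta_n/n^{1/2}$. This gives a different rate, and we would keep the coordinatewise one, which matches the statement. The main obstacle is therefore making the second-order remainder bound rigorous for the integral form, given how $r'_{n,P}$ is defined.

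\textbf{Conclusion.} Summing over the $K$ folds and accounting for the event $\widehat{\eta}_{(-k)}\notin\mathcal{T}_n$, which has probability $o(1)$, yields the $O_{\mathcal P}(\zeta_n\xi_n(J_n/n)^{1/2})$ rate uniformly over $\mathcal{P}$.
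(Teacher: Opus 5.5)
Your proposal is correct and follows essentially the same route as the paper. Both split each fold into a centered empirical-process term, bounded through its conditional second moment by $J_n\xi_n^2\zeta_n^2/n_l$, and a bias term handled coordinatewise by a second-order expansion in $t$: conditional Neyman orthogonality kills the first-order term, the remainder is at most $\xi_n\zeta_n/n^{1/2}$ per coordinate, and summing $J_n$ squared coordinates gives $\zeta_n\xi_n(J_n/n)^{1/2}$. On the supremum over $t$, the paper takes your second option: it applies the Lagrange form to the scalar function $h_{j,P}(t;\eta)=\mathbbm{E}_P[b_j(X)g(O;(1-t)\eta_P+t\eta)]$, so the intermediate point $\tilde t$ does not depend on $X$ and the bound $r'_{n,P}\le\zeta_n/n^{1/2}$ applies directly.
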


\begin{proof}
    $\Vert \frac{1}{n}\sum_{i=1}^nB_n(X_i)[g(O_i;\eta_{{P}})-g(O_i;\eta_{\widehat{P}_n})]\Vert$ can be written as
    \begin{align*}
        &\left\Vert \frac{1}{n}\sum_{i=1}^nB_n(X_i)[g(O_i;\eta_{{P}})-g(O_i;\eta_{\widehat{P}_n})]\right\Vert \\
        =&\left\Vert \sum_{l=1}^L\frac{n_l}{n}\frac{1}{n_l}\sum_{i \in I_l}B_n(X_i)[g(O_i;\eta_{{P}})-g(O_i;\eta_{\widehat{P}_n})]\right\Vert  \\
        \leq &\sum_{l=1}^L \left\Vert \frac{n_l}{n}\frac{1}{n_l}\sum_{i \in I_l}B_n(X_i)[g(O_i;\eta_{{P}})-g(O_i;\eta_{\widehat{P}_n})]\right\Vert.
    \end{align*}
    Therefore, we only need to derive the bound for each fold.
    \begin{align*}
       &\left\Vert \frac{1}{n_l}\sum_{i \in I_l}B_n(X_i)[g(O_i;\eta_{{P}})-g(O_i;\eta_{\widehat{P}_n})]\right\Vert\\
       =&\left\Vert P_{n,l}[B_ng(\eta_{{P}})-B_ng(\eta_{\widehat{P}_n})]\right\Vert\\
       \leq &\left\Vert (P_{n,l}-P)[B_ng(\eta_{{P}})-B_ng(\eta_{\widehat{P}_n})]\right\Vert + \left\Vert P[B_ng(\eta_{{P}})-B_ng(\eta_{\widehat{P}_n})]\right\Vert
    \end{align*}
    Since $\eta_{\widehat{P}_{n,l}}$ is fitted using the data outside the fold $l$, it can be treated as fixed (otherwise the following derivation should be interpreted as conditioning on the other folds). For the first term, we have with probability approaching one, 
    \begin{align*}
        &\mathbbm{E}_P\left\Vert (P_{n,l}-P)[B_ng(\eta_{{P}})-B_ng(\eta_{\widehat{P}_{n,l}})]\right\Vert^2\\
        =&\sum_{j=1}^{J_n} \mathbbm{E}_P[(P_{n,l}-P)b_j[g(\eta_P)-g(\eta_{\widehat{P}})]]^2 \\
        = & \sum_{j=1}^{J_n}\text{Var}(P_{n,l}b_j(X)[g(O;\eta_P)-g(O;\eta_{\widehat{P}})])\\
        = & \frac{1}{n_l}\sum_{j=1}^{J_n}\text{Var}(b_j(X)[g(O;\eta_P)-g(O;\eta_{\widehat{P}})])\\
        \leq & \frac{1}{n_l}\sum_{j=1}^{J_n}\mathbbm{E}_Pb^2_j(X)[g(O;\eta_P)-g(O;\eta_{\widehat{P}})]^2\\
        \leq & \frac{1}{n_l}\sum_{j=1}^{J_n}\mathbbm{E}_P[g(O;\eta_P)-g(O;\eta_{\widehat{P}})]^2\xi_n^2\\
        \leq & \frac{J_n}{n_l}\sup_{\eta\in \mathcal{T}_n}\mathbbm{E}_P[g(O;\eta_P)-g(O;\eta)]^2\xi_n^2
        \\
        \leq & \frac{J_n}{n_l} \zeta_n^2\xi_n^2
    \end{align*}
    Where the second inequality is due to boundedness of $b_j(X)$, the third inequality is due to the definition of the realization set. Therefore, we have 
    \begin{align*}
        \left\Vert (P_{n,l}-P)[B_ng(\eta_{{P}})-B_ng(\eta_{\widehat{P}_n})]\right\Vert = O_{\mathcal{P}}(\zeta_n\xi_n({J_n/n})^{1/2}).
    \end{align*}
    For the second term, we have with probability approaching one,
    \begin{align*}
        &\left\Vert P[B_ng(\eta_{{P}})-B_ng(\eta_{\widehat{P}_n})]\right\Vert\\
        =& \left({\sum_{j=1}^{J_n}(P[b_jg(\eta_{{P}})-b_jg(\eta_{\widehat{P}_n})])^2}\right)^{1/2} \\
        \leq & {(\sum_{j=1}^{J_n}\sup_{\eta\in \mathcal{T}_n}(\mathbbm{E}_P[b_j(X)g(O;\eta_{{P}})-b_j(X)g(O;\eta)])^2)^{1/2}}
    \end{align*}
    For a fixed $\eta\in \mathcal{T}_n$. Define the function $h_{j,P}(t;\eta) = \mathbbm{E}[b_j(X)g(O;(1-t)\eta_P+t\eta)], t \in [0,1]$, then we have 
    \begin{align*}
        &h_{j,P}(1;\eta) = \mathbbm{E}_P[b_j(X)g(O;\eta)]\\
        &h_{j,P}(0;\eta) = \mathbbm{E}_P[b_j(X)g(O;\eta_P)]\\
        &h_{j,P}(1;\eta) = h_{j,P}(0;\eta)+ \frac{\partial}{\partial t}h_{j,P}(0;\eta) + \frac{1}{2}\frac{\partial^2}{\partial t^2}h_{j,P}(\tilde{t};\eta),
    \end{align*}
    for some  $\tilde{t}\in [0,1]$. Then we have
    \begin{align*}
        &\frac{\partial}{\partial t}h_{j,P}(0;\eta) \\
        =&\frac{\partial}{\partial t}\mathbbm{E}_P[b_j(X)g(O;(1-t)\eta_P+t\eta)]\Big\vert_{t=0}\\
        =&\mathbbm{E}_P\left[b_j(X)\frac{\partial}{\partial t}\mathbbm{E}_P[g(O;(1-t)\eta_P+t\eta)|X]\Big\vert_{t=0}\right] = 0,
    \end{align*}
    and
    \begin{align*}
        &\left\vert \frac{\partial^2}{\partial t^2}h_{j,P}(\tilde{t};\eta)\right\vert \\
        =&\left\vert \mathbbm{E}_P\left[b_j(X)\frac{\partial^2}{\partial t^2}\mathbbm{E}_P[g(O;(1-t)\eta_P+t\eta)|X]\Big\vert_{t=\tilde{t}}\right]\right\vert \leq \zeta_n\xi_n/{n}^{1/2}
    \end{align*}
    by Assumption \ref{Assumption: nuisance}. Therefore,
    \begin{align*}
        & \sup_{\eta\in \mathcal{T}_n}(\mathbbm{E}_P[b_j(X)g(O;\eta_{{P}})-b_j(X)g(O;\eta)])^2\leq  \zeta_n^2\xi^2_n/n
    \end{align*}
    and
    \begin{align*}
        \left\Vert P[B_ng(\eta_{{P}})-B_ng(\eta_{\widehat{P}_n})]\right\Vert \leq \zeta_n\xi_n {(J_n/n)^{1/2}}.
    \end{align*}

    Therefore, we have proved that $\Vert \widehat{a}_n-a_n\Vert  = O_{\mathcal{P}}(\zeta_n\xi_n {(J_n/n)^{1/2}})$.
\end{proof}

% \begin{lemma}
%     Under Assumption \ref{Assumption: basis} to \ref{Assumption: nuisance}, we have
%     \begin{align*}
%          \left\Vert \frac{1}{n}\sum_{i=1}^nB_n(X_i)g(O_i;\eta_{{P}}) - \frac{1}{n}\sum_{i=1}^nB_n(X_i)g(O_i;{\eta}_{\widehat{P}_n}) \right\Vert_\infty = o_{\mathcal{P}}(\zeta_n\xi_n{/n^{1/2}})
%     \end{align*}
    
%     \label{lemma: bounding the sup norm}
% \end{lemma}
% \begin{proof}
%     From the previous proof, we know that with probability approaching 1, we have
%     \begin{align*}
%         \left\vert\frac{1}{n}\sum_{i=1}^nb_j(X_i)g(O_i;\eta_{{P}}) - \frac{1}{n}\sum_{i=1}^nb_j(X_i)g(O_i;{\eta}_{\widehat{P}_n}) \right\vert\leq C\zeta_n\xi_n/{n}^{1/2}
%     \end{align*}
%     where $C$ is a constant which does not depend on $j,P$ and $n$. Since $B_n(X) = (b_1(X),...,b_{J_n}(X))^\top$, we know that
%     \begin{align*}
%          \left\Vert \frac{1}{n}\sum_{i=1}^nB_n(X_i)g(O_i;\eta_{{P}}) - \frac{1}{n}\sum_{i=1}^nB_n(X_i)g(O_i;{\eta}_{\widehat{P}_n}) \right\Vert_\infty \leq C\zeta_n\xi_n/{n}^{1/2},
%     \end{align*}
%     therefore the argument holds.

% \end{proof}

\begin{lemma}
    Under the null hypothesis, Assumptions \ref{Assumption: basis} to \ref{Assumption: nuisance}, we have
    \begin{align*}
        \left\Vert \frac{1}{n}\sum_{i=1}^n B_n(X_i)g(O_i;\eta_P)\right\Vert = O_{\mathcal{P}_0}({(J_n/n)^{1/2}})
    \end{align*}
    \label{Lemma: a_n}
\end{lemma}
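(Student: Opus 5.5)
The plan is a second-moment (Chebyshev/Markov) argument, uniform over $P\in\mathcal{P}_0$. Write $a_n := \frac{1}{n}\sum_{i=1}^n B_n(X_i)g(O_i;\eta_P)$. The key observation is that under the null each summand is mean zero. By iterated expectations,
\[
\mathbbm{E}_P[B_n(X)g(O;\eta_P)] = \mathbbm{E}_P\big[B_n(X)\,\mathbbm{E}_P[g(O;\eta_P)\mid X]\big] = 0,
\]
since $f_P(X)=\mathbbm{E}_P[g^\perp(O;\eta_P)\mid X]=0$ almost surely for $P\in\mathcal{P}_0$.

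Because the observations are i.i.d. and the summands are centered, the cross terms vanish. This gives
\[
\mathbbm{E}_P\Vert a_n\Vert^2 = \frac{1}{n}\,\mathbbm{E}_P\big[g^2(O;\eta_P)\Vert B_n(X)\Vert^2\big] = \frac{1}{n}\,\mathbbm{E}_P\big[\mathbbm{E}_P[g^2(O;\eta_P)\mid X]\,B_n(X)^\top B_n(X)\big].
\]
Next I bound the inner conditional second moment. By Jensen and Assumption \ref{Assumption: nuisance}.1, $\mathbbm{E}_P[g^2\mid X]\le (\mathbbm{E}_P[g^4\mid X])^{1/2}\le C^{1/2}$. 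Therefore
\[
\mathbbm{E}_P\Vert a_n\Vert^2 \le \frac{C^{1/2}}{n}\,\mathrm{tr}\big(\mathbbm{E}_P[B_n(X)B_n(X)^\top]\big) \le \frac{C^{3/2}J_n}{n},
\]
where the last step uses Assumption \ref{Assumption: basis}.1: the trace of a matrix bounded above by $C I_{J_n}$ is at most $CJ_n$.

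Markov's inequality then gives, for any $M>0$,
\[
\sup_{P\in\mathcal{P}_0} P\big(\Vert a_n\Vert > M (J_n/n)^{1/2}\big) \le \frac{C^{3/2}}{M^2}.
\]
This bound can be made arbitrarily small by taking $M$ large, uniformly in $P$ and $n$, which is exactly the claim $O_{\mathcal{P}_0}((J_n/n)^{1/2})$.

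The only point needing care is uniformity over $\mathcal{P}_0$. It holds because the constants in Assumptions \ref{Assumption: basis} and \ref{Assumption: nuisance} are stated to be uniform in $P$. I expect no real obstacle here. Neither cross-fitting nor the weighting matrix enters, since the statistic uses the true $\eta_P$.
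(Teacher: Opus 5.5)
Your proposal is correct and follows the paper's proof: under the null the summands are mean zero, so the cross terms vanish and $\mathbbm{E}_P\Vert a_n\Vert^2=n^{-1}\mathbbm{E}_P[g^2\Vert B_n(X)\Vert^2]$; the bounded conditional second moment of $g$ together with $\mathbbm{E}_P\Vert B_n(X)\Vert^2\lesssim J_n$ and Markov's inequality then give the uniform $O_{\mathcal{P}_0}((J_n/n)^{1/2})$ rate. Your justification of the last step via $\mathrm{tr}(\mathbbm{E}_P[B_nB_n^\top])\le CJ_n$ from Assumption \ref{Assumption: basis}.1 is more careful than the paper's appeal to uniform boundedness of the basis functions. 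The sup-norm bound $\xi_n\le CJ_n^{1/2}$ alone would only give $J_n\xi_n^2$.
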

\begin{proof}
Direct calculation gives
    \begin{align*}
        &\mathbbm{E}_P\left\Vert \frac{1}{n}\sum_{i=1}^n B_n(X_i)g(O_i;\eta_P)\right\Vert^2\\
        =&\frac{1}{n}\mathbbm{E}_P g^2(O_i;\eta_P)B^\top_n(X_i)B_n(X_i)\\
        \lesssim &\frac{1}{n}\mathbbm{E}_PB^\top_n(X_i)B_n(X_i)\\
        \lesssim & J_n/n,
    \end{align*}
    where the first equality is due to the fact that under the null hypothesis, $B_n(X_i)g(O_i;\eta_P)$ is mean zero, the second inequality is because $\mathbbm{E}_P[g^2(O_i;\eta_P)|X]<C$, the third inequality is due to uniform boundedness of the basis functions. Therefore we have $\left\Vert \frac{1}{n}\sum_{i=1}^n B_n(X_i)g(O_i;\eta_P)\right\Vert = O_p({(J_n/n)^{1/2}})$.
\end{proof}

% The proof of the next lemma is similar to the proof of Lemma \ref{lemma: bounding the sup norm}, thus omitted.
% \begin{lemma}
%     Under \ref{Assumption: basis} to \ref{Assumption: nuisance}, we have
%     \begin{align*}
%         \left\Vert \frac{1}{n}\sum_{i=1}^n B_n(X_i)g(O_i;\eta_P)\right\Vert_\infty = O_p({\xi_n(1/n)^{1/2}})
%     \end{align*}
% \end{lemma}

\begin{lemma}
    Under Assumption \ref{Assumption: basis} to \ref{Assumption: nuisance} and the null hypothesis, let 
    \begin{align*}
        S_n = n{\left(\frac{1}{n}\sum_{i=1}^nB_n(X_i)g(O_i;\eta_{{P}})\right)^\top}{{\Omega}_n}{\left(\frac{1}{n}\sum_{i=1}^nB_n(X_i)g(O_i;\eta_{{P}})\right)}
    \end{align*}
    we have
    \begin{enumerate}
        \item For $\zeta_n\xi_n = o(1/J_n)$, $\delta_n{J^{1/2}_n} = o(1)$,we have $\widehat{S}_n-S_n = o_{\mathcal{P}_0}(1)$.
        \item Suppose $\widehat{\Omega} = \Omega_n = \text{diag}(\lambda_1,...,\lambda_{J_n})$ and $\zeta_n\xi_n = o(1/\text{tr}(\Omega_n))$, we have $\widehat{S}_n-S_n = o_{\mathcal{P}_0}(1)$.
        \item For $\zeta_n\xi_n = o(1/{J^{1/2}_n})$, $\delta_n = o(1)$, we have $\widehat{S}_n-S_n = o_{\mathcal{P}_0}({J_n}^{1/2})$.
    \end{enumerate}
    \label{lemma: convergence rate of S_n}
\end{lemma}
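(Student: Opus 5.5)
We write $\widehat{a}_n = \frac{1}{n}\sum_{i=1}^n B_n(X_i)g(O_i;\eta_{\widehat{P}_n})$ and $a_n = \frac{1}{n}\sum_{i=1}^n B_n(X_i)g(O_i;\eta_P)$, so that $\widehat{S}_n = n\widehat{a}_n^\top\widehat{\Omega}_n\widehat{a}_n$ and $S_n = n a_n^\top\Omega_n a_n$. The plan is to split the difference as
\begin{align*}
\widehat{S}_n - S_n = n(\widehat{a}_n-a_n)^\top\widehat{\Omega}_n(\widehat{a}_n-a_n) + 2n(\widehat{a}_n-a_n)^\top\widehat{\Omega}_n a_n + n a_n^\top(\widehat{\Omega}_n-\Omega_n)a_n
\end{align*}
and bound each term with the two earlier lemmas. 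Lemma \ref{lemma: second order impact of nuisance} gives $\Vert\widehat{a}_n-a_n\Vert = O_{\mathcal{P}}(\zeta_n\xi_n(J_n/n)^{1/2})$, and Lemma \ref{Lemma: a_n} gives $\Vert a_n\Vert = O_{\mathcal{P}_0}((J_n/n)^{1/2})$. By Assumption \ref{Assumption: weighting matrix}, on an event of probability at least $1-\Delta_n$ we have $\Vert\widehat{\Omega}_n\Vert\le C+\delta_n/J_n^{1/2}=O(1)$ and $\Vert\widehat{\Omega}_n-\Omega_n\Vert\le\delta_n/J_n^{1/2}$. All three bounds are uniform in $P$, since every constant involved comes from the assumptions.

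\textbf{Part 1.} Applying Cauchy--Schwarz with operator norms gives
\begin{itemize}
\item quadratic term: $O(n\cdot\zeta_n^2\xi_n^2J_n/n)=O(\zeta_n^2\xi_n^2J_n)$;
\item cross term: $O(\zeta_n\xi_nJ_n)$;
\item weighting term: $O(J_n\delta_n/J_n^{1/2})=O(\delta_nJ_n^{1/2})$.
\end{itemize}
Under $\zeta_n\xi_n=o(1/J_n)$ and $\delta_nJ_n^{1/2}=o(1)$, all three are $o_{\mathcal{P}_0}(1)$.

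\textbf{Part 3.} The same bounds apply. The cross term is $O(\zeta_n\xi_nJ_n)=o(J_n^{1/2})$, the quadratic term is even smaller, and the weighting term is $O(\delta_nJ_n^{1/2})=o(J_n^{1/2})$ when $\delta_n=o(1)$.

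\textbf{Part 2.} Here $\widehat{\Omega}_n=\Omega_n$, so the third term vanishes. The crude operator-norm bound would produce $J_n$ where we need $\mathrm{tr}(\Omega_n)$, so I would work coordinatewise instead. The proof of Lemma \ref{lemma: second order impact of nuisance} actually gives a per-coordinate bound: for each $j$, $|\widehat{a}_{n,j}-a_{n,j}|=O_{\mathcal{P}}(\zeta_n\xi_n/n^{1/2})$. This is because the per-coordinate variance term is at most $\xi_n^2\zeta_n^2/n_l$ and the bias term is at most $\zeta_n\xi_n/n^{1/2}$.

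To turn this into a weighted sum, I would use expectations rather than maxima over $j$:
\begin{itemize}
\item $n\,\mathbb{E}\sum_j\lambda_j(\widehat{a}_{n,j}-a_{n,j})^2\lesssim \mathrm{tr}(\Omega_n)\zeta_n^2\xi_n^2$, conditional on the training folds and on the realization-set event.
\item $n\,\mathbb{E}\sum_j\lambda_j a_{n,j}^2=\sum_j\lambda_j\mathbb{E}[g^2b_j^2]\lesssim\mathrm{tr}(\Omega_n)$, using the conditional second-moment bound and Assumption \ref{Assumption: basis}.1 (the diagonal entries of $\mathbb{E}[B_nB_n^\top]$ are at most $C$).
\end{itemize}
By Cauchy--Schwarz in the $\lambda$-weighted inner product, the cross term is then $O_{\mathcal{P}}(\mathrm{tr}(\Omega_n)\zeta_n\xi_n)=o(1)$, and the quadratic term is $O(\mathrm{tr}(\Omega_n)^2\zeta_n^2\xi_n^2)=o(1)$. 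Markov's inequality, applied conditionally and then uniformly in $P$, completes this part.

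\textbf{Main obstacle.} The step needing the most care is Part 2: the weighted per-coordinate second moment must be handled via expectations under cross-fitting, not via $\max_j$ bounds, to avoid picking up an extra factor of $J_n$. The bias piece is deterministic given $\eta\in\mathcal{T}_n$, which is what makes this work. Throughout, every $O_{\mathcal{P}}$ statement has to be uniform over $\mathcal{P}_0$, which means intersecting with the events $\{\eta_{\widehat{P}_n}\in\mathcal{T}_n\}$ and $\{\Vert\widehat{\Omega}_n-\Omega_n\Vert\le\delta_n/J_n^{1/2}\}$. Their failure probabilities, $o(1)$ and $\Delta_n$, do not depend on $P$.
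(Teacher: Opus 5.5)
Your proof is correct and follows the paper's argument: it uses the same $a_n,\widehat a_n$ decomposition (you merge the paper's five terms into three by keeping $\widehat\Omega_n$ in the first two), the same operator-norm bounds from Lemmas \ref{lemma: second order impact of nuisance} and \ref{Lemma: a_n} for Parts 1 and 3, and, for Part 2, the same $\Omega_n$-weighted Cauchy--Schwarz step with $\mathbb{E}[a_n^\top\Omega_n a_n]\lesssim \mathrm{tr}(\Omega_n)/n$. Your per-coordinate justification of $n(\widehat a_n-a_n)^\top\Omega_n(\widehat a_n-a_n)=O_{\mathcal P}(\mathrm{tr}(\Omega_n)\zeta_n^2\xi_n^2)$ fills in a step the paper only asserts; note that this, not $\mathrm{tr}(\Omega_n)^2\zeta_n^2\xi_n^2$, is the order of the quadratic term, and it is $o(1)$ whenever $\mathrm{tr}(\Omega_n)$ is bounded below (e.g.\ under $\lambda_{\max}(\Omega_{n,P})>1/C$) or $\zeta_n\xi_n=O(1)$.
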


\begin{proof}
    For the convenience of notation, let $a_n = \frac{1}{n}\sum_{i=1}^nB_n(X_i)g(O_i;\eta_{{P}})$ and $\widehat{a}_n = \frac{1}{n}\sum_{i=1}^nB_n(X_i)g(O_i;{\eta}_{\widehat{P}_n})$. Then
    \begin{align*}
        &\widehat{a}_n^\top\widehat{\Omega}_n\widehat{a}_n - {a}_n^\top{\Omega}_n{a}_n\\
         = & (\widehat{a}^\top_n-a^\top_n)(\widehat{\Omega}_n-\Omega_n)(\widehat{a}_n-a_n)+(\widehat{a}_n^\top -a_n)\Omega_n(\widehat{a}_n-a_n)\\
         &+2(\widehat{a}^\top_n-a^\top_n)(\widehat{\Omega}_n-\Omega_n)a_n+2(\widehat{a}^\top_n-a^\top_n)\Omega_na_n+a_n^\top (\widehat{\Omega}_n-\Omega_n)a_n.
    \end{align*}

    Therefore, the rate is dominated by the last two terms. We have with probability approaching one,
    \begin{align*}
        &(\widehat{a}^\top_n-a^\top_n)\Omega_n a_n \leq \Vert \widehat{a}^\top_n-a^\top_n\Vert \Vert \Omega_n\Vert \Vert a_n\Vert  = O_{\mathcal{P}_0}(\zeta_n \xi_n J_n/n) \\
 & a_n^\top (\widehat{\Omega}_n-\Omega_n)a_n \leq \Vert a_n\Vert^2 \Vert \widehat{\Omega}_n-\Omega_n\Vert = O_{\mathcal{P}_0}(\delta_n J^{1/2}_n/n)
    \end{align*}
    Therefore, when $\zeta_n\xi_n = o(1/{J_n})$ and $\delta_n = o(1/J^{1/2}_n)$, we have $n(\widehat{a}_n^\top\widehat{\Omega}_n\widehat{a}_n - {a}_n^\top{\Omega}_n{a}_n) = o(1)$, when $\zeta_n\xi_n = o(1/{J_n}^{1/2})$ and $\delta_n = o(1)$, we have $n(\widehat{a}_n^\top\widehat{\Omega}_n\widehat{a}_n - {a}_n^\top{\Omega}_n{a}_n) = o({J_n}^{1/2})$.

    When $\widehat{\Omega}_n = {\Omega}_n$ is a diagonal matrix with positive entries, the leading term is only $(\widehat{a}^\top_n-a^\top_n)\Omega_na_n$, we have
    \begin{align*}
        (\widehat{a}^\top_n-a^\top_n)\Omega_na_n \leq [(\widehat{a}^\top_n-a^\top_n)\Omega_n(\widehat{a}^\top_n-a^\top_n)]^{1/2}(a_n^\top\Omega_na_n)^{1/2}.
    \end{align*}
    Under the null $\mathbbm{E}_P(a_n^\top\Omega_na_n) = \frac{1}{n}\sum_{j=1}^{J_n}\lambda_j\mathbbm{E}[g^2b_j^2]\lesssim \frac{\text{trace}(\Omega_n)}{n}$. Furthermore, we have $(\widehat{a}^\top_n-a^\top_n)\Omega_n(\widehat{a}^\top_n-a^\top_n) = O_P(\frac{\zeta_n^2\xi_n^2\text{trace}(\Omega_n)}{n})$.
    Therefore, when $\zeta_n\xi_n = o(1/\text{trace}(\Omega_n))$, we have $n(\widehat{a}_n^\top\widehat{\Omega}_n\widehat{a}_n - {a}_n^\top{\Omega}_n{a}_n) = o_\mathcal{P}(1)$
\end{proof}

\begin{lemma}[Convergence rates of centering and scaling parameters]
\label{lem:centering-scaling}
Under the Assumptions of Theorem \ref{Theorem: asymptotic distribution under the null},
\[
\frac{\widehat{\rho}_n-\rho_{n,P}}{\gamma_{n,P}}
=o_{\mathcal P}(1),
\qquad
|\widehat{\gamma}_n-\gamma_{n,P}|
=o_{\mathcal P}(1).
\]
\end{lemma}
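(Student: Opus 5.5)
We need $\rho_{n,P}=\mathrm{tr}(\Sigma_{n,P})$ and $\gamma_{n,P}=\Vert\Sigma_{n,P}\Vert_F$, with $\widehat\rho_n=\mathrm{tr}(\widehat\Sigma_n)$ and $\widehat\gamma_n=\Vert\widehat\Sigma_n\Vert_F$. The plan is to control $\widehat\Sigma_n-\Sigma_{n,P}$ in Frobenius norm, since $|\widehat\gamma_n-\gamma_{n,P}|\le\Vert\widehat\Sigma_n-\Sigma_{n,P}\Vert_F$ by the reverse triangle inequality and $|\widehat\rho_n-\rho_{n,P}|\le J_n^{1/2}\Vert\widehat\Sigma_n-\Sigma_{n,P}\Vert_F$ by Cauchy--Schwarz on the eigenvalues. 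We also need a lower bound on $\gamma_{n,P}$. By Assumption~\ref{Assumption: basis}.1, Assumption~\ref{Assumption: nuisance}.1 and condition (2) of Theorem~\ref{Theorem: asymptotic distribution under the null}, every eigenvalue of $\Sigma_{n,P}=A_nH_nA_n^\top$ (notation from the proof of Lemma~\ref{lemma: chi-square approximation}) is at least $c>0$. Hence $\gamma_{n,P}\gtrsim J_n^{1/2}$, and also $\gamma_{n,P}\lesssim J_n^{1/2}$. So for the trace it suffices to show $|\widehat\rho_n-\rho_{n,P}|=o_{\mathcal P}(J_n^{1/2})$, and for the Frobenius norm $\Vert\widehat\Sigma_n-\Sigma_{n,P}\Vert_F=o_{\mathcal P}(1)$. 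I would actually aim to show the trace difference directly is $o(J_n^{1/2})$, which is easier than going through $J_n^{1/2}\Vert\cdot\Vert_F$.

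I would decompose $\widehat\Sigma_n-\Sigma_{n,P}$ into three pieces.

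\emph{(i) Weighting-matrix error.} Replace $\widehat\Omega_n^{1/2}$ by $\Omega_{n,P}^{1/2}$. Writing $\widehat\Sigma_n=\widehat\Omega_n^{1/2}\widehat H_n(\widehat\Omega_n^{1/2})^\top$, this reduces to bounding terms like $\Vert(\widehat\Omega_n^{1/2}-\Omega_{n,P}^{1/2})\widehat H_n\Omega^{1/2}\Vert_F$. Here $\widehat H_n=n^{-1}\sum g^2(O_i;\widehat\eta)B_iB_i^\top$ and $\Vert\widehat H_n\Vert=O_{\mathcal P}(1)$. To sidestep the non-uniqueness of the square root factor, I would work with the invariant quantities $\mathrm{tr}(\widehat\Omega_n\widehat H_n)$ and $\Vert\widehat\Sigma_n\Vert_F^2=\mathrm{tr}(\widehat\Omega_n\widehat H_n\widehat\Omega_n\widehat H_n)$. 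The trace difference is then bounded by $\Vert\widehat\Omega_n-\Omega_{n,P}\Vert\,\mathrm{tr}(\widehat H_n)\lesssim(\delta_n/J_n^{1/2})J_n=\delta_nJ_n^{1/2}=o(J_n^{1/2})$. The Frobenius-squared difference is similarly $O(\delta_nJ_n^{1/2})$. Since $\gamma\asymp J_n^{1/2}$, this translates into $|\widehat\gamma-\gamma|\lesssim\delta_nJ_n^{1/2}/\gamma\to0$.

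\emph{(ii) Nuisance error.} Replace $g^2(O_i;\widehat\eta_n)$ by $g^2(O_i;\eta_P)$, handled fold by fold using cross-fitting. The difference matrix is $n^{-1}\sum(\widehat g_i^2-g_i^2)\Omega^{1/2}B_iB_i^\top\Omega^{1/2\top}$. Its trace is bounded by $\omega_n^2$ times $n^{-1}\sum|\widehat g_i^2-g_i^2|$. By Cauchy--Schwarz and the assumed bound $\mathbb E[(\widehat g-g)^2(\widehat g+g)^2]\le C\mathbb E(\widehat g-g)^2$, the latter is $O(\zeta_n)$. A sharper route is to use $\mathbb E[|\widehat g^2-g^2|\,\Vert B\Vert^2]\le\omega_n\,\mathbb E[|\widehat g^2-g^2|\,\Vert B\Vert]$, then Cauchy--Schwarz with $\mathbb E\Vert B\Vert^2\lesssim J_n$. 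This gives $O(\omega_n\zeta_nJ_n^{1/2})$ for the trace, which is $o(J_n^{1/2})$ because $\zeta_n\omega_n=o(J_n^{-1/2})$. For the Frobenius norm, I would bound the trace of the difference's absolute value in the same way. Since the Frobenius norm is at most the trace norm, I get $O(\omega_n\zeta_nJ_n^{1/2})=o(1)$, using Assumption~\ref{Assumption: nuisance}(a).

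\emph{(iii) Sampling error.} Replace the empirical average $n^{-1}\sum g_i^2\tilde B_i\tilde B_i^\top$ by $\Sigma_{n,P}$. The second moment is $\mathbb E\Vert\cdot\Vert_F^2\le n^{-1}\mathbb E[g^4\Vert\tilde B\Vert^4]\lesssim\omega_n^2J_n/n$, which tends to $0$ because $J_n^{5/4}\omega_n/n^{1/2}\to0$. The trace fluctuation is likewise $O(\omega_n J_n^{1/2}/n^{1/2})$.

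All bounds are uniform in $P$, using only the constants in the assumptions. The main obstacle is step (ii): getting the Frobenius bound sharp enough. A crude bound through $\omega_n^2\zeta_n$ would require $\zeta_n\omega_n^2\to0$, which is not assumed. I would therefore rely on the trace-norm/Cauchy--Schwarz argument above, which exactly matches the rate condition $\zeta_n(\xi_n+\omega_n)=o(J_n^{-1/2})$.
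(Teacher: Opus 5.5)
Your argument is correct. Its skeleton matches the paper's: work with the rotation-invariant forms $\mathrm{tr}(\Omega_{n,P}G_{n,P})$ and $\mathrm{tr}(\Omega_{n,P}G_{n,P}\Omega_{n,P}G_{n,P})$ (your $H_n$ is the paper's $G_{n,P}$), obtain $\gamma_{n,P}\asymp J_n^{1/2}$ from $\lambda_{\min}(\Omega_{n,P})\ge 1/C$, and control the oracle sampling term by $\omega_n^2J_n/n$.

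You genuinely differ in the nuisance step. The paper proves $\|\widehat G_n-G_{n,P}\|_F=o_{\mathcal P}(1)$ by splitting the nuisance contribution into two pieces:
\begin{itemize}
\item a cross-fitted centered term, bounded by a second-moment computation of order $J_n^2\xi_n^4\zeta_n^2/n$, which then needs $\xi_n\lesssim J_n^{1/2}$ and $J_n=o(n^{1/3})$;
\item a bias term, bounded in operator norm by $\zeta_n\omega_n$ and hence in Frobenius norm by $J_n^{1/2}\zeta_n\omega_n$.
\end{itemize}
You instead bound the Frobenius norm by the nuclear norm, and the nuclear norm by the nonnegative average $n^{-1}\sum_i|\widehat g_i^2-g_i^2|\,\|\Omega_{n,P}^{1/2}B_n(X_i)\|^2$. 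You then apply conditional Markov fold by fold. This single first-moment bound gives $O_{\mathcal P}(J_n^{1/2}\zeta_n\omega_n)$ for the whole nuisance contribution, with no variance computation and no $\xi_n$. That is exactly the rate of the paper's dominant (bias) term, so nothing is lost and the argument is shorter.

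For the weighting matrix, you use $|\mathrm{tr}(AB)|\le\|A\|\,\mathrm{tr}(B)$ for PSD $B$ together with $\|\widehat\Omega_n-\Omega_{n,P}\|\le\delta_n/J_n^{1/2}$. The paper uses Frobenius Cauchy--Schwarz with $\|\widehat\Omega_n-\Omega_{n,P}\|_F\le\delta_n$. Both give $O(\delta_nJ_n^{1/2})$.

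A few repairs are needed in the write-up; none affects the conclusion.
\begin{itemize}
\item In step (iii) you invoke $J_n^{5/4}\omega_n/n^{1/2}\to0$, but that is a hypothesis of Theorem~1, not Theorem~2, and Theorem~2's conditions do not imply it in general. What you actually need, $\omega_n^2J_n/n\to0$, follows from $\omega_n\le CJ_n$ and $J_n=o(n^{1/3})$. The latter is implied by conditions (1)--(2) of Theorem~2.
\item Step (i) uses $\|\widehat H_n\|=O_{\mathcal P}(1)$ before it is established. It follows by running your steps (ii)--(iii) with the identity in place of $\Omega_{n,P}$. The bound $\mathrm{tr}(\widehat H_n)=O_{\mathcal P}(J_n)$ follows from Markov and the conditional moment bound on $\mathcal T_n$.
\item Drop the opening reduction to $\|\widehat\Sigma_n-\Sigma_{n,P}\|_F$. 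Since $\widehat\Omega_n^{1/2}$ is determined only up to the choice of eigenvectors, that quantity need not be small. Apply the reverse triangle inequality only between $\Omega_{n,P}^{1/2}\widehat H_n(\Omega_{n,P}^{1/2})^\top$ and $\Sigma_{n,P}$. Handle $\widehat\Omega_n$ through $|\widehat\gamma_n^2-\gamma_1^2|$ with $\gamma_1=\|\Omega_{n,P}^{1/2}\widehat H_n(\Omega_{n,P}^{1/2})^\top\|_F$, which is what your step (i) effectively does.
\end{itemize}
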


\begin{proof}
    Recall that $\Sigma^{1/2}_{n,P}
=
\Omega_{n,P}^{1/2}
G_{n,P}
(\Omega_{n,P}^{1/2})^\top,
\widehat{\Sigma}_n
=
\widehat{\Omega}_n^{1/2}
\widehat G_n
(\widehat{\Omega}_n^{1/2})^\top$, where 
\begin{align*}
    &G_{n,P}
=
\mathbbm{E}_P\left[
g^2(O;\eta_P)B_n(X)B_n(X)^\top
\right]\\
& \widehat G_n
=
\frac1n\sum_{i=1}^n
g^2(O_i;\widehat\eta_{n,l(i)})
B_n(X_i)B_n(X_i)^\top.
\end{align*}
Here $l(i)$ denotes the fold containing observation $i$, and
$\widehat\eta_{n,l}$ is estimated using observations outside fold $l$. We first show that
\begin{equation}
\label{eq:G-convergence}
\|\widehat G_n-G_{n,P}\|_F
=
o_{\mathcal P}(1).
\end{equation}

Let $\mathcal I_1,\ldots,\mathcal I_L$ denote the cross-fitting folds,
with $n_l=|\mathcal I_l|$ and $n_l\asymp n$, where $L$ is fixed.
Let $P_{n,l}$ denote the empirical measure over fold $\mathcal I_l$. For notational convenience, write $g_P(O)=g(O;\eta_P)$ and $\widehat g_l(O)=g(O;\widehat\eta_{n,l})$. Then
\begin{align}
\widehat G_n-G_{n,P}
={}&
\sum_{l=1}^L\frac{n_l}{n}
(P_{n,l}-P)
\left[
(\widehat g_l^2-g_P^2)B_nB_n^\top
\right]
\label{eq:G-decomp1}\\
&+
(P_n-P)
\left[
g_P^2B_nB_n^\top
\right]
\label{eq:G-decomp2}\\
&+
\sum_{l=1}^L\frac{n_l}{n}
P\left[
(\widehat g_l^2-g_P^2)B_nB_n^\top
\right].
\label{eq:G-decomp3}
\end{align}
We bound the three terms separately.

First consider \eqref{eq:G-decomp1}. By Assumption \ref{Assumption: nuisance}, with probability
tending to one uniformly over $P\in\mathcal P_0$, all fold-specific
nuisance estimators belong to $\mathcal T_n$. Conditional on the
training sample outside fold $l$, $\widehat\eta_{n,l}$ is fixed and the
observations in $\mathcal I_l$ are independent of
$\widehat\eta_{n,l}$. Hence, on this event,
\begin{align*}
&\mathbbm{E}_P\left[
\left\|
(P_{n,l}-P)
\left[
(\widehat g_l^2-g_P^2)B_nB_n^\top
\right]
\right\|_F^2
\,\middle|\,
\widehat\eta_{n,l}
\right]\\
&\qquad
=
\frac1{n_l}
\sum_{j,k=1}^{J_n}
\text{Var}_P\left(
(\widehat g_l^2-g_P^2)b_j(X)b_k(X)
\,\middle|\,
\widehat\eta_{n,l}
\right)\\
&\qquad
\le
\frac1{n_l}
\sum_{j,k=1}^{J_n}
\mathbbm{E}_P\left[
(\widehat g_l-g_P)^2
(\widehat g_l+g_P)^2
b_j^2(X)b_k^2(X)
\,\middle|\,
\widehat\eta_{n,l}
\right]\\
&\qquad
\le
\frac{J_n^2\xi_n^4}{n_l}
\mathbbm{E}_P\left[
(\widehat g_l-g_P)^2
(\widehat g_l+g_P)^2
\,\middle|\,
\widehat\eta_{n,l}
\right]\\
&\qquad
\lesssim
\frac{J_n^2\xi_n^4\zeta_n^2}{n_l},
\end{align*}
where the last inequality follows from Assumption \ref{Assumption: nuisance}. Therefore,
uniformly over $P\in\mathcal P_0$,
\[
\left\|
(P_{n,l}-P)
\left[
(\widehat g_l^2-g_P^2)B_nB_n^\top
\right]
\right\|_F
=
O_{\mathcal P_0}\left(
\frac{J_n\xi_n^2\zeta_n}{\sqrt n}
\right).
\]
Moreover,
\[
\frac{J_n\xi_n^2\zeta_n}{\sqrt n}
=
\left(\sqrt{J_n}\zeta_n\xi_n\right)
\left(\frac{\sqrt{J_n}\xi_n}{\sqrt n}\right)
=o(1),
\]
because Assumption \ref{Assumption: nuisance}(a) gives
$\sqrt{J_n}\zeta_n\xi_n=o(1)$, Assumption \ref{Assumption: basis} gives
$\xi_n\lesssim J_n^{1/2}$, and the assumptions of Theorem \ref{Theorem: asymptotic distribution under the null} imply
$J_n=o(n^{1/3})$. Since $L$ is fixed,
\begin{equation}
\label{eq:G-term1}
\left\|
\sum_{l=1}^L\frac{n_l}{n}
(P_{n,l}-P)
\left[
(\widehat g_l^2-g_P^2)B_nB_n^\top
\right]
\right\|_F
=
o_{\mathcal P_0}(1).
\end{equation}

Next consider the oracle empirical-process term \eqref{eq:G-decomp2}.
By independence,
\begin{align*}
&\mathbbm{E}_P\left\|
(P_n-P)
[g_P^2B_nB_n^\top]
\right\|_F^2\\
&=
\frac1n
\sum_{j,k=1}^{J_n}
\text{Var}_P\left(
g_P^2b_j(X)b_k(X)
\right)\\
&\le
\frac1n
\mathbbm{E}_P\left[
g_P^4
\|B_n(X)B_n(X)^\top\|_F^2
\right]\\
&=
\frac1n
\mathbbm{E}_P\left[
g_P^4\|B_n(X)\|^4
\right]\\
&\le
\frac{\omega_n^2}{n}
\mathbbm{E}_P\left[
g_P^4\|B_n(X)\|^2
\right]\\
&=
\frac{\omega_n^2}{n}
\mathbbm{E}_P\left[
\mathbbm{E}_P(g_P^4\mid X)\|B_n(X)\|^2
\right]\\
&\lesssim
\frac{\omega_n^2J_n}{n},
\end{align*}
where the last inequality follows from Assumptions \ref{Assumption: basis} and \ref{Assumption: nuisance}. Hence
\[
\left\|
(P_n-P)[g_P^2B_nB_n^\top]
\right\|_F
=
O_{\mathcal P}\left(
\omega_n\sqrt{\frac{J_n}{n}}
\right).
\]
Since $\omega_n\lesssim J_n$ and $J_n=o(n^{1/3})$, $\omega_n\sqrt{\frac{J_n}{n}}
\lesssim
\sqrt{\frac{J_n^3}{n}}
=o(1).$
Therefore, 
\begin{equation}
\label{eq:G-term2}
\left\|
(P_n-P)[g_P^2B_nB_n^\top]
\right\|_F
=
o_{\mathcal P_0}(1).
\end{equation}

Finally, consider \eqref{eq:G-decomp3}. For any $u\in\mathbb R^{J_n}$
with $\|u\|=1$, on the event $\widehat\eta_{n,l}\in\mathcal T_n$,
\begin{align*}
&\left|
u^\top
P\left[
(\widehat g_l^2-g_P^2)
B_nB_n^\top
\right]
u
\right|\\
&\qquad
=
\left|
\mathbbm{E}_P\left[
(\widehat g_l-g_P)
(\widehat g_l+g_P)
\{u^\top B_n(X)\}^2
\right]
\right|\\
&\qquad
\le
\left\{
\mathbbm{E}_P[
(\widehat g_l-g_P)^2
(\widehat g_l+g_P)^2]
\right\}^{1/2}
\left\{
\mathbbm{E}_P[
\{u^\top B_n(X)\}^4]
\right\}^{1/2}.
\end{align*}
By Assumption 3, the first factor is bounded by $C\zeta_n$.
For the second factor,
\[
\{u^\top B_n(X)\}^4
\le
\|B_n(X)\|^2\{u^\top B_n(X)\}^2
\le
\omega_n^2\{u^\top B_n(X)\}^2,
\]
and Assumption 1 implies $\mathbbm{E}_P[\{u^\top B_n(X)\}^2]
=
u^\top \mathbbm{E}_P[B_n(X)B_n(X)^\top]u
\lesssim1.$ Therefore,
\[
\left\|
P\left[
(\widehat g_l^2-g_P^2)
B_nB_n^\top
\right]
\right\|
\lesssim
\zeta_n\omega_n.
\]

Since for a $J_n\times J_n$ matrix $A$,
$\|A\|_F\le\sqrt{J_n}\|A\|$, we obtain
\[
\left\|
P\left[
(\widehat g_l^2-g_P^2)
B_nB_n^\top
\right]
\right\|_F
\lesssim
\sqrt{J_n}\zeta_n\omega_n
=o(1)
\]
by Assumption \ref{Assumption: nuisance}(a). Therefore we have,
\begin{equation}
\label{eq:G-term3}
\left\|
\sum_{l=1}^L\frac{n_l}{n}
P\left[
(\widehat g_l^2-g_P^2)
B_nB_n^\top
\right]
\right\|_F
=
o_{\mathcal P_0}(1).
\end{equation}
Combining \eqref{eq:G-term1}--\eqref{eq:G-term3} proves
\eqref{eq:G-convergence}.

We next collect several consequences of the assumptions. By
Assumptions 1 and 3, for every unit vector $u$, $u^\top G_{n,P}u
=
\mathbbm{E}_P\left[
\mathbbm{E}_P\{g_P^2\mid X\}
\{u^\top B_n(X)\}^2
\right].$ Since $\mathbbm{E}_P[g_P^2\mid X]$ is uniformly bounded above and below away
from zero, Assumption 1 implies that for some constants
$0<c<C<\infty$,
\begin{equation}
\label{eq:G-eigen}
cI_{J_n}
\preceq
G_{n,P}
\preceq
CI_{J_n}
\end{equation}
uniformly over $P\in\mathcal P_0$. Thus $\|G_{n,P}\|=O(1),
\|G_{n,P}\|_F=O(\sqrt{J_n})$. Together with \eqref{eq:G-convergence}, $\|\widehat G_n\|
=
O_{\mathcal P}(1),
\|\widehat G_n\|_F
=
O_{\mathcal P}(\sqrt{J_n}).$ Similarly, Assumption 2 gives $\|\widehat\Omega_n-\Omega_{n,P}\|_F
\le
\sqrt{J_n}
\|\widehat\Omega_n-\Omega_{n,P}\|
\le
\delta_n$ with probability at least $1-\Delta_n$. Since
$\delta_n,\Delta_n\to0$,
\begin{equation}
\label{eq:Omega-convergence}
\|\widehat\Omega_n-\Omega_{n,P}\|_F
=
o_{\mathcal P}(1).
\end{equation}
Moreover, $\|\Omega_{n,P}\|=O(1),
\|\Omega_{n,P}\|_F=O(\sqrt{J_n}),$ and therefore $\|\widehat\Omega_n\|
=
O_{\mathcal P_0}(1),
\|\widehat\Omega_n\|_F
=
O_{\mathcal P_0}(\sqrt{J_n}).$

Under the additional eigenvalue condition in Theorem 2,
$\lambda_{\min}(\Omega_{n,P})$ is uniformly bounded away from zero.
Combining this condition with \eqref{eq:G-eigen} yields
\begin{equation}
\label{eq:gamma-order}
\gamma_{n,P}^2
=
\text{trace}(\Omega_{n,P}G_{n,P}\Omega_{n,P}G_{n,P})
\asymp
J_n,
\end{equation}
uniformly over $P\in\mathcal P_0$. In particular, $\gamma_{n,P}\asymp\sqrt{J_n}.$

We now establish the result for $\rho_{n,P}$. By cyclicity of the
trace and the definition of the matrix square root used in the main
text, $\rho_{n,P}
=
\text{trace}(\Sigma_{n,P})
=
\text{trace}(\Omega_{n,P}G_{n,P})$, $\widehat\rho_n
=
\text{trace}(\widehat\Omega_n\widehat G_n)$. Therefore,
\begin{align*}
|\widehat\rho_n-\rho_{n,P}|
&\le
\left|
\text{trace}\{(\widehat\Omega_n-\Omega_{n,P})\widehat G_n\}
\right|
+
\left|
\text{trace}\{\Omega_{n,P}(\widehat G_n-G_{n,P})\}
\right|\\
&\le
\|\widehat\Omega_n-\Omega_{n,P}\|_F
\|\widehat G_n\|_F
+
\|\Omega_{n,P}\|_F
\|\widehat G_n-G_{n,P}\|_F\\
&=
o_{\mathcal P}(\sqrt{J_n}),
\end{align*}
where we used \eqref{eq:G-convergence} and
\eqref{eq:Omega-convergence}. Together with
\eqref{eq:gamma-order}, this gives $\frac{\widehat\rho_n-\rho_{n,P}}{\gamma_{n,P}}
=
o_{\mathcal P}(1).$

It remains to establish consistency of $\widehat\gamma_n$. Since
$\Sigma_{n,P}$ and $\widehat\Sigma_n$ are symmetric, $\gamma_{n,P}^2
=
\|\Sigma_{n,P}\|_F^2
=
\text{trace}(\Omega_{n,P}G_{n,P}\Omega_{n,P}G_{n,P}),$ and similarly, $\widehat\gamma_n^2
=
\text{trace}(\widehat\Omega_n\widehat G_n
\widehat\Omega_n\widehat G_n).$ A telescoping expansion yields
\begin{align*}
\widehat\gamma_n^2-\gamma_{n,P}^2
={}&
\text{trace}\left[
(\widehat\Omega_n-\Omega_{n,P})
\widehat G_n\widehat\Omega_n\widehat G_n
\right]\\
&+
\text{trace}\left[
\Omega_{n,P}
(\widehat G_n-G_{n,P})
\widehat\Omega_n\widehat G_n
\right]\\
&+
\text{trace}\left[
\Omega_{n,P}G_{n,P}
(\widehat\Omega_n-\Omega_{n,P})
\widehat G_n
\right]\\
&+
\text{trace}\left[
\Omega_{n,P}G_{n,P}\Omega_{n,P}
(\widehat G_n-G_{n,P})
\right].
\end{align*}
Using $|\text{trace}(A^\top B)|\le\|A\|_F\|B\|_F$ and
$\|AB\|_F\le\min\{\|A\|\|B\|_F,\|A\|_F\|B\|\}$,
the four terms above are respectively bounded by
\begin{align*}
&\|\widehat\Omega_n-\Omega_{n,P}\|_F
\|\widehat G_n\|
\|\widehat\Omega_n\|
\|\widehat G_n\|_F,\\
&\|\widehat G_n-G_{n,P}\|_F
\|\widehat\Omega_n\|
\|\widehat G_n\|
\|\Omega_{n,P}\|_F,\\
&\|\widehat\Omega_n-\Omega_{n,P}\|_F
\|\widehat G_n\|
\|\Omega_{n,P}\|
\|G_{n,P}\|_F,\\
&\|\widehat G_n-G_{n,P}\|_F
\|\Omega_{n,P}\|^2
\|G_{n,P}\|_F.
\end{align*}
By \eqref{eq:G-convergence}, \eqref{eq:Omega-convergence}, and the
norm bounds above, each term is
$o_{\mathcal P}(\sqrt{J_n})$. Hence
\[
|\widehat\gamma_n^2-\gamma_{n,P}^2|
=
o_{\mathcal P}(\sqrt{J_n}).
\]
Finally,
\[
|\widehat\gamma_n-\gamma_{n,P}|
=
\frac{
|\widehat\gamma_n^2-\gamma_{n,P}^2|
}{
\widehat\gamma_n+\gamma_{n,P}
}
\le
\frac{
|\widehat\gamma_n^2-\gamma_{n,P}^2|
}{
\gamma_{n,P}
}
=
o_{\mathcal P}(1),
\]
where the last equality follows from
$\gamma_{n,P}\asymp\sqrt{J_n}$. This completes the proof.
\end{proof}

\section{Proofs of the main results}
\label{sec:main results proof}
\subsection{Proof of Theorem \ref{Theorem: asymptotic distribution chi-sqaure approximation}}

\begin{proof}
    For notational convenience, write $g_i = g^\perp(O_i;\eta_P),
a_n = \frac{1}{n}\sum_{i=1}^n B_n(X_i)g_i$ and define the oracle statistic $S_n = n a_n^\top \Omega_{n,P} a_n.$ We first show that replacing the population nuisance parameter and
weighting matrix by their estimators has an asymptotically negligible
effect. By Assumption \ref{Assumption: nuisance}(b), $\zeta_n\xi_n=o(J_n^{-1}),$ and, by the additional condition in the theorem, $\delta_n J_n^{1/2}=o(1).$ Therefore, Lemma \ref{lemma: convergence rate of S_n}(1) gives $\widehat S_n-S_n=o_{\mathcal P_0}(1)$.

We next derive the asymptotic chi-square approximation of $S_n$. For each
$P\in\mathcal P_0$, let $\Omega_{n,P}
=
\Gamma_{n,P}^{\top}\Lambda_{n,P}\Gamma_{n,P}$ be its spectral decomposition, where $\Gamma_{n,P}$ is orthogonal and
$\Lambda_{n,P}$ is diagonal with positive diagonal entries. Define $A_{n,P}
=
\Lambda_{n,P}^{1/2}\Gamma_{n,P},
\widetilde B_{n,P}^{\lambda}(X)
=
A_{n,P}B_n(X).$ Since $A_{n,P}^{\top}A_{n,P}
=
\Omega_{n,P},$ we have
\begin{align*}
S_n
&=
n
\left\{
\frac{1}{n}
\sum_{i=1}^n B_n(X_i)g_i
\right\}^{\top}
\Omega_{n,P}
\left\{
\frac{1}{n}
\sum_{i=1}^n B_n(X_i)g_i
\right\}
\\
&=
\left\|
\frac{1}{\sqrt n}
\sum_{i=1}^n
g_i
\widetilde B_{n,P}^{\lambda}(X_i)
\right\|^2.
\end{align*}
Let
\[
W_{n,P}
=
\frac{1}{\sqrt n}
\sum_{i=1}^n
g_i\widetilde B_{n,P}^{\lambda}(X_i).
\]
Then $S_n=W_{n,P}^{\top}W_{n,P}$. Furthermore, the covariance
matrix of each summand is
\[
\Sigma_{n,P}
=
\mathbbm{E}_P\left[
g_i^2
\widetilde B_{n,P}^{\lambda}(X_i)
\{\widetilde B_{n,P}^{\lambda}(X_i)\}^{\top}
\right].
\]
Let $\tau_{1,n,P},\ldots,\tau_{J_n,n,P}$ denote the eigenvalues of
$\Sigma_{n,P}$.

By Lemma \ref{lemma: chi-square approximation} and the assumption $\frac{J_n^{5/4}\omega_n}{\sqrt n}\longrightarrow0,$ we have
\[
\sup_{P\in\mathcal P_0}
\sup_{t\in\mathbb R}
\left|
P(S_n\le t)
-
P\left(
\sum_{j=1}^{J_n}
\tau_{j,n,P}\chi_j^2(1)
\le t
\right)
\right|
\longrightarrow0.
\]

It remains to verify the anti-concentration condition \eqref{eq: equicontinuity} required by
Lemma \ref{lemma: uniform slusky}. Define $G_{n,P}
=
\mathbbm{E}_P\left[
g_i^2 B_n(X_i)B_n(X_i)^\top
\right].$ Under the null hypothesis, $\mathbbm{E}_P[g_i\mid X_i]=0,$ and Assumptions 1 and 3 imply that, uniformly over $P\in\mathcal P_0$,
$
\mathbbm{E}_P[g_i^2\mid X_i]\geq c_1$, for some constant $c_1>0$, and $\mathbbm{E}_P[B_n(X)B_n(X)^\top]\succeq c_2 I_{J_n}$ for some constant $c_2>0$. Therefore
\begin{align*}
G_{n,P}
&=
\mathbbm{E}_P\left[
\mathbbm{E}_P(g_i^2\mid X_i)
B_n(X_i)B_n(X_i)^\top
\right]
\\
&\succeq
c_1
\mathbbm{E}_P\left[
B_n(X_i)B_n(X_i)^\top
\right]
\\
&\succeq
c_1c_2 I_{J_n}.
\end{align*}
Therefore, for some constant $c>0$, $G_{n,P}\succeq cI_{J_n}$ uniformly over $P\in\mathcal P_0$.

Since $\Sigma_{n,P}
=
A_{n,P}G_{n,P}A_{n,P}^{\top},$ it follows that $\Sigma_{n,P}
\succeq
cA_{n,P}A_{n,P}^{\top}
=
c\Lambda_{n,P}.$ Consequently, $\lambda_{\max}(\Sigma_{n,P})
\ge
c\,\lambda_{\max}(\Lambda_{n,P})
=
c\,\lambda_{\max}(\Omega_{n,P}).$ By the assumption $\inf_{P\in\mathcal P_0}
\lambda_{\max}(\Omega_{n,P})
\ge c_0$ for some constant $c_0>0$, we obtain $\inf_{P\in\mathcal P_0}
\max_{1\le j\le J_n}\tau_{j,n,P}
\ge cc_0>0.$ Now define $\widetilde S_{n,P}
=
\sum_{j=1}^{J_n}
\tau_{j,n,P}\chi_j^2(1),$ where the $\chi_j^2(1)$ variables are independent. By 
Remark \ref{remark:sufficient conditions},
\[
\lim_{\epsilon\downarrow0}
\limsup_{n\to\infty}
\sup_{P\in\mathcal P_0}
\sup_{t\in\mathbb R}
P\left(
t-\epsilon
<
\widetilde S_{n,P}
\le
t+\epsilon
\right)
=0.
\]
Thus the sequence of approximating weighted chi-square
distributions satisfies the uniform anti-concentration condition in
Lemma \ref{lemma: uniform slusky}.

Finally, apply Lemma \ref{lemma: uniform slusky} with $X_n=S_n,
Y_n=\widehat S_n-S_n,
\widetilde X_n=\widetilde S_{n,P}$, we have
\[
\sup_{P\in\mathcal P_0}
\sup_{t\in\mathbb R}
\left|
P(\widehat S_n\le t)
-
P\left(
\sum_{j=1}^{J_n}
\tau_{j,n,P}\chi_j^2(1)
\le t
\right)
\right|
\longrightarrow0.
\]
This proves the result.
\end{proof}

\subsection{Proof of Theorem \ref{Theorem: asymptotic distribution under the null}}
Since we assume $\Omega_{n,P}$ to be a real symmetric matrix, it has the decomposition $\Omega_{n,P} = F_{n,P}^\top \Lambda_{n,P} F_{n,P}$, where $F_{n,P}$ is an orthogonal matrix and $\Lambda_{n,P}$ is a diagonal matrix with entries $\lambda_{P,1},...,\lambda_{P,J_n}$. Let $\tilde{b}_{P,j}(X)$ be the $j-$th element of $F_{n,P}B_n(X)$. Let $\tilde{b}^\lambda_{P,j}(X) = {\lambda^{1/2}_{P,j}} \tilde{b}_{P,j}(X)$. Then $\vert\tilde{b}^\lambda_{P,j}(X)\vert^2 = \vert \lambda_{P,j}\vert \vert \tilde{b}_{P,j}(X) \vert^2 = \vert \lambda_{P,j}\vert F^{(j,)}_{n,P}  B_n(X)B^\top _n(X) (F^{(j,)}_{n,P})^{T}$, where $F^{(j,)}_{n,P}$ is the $j-$th row of $F_{n,P}$. Therefore, $\mathbbm{E}_{P}\vert\tilde{b}^\lambda_{P,j}(X)\vert^2 =  \vert \lambda_{P,j}\vert F^{(j,)}_{n,P}  \mathbbm{E}_{P}[B_n(X)B^\top _n(X)] (F^{(j,)}_{n,P})^{T}$ is bounded by some constants since $\mathbbm{E}_{P}[B_n(X)B^\top _n(X)]\preceq C I_{J_n}$ and $ F^{(j,)}_{n,P} (F^{(j,)}_{n,P})^\top  = 1$. Note that we have  
\begin{align*}
    \gamma_{n,P} \leq  \sum_{j,j'}^{J_n}\vert \mathbbm{E}_{P}[g^2(O_i;\eta_{P})\tilde{b}_{P,j}^\lambda(X_i)\tilde{b}_{P,j'}^\lambda(X_i)]\vert = \Vert \text{vec}(\Sigma_{n,P}) \Vert_1\leq {J_n}\gamma_{n,P}.
\end{align*}
Let $\tilde{B}^{\lambda}_n(X) = (\tilde{b}_{1}^\lambda(X),...,\tilde{b}_{J_n}^\lambda(X))$.

Let $Q_{ik} = {2}^{1/2}(\gamma_{n,P} n)^{-1} \sum_{j=1}^{J_n} g(O_i;\eta_{P})g(O_k;\eta_{P})\tilde{b}^{\lambda}_{P,j}(X_i)\tilde{b}^{\lambda}_{P,j}(X_k) $, which can be further written as $ {2}^{1/2}(\gamma_{n,P}n)^{-1}g(O_i;\eta_P)g(O_k;\eta_P)[\Tilde{B}^\lambda_{P}(X_i)]^\top \Tilde{B}^\lambda_{P}(X_k)$. For notational convenience, we suppress the dependence of $Q_{ik}$ on $P$. Let $M_{ni} =\sum_{k=1}^{i-1} Q_{ik} $ for $i = 2,...,n$,  $M_{ni} = 0 $ for $i = 1$ or $i >n$. Let $\mathcal{F}_{ni}$ be the $\sigma$-algebra generated by $O_1,O_2,...,O_i$.
\begin{lemma}
  Under the assumptions of Theorem \ref{Theorem: asymptotic distribution under the null} and null hypothesis, $\{(M_{ni},\mathcal{F}_{ni}), i\geq 1\}$ is a martingale difference sequence for any $n\geq 1$.
\end{lemma}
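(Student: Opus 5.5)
To prove the lemma I will check the two defining properties of a martingale difference sequence directly. First, each $M_{ni}$ is adapted to $\mathcal{F}_{ni}$ and integrable. Second, $\mathbbm{E}_P[M_{ni}\mid\mathcal{F}_{n,i-1}]=0$ for every $i\geq 2$. The case $i=1$ is trivial because $M_{n1}=0$, and so is $i>n$.

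\textbf{Adaptedness.} For $2\le i\le n$, $M_{ni}=\sum_{k=1}^{i-1}Q_{ik}$. Each $Q_{ik}$ with $k<i$ is a fixed measurable function of $(O_i,O_k)$. Here the nuisance $\eta_P$, the rotation $F_{n,P}$, the eigenvalues $\lambda_{P,j}$ and the scaling $\gamma_{n,P}$ are all nonrandom population quantities. Hence $M_{ni}$ is $\mathcal{F}_{ni}$-measurable.

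\textbf{Integrability.} Write $|Q_{ik}|\lesssim (\gamma_{n,P}n)^{-1}|g_i||g_k|\,\|\tilde B^\lambda_P(X_i)\|\,\|\tilde B^\lambda_P(X_k)\|$. The bound $\|\tilde B^\lambda_P(x)\|\le \lambda_{\max}^{1/2}\omega_n\le C^{1/2}\omega_n$ holds by Assumptions \ref{Assumption: basis}.2 and \ref{Assumption: weighting matrix}.1. Independence of $O_i$ and $O_k$ together with $\mathbbm{E}_P[g^2\mid X]\le C$ (from the fourth-moment bound in Assumption \ref{Assumption: nuisance}.1) then give $\mathbbm{E}_P|Q_{ik}|<\infty$. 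Finally $\gamma_{n,P}>0$ because $\Sigma_{n,P}\succeq c\Lambda_{n,P}\succ 0$, as shown in the proof of Theorem \ref{Theorem: asymptotic distribution chi-sqaure approximation}.

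\textbf{Conditional mean zero.} For $k<i$, the observation $O_i$ is independent of $\mathcal{F}_{n,i-1}$. Hence
\[
\mathbbm{E}_P[Q_{ik}\mid\mathcal{F}_{n,i-1}] = \frac{2^{1/2}}{\gamma_{n,P}n}\, g(O_k;\eta_P)\,\mathbbm{E}_P\big[g(O_i;\eta_P)\tilde B^\lambda_P(X_i)\big]^\top \tilde B^\lambda_P(X_k).
\]
Under the null, $\mathbbm{E}_P[g(O_i;\eta_P)\mid X_i]=0$. The tower property therefore gives $\mathbbm{E}_P[g(O_i;\eta_P)\tilde B^\lambda_P(X_i)]=\mathbbm{E}_P[\tilde B^\lambda_P(X_i)\mathbbm{E}_P[g_i\mid X_i]]=0$. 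The inner expectation is finite since $\tilde B^\lambda_P$ is bounded and $g$ is square integrable. Summing over $k<i$ yields $\mathbbm{E}_P[M_{ni}\mid \mathcal{F}_{n,i-1}]=0$.

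The only point needing care is the integrability and Fubini step used to pull the $\mathcal{F}_{n,i-1}$-measurable factor out of the conditional expectation. The boundedness of $\tilde B^\lambda_P$ and the finite conditional second moments of $g$ make this routine. This is also the place where the null hypothesis and the use of the true $\eta_P$, rather than $\widehat\eta_n$, are essential.
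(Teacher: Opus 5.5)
Your proposal is correct and follows essentially the same route as the paper. Both arguments show $M_{ni}$ is integrable and then get $\mathbbm{E}_P[M_{ni}\mid\mathcal{F}_{n,i-1}]=0$ from two facts: $O_i$ is independent of $\mathcal{F}_{n,i-1}$, and $\mathbbm{E}_P[g(O_i;\eta_P)\mid X_i]=0$ under the null. The only difference is in the integrability step, and it is minor. You bound each $Q_{ik}$ using $\|\tilde B^\lambda_P(x)\|\le C^{1/2}\omega_n$, and you also check adaptedness and $\gamma_{n,P}>0$ explicitly. The paper instead applies Cauchy--Schwarz to $g_i\tilde B^\lambda(X_i)^\top\sum_{k<i}g_k\tilde B^\lambda(X_k)$ and uses the null to eliminate the cross terms.
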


\begin{proof}
We firstly show that $\mathbbm{E}_{P}[\vert M_{ni} \vert]<\infty$, direct calculation yields
    \begin{align*}
        & \mathbb{E}_{P}[\vert M_{ni} \vert]\\
        = & 2^{1/2}(\gamma_{n,P}n)^{-1}\mathbbm{E}_P\left\vert g(O_i;\eta_P)[\Tilde{B}^\lambda_{P}(X_i)]^\top \sum_{k=1}^{i-1}g(O_k;\eta_P)\Tilde{B}^\lambda_{n,P}(X_k)\right\vert \\
        \leq & 2^{1/2}(\gamma_{n,P}n)^{-1} [\mathbbm{E}_P(g^2(O_i;\eta_P)\Vert \Tilde{B}^\lambda_{n,P}(X_i)\Vert_2^2)]^{1/2}\left(\mathbbm{E}_P\left\Vert\sum_{k=1}^{i-1}g(O_k;\eta_P)\Tilde{B}^\lambda_{n,P}(X_k)\right\Vert_2^2\right)^{1/2} \\
        = & 2^{1/2}(\gamma_{n,P}n)^{-1} [\mathbbm{E}_P(g^2(O_i;\eta_P)\Vert \Tilde{B}^\lambda_{n,P}(X_i)\Vert_2^2)]^{1/2}\left(\sum_{k=1}^{i-1}\mathbbm{E}_Pg^2(O_k;\eta_P)\left\Vert\Tilde{B}^\lambda_{n,P}(X_k)\right\Vert_2^2\right)^{1/2} \\
       =  &  2^{1/2}(\gamma_{n,P}n)^{-1}{(i-1)^{1/2}}\mathbbm{E}_P(g^2(O_i;\eta_P)\Vert \Tilde{B}^\lambda_{n,P}(X_i)\Vert_2^2)\\
       \lesssim &  2^{1/2}(\gamma_{n,P}n)^{-1}{(i-1)^{1/2}}J_n\\
       <&\infty
    \end{align*}
    where the first inequality is due to Cauchy–Schwarz inequality, the second and third equality is due to independence between different observations and conditional zero mean of $g(O_i;\eta_P)$, the last inequality is due to ${(i-1)^{1/2}}<{n}^{1/2}$, $J_n/\gamma_{n,P}\leq J_n $.
    
    Next we aim to show $\mathbbm{E}_{P}[M_{ni}|\mathcal{F}_{ni-1}] = 0$. Direct calculation gives
    \begin{align*}
        &\mathbbm{E}_{P}[M_{ni}|\mathcal{F}_{ni-1}]\\
        =&2^{1/2}(\gamma_{n,P} n)^{-1} \mathbbm{E}_{P}\left[\sum_{j=1}^{J_n}\sum_{k=1}^{i-1} {g}(O_i;\eta_{P}){g}(O_k;\eta_{P})\tilde{b}^{\lambda}_{P,j}(X_i)\tilde{b}^{\lambda}_{P,j}(X_k)|\mathcal{F}_{ni-1}\right]\\
        =&2^{1/2}(\gamma_{n,P} n)^{-1} \sum_{j=1}^{J_n}\sum_{k=1}^{i-1} \mathbbm{E}_{P}\left[g(O_i;\eta_{P})\tilde{b}^{\lambda}_{P,j}(X_i)|\mathcal{F}_{ni-1}\right]g(O_k;\eta_{P})\tilde{b}^{\lambda}_{P,j}(X_k)\\
        =&2^{1/2}(\gamma_{n,P} n)^{-1} \sum_{j=1}^{J_n}\sum_{k=1}^{i-1} \mathbbm{E}_{P}\left[g(O_i;\eta_{P})\tilde{b}^{\lambda}_{P,j}(X_i)\right]g(O_k;\eta_{P})\tilde{b}^{\lambda}_{P,j}(X_k)\\
        =&0,
    \end{align*}
    where the last inequality is due to $\mathbbm{E}_{P}[g(O_i;\eta_{P})|X] = 0$ under the null hypothesis.
\end{proof}

\begin{lemma}
Under the assumptions of Theorem \ref{Theorem: asymptotic distribution under the null} and null hypothesis,
    \begin{align*}
        \sum_{i=1}^\infty \mathbbm{E}_{P}\vert M_{ni} \vert^2 =1-\frac{1}{n}\leq  1 
    \end{align*}
\end{lemma}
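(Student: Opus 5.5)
The plan is to compute $\mathbbm{E}_P[M_{ni}^2]$ exactly for each $2\le i\le n$ and then sum over $i$. Recall $M_{ni}=0$ for $i=1$ and $i>n$, so only these terms matter. Write
\[
M_{ni}=2^{1/2}(\gamma_{n,P}n)^{-1}\,g_i[\tilde B^\lambda_P(X_i)]^\top\sum_{k<i}g_k\tilde B^\lambda_P(X_k),
\]
with $g_i=g(O_i;\eta_P)$. Squaring gives a double sum over $k,k'<i$. The cross terms with $k\neq k'$ vanish. Indeed, since $k,k'$ and $i$ are distinct and the observations are i.i.d., we can condition on $O_i$ and $O_{k'}$. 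The factor $g_k\tilde b^\lambda_{j}(X_k)$ then has conditional mean $\mathbbm{E}_P[\mathbbm{E}_P(g_k\mid X_k)\tilde b^\lambda_j(X_k)]=0$ under the null. Integrability is guaranteed by the fourth-moment bound in Assumption \ref{Assumption: nuisance} and the boundedness of the basis in Assumption \ref{Assumption: basis}.

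For the diagonal terms $k=k'$, I use independence of $O_i$ and $O_k$:
\[
\mathbbm{E}_P\Big[g_i^2g_k^2\big([\tilde B^\lambda(X_i)]^\top\tilde B^\lambda(X_k)\big)^2\Big]=\sum_{j,j'}\mathbbm{E}_P[g_i^2\tilde b^\lambda_j\tilde b^\lambda_{j'}(X_i)]\,\mathbbm{E}_P[g_k^2\tilde b^\lambda_j\tilde b^\lambda_{j'}(X_k)]=\sum_{j,j'}(\Sigma_{n,P})_{jj'}^2=\|\Sigma_{n,P}\|_F^2=\gamma_{n,P}^2 .
\]
Here I use that $\Sigma_{n,P}$ from Theorem \ref{Theorem: asymptotic distribution chi-sqaure approximation} is exactly $\mathbbm{E}_P[g^2\tilde B^\lambda(\tilde B^\lambda)^\top]$ and that $\gamma_{n,P}=\|\Sigma_{n,P}\|_F$ by definition.

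This gives $\mathbbm{E}_P M_{ni}^2 = 2(\gamma_{n,P}n)^{-2}(i-1)\gamma_{n,P}^2 = 2(i-1)/n^2$. Summing over $i=2,\dots,n$,
\[
\sum_{i}\mathbbm{E}_PM_{ni}^2=\frac{2}{n^2}\cdot\frac{n(n-1)}{2}=1-\frac1n\le1 .
\]

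The main point of care is the vanishing of the cross terms and the exchange of sum and expectation. This needs finiteness of fourth moments of $g_i\tilde b^\lambda_j(X_i)$, which follows from $\mathbbm{E}_P[g^4\mid X]\le C$ together with $\sup_x\|B_n(x)\|\le\omega_n$ and bounded eigenvalues of $\Omega_{n,P}$. Once that is in place, the rest is bookkeeping.
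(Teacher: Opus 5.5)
Your proposal is correct and follows the paper's proof essentially step for step. The paper likewise kills the $k\neq l$ cross terms by factoring the expectation via independence and the null, then reduces $\mathbbm{E}_P M_{ni}^2$ to $(i-1)\mathbbm{E}_P Q_{21}^2 = 2(i-1)/n^2$ using $\gamma_{n,P}^2=\sum_{j,j'}(\mathbbm{E}_P[g^2\tilde b^\lambda_j\tilde b^\lambda_{j'}])^2$, and sums over $i$. Your explicit integrability check is a small addition that the paper leaves implicit.
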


\begin{proof}
For $k \neq l$,
    \begin{align*}
        &\mathbbm{E}_{P}[g(O_i;\eta_{P})^2g(O_k;\eta_{P})g(O_l;\eta_{P})\tilde{b}^{\lambda}_{P,j}(X_i)\tilde{b}^{\lambda}_{j'}(X_i)\tilde{b}^{\lambda}_{P,j}(X_k)\tilde{b}^{\lambda}_{P,j'}(X_l)] \\
        =&\mathbbm{E}_{P}[g(O_i;\eta_{P})^2\tilde{b}^{\lambda}_{P,j}(X_i)\tilde{b}^{\lambda}_{P,j'}(X_i)]\mathbbm{E}_{P}[g(O_k;\eta_{P})\tilde{b}^{\lambda}_{P,j}(X_k)]\mathbbm{E}_{P}[g(O_l;\eta_{P})\tilde{b}^{\lambda}_{P,j'}(X_l)] = 0.
    \end{align*}
    Therefore, $\mathbbm{E}_{P}[Q_{ik}Q_{il}] = 0$ when $k\neq l$. 
    \begin{align*}
        &\mathbbm{E}_{P}\vert M_{ni}\vert^2 =\mathbbm{E}_{P}\left\vert \sum_{k=1}^{i-1}Q_{ik}\right\vert^2 = \sum_{k=1}^{i-1}\mathbbm{E}_{P}\vert Q_{ik}\vert^2 = (i-1)\mathbbm{E}_{P}\vert Q_{21}\vert^2\\
        =&\frac{2(i-1)}{\gamma_{n,P}^2n^2} \mathbb{E}_{P}\left\vert\sum_{j=1}^{J_n}  g(O_2;\eta_{P})g(O_1;\eta_{P})\tilde{b}^{\lambda}_{P,j}(X_2)\tilde{b}^{\lambda}_{P,j}(X_1) \right\vert^2 \\
        =&\frac{2(i-1)}{\gamma_{n,P}^2n^2} \sum_{j=1}^{J_n}\sum_{j'=1}^{J_n}  (\mathbbm{E}_{P} [g^2(O;\eta_{P})\tilde{b}^{\lambda}_{P,j}(X)\tilde{b}^{\lambda}_{P,j'}(X)])^2 = \frac{2(i-1)}{n^2}.
    \end{align*}
    Where the last line is due to the definition of $\gamma_{n,P}$. Therefore,
    \begin{align*}
        \sum_{i=1}^\infty \mathbbm{E}_{P}\vert M_{ni}\vert^2 = 1-\frac{1}{n}\leq 1
    \end{align*}
\end{proof}

This Lemma also implies that the martingale we constructed is square-integrable.

\begin{lemma}
\label{lem:martingale_quadratic_variation}
Under the assumptions of Theorem \ref{Theorem: asymptotic distribution under the null}, uniformly over
$P\in\mathcal P_0$, we have $\sum_{i=1}^n \mathbbm{E}_P[M_{ni}^4]=o(1)$ and $\mathbbm{E}_P\left[
\left(
\sum_{i=1}^n M_{ni}^2-1
\right)^2
\right]=o(1).$
\end{lemma}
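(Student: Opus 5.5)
Write $u_i=g(O_i;\eta_P)\tilde B^\lambda_{P}(X_i)\in\mathbb R^{J_n}$, so that $Q_{ik}=2^{1/2}(\gamma_{n,P}n)^{-1}u_i^\top u_k$. Under the null the $u_i$ are i.i.d. with mean zero (since $\mathbbm{E}_P[g\mid X]=0$), and $\mathbbm{E}_P[u_iu_i^\top]=\Sigma_{n,P}$. The two basic facts I will use throughout are $\gamma_{n,P}=\|\Sigma_{n,P}\|_F\asymp J_n^{1/2}$, established in the proof of Lemma \ref{lem:centering-scaling} via the eigenvalue conditions, and $\|\tilde B^\lambda_P(x)\|\le C^{1/2}\omega_n$. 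Both bounds must be uniform in $P\in\mathcal P_0$; they are, because every constant comes from Assumptions \ref{Assumption: basis}--\ref{Assumption: nuisance} and the eigenvalue bounds on $\Omega_{n,P}$.

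\emph{Fourth moments.} First I expand $M_{ni}^4=(u_i^\top\sum_{k<i}u_k)^4$ and condition on $u_i$. Since the $u_k$ are i.i.d. with mean zero, the only surviving index patterns are $k_1=k_2=k_3=k_4$ and the paired patterns. This gives
\[
\mathbbm{E}_P M_{ni}^4\lesssim (\gamma n)^{-4}\Big\{(i-1)\,\mathbbm{E}(u_1^\top u_2)^4+(i-1)^2\,\mathbbm{E}\big[(u_1^\top\Sigma u_1)^2\big]\Big\}.
\]
To bound these terms I use $\mathbbm{E}(u_1^\top u_2)^4\le \|\tilde B^\lambda\|_\infty^4\,\mathbbm{E}[g^4]^2$ together with a trace argument. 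Concretely, $(u_1^\top u_2)^2\le C\omega_n^2\,u_2^\top u_2$ up to $g_1^2$, and conditional fourth moments of $g$ are bounded, so $\mathbbm{E}(u_1^\top u_2)^4\lesssim \omega_n^2\,\mathbbm{E}(u_1^\top u_2)^2\lesssim\omega_n^2\gamma^2$. For the second term, $u_1^\top\Sigma u_1\le\|\Sigma\|\,\|u_1\|^2$ with $\|\Sigma\|=O(1)$, so $\mathbbm{E}(u_1^\top\Sigma u_1)^2\lesssim\omega_n^2\,\mathbbm{E}\|u_1\|^2\lesssim\omega_n^2J_n$. Summing over $i$ gives
\[
\sum_i\mathbbm{E}M_{ni}^4\lesssim \frac{n^2\omega_n^2J_n+n^3\omega_n^2J_n}{n^4J_n^2}\lesssim \frac{\omega_n^2}{nJ_n},
\]
which is $o(1)$ because $\omega_n\lesssim J_n$ and $J_n=o(n^{1/3})$.

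\emph{Quadratic variation.} The preceding lemma gives $\sum_i\mathbbm{E}M_{ni}^2=1-1/n$, so it suffices to show $\mathrm{Var}(\sum_iM_{ni}^2)=o(1)$. I write $\sum_i M_{ni}^2=\sum_i\sum_{k,l<i}Q_{ik}Q_{il}$ and split it into a diagonal part $D=\sum_i\sum_{k<i}Q_{ik}^2$ and an off-diagonal part $R=2\sum_i\sum_{l<k<i}Q_{ik}Q_{il}$.
\begin{enumerate}
\item For the diagonal part, $D$ is a U-statistic-like sum. Its variance is controlled by $n^3\mathrm{Cov}$ terms sharing one index plus $n^2\,\mathbbm{E}Q^4$. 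The shared-index covariances reduce to $\mathbbm{E}[(u_1^\top\Sigma u_1)^2]/(\gamma^4n^4)$, which is $\lesssim \omega_n^2J_n/(J_n^2n^4)$. Multiplied by $n^3$ this is $o(1)$, and the $n^2\,\mathbbm{E}Q^4$ term is handled as in the fourth-moment step.
\item For the off-diagonal part, $\mathbbm{E}R=0$, and $\mathbbm{E}R^2$ keeps only the terms with matching pairs $\{k,l\}$. These involve $\mathbbm{E}[Q_{ik}Q_{il}Q_{i'k}Q_{i'l}]=(\gamma n)^{-4}\,\mathbbm{E}\big[(u_k^\top\Sigma u_l)^2\big]\cdot(\dots)$. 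Using $\mathbbm{E}(u_k^\top\Sigma u_l)^2=\mathrm{tr}(\Sigma^4)\le\|\Sigma\|^2\gamma^2$, the $O(n^4)$ such terms give a bound $\lesssim n^4\gamma^2/(\gamma^4n^4)=O(1/J_n)\to0$. The terms with $i=i'$ are bounded as in step 1.
\end{enumerate}

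I expect the main obstacle to be the bookkeeping in the off-diagonal part. The difficulty is to enumerate exactly which index coincidences survive the mean-zero cancellations, and to make sure every surviving term is bounded by traces of powers of $\Sigma_{n,P}$ rather than by crude sup-norm bounds. The crude bounds would cost a factor $\omega_n^2$ and could fail to vanish. My plan is to carry out every expectation by conditioning on $X$, so that each surviving term becomes a trace of $\Sigma^m$. I then use $\mathrm{tr}(\Sigma^4)\le\|\Sigma\|^2\|\Sigma\|_F^2$, so that everything is expressed relative to $\gamma^4\asymp J_n^2$.
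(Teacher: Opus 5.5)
Your proof is correct in structure and reaches the paper's rates, but it takes a different route.

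\emph{Fourth moments.} The paper conditions on $\mathcal F_{n,i-1}$ and uses $\mathbbm{E}_P[(a^\top V_i)^4]\lesssim\omega_n^2\|a\|^4$ for fixed $a$. It then needs the Rosenthal-type bound of \citet{hansen2015integrated} for $\mathbbm{E}_P\|R_{i-1}\|^4$. You instead condition on $u_i$ and expand the fourth moment of an i.i.d.\ mean-zero scalar sum exactly. This is more elementary and sharpens the second term to $\omega_n^2/(nJ_n)$, against the paper's $\omega_n^2/n$.

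\emph{Quadratic variation.} The paper writes $\sum_iM_{ni}^2-1=I_n+(II_n-1)$, where $I_n=\sum_i\{M_{ni}^2-\mathbbm{E}_P[M_{ni}^2\mid\mathcal F_{n,i-1}]\}$ is a martingale-difference sum. This lets it recycle $\mathbbm{E}_PI_n^2\le\sum_i\mathbbm{E}_PM_{ni}^4$ from the first part. Only the predictable part $II_n=2(\gamma_{n,P}n)^{-2}\sum_mR_m^\top\Sigma_{n,P}R_m$ then needs a diagonal/off-diagonal computation, with two free indices.

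Your direct computation of $\mathrm{Var}(\sum_iM_{ni}^2)$ through $D+R$ costs the six-index bookkeeping you anticipated, but your enumeration is right. For $i\neq i'$, a term with $k$ or $l$ equal to $i'$ vanishes, because one of $k',l'<i'$ is then unmatched. So only $\{k,l\}=\{k',l'\}$ survives, and it gives $\mathrm{tr}(\Sigma^4)$. The $i=i'$ terms give $\mathbbm{E}_P(u_i^\top\Sigma u_i)^2$. Combining $D$ and $R$ through $\mathrm{Var}(D+R)\le 2\mathrm{Var}(D)+2\mathrm{Var}(R)$, both arguments end at $O(\omega_n^2/(nJ_n)+1/J_n)$. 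They use the same two inputs: $\mathbbm{E}_P(u^\top\Sigma u)^2\lesssim\omega_n^2J_n$ and $\mathrm{tr}(\Sigma^4)\le\|\Sigma\|^2\gamma^2$.

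\textbf{The one wrong step.} One intermediate inequality is false, though the damage is easily repaired. You claim $\mathbbm{E}_P(u_1^\top u_2)^4\lesssim\omega_n^2\,\mathbbm{E}_P(u_1^\top u_2)^2\lesssim\omega_n^2\gamma^2$. This does not follow from $(u_1^\top u_2)^2\le C\omega_n^2g_1^2\|u_2\|^2$, which leaves an extra factor $g_1^2\|u_2\|^2$ of order $J_n$.

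The claim also fails in general. Take the Fourier basis on $[0,1]$ with uniform $X$ and $\Omega_{n,P}=I$. Then $B_n(x)^\top B_n(y)$ is a Dirichlet kernel of height $\asymp J_n$ on a set of measure $\asymp J_n^{-1}$. Hence $\mathbbm{E}_P(u_1^\top u_2)^4\asymp J_n^3$, whereas $\omega_n^2\gamma^2\asymp J_n^2$.

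\textbf{The fix.} Write $\tilde B_k=\tilde B^\lambda_{n,P}(X_k)$. Condition on $(X_1,X_2)$ to remove the $g$'s, using the bounded conditional fourth moments. Then
$\mathbbm{E}_P(\tilde B_1^\top\tilde B_2)^4\le\sup_{x,y}\{\tilde B(x)^\top\tilde B(y)\}^2\,\mathbbm{E}_P(\tilde B_1^\top\tilde B_2)^2\lesssim\omega_n^4J_n$.
With this, the first term of your fourth-moment sum becomes $\omega_n^4/(n^2J_n)\lesssim J_n^3/n^2\to0$, which is exactly the first term in the paper's bound. The $n^2\,\mathbbm{E}_PQ^4$ contribution in your diagonal step is $o(1)$ for the same reason. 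So the lemma still follows once this bound is corrected.
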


\begin{proof}
For notational convenience, define $V_i
=
g(O_i;\eta_P)\widetilde B^\lambda_{n,P}(X_i)$.
 Recall that $Q_{ik}
=
\frac{\sqrt{2}}{\gamma n}V_i^\top V_k$, $M_{ni}
=
\sum_{k=1}^{i-1}Q_{ik}$. 
Let $R_m=\sum_{k=1}^m V_k, R_0=0.$  Then $M_{ni}
=
\frac{\sqrt{2}}{\gamma_{n,P} n}V_i^\top R_{i-1}$. Recall $\Omega^{1/2}_{n,P}
=
\Lambda_{n,P}^{1/2}\Gamma_{n,P}$, we have $\Sigma_{n,P}
=
\Omega_{n,P}G_{n,P}\Omega_{n,P}^\top$, $\mathbbm{E}_P\Vert V_i\Vert^2 = \text{tr}(\Sigma_{n,P})\lesssim J_n$. We also have $\mathbbm{E}_P\|V_i\|^4
\lesssim
\omega_n^2J_n$. Indeed, since $\|\Omega_{n,P}\|\lesssim1$,
\begin{align*}
\mathbbm{E}_P\|V_i\|^4
&=
\mathbbm{E}_P\left[
g^4(O_i;\eta_P)
\|\Omega_{n,P}B_n(X_i)\|^4
\right]
\\
&\lesssim
\mathbbm{E}_P\|B_n(X_i)\|^4
\\
&\le
\omega_n^2
\mathbbm{E}_P\|B_n(X_i)\|^2
\lesssim
\omega_n^2J_n,
\end{align*}
where we used Assumptions \ref{Assumption: basis} and \ref{Assumption: nuisance}. Furthermore, for every deterministic $u\in\mathbb R^{J_n}$, we have $\mathbbm{E}_P[(u^\top V_i)^4]
\lesssim
\omega_n^2\|u\|^4$. To see this, let $a=\Omega_{n,P}^\top u$. Then
\begin{align*}
\mathbbm{E}_P[(u^\top V_i)^4]
&=
\mathbbm{E}_P\left[
g^4(O_i;\eta_P)
\{a^\top B_n(X_i)\}^4
\right]
\\
&\lesssim
\mathbbm{E}_P\{a^\top B_n(X_i)\}^4
\\
&\le
\sup_x\{a^\top B_n(x)\}^2
\mathbbm{E}_P\{a^\top B_n(X_i)\}^2
\\
&\lesssim
\omega_n^2\|a\|^4
\lesssim
\omega_n^2\|u\|^4.
\end{align*}

We first establish the fourth-moment condition. Conditional on
$\mathcal F_{n,i-1}$, $R_{i-1}$ is fixed and $V_i$ is independent
of $\mathcal F_{n,i-1}$. Therefore, 
\begin{align*}
\mathbbm{E}_P[M_{ni}^4\mid\mathcal F_{n,i-1}]
&=
\frac{4}{\gamma^4n^4}
\mathbbm{E}_P[
(V_i^\top R_{i-1})^4
\mid
\mathcal F_{n,i-1}]
\\
&\lesssim
\frac{\omega_n^2}{\gamma^4n^4}
\|R_{i-1}\|^4.
\end{align*}
Since the $V_i$'s are independent and mean zero, by Theorem 1 of 
\citet{hansen2015integrated},
\[
\mathbbm{E}_P\|R_m\|^4 \lesssim  \left[\left(\sum_{k = 1}^m\mathbbm{E}_P\Vert V_k\Vert^2\right)^2+\sum_{k=1}^m \mathbbm{E}_P\Vert V_k\Vert^4\right]
\lesssim
m\omega_n^2J_n
+
m^2J_n^2.
\]
Hence
\begin{align*}
\sum_{i=1}^n \mathbbm{E}_P[M_{ni}^4]
&\lesssim
\frac{\omega_n^2}{\gamma^4n^4}
\sum_{m=0}^{n-1}
\left(
m\omega_n^2J_n+m^2J_n^2
\right)
\\
&\lesssim
\frac{\omega_n^4}{J_nn^2}
+
\frac{\omega_n^2}{n} = o(1).
\end{align*}
Thus
\[
\sup_{P\in\mathcal P_0}
\sum_{i=1}^n\mathbbm{E}_P[M_{ni}^4]
\longrightarrow0.
\]

We next establish convergence of the quadratic variation. Write $\sum_{i=1}^nM_{ni}^2-1
=
I_n+(II_n-1),$ where 
\[
I_n
=
\sum_{i=1}^n
\left\{
M_{ni}^2
-
\mathbbm{E}_P[M_{ni}^2\mid\mathcal F_{n,i-1}]
\right\}
\]
and
\[
II_n
=
\sum_{i=1}^n
\mathbbm{E}_P[M_{ni}^2\mid\mathcal F_{n,i-1}].
\]
The summands defining $I_n$ form a martingale difference
sequence. Therefore,
\begin{align*}
\mathbbm{E}_P[I_n^2]
&=
\sum_{i=1}^n
\mathbbm{E}_P\left[
\left\{
M_{ni}^2
-
\mathbbm{E}_P[M_{ni}^2\mid\mathcal F_{n,i-1}]
\right\}^2
\right]
\\
&\le
\sum_{i=1}^n\mathbbm{E}_P[M_{ni}^4]
=o(1)
\end{align*}
uniformly over $P\in\mathcal P_0$.

Moreover, $\mathbbm{E}_P[M_{ni}^2\mid\mathcal F_{n,i-1}]
=
\frac{2}{\gamma^2n^2}
R_{i-1}^\top\Sigma R_{i-1}.$ Therefore, $II_n
=
\frac{2}{\gamma^2n^2}
\sum_{m=1}^{n-1}
R_m^\top\Sigma_{n,P} R_m.$
Since $\mathbbm{E}_P[R_m^\top\Sigma_{n,P} R_m]
=
m\operatorname{tr}(\Sigma_{n,P}^2)
=
m\gamma_{n,P}^2$, we obtain
\[
\mathbbm{E}_P[II_n]
=
\frac{2}{n^2}
\sum_{m=1}^{n-1}m
=
1-\frac1n.
\]

It remains to control the variance of $II_n$. Let $T_n
=
\sum_{m=1}^{n-1}R_m^\top\Sigma_{n,P} R_m.$
Expanding $R_m$ gives $T_n
=
D_n+U_n,$ where
\[
D_n
=
\sum_{k=1}^{n-1}
(n-k)V_k^\top\Sigma_{n,P} V_k
\]
and
\[
U_n
=
2
\sum_{1\le k<l\le n-1}
(n-l)V_k^\top\Sigma_{n,P} V_l.
\]

For the diagonal term $D_n$, independence yields
\begin{align*}
\operatorname{Var}_P(D_n)
&=
\sum_{k=1}^{n-1}
(n-k)^2
\operatorname{Var}_P(V_k^\top\Sigma V_k)
\\
&\le
\sum_{k=1}^{n-1}
(n-k)^2
\mathbbm{E}_P[(V_k^\top\Sigma V_k)^2]
\\
&\lesssim
n^3\omega_n^2J_n,
\end{align*}
where we used $\|\Sigma\|\lesssim1$ and $\mathbbm{E}_P\|V_i\|^4\leq \omega_n^2J_n$. For $k<l$, let $H_{kl}=V_k^\top\Sigma V_l.$ If $(k,l)\neq(k',l')$ are distinct pairs, then
$\mathbbm{E}_P[H_{kl}H_{k'l'}]=0$ by independence and $\mathbbm{E}_P[V_i]=0$.
Furthermore,
\begin{align*}
\mathbbm{E}_P[H_{kl}^2] = 
\mathbbm{E}_P[(V_k^\top\Sigma V_l)^2] = 
\operatorname{tr}(\Sigma^4)\leq
\|\Sigma\|^2\operatorname{tr}(\Sigma^2)
\lesssim
J_n.
\end{align*}
Consequently,
\begin{align*}
\operatorname{Var}_P(U_n)
&=
4
\sum_{l=2}^{n-1}
\sum_{k=1}^{l-1}
(n-l)^2\mathbbm{E}_P[H_{kl}^2]\lesssim
J_n
\sum_{l=2}^{n-1}
(l-1)(n-l)^2
\\
&\lesssim
n^4J_n.
\end{align*}

Using
$\operatorname{Var}(D_n+U_n)
\le
2\operatorname{Var}(D_n)+2\operatorname{Var}(U_n)$, we obtain
\begin{align*}
\operatorname{Var}_P(P_n)
&=
\frac{4}{\gamma_{n,P}^4n^4}
\operatorname{Var}_P(T_n)
\\
&\lesssim
\frac{\omega_n^2}{J_nn}
+
\frac{1}{J_n}
=o(1)
\tag{S.12.10}
\end{align*}
uniformly over $P\in\mathcal P_0$. Here we used
$\omega_n\lesssim J_n$, $J_n=o(n^{1/3})$, and $J_n\to\infty$. Therefore the claimed results have been proved.
\end{proof}

\begin{lemma}
Under the assumptions of Theorem \ref{Theorem: asymptotic distribution under the null},
    $$ \frac{n\Vert\frac{1}{n}\sum_{i=1}^n { g(O_i;\eta_{P})\Omega_{n,P}^{1/2}B_{n}(X_i)}\Vert^2-\rho_{n,P}}{2^{1/2}\gamma_{n,P}}\rightsquigarrow N(0,1)$$ 
    uniformly over $P\in \mathcal{P}_0$.
    \label{lemma: normal limit}
\end{lemma}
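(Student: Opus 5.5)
The statistic splits into a diagonal part and a degenerate off-diagonal part. Write $V_i = g(O_i;\eta_P)\widetilde B^\lambda_{n,P}(X_i)$, so that $n\Vert \frac1n\sum_i V_i\Vert^2 = \frac1n\sum_i \Vert V_i\Vert^2 + \frac{2}{n}\sum_{i>k} V_i^\top V_k$. Dividing by $2^{1/2}\gamma_{n,P}$ gives
\begin{align*}
\frac{n\Vert \frac1n\sum_i V_i\Vert^2-\rho_{n,P}}{2^{1/2}\gamma_{n,P}}
= \sum_{i=1}^n M_{ni} + \frac{\frac1n\sum_i \Vert V_i\Vert^2-\rho_{n,P}}{2^{1/2}\gamma_{n,P}},
\end{align*}
because $\frac{2}{2^{1/2}\gamma_{n,P} n}V_i^\top V_k = Q_{ik}$. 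This uses $\rho_{n,P}=\text{tr}(\Sigma_{n,P})=\mathbbm{E}_P\Vert V_i\Vert^2$.

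\textbf{Step 1: the martingale part.} The preceding lemmas show that $\{M_{ni}\}$ is a martingale difference sequence. Lemma \ref{lem:martingale_quadratic_variation} shows that $\sum_i \mathbbm{E}_P M_{ni}^4=o(1)$ and $\mathbbm{E}_P(\sum_i M_{ni}^2-1)^2=o(1)$, uniformly over $\mathcal P_0$. Plugging these into Lemma \ref{lemma: martingale bound} gives
\[
\sup_{P\in\mathcal P_0}\sup_t\Big|P\Big(\sum_i M_{ni}\le t\Big)-\Phi(t)\Big|\le C(L_n+N_n)^{1/5}\to0,
\]
with constants independent of $P$. The only real issue here is checking that every bound in the lemma is indeed uniform in $P$. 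That holds because all constants come from Assumptions \ref{Assumption: basis}--\ref{Assumption: nuisance} and from $\gamma_{n,P}\asymp J_n^{1/2}$, which was obtained in the proof of Lemma \ref{lem:centering-scaling}.

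\textbf{Step 2: the diagonal part.} Its mean is zero. By independence its variance is at most
\[
\frac{\mathbbm{E}_P\Vert V_i\Vert^4}{2n\gamma_{n,P}^2}\lesssim \frac{\omega_n^2J_n}{nJ_n}=\frac{\omega_n^2}{n}\lesssim \frac{J_n^2}{n}\to0,
\]
using the bound $\mathbbm{E}_P\Vert V_i\Vert^4\lesssim\omega_n^2J_n$ from the previous proof together with $J_n=o(n^{1/3})$. Chebyshev's inequality then shows this term is $o_{\mathcal P_0}(1)$ uniformly.

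\textbf{Step 3: combine.} Apply the uniform Slutsky Lemma \ref{lemma: uniform slusky} with $\tilde X_n\sim N(0,1)$. Condition \eqref{eq: equicontinuity} holds for the normal limit by the remark following that lemma. This yields the uniform convergence to $N(0,1)$.

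I expect Step 1, namely the uniform quadratic-variation control, to be the main obstacle, and it has already been handled in Lemma \ref{lem:martingale_quadratic_variation}. The remaining work is to make sure the diagonal term is scaled by $\gamma_{n,P}$ rather than $J_n$, so that its rate is $\omega_n/n^{1/2}$.
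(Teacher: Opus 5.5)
Your proposal is correct and takes essentially the same route as the paper. Both split the statistic into the diagonal term, handled by Chebyshev, and the degenerate off-diagonal martingale $\sum_i M_{ni}$, handled by Lemma \ref{lemma: martingale bound} together with Lemma \ref{lem:martingale_quadratic_variation}, and then combine the two via the uniform Slutsky Lemma \ref{lemma: uniform slusky}. The only difference is that your diagonal bound $\mathbbm{E}_P\Vert V_i\Vert^4\lesssim\omega_n^2J_n$ gives the rate $\omega_n^2/n\lesssim J_n^2/n$, sharper than the paper's cruder $J_n^4/(\gamma_{n,P}^2 n)\asymp J_n^3/n$; both vanish because $J_n=o(n^{1/3})$.
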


\begin{proof}
    \begin{align*}
        &\frac{1}{2^{1/2}\gamma_{n,P}}\left(n\left\Vert\frac{1}{n}\sum_{i=1}^n { g(O_i;\eta_{P})\Omega_{n,P}^{1/2}B_{n,P}(X_i)}\right\Vert^2-\rho_n\right)\\
        =& \frac{1}{2^{1/2}\gamma_{n,P}}\left(n\left\Vert\frac{1}{n}\sum_{i=1}^n { g(O_i;\eta_{P})\tilde{B}^\lambda_{n,P}(X_i)}\right\Vert^2-\rho_{n,P}\right)\\
        =&\frac{1}{2^{1/2}\gamma_{n,P}}\left(\frac{1}{n}\sum_{i=1}^n g^2(O_i;\eta_{P})(\tilde{B}^\lambda_{n,P}(X_i))^\top \tilde{B}^\lambda_{n,P}(X_i)-\rho_{n,P}\right)\\
        &+\frac{1}{2^{1/2}\gamma_{n,P}}\left(\frac{1}{n}\sum_{i\neq i'}^n g(O_i;\eta_{P})g(O_{i'};\eta_{P})(\tilde{B}^\lambda_{n,P}(X_i))^\top \tilde{B}^\lambda_{n,P}(X_{i'})\right).
    \end{align*}
The first term are summations of iid mean-zero terms, therefore we are able to use Markov inequality:
\begin{align*}
    & P\left(\frac{1}{2^{1/2}\gamma_{n,P}}\left\vert\frac{1}{n}\sum_{i=1}^n g^2(O_i;\eta_{P})(\tilde{B}^\lambda_{n,P}(X_i))^\top \tilde{B}^\lambda_{n,P}(X_i)-\rho_{n,P}\right\vert>\epsilon\right)\\
    \leq & \frac{1}{2\gamma^2_{n,P}n\epsilon^2}\text{Var}_P\left(g^2(O;\eta_P)\Vert \tilde{B}^\lambda_{n,P}(X_i)\Vert^2\right) \\
    \leq & \frac{1}{2\gamma^2_{n,P}n\epsilon^2}\mathbbm{E}_P\left(g^4(O;\eta_P)\Vert \tilde{B}^\lambda_{n,P}(X_i)\Vert^4\right) \\
    \lesssim & \frac{J_n^4}{2\gamma^2_{n,P}n\epsilon^2} = o(1).
\end{align*}

 Therefore, 
    \begin{align*}
        \frac{1}{2\gamma^2_{n,P}n\epsilon^2}\text{Var}_P\left(g^2(O;\eta_P)\Vert \tilde{B}^\lambda_{n,P}(X_i)\Vert^2\right) = o_{\mathcal{P}_0}(1).
    \end{align*}

    Furthermore, by Lemma \ref{lemma: martingale bound} and previous lemmas in this subsection, we have that the following weak convergence result holds uniformly over $P\in \mathcal{P}_0$:
    \begin{align*}
        \frac{1}{2^{1/2}\gamma_{n,P}}\left(\frac{1}{n}\sum_{i\neq i'}^n g(O_i;\eta_{P})g(O_{i'};\eta_{P})(\tilde{B}^\lambda_{n,P}(X_i))^\top \tilde{B}^\lambda_{n,P}(X_{i'})\right) \rightsquigarrow N(0,1).
    \end{align*}
    Therefore we have proved the claimed result.
\end{proof}

\begin{lemma}
Under assumptions of Theorem \ref{Theorem: asymptotic distribution under the null},
    \begin{align*}
        \frac{n\Vert\frac{1}{n}\sum_{i=1}^n { g(O_i;\eta_{P})\Omega_{n,P}^{1/2}B_{n,P}(X_i)}\Vert^2-\rho_{n,P}}{2^{1/2}\gamma_{n,P}} - \frac{n\Vert\frac{1}{n}\sum_{i=1}^n { g(O_i;\widehat{\eta})\widehat{\Omega}_{n}^{1/2}B_n(X_i)}\Vert^2-\widehat{\rho}_{n,P}}{2^{1/2}\widehat{\gamma}_n} = o_{\mathcal{P}_0}(1)
    \end{align*}
    \label{lemma: no nuisance impact on test statistic}
\end{lemma}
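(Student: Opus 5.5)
Write $S_n = n\Vert \frac1n\sum_i g(O_i;\eta_P)\Omega_{n,P}^{1/2}B_n(X_i)\Vert^2$ for the oracle statistic and note that $\widehat S_n = n\Vert \frac1n\sum_i g(O_i;\widehat\eta)\widehat\Omega_n^{1/2}B_n(X_i)\Vert^2$, since $(\widehat\Omega_n^{1/2})^\top\widehat\Omega_n^{1/2}=\widehat\Omega_n$. The difference in the statement is then
\[
\frac{S_n-\rho_{n,P}}{2^{1/2}\gamma_{n,P}}-\frac{\widehat S_n-\widehat\rho_n}{2^{1/2}\widehat\gamma_n}
=\frac{S_n-\widehat S_n}{2^{1/2}\gamma_{n,P}}+\frac{\widehat\rho_n-\rho_{n,P}}{2^{1/2}\gamma_{n,P}}+\frac{\widehat S_n-\widehat\rho_n}{2^{1/2}}\Big(\frac{1}{\gamma_{n,P}}-\frac{1}{\widehat\gamma_n}\Big).
\]
I will show each of the three terms is $o_{\mathcal P_0}(1)$ uniformly over $P\in\mathcal P_0$.

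\emph{First term.} Under the assumptions of Theorem \ref{Theorem: asymptotic distribution under the null}, the standardized-test rate in Assumption \ref{Assumption: nuisance} gives $\zeta_n(\xi_n+\omega_n)=o(J_n^{-1/2})$, hence $\zeta_n\xi_n=o(J_n^{-1/2})$. Assumption \ref{Assumption: weighting matrix} gives $\delta_n=o(1)$. Lemma \ref{lemma: convergence rate of S_n}(3) then yields $\widehat S_n-S_n=o_{\mathcal P_0}(J_n^{1/2})$. In the proof of Lemma \ref{lem:centering-scaling} we showed $\gamma_{n,P}\asymp J_n^{1/2}$ uniformly; this uses $\lambda_{\min}(\Omega_{n,P})\ge 1/C$ together with $cI\preceq G_{n,P}\preceq CI$. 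Dividing, the first term is $o_{\mathcal P_0}(1)$.

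\emph{Second term.} This is exactly $o_{\mathcal P}(1)$ by Lemma \ref{lem:centering-scaling}.

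\emph{Third term.} Rewrite it as
\[
\frac{\widehat S_n-\widehat\rho_n}{2^{1/2}\gamma_{n,P}}\cdot\frac{\widehat\gamma_n-\gamma_{n,P}}{\widehat\gamma_n}.
\]
For the first factor, write it as $(S_n-\rho_{n,P})/(2^{1/2}\gamma_{n,P})$ plus the first two terms. The leading piece is $O_{\mathcal P_0}(1)$, either by Lemma \ref{lemma: normal limit} (uniform weak convergence implies uniform tightness) or directly from its mean zero and variance $1-1/n$ computed in the martingale lemmas. The remaining two pieces are $o_{\mathcal P_0}(1)$ by the steps above, so the first factor is $O_{\mathcal P_0}(1)$. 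For the second factor, Lemma \ref{lem:centering-scaling} gives $|\widehat\gamma_n-\gamma_{n,P}|=o_{\mathcal P}(1)$. Combined with $\gamma_{n,P}\asymp J_n^{1/2}\to\infty$, this gives $\widehat\gamma_n\ge \gamma_{n,P}/2\ge c$ with probability tending to one uniformly, so the second factor is $o_{\mathcal P_0}(1)$. The product is therefore $o_{\mathcal P_0}(1)$.

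The main subtlety is the tightness of the standardized oracle statistic uniformly over $\mathcal P_0$. If I want to avoid invoking the weak-convergence lemma, Chebyshev's inequality with the exact second moment suffices. The diagonal part was already shown to be $o_{\mathcal P_0}(1)$ in the proof of Lemma \ref{lemma: normal limit}, and the off-diagonal part is $2\sum_i M_{ni}$ with $\sum_i\mathbbm E_P M_{ni}^2\le 1$. Everything else reduces to results already established.
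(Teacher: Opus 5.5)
Your proposal is correct and takes the paper's route. The paper proves this lemma in one line by citing Lemmas~\ref{lemma: convergence rate of S_n}, \ref{lem:centering-scaling} and \ref{lemma: normal limit}; your three-term decomposition, using $\widehat S_n-S_n=o_{\mathcal P_0}(J_n^{1/2})$, $\gamma_{n,P}\asymp J_n^{1/2}$, and uniform tightness of the standardized oracle statistic, is the bookkeeping behind that citation. Two slips need correcting, though neither affects the tightness argument: the off-diagonal part equals $\sum_i M_{ni}$, not $2\sum_i M_{ni}$, and the full standardized oracle statistic has variance $1-1/n+o(1)$ rather than exactly $1-1/n$, the extra $o(1)$ coming from the diagonal part.
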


\begin{proof}
It's a direct result of Lemma \ref{lemma: convergence rate of S_n}, Lemma \ref{lem:centering-scaling} and Lemma \ref{lemma: normal limit}.
\end{proof}

Then Theorem 2 is a direct result of Lemma \ref{lemma: no nuisance impact on test statistic}, Lemma \ref{lemma: normal limit} and Lemma \ref{lemma: uniform slusky}.

\subsection{Proof of Theorem \ref{Theorem: local power analysis}}

\begin{lemma}
\label{lem: lemma for local alternative}
Suppose the assumptions of Theorem 3 hold. Then, under
$P_{1,n}$,
\[
\frac{S_{n,P_{1,n}}-\rho_{P_{1,n},n}}
{\sqrt{2}\gamma_{P_{1,n},n}}
-
\frac{1}{\sqrt{2}}
\sum_{j=1}^{J_n}
\left\{
\mathbbm{E}_{P_{1,n}}
\left[
\phi(X)\widetilde b^\lambda_{P_{1,n},j}(X)
\right]
\right\}^2
\rightsquigarrow N(0,1),
\]
where
\[
S_{n,P}
=
n
\left\|
\frac{1}{n}
\sum_{i=1}^n
g(O_i;\eta_P)
\widetilde B^\lambda_{n,P}(X_i)
\right\|^2.
\]
\end{lemma}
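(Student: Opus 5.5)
Under $P_{1,n}$ the summands are no longer conditionally mean zero, so I would first center them. Write $V_i = g(O_i;\eta_P)\widetilde B^\lambda_{n,P}(X_i)$ with $P=P_{1,n}$, let $\mu_n = \mathbbm{E}_P[V_i] = (\gamma_{n,P}/n)^{1/2}\,\mathbbm{E}_P[\phi(X)\widetilde B^\lambda_{n,P}(X)]$, and set $\bar V_i = V_i-\mu_n$. Then
\[
S_{n,P} = \Big\|n^{-1/2}\sum_{i=1}^n \bar V_i\Big\|^2 + 2 n^{1/2}\mu_n^\top\Big(n^{-1/2}\sum_{i=1}^n \bar V_i\Big) + n\|\mu_n\|^2 ,
\]
with $n\|\mu_n\|^2 = \gamma_{n,P}\Xi_n$. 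Dividing by $2^{1/2}\gamma_{n,P}$, the last term equals exactly $2^{-1/2}\Xi_n$, which is the centering subtracted in the statement. The proof therefore reduces to two claims. First, the centered quadratic term, after subtracting $\rho_{n,P}$ and scaling, is asymptotically $N(0,1)$. Second, the cross term is $o_p(\gamma_{n,P})$.

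For the cross term, I would compute its second moment directly. It equals $4 n\,\mu_n^\top \mathrm{Cov}(V_i)\mu_n \le 4n\|\mathrm{Cov}(V_i)\|\,\|\mu_n\|^2 \lesssim \gamma_{n,P}\Xi_n$. Here $\|\mathrm{Cov}(V_i)\|\le \|\Sigma_{n,P}\| \lesssim 1$ by the fourth-moment bound in Assumption \ref{Assumption: nuisance} and the eigenvalue bounds in Assumptions \ref{Assumption: basis} and \ref{Assumption: weighting matrix}. Lemma \ref{lemma: bounded projection}, applied to $f=\phi$ (which has a bounded second moment since $\mathbbm{E}|\phi|^4<C$), gives $\Xi_n = O(1)$. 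Hence the cross term divided by $\gamma_{n,P}$ is $O_p(\gamma_{n,P}^{-1/2}) = O_p(J_n^{-1/4}) = o_p(1)$, using $\gamma_{n,P}\asymp J_n^{1/2}$ from the argument behind \eqref{eq:gamma-order}.

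For the centered quadratic term, I would rerun the proof of Lemma \ref{lemma: normal limit} with $\bar V_i$ in place of $V_i$. The martingale-difference property of $M_{ni}$ requires only $\mathbbm{E}[\bar V_i]=0$ and independence, not the null hypothesis. The covariance is now $\bar\Sigma_n = \Sigma_{n,P} - \mu_n\mu_n^\top$. Since $\|\mu_n\|^2 = O(\gamma_{n,P}/n)$ is tiny, the trace and Frobenius norm of $\bar\Sigma_n$ differ from $\rho_{n,P}$ and $\gamma_{n,P}$ by $o(\gamma_{n,P})$ and $o(1)$ respectively. The fourth-moment and quadratic-variation bounds of Lemma \ref{lem:martingale_quadratic_variation} carry over, because $\|\mu_n\|\le \omega_n (\gamma_{n,P}/n)^{1/2}$ is negligible against $\omega_n$. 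One place needs a separate check: the lower bound $\mathbbm{E}[g^2\mid X]\ge 1/C$ was previously used as a conditional variance. It still bounds $\bar\Sigma_n$ below, since the conditional mean is $O((\gamma_{n,P}/n)^{1/2})$. Lemma \ref{lemma: martingale bound} then gives the normal limit. The diagonal term is handled by the same Markov bound as in Lemma \ref{lemma: normal limit}.

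The main obstacle I expect is this second step: keeping the replacement of $\Sigma_{n,P}$ by $\bar\Sigma_n$ and the nonzero conditional mean of $g$ under control throughout the martingale bounds, and confirming the centering constant stays at $\rho_{n,P}$ up to $o(\gamma_{n,P})$. That bookkeeping is what I would verify first. Once the lemma holds, Theorem \ref{Theorem: local power analysis} should follow by the arguments of Lemmas \ref{lemma: convergence rate of S_n} and \ref{lem:centering-scaling}, together with Lemma \ref{lemma: uniform slusky}, to pass from $S_{n,P}$, $\rho_{n,P}$, $\gamma_{n,P}$ to their estimated versions. This is an additional step: under the alternative, $\|a_n\|$ picks up an extra $O((\gamma_{n,P}/n)^{1/2})$ term, which keeps the same order $(J_n/n)^{1/2}$ as under the null.
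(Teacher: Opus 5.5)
Your argument is correct, but it centers the summands differently from the paper. The paper subtracts the \emph{conditional} mean. It sets $\varepsilon_i=g_i-(\gamma_{n,P}/n)^{1/2}\phi(X_i)$, $Z_i=\varepsilon_i\widetilde B^\lambda(X_i)$ and $H_i=\phi(X_i)\widetilde B^\lambda(X_i)$, and writes $S_{n,P}=n\|\overline Z_n\|^2+2(n\gamma_{n,P})^{1/2}\overline Z_n^\top\overline H_n+\gamma_{n,P}\|\overline H_n\|^2$. This decomposition leaves the paper three extra tasks:
\begin{enumerate}
\item show that the random signal $\|\overline H_n\|^2$ is within $o_P(1)$ of $\Xi_n$, using $\mathbbm{E}\phi^4<C$ and $\omega_nJ_n^{1/2}/n\to0$;
\item split the cross term into $\overline Z_n^\top(\overline H_n-\mu_n)$ and $\overline Z_n^\top\mu_n$;
\item compare $\Sigma^Z_{n,P}=\Sigma_{n,P}-(\gamma_{n,P}/n)\mathbbm{E}_P[\phi^2\widetilde B^\lambda(\widetilde B^\lambda)^\top]$ with $\Sigma_{n,P}$, whose details the paper omits.
\end{enumerate}

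You subtract the \emph{unconditional} mean instead, and this simplifies each of those steps:
\begin{enumerate}
\item the deterministic signal is exactly $2^{-1/2}\Xi_n$;
\item the cross term is a single linear statistic with explicit variance $4n\mu_n^\top\mathrm{Cov}(V_i)\mu_n\lesssim\gamma_{n,P}\Xi_n$;
\item the covariance perturbation is the rank-one matrix $\mu_n\mu_n^\top$ of norm $\gamma_{n,P}\Xi_n/n$, so the trace and Frobenius comparisons are immediate.
\end{enumerate}
As you note, the martingale step needs only $\mathbbm{E}_P[\bar V_i]=0$ and independence.

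What the paper's choice buys is that its noise term $Z_i$ keeps the null-case form (a conditionally mean-zero score times the basis), so it can cite the null lemmas with $\varepsilon_i$ in place of $g_i$. Your route is the more economical bookkeeping.

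One small correction concerns the step you flagged. The worry about the lower bound on $\bar\Sigma_n$ is unnecessary. Assumption~\ref{Assumption: nuisance}.1 bounds the \emph{uncentered} $\mathbbm{E}_P[g^2\mid X]$ below for every $P\in\mathcal P$. This gives $\Sigma_{n,P}\succeq c\Lambda_{n,P}$ directly under $P_{1,n}$, and subtracting $\mu_n\mu_n^\top$ costs only $O(\gamma_{n,P}/n)$ in operator norm. Also, the conditional mean $(\gamma_{n,P}/n)^{1/2}\phi(X)$ is not uniformly $O((\gamma_{n,P}/n)^{1/2})$ when $\phi$ has only four moments. Your argument never uses that pointwise claim: only the unconditional $\mu_n$ enters, and $\mu_n$ is small.
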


\begin{proof}
For notational convenience, throughout the proof write
$P=P_{1,n}$. Define
$\varepsilon_i
=
g_i-(\gamma_{n,P}/n)^{1/2}\phi(X_i),
Z_i=\varepsilon_i\widetilde B^\lambda(X_i),
H_i=\phi(X_i)\widetilde B^\lambda(X_i),$
and $\mu_n=\mathbbm{E}_P[H_i].$
By the local alternative, $\mathbbm{E}_P[\varepsilon_i\mid X_i]=0,$
and therefore $\mathbbm{E}_P[Z_i]=0$.

Let $\overline Z_n=\frac1n\sum_{i=1}^nZ_i,
\overline H_n=\frac1n\sum_{i=1}^nH_i.$ Since
$g_i\widetilde B^\lambda(X_i)=Z_i+(\gamma_{n,P}/n)^{1/2}H_i$,
\begin{align}
S_{n,P}
&=
n\|\overline Z_n+(\gamma_{n,P}/n)^{1/2}\overline H_n\|^2
\nonumber\\
&=
n\|\overline Z_n\|^2
+
2n(\gamma_{n,P}/n)^{1/2}\overline Z_n^\top\overline H_n
+
n(\gamma_{n,P}/n)\|\overline H_n\|^2.
\end{align}

Therefore,
\begin{align*}
    \frac{S_{n,P}-\rho_{n,P}}{\sqrt{2}\gamma_{n,P}} - \frac{1}{\sqrt{2}}\Vert \overline{H}\Vert^2
&=
\frac{n\|\overline Z_n\Vert^2-\rho_{n,P}}{\sqrt{2}\gamma_{n,P}}+\frac{{(2n)}^{1/2}}{\gamma_{n,P}^{1/2}}\overline{Z}^\top_n\overline{H}_n.
\end{align*}

We aim to establish the following results:
\begin{align*}
    &\frac{1}{\sqrt{2}}\Vert \overline{H}\Vert^2 - \frac{1}{\sqrt{2}}
\sum_{j=1}^{J_n}
\left\{
\mathbbm{E}_{P_{1,n}}
\left[
\phi(X)\widetilde b^\lambda_{P_{1,n},j}(X)
\right]
\right\}^2 = o_p(1),\\
&\frac{n\|\overline Z_n\Vert^2-\rho_{n,P}}{\sqrt{2}\gamma_{n,P}} \rightsquigarrow N(0,1),\\
&\frac{{2n}^{1/2}}{\gamma_{n,P}^{1/2}}\overline{Z}^\top_n\overline{H}_n = o_p(1).
\end{align*}

Those three results together imply the claimed result. We first prove
\begin{equation}
\frac{1}{\sqrt{2}}
\|\overline H_n\|^2
-
\frac{1}{\sqrt{2}}
\sum_{j=1}^{J_n}
\left\{
\mathbbm{E}_{P}
\left[
\phi(X)\widetilde b^\lambda_{P,j}(X)
\right]
\right\}^2
=
o_P(1).
\end{equation}
Recall that $\mu_n
=
\mathbbm{E}_{P}[H_i]
=
\mathbbm{E}_{P}
\left[
\phi(X)\widetilde B^\lambda_{n,P}(X)
\right]$,  Therefore, $\|\mu_n\|^2
=
\sum_{j=1}^{J_n}
\left\{
\mathbbm{E}_{P}
\left[
\phi(X)\widetilde b^\lambda_{P,j}(X)
\right]
\right\}^2.$ It thus suffices to show $\|\overline H_n\|^2-\|\mu_n\|^2=o_P(1).$ By independence,
\begin{align*}
\mathbbm{E}_{P}
\|\overline H_n-\mu_n\|^2
&=
\frac{1}{n}
\mathbbm{E}_{P}\|H_i-\mu_n\|^2\\
&\le
\frac{1}{n}
\mathbbm{E}_{P}\|H_i\|^2\\
&=
\frac{1}{n}
\mathbbm{E}_{P}
\left[
\phi^2(X)
\|\widetilde B^\lambda_{n,P}(X)\|^2
\right]\\
&\le\frac{1}{n}
\left\{
\mathbbm{E}_{P}[\phi^4(X)]
\right\}^{1/2}
\left\{
\mathbbm{E}_{P}
\|\widetilde B^\lambda_{n,P}(X)\|^4
\right\}^{1/2}\\
&\lesssim \frac{\omega_n J_n^{1/2}}{n}
\end{align*}

Since $\omega_n\lesssim J_n$ and $J_n=o(n^{1/3})$, we have
\[
\mathbbm{E}_{P}
\|\overline H_n-\mu_n\|^2
\lesssim
\frac{\omega_nJ_n^{1/2}}{n}
=o(1).
\]

Moreover, by Lemma \ref{lemma: bounded projection} applied with $f=\phi$, we have
\begin{align*}
\left|
\|\overline H_n\|^2-\|\mu_n\|^2
\right|
&=
\left|
2\mu_n^\top(\overline H_n-\mu_n)
+
\|\overline H_n-\mu_n\|^2
\right|\\
&\le
2\|\mu_n\|
\|\overline H_n-\mu_n\|
+
\|\overline H_n-\mu_n\|^2\\
&=
o_P(1).
\end{align*}

Now we show the second term. The moment bounds used in Lemma \ref{lem:martingale_quadratic_variation} give $\mathbbm{E}_P\|V_i\|^4\lesssim \omega_n^2J_n$ and, for every deterministic $u\in\mathbb R^{J_n}$, $\mathbbm{E}_P[(u^\top V_i)^4]
\lesssim
\omega_n^2\|u\|^4.$ Since $Z_i=V_i-(\gamma_{n,P}/n)^{1/2}\phi(X_i)\tilde{B}^\lambda(X_i)$, with similar arguments we used in Lemma \ref{lem:martingale_quadratic_variation}, the same bounds hold for $Z_i$: $\mathbbm{E}_P\|Z_i\|^4\lesssim \omega_n^2J_n,
\mathbbm{E}_P[(u^\top Z_i)^4]
\lesssim
\omega_n^2\|u\|^4.$
Moreover, under the assumptions of Theorem \ref{Theorem: asymptotic distribution under the null} the eigenvalues of
$\Sigma_{n,P}$ are bounded above and away from zero. Define $\Sigma^Z_{n,P} = \mathbbm{E}_P[Z_iZ_i^\top]$, $m_{n,P} = \mathbbm{E}_P[V_i] = (\gamma_{n,P}/n)^{1/2}\mu_n$. Since $\|\Sigma^Z_{n,P}-\Sigma_{n,P}\|
=
\|\gamma_{n,P}/n\mathbbm{E}_P[\phi^2(X)\tilde{B}^\lambda_{n,P}(X)\tilde{B}^\lambda_{n,P}(X)^\top]\|=o(1/{J^{1/2}_n}),$
the eigenvalues of $\Sigma^Z_{n,P}$ are also bounded above and away
from zero for all sufficiently large $n$. Hence the martingale argument applies with
$Z_i$ in place of
$g(O_i;\eta_P)\widetilde B^\lambda_{n,P}(X_i)$, and yields
\[
\frac{
n\left\|n^{-1}\sum_{i=1}^n Z_i\right\|^2
-\rho^Z_{n,P}
}{
\sqrt{2}\gamma^Z_{n,P}
}
\rightsquigarrow N(0,1),
\]
where $\gamma^Z_{n,P} = \Vert \Sigma^Z_{n,P} \Vert_F$, $\rho^Z_{n,P} = \text{tr}(\Sigma^Z_{n,P})$. We can similarly show that $\gamma^Z_{n,P}/\gamma_{n,P}\to1$, $\rho^Z_{n,P}/\sqrt{2}\gamma^Z_{n,P} - \rho_{n,P}/\sqrt{2}\gamma_{n,P} \rightarrow 0$. We omit the details here. Therefore Slutsky's theorem gives
\[
\frac{
n\left\|n^{-1}\sum_{i=1}^n Z_i\right\|^2
-\rho_{n,P}
}{
\sqrt{2}\gamma_{n,P}
}
\rightsquigarrow N(0,1).
\]

Therefore, the second claim has been proved. Now we focus on the third claim and show the cross-term converges to $0$. Write $\frac{{(2n)}^{1/2}}{\gamma_{n,P}^{1/2}}\overline{Z}^\top_n\overline{H}_n = \frac{{(2n)}^{1/2}}{\gamma_{n,P}^{1/2}}\overline{Z}^\top_n(\overline{H}_n-\mu_n)+\frac{{(2n)}^{1/2}}{\gamma_{n,P}^{1/2}}\overline{Z}^\top_n\mu_n$.  Here we focus on the second term since the first term has smaller order (since we have already derived $\Vert \widebar{Z}_n\Vert = O_p((J_n/n)^{1/2})$, $\Vert \widebar{H}_n - \mu_n\Vert = O_p(\omega_n^{1/2}J_n^{1/4}/n^{1/2})$). For the second term, we have
\begin{align*}
\mathbbm{E}_P\left[
2n (m_{n,P})^\top
\left(\frac1n\sum_{i=1}^n Z_i\right)
\right]^2
&=
4n m_{n,P}^\top
\Sigma^Z_{n,P}m_{n,P}\\
&\lesssim
n\|m_{n,P}\|^2\\
&\lesssim
\sqrt{J_n}.
\end{align*}
Since $\gamma_{n,P}^2\asymp J_n$,
\[
\frac{
2n m_{n,P}^\top
\left(n^{-1}\sum_{i=1}^nZ_i\right)
}{
\sqrt{2}\gamma_{n,P}
}
=o_P(1).
\]

Therefore, all three claims have been proved and therefore the claimed results hold.
\end{proof}

\begin{proof}[Proof of Theorem \ref{Theorem: local power analysis}]
    We only need to show the impact of nuisance estimation is negligible. Since Lemma \ref{lem:centering-scaling} is still valid under local alternative, we only need to show $\widehat{S}_n - S_n = o_P(\gamma_n)$. To show this, we only need to show $\Vert a_n\Vert = O_p((J_n/n)^{1/2})$, since the proof of Lemma \ref{lemma: second order impact of nuisance} is still valid under the alternative. 

    Under the alternative, $a_n$ is no longer mean zero, write $\Vert a_n\Vert \leq  \Vert a_n - \mathbbm{E}_P a_n \Vert +\Vert \mathbbm{E}_P a_n\Vert$. $\Vert a_n - \mathbbm{E}_P a_n \Vert$ is still $O_p((J_n/n)^{1/2})$ following the similar proof as Lemma \ref{Lemma: a_n}.

    Under the local alternative, we have $\Vert \mathbbm{E}_P[a_n]\Vert = \Vert (\gamma_n/n)^{1/2}\mathbbm{E}[B_n(X)\phi(X)]\Vert \lesssim (\gamma_n/n)^{1/2}\leq (J_n/n)^{1/2}$, therefore $\Vert a_n\Vert = O_p((J_n/n)^{1/2})$ still holds.
\end{proof}

\subsection{Proof of Theorem \ref{Theorem: consistency}}
\begin{proof} [Proof of Theorem \ref{Theorem: consistency}]

To simplify notation, we let 
\begin{align*}
    d_{n,P} = \sum_{j=1}^{J_n}\left[\mathbbm{E}_P[g(O;\eta_P)\tilde{b}^\lambda_{P,j}(X)]^2\right].
\end{align*}
The set of alternative distributions can be rewritten as
\begin{align*}
    \mathcal{P}_{1,n} = \left\{ P:d_{n,P}>\frac{q\gamma_{n,P}}{n}\right\}.
\end{align*}
Similar to the previous proofs, we still define $V_i = g_i\tilde{B}^\lambda_{n,P}(X_i)$, $m_{n,P} = \mathbbm{E}_P[V_i]$. Then $d_{n,P} = \Vert m_{n,P}\Vert^2$. Now we consider $P\in \mathcal{P}_{1,n}(q)$, which implies $nd_{n,P}>q\gamma_{n,P}$ by the definition of $\mathcal{P}_{1,n}$. Furthermore, the definition of $\Sigma_{n,P}$, $\gamma_{n,P}$ and $\rho_{n,P}$ are still defined as follows:
\begin{align*}
    &\Sigma_{n,P} = \mathbbm{E}_P[V_i V_i^\top],\gamma_{n,P} = \Vert \Sigma_{n,P}\Vert_{F},\rho_{n,P} = \text{tr}(\Sigma_{n,P} ).
\end{align*}

Recall that the oracle uncentered statistic is defined as
\begin{align*}
    S_{n,P} = \left\Vert \frac{1}{\sqrt{n}}\sum_{i=1}^n V_i \right\Vert^2.
\end{align*}

The basic idea is to study the power of the oracle standardized test statistic and then use the fact that the impact of nuisance estimation is small as we derived before. Therefore we focus on studying the oracle power and ignore the nuisance estimation. We  start with the following decomposition:
\begin{align*}
    S_{n,P}-\rho_{n,P} = &\left\Vert \frac{1}{\sqrt{n}}\sum_{i=1}^n (V_i - \mathbbm{E}_P[V_i]) \right\Vert^2 -\text{tr}(\text{Var}(V_i)) \\
    &+ 2\sqrt{n}m_{n,P}^\top \frac{1}{\sqrt{n}}\sum_{i=1}^n (V_i - \mathbbm{E}_P[V_i]) \\
    &+(n-1)d_{n,P}.
\end{align*}

From previous proof we have already known that $\gamma_{n,P} \asymp J_n^{1/2}$. For $d_{n,P}$, we have
\begin{align*}
    d_{n,P} =  \Vert \mathbbm{E}_P[\mathbbm{E}_P[g_i|X]\tilde{B}^\lambda_{n,P}(X)]\Vert^2\lesssim C.
\end{align*}

For $\text{Var}(V_i)$, we have the following inequalities:
\begin{align*}
    &\Vert \text{Var}_P(V_i)\Vert=\Vert \mathbbm{E}_P[(V_i - \mathbbm{E}_P[V_i])(V_i - \mathbbm{E}_P[V_i])^\top] \Vert\leq \Vert \Sigma_{n,P}\Vert + \Vert m_{n,P}m_{n,P}^\top\Vert = O(1),\\
    &\Vert \text{Var}_P(V_i)\Vert_F \leq \Vert \Sigma_{n,P}\Vert_F + \Vert m_{n,P}m_{n,P}^\top\Vert_F = \gamma_{n,P} + d_{n,P}\lesssim \gamma_{n,P}.
\end{align*}

Now we consider controlling the first term in the decomposition. We denote this term as $I_n$:
\begin{align*}
    I_n : = \left\Vert \frac{1}{\sqrt{n}}\sum_{i=1}^n (V_i - \mathbbm{E}_P[V_i]) \right\Vert^2 -\text{tr}(\text{Var}(V_i)).
\end{align*}

First, 
\begin{align*}
    &\mathbbm{E}_{P}[I_n] \\
    =&\mathbbm{E}_{P}\left\Vert \frac{1}{\sqrt{n}}\sum_{i=1}^n (V_i - \mathbbm{E}_P[V_i]) \right\Vert^2 -\text{tr}(\text{Var}(V_i))\\
    =&\mathbbm{E}_{P}\frac{1}{n}\sum_{i=1}^n(V_i - \mathbbm{E}_P[V_i])^\top\sum_{j=1}^n(V_j - \mathbbm{E}_P[V_j])-\text{tr}(\text{Var}(V_i)) \\
    =&\mathbbm{E}_{P}(V_i - \mathbbm{E}_P[V_i])^\top(V_i - \mathbbm{E}_P[V_i])-\text{tr}(\text{Var}(V_i))\\
    =&0.
\end{align*}
Therefore, we only need to derive the variance to determine its order. Direct calculation gives
\begin{align*}
    \text{Var}(I_n)& = \frac{1}{n}\text{Var}_P(\Vert V_i -\mathbbm{E}[V_i]\Vert^2) + \frac{2(n-1)}{n}\text{tr}((\text{Var}(V_i))^2)\\
    & \lesssim \frac{1}{n}\mathbbm{E}_P[\Vert V_i \Vert^4] + \frac{1}{n}\Vert m_{n,P}\Vert^4 + 2\Vert \text{Var}(V_i)\Vert_F^2\\
    & \lesssim \frac{\omega_n^2J_n}{n}+\gamma_{n,P}^2.
\end{align*}

Therefore, we have $\text{Var}(I_n)/\gamma_{n,P}^2 \lesssim C$. Consequently, for every $M>0$, $\sup_{P\in \mathcal{P}}P(\vert I_n\vert>M\gamma_{n,P} )\leq C/M^2$. If $P\in \mathcal{P}_{1,n}(q)$, then $n d_{n,P}>q\gamma_{n,P}$, therefore we have
\begin{align*}
    P\left(\vert I_{n}\vert>1/4nd_{n,P}\right)\leq \frac{16\text{Var}_P(I_n)}{n^2d_{n,P}^2}\lesssim \frac{\gamma_{n,P}^2}{n^2d_{n,P}^2}\leq C/q^2.
\end{align*}

Next, we derive the rate for the second term:
\begin{align*}
    II_n&: = 2\sqrt{n}m_{n,P}^\top \frac{1}{\sqrt{n}}\sum_{i=1}^n (V_i - \mathbbm{E}_P[V_i])
\end{align*}
It's easy to see $\mathbbm{E}_P[II_n] = 0$, $\text{Var}_P[II_n] = 4n m_{n,P}^\top \text{Var}_P[V_i]m_{n,P}$. Therefore,
\begin{align*}
    \text{Var}_P[II_n] = 4n m_{n,P}^\top \text{Var}_P[V_i]m_{n,P}\lesssim n \Vert m_{n,P}\Vert^2 = nd_{n,P}.
\end{align*}

Therefore, we have
\begin{align*}
    P\left(\vert II_n\vert >1/4n d_{n,P}\right) \lesssim \frac{1}{nd_{n,P}}.
\end{align*}

When $P\in \mathcal{P}_{1,n}(q)$, $nd_{n,P}>q\gamma_{n,P}$, therefore we have
\begin{align*}
     P\left(\vert II_n\vert >1/4n d_{n,P}\right) \lesssim 1/(q\gamma_{n,P}) = o(1)
\end{align*}

Combine the previous results, we have with probability at least $1-C/q^2-o(1)$, uniformly over $P\in \mathcal{P}_{1,n}(q)$, we have
\begin{align*}
    S_{n,P}-\rho_{n,P} \geq (n-1)d_{n,P} - \frac{1}{4}nd_{n,P}- \frac{1}{4}nd_{n,P} \geq \frac{1}{4}nd_{n,P}
\end{align*}
for sufficiently large $n$. Since $P\in \mathcal{P}_{1,n}(q)$,
\begin{align*}
    \frac{S_{n,P}-\rho_{n,P}}{2^{1/2}\gamma_{n,P}} \geq q/(4\times 2^{1/2}).
\end{align*}
Choosing $q$ sufficiently large such that $q/(4\times 2^{1/2}) >z_{1-\alpha}+1$ and $C/q^2<\epsilon/4$, then we have
\begin{align}
    \liminf_{n\rightarrow\infty}\inf_{P\in\mathcal{P}_{1,n}(q)}P\left(\frac{S_{n,P}-\rho_{n,P}}{2^{1/2}\gamma_{n,P}}>z_{1-\alpha}+1\right)\geq 1-\epsilon/2 \label{eq: alternative, oracle}
\end{align}

Now we handle the impact of nuisance estimation. Similar to the proof in Lemma \ref{lemma: convergence rate of S_n}, we define $a_n = \frac{1}{n}\sum_{i=1}^nB_n(X_i)g(O_i;\eta_{{P}})$ and $\widehat{a}_n = \frac{1}{n}\sum_{i=1}^nB_n(X_i)g(O_i;{\eta}_{\widehat{P}_n})$. We can show that $\Vert \widehat{a}_n - a_n\Vert = O_{\mathcal{P}}(\zeta_n\xi_n(J_n/n)^{1/2})$. Furthermore, $\Vert a_n\Vert = O_P(d^{1/2}_{n,P}+(J_n/n)^{1/2})$, the additional $d^{1/2}_{n,P}$ is because we are working under alternative. With a decomposition  similar to that in \ref{lemma: convergence rate of S_n}, we can show that $\widehat{S}_n - S_{n}\vert = o_\mathcal{P}(\gamma_{n,P}+nd_{n,P})$, under the alternative $\mathcal{P}_{1,n}$, we further have $\gamma_{n,P}+nd_{n,P}\lesssim nd_{n,P}$. Therefore, $\vert \widehat{S}_n - S_{n}\vert = o_{P}(nd_{n,P})$. The results derived in \ref{lem:centering-scaling} do not depend on the null. Combining this with the results under the oracle setting, we have
\begin{align*}
    \liminf_{n\rightarrow\infty}\inf_{P\in\mathcal{P}_{1,n}(q)}P\left(\frac{\widehat{S}_{n}-\widehat{\rho}_{n}}{2^{1/2}\widehat{\gamma}_{n}}>z_{1-\alpha}\right)\geq 1-\epsilon.
\end{align*}

The second claim is obvious since under the alternative with $q_n \rightarrow \infty$,
\begin{align*}
    \frac{S_{n}-\rho_{n,P}}{\gamma_{n,P}} = \frac{(n-1)d_{n,P}}{\gamma_{n,P}} + \frac{I_n}{\gamma_{n,P}} + \frac{II_n}{\gamma_{n,P}}.
\end{align*}

Now the signal $\frac{(n-1)d_{n,P}}{\gamma_{n,P}}  = (1-1/n)(nd_{n,P}/\gamma_{n,P})\geq (1-1/n)q_n \rightarrow \infty$.

\end{proof}

\subsection{Proofs for Proposition \ref{prop: example 1} and \ref{prop: example 2}}
\label{subsec: proofs for examples}
The proof for Proposition \ref{prop: example 1} is given as below.
\begin{proof}
For notational convenience, fix $a\in\{0,1\}$ and, for
$s\in\{0,1\}$, define $I_s
    =
    I(A=a,S=s),
    \mu_{P,s}(x)
    =
    \mathbbm{E}_P[Y\mid A=a,S=s,X=x],
    \pi_{P,s}(x)
    =
    P(A=a,S=s\mid X=x).$
For a generic nuisance value $\eta
=
\left\{
\mu_s,\pi_s:s\in\{0,1\}
\right\}$, define $\psi_s(O;\eta)
=
\frac{I_s}{\pi_s(X)}
\{Y-\mu_s(X)\}
+
\mu_s(X)$, 
so that
\[
g^\perp_{\mathrm{ME}}(O;\eta)
=
\psi_1(O;\eta)-\psi_0(O;\eta).
\]

We verify the conditions in Assumption~3.

\noindent\textbf{Step 1: Moment conditions and uniform boundedness.}

Since $Y$ is uniformly bounded, there exists a constant $C<\infty$
such that $|\mu_{P,s}(X)|
=
\left|
\mathbbm{E}_P[Y\mid A=a,S=s,X]
\right|
\le C, s\in\{0,1\}.$ 
By the assumed bounds on the realization set
$\mathcal{T}_n$, every candidate $\mu_s$ is also uniformly bounded.
Moreover, by positivity, $\epsilon
\le
\pi_s(X)
\le
1-\epsilon$, for every $\eta\in\mathcal{T}_n$. It follows that
\begin{align*}
|\psi_s(O;\eta)|
&\le
\frac{1}{\epsilon}
\left\{
|Y|+|\mu_s(X)|
\right\}
+
|\mu_s(X)|\le C.
\end{align*}
Therefore, $\sup_{\eta\in\mathcal{T}_n}
|g^\perp_{\mathrm{ME}}(O;\eta)|
\le C.$ Consequently,
\[
\sup_{\eta\in\mathcal{T}_n}
\mathbbm{E}_P
\left[
\left\{
g^\perp_{\mathrm{ME}}(O;\eta)
\right\}^4
\mid X
\right]
\le C.
\]
Furthermore, uniform boundedness implies
\[
\left|
g^\perp_{\mathrm{ME}}(O;\eta)
+
g^\perp_{\mathrm{ME}}(O;\eta_P)
\right|^2
\le C.
\]
Hence
\begin{align*}
&
\mathbbm{E}_P
\left[
\left\{
g^\perp_{\mathrm{ME}}(O;\eta)
-
g^\perp_{\mathrm{ME}}(O;\eta_P)
\right\}^2
\left\{
g^\perp_{\mathrm{ME}}(O;\eta)
+
g^\perp_{\mathrm{ME}}(O;\eta_P)
\right\}^2
\right]
\\
&\qquad\le
C
\mathbbm{E}_P
\left[
\left\{
g^\perp_{\mathrm{ME}}(O;\eta)
-
g^\perp_{\mathrm{ME}}(O;\eta_P)
\right\}^2
\right].
\end{align*}

\noindent\textbf{Step 2: Control of $r_{n,P}$.}

For $s\in\{0,1\}$, define $\Delta_{\mu,s}(X)
=
\mu_s(X)-\mu_{P,s}(X),
\Delta_{\pi,s}(X)
=
\pi_s(X)-\pi_{P,s}(X).$
We have
\begin{align*}
&
\psi_s(O;\eta)
-
\psi_s(O;\eta_P)
\\
&=
I_s
\left[
\frac{Y-\mu_s(X)}{\pi_s(X)}
-
\frac{Y-\mu_{P,s}(X)}{\pi_{P,s}(X)}
\right]
+
\Delta_{\mu,s}(X)
\\
&=
I_s
\{Y-\mu_{P,s}(X)\}
\left\{
\frac{1}{\pi_s(X)}
-
\frac{1}{\pi_{P,s}(X)}
\right\}
-
\frac{I_s}{\pi_s(X)}
\Delta_{\mu,s}(X)
+
\Delta_{\mu,s}(X).
\end{align*}
Since $\pi_s$ and $\pi_{P,s}$ are uniformly bounded below by
$\epsilon$,
\[
\left|
\frac{1}{\pi_s(X)}
-
\frac{1}{\pi_{P,s}(X)}
\right|
=
\frac{
|\Delta_{\pi,s}(X)|
}{
\pi_s(X)\pi_{P,s}(X)
}
\le
C|\Delta_{\pi,s}(X)|.
\]
Moreover, $Y-\mu_{P,s}(X)$ is uniformly bounded. Therefore,
\[
\left|
\psi_s(O;\eta)
-
\psi_s(O;\eta_P)
\right|
\le
C
\left\{
|\Delta_{\mu,s}(X)|
+
|\Delta_{\pi,s}(X)|
\right\}.
\]
Taking the $L_2(P)$ norm yields
\[
\left\|
\psi_s(O;\eta)
-
\psi_s(O;\eta_P)
\right\|_{P,2}
\le
C
\left\{
\|\Delta_{\mu,s}\|_{P,2}
+
\|\Delta_{\pi,s}\|_{P,2}
\right\}.
\]
Since $g^\perp_{\mathrm{ME}}
=
\psi_1-\psi_0,$ the triangle inequality gives
\begin{align*}
&
\left\|
g^\perp_{\mathrm{ME}}(O;\eta)
-
g^\perp_{\mathrm{ME}}(O;\eta_P)
\right\|_{P,2} \qquad\le
C
\sum_{s=0}^1
\left\{
\|\mu_s-\mu_{P,s}\|_{P,2}
+
\|\pi_s-\pi_{P,s}\|_{P,2}
\right\}.
\end{align*}
By the assumed $L_2(P)$ convergence rates,
\[
\sup_{\eta\in\mathcal{T}_n}
\left\|
g^\perp_{\mathrm{ME}}(O;\eta)
-
g^\perp_{\mathrm{ME}}(O;\eta_P)
\right\|_{P,2}
\lesssim
\zeta_n.
\]
Thus, after absorbing an absolute multiplicative constant into
$\zeta_n$, $r_{n,P}\le \zeta_n.$

\noindent\textbf{Step 3: Conditional Neyman orthogonality.}

Fix $s\in\{0,1\}$ and an arbitrary
$\eta_1\in\mathcal{T}_n$. Consider the path $\eta_t
=
(1-t)\eta_P+t\eta_1,
t\in[0,1].$ For the $s$th component, write
\begin{align*}
\mu_{s,t}(X)
&=
\mu_{P,s}(X)
+
t\Delta_{\mu,s}(X),\\
\pi_{s,t}(X)
&=
\pi_{P,s}(X)
+
t\Delta_{\pi,s}(X),
\end{align*}
where now $\Delta_{\mu,s}
=
\mu_{1,s}-\mu_{P,s}$, $\Delta_{\pi,s}
=
\pi_{1,s}-\pi_{P,s}$. For fixed $X=x$, define $m_s(t,x)
=
\mathbbm{E}_P
\left[
\psi_s(O;\eta_t)
\mid X=x
\right].$ Taking the conditional expectation under the true distribution gives
\begin{align*}
m_s(t,x)
&=
\frac{\pi_{P,s}(x)}
{\pi_{s,t}(x)}
\left\{
\mu_{P,s}(x)-\mu_{s,t}(x)
\right\}
+
\mu_{s,t}(x)
\\
&=
-\frac{
t\pi_{P,s}(x)\Delta_{\mu,s}(x)
}{
\pi_{s,t}(x)
}
+
\mu_{P,s}(x)
+
t\Delta_{\mu,s}(x)
\\
&=
\mu_{P,s}(x)
+
t\Delta_{\mu,s}(x)
\left\{
1-
\frac{\pi_{P,s}(x)}
{\pi_{s,t}(x)}
\right\}
\\
&=
\mu_{P,s}(x)
+
\frac{
t^2
\Delta_{\mu,s}(x)
\Delta_{\pi,s}(x)
}{
\pi_{s,t}(x)
}.
\end{align*}
Therefore, $\left.
\frac{\partial}{\partial t}
m_s(t,x)
\right|_{t=0}
=
0.$ Since $g^\perp_{\mathrm{ME}}
=
\psi_1-\psi_0,$, it follows that
\[
\left.
\frac{\partial}{\partial t}
\mathbbm{E}_P
\left[
g^\perp_{\mathrm{ME}}(O;\eta_t)
\mid X
\right]
\right|_{t=0}
=
0.
\]
Hence $g^\perp_{\mathrm{ME}}$ is conditionally Neyman orthogonal.

\noindent\textbf{Step 4: Bound on the second-order Gateaux derivative.}

From the preceding display,
\[
m_s(t,x)
=
\mu_{P,s}(x)
+
\frac{
t^2
\Delta_{\mu,s}(x)
\Delta_{\pi,s}(x)
}{
\pi_{P,s}(x)+t\Delta_{\pi,s}(x)
}.
\]
Differentiating twice with respect to $t$ gives
\[
\frac{\partial^2}{\partial t^2}
m_s(t,x)
=
\frac{
2\pi_{P,s}^2(x)
\Delta_{\mu,s}(x)
\Delta_{\pi,s}(x)
}{
\left\{
\pi_{P,s}(x)+t\Delta_{\pi,s}(x)
\right\}^3
}.
\]
Since $\pi_{P,s}(x)+t\Delta_{\pi,s}(x)
=
(1-t)\pi_{P,s}(x)+t\pi_{1,s}(x)
\ge\epsilon$, for every $t\in[0,1]$, we obtain
\[
\left|
\frac{\partial^2}{\partial t^2}
m_s(t,X)
\right|
\le
C
\left|
\Delta_{\mu,s}(X)
\Delta_{\pi,s}(X)
\right|.
\]
Therefore,
\begin{align*}
&
\mathbbm{E}_P
\left[
\left|
\frac{\partial^2}{\partial t^2}
m_s(t,X)
\right|
\right]
\le
C
\mathbbm{E}_P
\left[
\left|
\Delta_{\mu,s}(X)
\Delta_{\pi,s}(X)
\right|
\right]
\le
C
\|\Delta_{\mu,s}\|_{P,2}
\|\Delta_{\pi,s}\|_{P,2},
\end{align*}
where the last inequality follows from the Cauchy--Schwarz
inequality.

Since $g^\perp_{\mathrm{ME}}
=
\psi_1-\psi_0$, we have
\begin{align*}
r'_{n,P}
&=
\sup_{\substack{
t\in[0,1]\\
\eta_1\in\mathcal{T}_n
}}
\mathbbm{E}_P
\left[
\left|
\frac{\partial^2}{\partial t^2}
\mathbbm{E}_P
\left[
g^\perp_{\mathrm{ME}}(O;\eta_t)
\mid X
\right]
\right|
\right]
\\
&\le
C
\sup_{\eta_1\in\mathcal{T}_n}
\sum_{s=0}^1
\|\mu_{1,s}-\mu_{P,s}\|_{P,2}
\|\pi_{1,s}-\pi_{P,s}\|_{P,2}.
\end{align*}
By the assumed product-rate condition, $\|\mu_{1,s}-\mu_{P,s}\|_{P,2}
\|\pi_{1,s}-\pi_{P,s}\|_{P,2}
\le
\frac{\zeta_n}{\sqrt n},$ uniformly over $\eta_1\in\mathcal{T}_n$. Since the number of
components is fixed, $r'_{n,P}
\leq
\frac{\zeta_n}{\sqrt n}.$
After absorbing an absolute constant into $\zeta_n$,
\[
r'_{n,P}
\le
\frac{\zeta_n}{\sqrt n}.
\]
Therefore,
Assumption~3 holds for
$g=g^\perp_{\mathrm{ME}}$.
\end{proof}

The proof for Proposition \ref{prop: example 2} is given as below.
\begin{proof}
For $j\in\{1,2\}$ and $z\in\{0,1\}$, remember that we define $\pi^{(j)}_{P,j}(x)=
    P(Z_j=1\mid X=x),
    \mu^{Y,(j)}_{P,z}(x)
    =
    \mathbbm{E}_P[Y\mid Z_j=z,X=x],
    \mu^{D,(j)}_{P,z}(x)
    =
    \mathbbm{E}_P[D\mid Z_j=z,X=x].$
Furthermore, define $\Delta_{P}^{Y,(j)}(x)
    =
    \mu^{Y,(j)}_{P,1}(x)-\mu^{Y,(j)}_{P,0}(x)$, $\Delta^{D,(j)}_{P}(x)
    =
    \mu^{D,(j)}_{P,1}(x)
    -
    \mu^{D,(j)}_{P,0}(x)$.
For a generic nuisance value
\[
\eta
=
\left\{
\pi^{j},
\mu^{Y,(j)}_z,
\mu^{D,(j)}_z:
j\in\{1,2\},z\in\{0,1\}
\right\},
\]
define analogously $\Delta^{Y,(j)}
=
\mu^{Y,(j)}_1-\mu^{Y,(j)}_0$ and $\Delta^{D,(j)}
=
\mu^{D,(j)}_1-\mu^{D,(j)}_0$.
For each $j\in\{1,2\}$, write
\begin{align*}
g_{Z_j}(O;\eta)
={}&
\frac{1}{\Delta^{D,(j)}(X)}
\left[
\frac{I(Z_j=1)}{\pi^{(j)}(X)}
\left\{
Y-\mu^{Y,(j)}_1(X)
\right\}
-
\frac{I(Z_j=0)}{1-\pi^{(j)}(X)}
\left\{
Y-\mu^{Y,(j)}_0(X)
\right\}
\right]
\nonumber\\
&-
\frac{\Delta^{Y,(j)}(X)}
{\{\Delta^{D,(j)}(X)\}^2}
\left[
\frac{I(Z_j=1)}{\pi^{(j)}(X)}
\left\{
D-\mu^{D,(j)}_1(X)
\right\}
-
\frac{I(Z_j=0)}{1-\pi^{(j)}(X)}
\left\{
D-\mu^{D,(j)}_0(X)
\right\}
\right]
\nonumber\\
&+
\frac{\Delta^{Y,(j)}(X)}
{\Delta^{D,(j)}(X)}.
\end{align*}
Then $g^\perp_{\mathrm{COM}}(O;\eta)
=
g_{Z_1}(O;\eta)-g_{Z_2}(O;\eta).$
We verify each component of Assumption 3.

\noindent\textbf{Step 1: Moment conditions and uniform boundedness.} This part can be verified in a similar way as the first example, thus omitted.

\noindent\textbf{Step 2: Control of $r_{n,P}$.} 

Unlike the first example, here we avoid explicit calculation. For a fixed $j$, define $u^{(j)}(x)
=
\left(
\pi^{(j)}(X),
\mu^{Y,(j)}_1(x),
\mu^{Y,(j)}_0(x),
\mu^{D,(j)}_1(x),
\mu^{D,(j)}_0(x)
\right)^\top,$
and define $u_P^{(j)}(x)$ analogously using the true nuisance
functions. For fixed $o=(x,z_1,z_2,d,y)$, regard
$g_{Z_j}(o;\eta)$ as a function of $u^{(j)}(x)$.
On the set $\epsilon\le \pi^{(j)}(X)\le 1-\epsilon,
\Delta^{D,(j)}(X)\ge\epsilon$ with all nuisance coordinates uniformly bounded,
the map $u\mapsto g_{Z_j}(o;u)$ is continuously differentiable, and every component of its gradient
is uniformly bounded. Therefore, by the multivariate mean-value
theorem,
\[
\left|
g_{Z_j}(O;\eta)-g_{Z_j}(O;\eta_P)
\right|
\le
C
\left\|
u^{(j)}(X)-u_P^{(j)}(x)
\right\|_2.
\]
Squaring and taking expectation gives
\begin{align*}
&
\left\|
g_{Z_j}(O;\eta)-g_{Z_j}(O;\eta_P)
\right\|_{P,2}^2
\\
&\qquad\le
C
\mathbbm{E}_P
\left[
\left\|
u^{(j)}(X)-u_P^{(j)}(x)
\right\|_2^2
\right]
\\
&\qquad=
C
\Bigg[
\|p_j-p_{P,j}\|_{P,2}^2
+
\sum_{z=0}^1
\left\{
\left\|
\mu^{Y,(j)}_z-\mu^{Y,(j)}_{P,z}
\right\|_{P,2}^2
+
\left\|
\mu^{D,(j)}_z-\mu^{D,(j)}_{P,z}
\right\|_{P,2}^2
\right\}
\Bigg].
\end{align*}
Thus, if every nuisance component has $L_2(P)$ error bounded by
$q_n$, then $\left\|g_{Z_j}(O;\eta)-g_{Z_j}(O;\eta_P)
\right\|_{P,2}
\lesssim q_n,$ and therefore
\[
\left\|
g^\perp_{\mathrm{COM}}(O;\eta)
-
g^\perp_{\mathrm{COM}}(O;\eta_P)
\right\|_{P,2}
\lesssim q_n.
\]
Therefore, provided $q_n\lesssim\zeta_n$,
\[
r_{n,P}
=
\sup_{\eta\in\mathcal{T}_n}
\left\{
\mathbbm{E}_P
\left[
\left|
g^\perp_{\mathrm{COM}}(O;\eta)
-
g^\perp_{\mathrm{COM}}(O;\eta_P)
\right|^2
\right]
\right\}^{1/2}
\lesssim
\zeta_n.
\]

\noindent\textbf{Step 3: Conditional Neyman orthogonality.}

We next show that $g_{Z_j}$ is conditionally Neyman orthogonal
for each $j\in\{1,2\}$. Fix $j$ and, temporarily, suppress the
subscripts $j$ and the argument $x$.

Let $\eta_t
=
(1-t)\eta_P+t\eta_1, t\in[0,1],$ for an arbitrary $\eta_1\in\mathcal{T}_n$.
Write $p_t
=
p_0+t h_p$, $\mu^Y_{t,z}
=
\mu^Y_{0,z}+t h^Y_z$ and $\mu^D_{t,z}
=
\mu^D_{0,z}+t h^D_z$, where the subscript $0$ denotes the true nuisance value under DGP $P$. Define $\Delta_Y(t)
=
\mu^Y_{t,1}-\mu^Y_{t,0},
\Delta_D(t)
=
\mu^D_{t,1}-\mu^D_{t,0},
h_{\Delta Y}
=
h^Y_1-h^Y_0,
h_{\Delta D}
=
h^D_1-h^D_0.$
Define
\begin{align*}
A_Y(t)
&=
\frac{p_0}{p_t}
\left(
\mu^Y_{0,1}-\mu^Y_{t,1}
\right)
-
\frac{1-p_0}{1-p_t}
\left(
\mu^Y_{0,0}-\mu^Y_{t,0}
\right),
\\
A_D(t)
&=
\frac{p_0}{p_t}
\left(
\mu^D_{0,1}-\mu^D_{t,1}
\right)
-
\frac{1-p_0}{1-p_t}
\left(
\mu^D_{0,0}-\mu^D_{t,0}
\right).
\end{align*}
Taking conditional expectation of $g_{Z_j}(O;\eta_t)$ under the
true distribution $P$ gives
\begin{align*}
m_j(t,x)
&:=
\mathbbm{E}_P
\left[
g_{Z_j}(O;\eta_t)
\mid X=x
\right]
\nonumber\\
&=
\frac{A_Y(t)}{\Delta_D(t)}
-
\frac{\Delta_Y(t)A_D(t)}
{\{\Delta_D(t)\}^2}
+
\frac{\Delta_Y(t)}
{\Delta_D(t)}.
\end{align*}

At $t=0$, $A_Y(0)=A_D(0)=0.$ Moreover, $A_Y'(0)
=
-h_{\Delta Y},
A_D'(0)
=
-h_{\Delta D}$.
Indeed, $\frac{p_0}{p_t}
\left(
\mu^Y_{0,1}-\mu^Y_{t,1}
\right)
=
-\frac{tp_0h^Y_1}{p_t},$ 
whose derivative at $t=0$ equals $-h^Y_1$, while $-\frac{1-p_0}{1-p_t}
\left(
\mu^Y_{0,0}-\mu^Y_{t,0}
\right)
=
\frac{t(1-p_0)h^Y_0}{1-p_t},$ whose derivative at $t=0$ equals $h^Y_0$. Hence $A_Y'(0)
=
-h^Y_1+h^Y_0
=
-h_{\Delta Y},$ and the same argument gives $A_D'(0)
=
-h_{\Delta D}.$

Furthermore,
\begin{align*}
\left.
\frac{d}{dt}
\frac{\Delta_Y(t)}{\Delta_D(t)}
\right|_{t=0}
&=
\frac{h_{\Delta Y}}{\Delta_D(0)}
-
\frac{
\Delta_Y(0)h_{\Delta D}
}{
\{\Delta_D(0)\}^2
}.
\end{align*}
Differentiating $m_j$ at $t=0$ and using
$A_Y(0)=A_D(0)=0$, we obtain
\begin{align*}
m_j'(0,x)
={}&
\frac{A_Y'(0)}{\Delta_D(0)}
-
\frac{
\Delta_Y(0)A_D'(0)
}{
\{\Delta_D(0)\}^2
}
+
\frac{h_{\Delta Y}}{\Delta_D(0)}
-
\frac{
\Delta_Y(0)h_{\Delta D}
}{
\{\Delta_D(0)\}^2
}
\\
={}&
-\frac{h_{\Delta Y}}{\Delta_D(0)}
+
\frac{
\Delta_Y(0)h_{\Delta D}
}{
\{\Delta_D(0)\}^2
}+
\frac{h_{\Delta Y}}{\Delta_D(0)}
-
\frac{
\Delta_Y(0)h_{\Delta D}
}{
\{\Delta_D(0)\}^2
}
\\
={}&0.
\end{align*}
Thus
\[
\left.
\frac{\partial}{\partial t}
\mathbbm{E}_P
\left[
g_{Z_j}(O;\eta_t)
\mid X
\right]
\right|_{t=0}
=
0.
\]
Since this holds for both $j=1$ and $j=2$,
\[
\left.
\frac{\partial}{\partial t}
\mathbbm{E}_P
\left[
g^\perp_{\mathrm{COM}}(O;\eta_t)
\mid X
\right]
\right|_{t=0}
=
0.
\]
Hence $g^\perp_{\mathrm{COM}}$ is conditionally Neyman orthogonal.

\noindent\textbf{Step 4: Bound on the second-order Gateaux derivative.} Since the second order Gateaux derivative formula is too complex, unlike example 1, here we verify the condition in a  different way. For fixed $j$ and $x$, define $u
=
\left(
\pi,
\mu^Y_1,
\mu^Y_0,
\mu^D_1,
\mu^D_0
\right)^\top$ and $F_{P,j,x}(u)
=
\mathbbm{E}_P
\left[
g_{Z_j}(O;u)
\mid X=x\right]$. $F_{P,j,x}$ is a rational function of the
coordinates of $u$. Its denominators consist only of powers of $\pi,
1-p,
\mu^D_1-\mu^D_0.$
By assumption, along the entire line segment joining
$u_P^{(j)}(x)$ and any $u^{(j)}_{1}(x)$ corresponding to
$\eta_1\in\mathcal{T}_n$,
these quantities are uniformly bounded away from zero.
All coordinates are also uniformly bounded. Consequently,
$F_{P,j,x}$ is twice continuously differentiable on this set and
there exists an absolute constant $C$ such that $\sup_{P,j,x,u}
\left\|
\nabla^2_uF_{P,j,x}(u)
\right\|
\le C.$

Let $h^{(j)}(X)
=
u^{(j)}_{1}(X)-u_P^{(j)}(x)$ 
and $u^{(j)}_{t}(X)
=
u_P^{(j)}(x)+t h^{(j)}(X).$ By the chain rule,
\[
\frac{d^2}{dt^2}
F_{P,j,X}
\left(
u_{j,t}(X)
\right)
=
h_j(X)^\top
\nabla^2_u
F_{P,j,X}
\left(
u_{j,t}(X)
\right)
h_j(X).
\]
Therefore,
\[
\left|
\frac{d^2}{dt^2}
F_{P,j,X}
\left(
u^{(j)}_{t}(X)
\right)
\right|
\le
C\|h^{(j)}(X)\|_2^2.
\]
Taking expectation gives
\begin{align*}
\mathbbm{E}_P
\left[
\left|
\frac{d^2}{dt^2}
F_{P,j,X}
\left(
u^{(j)}_{t}(X)
\right)
\right|
\right]
&\le
C
\mathbbm{E}_P
\left[
\|h_j(X)\|_2^2
\right]
\\
&=
C
\Bigg[
\|\pi^{(j)}_{1,j}-\pi^{(j)}_{P}\|_{P,2}^2
\\
&\qquad+
\sum_{z=0}^1
\left\{
\left\|
\mu^{Y,(j)}_{1,z}
-
\mu^{Y,(j)}_{P,z}
\right\|_{P,2}^2
+
\left\|
\mu^{D,(j)}_{1,z}
-
\mu^{D,(j)}_{P,z}
\right\|_{P,2}^2
\right\}
\Bigg]
\\
&\lesssim q_n^2.
\end{align*}
Since $g^\perp_{\mathrm{COM}}
=
g_{Z_1}-g_{Z_2},$ we conclude that $r'_{n,P}
\lesssim
q_n^2.$  Under the assumed product-rate condition $q_n^2
\lesssim
\frac{\zeta_n}{\sqrt{n}}$,
it follows that $r'_{n,P}
\lesssim
\frac{\zeta_n}{\sqrt{n}}.$ Therefore Assumption~3 holds for
$g=g^\perp_{\mathrm{COM}}$.
\end{proof}

{\color{black}

\section{Additional simulation results}
\label{section:additional simulation results}

\subsection{Additional simulation results with orthonormal Legendre basis and unstandardized tests}
\label{subsec:additional simulation}
We present the additional simulation results for standardized GP tests with orthonormal Legendre basis (Table \ref{tab:sim_legendre_std}), unstandardized GP tests with orthonormal Fourier basis (Table \ref{tab:sim_fourier_unstd}) and orthonormal Legendre basis (Table \ref{tab:sim_legendre_unstd}).

% 1. Standardized orthonormal Legendre basis
\begin{table}[htbp]
\centering
\tiny
\caption{%
\textbf{Panel A}: Proportion of times the mean exchangeability assumption was
rejected across different parameter settings, based on the fixed-dimension,
linear projection (LP) test, the generalized projection (GP) tests, and the
conditional independence test of \citet{racine2006testing}.
\textbf{Panel B}: Proportion of times the IV compatibility condition was rejected
across different parameter settings, based on the LP test and the GP tests.
For the GP test, results correspond to the standardized tests using the
orthonormal Legendre basis. $J^*$ denotes the number of basis functions for each
covariate. Scenario I corresponds to the null hypothesis case; Scenario II
corresponds to a weak nonlinear misalignment; Scenario III corresponds to a weak
linear misalignment; Scenario IV corresponds to a strong nonlinear misalignment;
Scenario V corresponds to a strong linear misalignment.
}
\label{tab:sim_legendre_std}
\renewcommand{\arraystretch}{0.85}

\resizebox{0.90\textwidth}{!}{%
\begin{tabular}{cclccccc}
\toprule
& \begin{tabular}[c]{@{}c@{}}Sample \\ size\end{tabular}
& \begin{tabular}[c]{@{}c@{}}Testing \\ method\end{tabular}
& \begin{tabular}[c]{@{}c@{}}Scenario \\ I\end{tabular}
& \begin{tabular}[c]{@{}c@{}}Scenario \\ II\end{tabular}
& \begin{tabular}[c]{@{}c@{}}Scenario \\ III\end{tabular}
& \begin{tabular}[c]{@{}c@{}}Scenario \\ IV\end{tabular}
& \begin{tabular}[c]{@{}c@{}}Scenario \\ V\end{tabular} \\
\midrule

\multirow{24}{*}{Panel A} & \multirow{6}{*}{250} & LP test & 0.069 & 0.089 & 0.246 & 0.233 & 0.592 \\
 &  & GP test ($J^*=4$) & 0.036 & 0.109 & 0.094 & 0.438 & 0.347 \\
 &  & GP test ($J^*=8$) & 0.016 & 0.050 & 0.040 & 0.227 & 0.181 \\
 &  & GP test ($J^*=16$) & 0.007 & 0.013 & 0.016 & 0.090 & 0.078 \\
 &  & GP test (Combined) & 0.014 & 0.061 & 0.055 & 0.314 & 0.250 \\
 &  & Racine & 0.092 & 0.206 & 0.367 & 0.833 & 0.909 \\

\cmidrule{2-8}

 & \multirow{6}{*}{500} & LP test & 0.062 & 0.059 & 0.428 & 0.393 & 0.922 \\
 &  & GP test ($J^*=4$) & 0.045 & 0.253 & 0.191 & 0.865 & 0.808 \\
 &  & GP test ($J^*=8$) & 0.032 & 0.151 & 0.101 & 0.731 & 0.631 \\
 &  & GP test ($J^*=16$) & 0.018 & 0.055 & 0.047 & 0.496 & 0.376 \\
 &  & GP test (Combined) & 0.034 & 0.185 & 0.133 & 0.812 & 0.717 \\
 &  & Racine & 0.081 & 0.332 & 0.610 & 0.990 & 0.996 \\

\cmidrule{2-8}

 & \multirow{6}{*}{1000} & LP test & 0.065 & 0.062 & 0.739 & 0.695 & 0.996 \\
 &  & GP test ($J^*=5$) & 0.058 & 0.613 & 0.396 & 0.996 & 0.993 \\
 &  & GP test ($J^*=10$) & 0.042 & 0.420 & 0.236 & 0.979 & 0.956 \\
 &  & GP test ($J^*=20$) & 0.019 & 0.226 & 0.124 & 0.917 & 0.845 \\
 &  & GP test (Combined) & 0.035 & 0.495 & 0.295 & 0.991 & 0.986 \\
 &  & Racine & 0.082 & 0.696 & 0.871 & 1.000 & 1.000 \\

\cmidrule{2-8}

 & \multirow{6}{*}{1500} & LP test & 0.061 & 0.057 & 0.911 & 0.892 & 1.000 \\
 &  & GP test ($J^*=5$) & 0.061 & 0.829 & 0.641 & 1.000 & 1.000 \\
 &  & GP test ($J^*=10$) & 0.042 & 0.656 & 0.453 & 1.000 & 0.996 \\
 &  & GP test ($J^*=20$) & 0.024 & 0.423 & 0.242 & 0.981 & 0.977 \\
 &  & GP test (Combined) & 0.040 & 0.768 & 0.528 & 1.000 & 0.998 \\
 &  & Racine & 0.079 & 0.897 & 0.968 & 1.000 & 1.000 \\

\midrule

\multirow{22}{*}{Panel B} & \multirow{5}{*}{1000} & LP test & 0.039 & 0.036 & 0.072 & 0.076 & 0.157 \\
 &  & GP test ($J^*=5$) & 0.052 & 0.080 & 0.063 & 0.177 & 0.133 \\
 &  & GP test ($J^*=10$) & 0.028 & 0.058 & 0.055 & 0.090 & 0.095 \\
 &  & GP test ($J^*=20$) & 0.028 & 0.029 & 0.029 & 0.044 & 0.052 \\
 &  & GP test (Combined) & 0.036 & 0.040 & 0.046 & 0.121 & 0.100 \\

\cmidrule{2-8}

 & \multirow{5}{*}{2000} & LP test & 0.048 & 0.042 & 0.123 & 0.126 & 0.410 \\
 &  & GP test ($J^*=5$) & 0.058 & 0.122 & 0.097 & 0.514 & 0.394 \\
 &  & GP test ($J^*=10$) & 0.058 & 0.089 & 0.087 & 0.372 & 0.270 \\
 &  & GP test ($J^*=20$) & 0.053 & 0.077 & 0.071 & 0.233 & 0.158 \\
 &  & GP test (Combined) & 0.055 & 0.100 & 0.083 & 0.443 & 0.315 \\

\cmidrule{2-8}

 & \multirow{6}{*}{3000} & LP test & 0.038 & 0.055 & 0.175 & 0.200 & 0.662 \\
 &  & GP test ($J^*=5$) & 0.058 & 0.192 & 0.146 & 0.764 & 0.645 \\
 &  & GP test ($J^*=10$) & 0.058 & 0.140 & 0.106 & 0.604 & 0.486 \\
 &  & GP test ($J^*=20$) & 0.032 & 0.087 & 0.082 & 0.438 & 0.316 \\
 &  & GP test ($J^*=40$) & 0.042 & 0.050 & 0.048 & 0.234 & 0.192 \\
 &  & GP test (Combined) & 0.040 & 0.146 & 0.110 & 0.680 & 0.542 \\

\cmidrule{2-8}

 & \multirow{6}{*}{5000} & LP test & 0.045 & 0.038 & 0.359 & 0.385 & 0.917 \\
 &  & GP test ($J^*=5$) & 0.046 & 0.326 & 0.225 & 0.977 & 0.903 \\
 &  & GP test ($J^*=10$) & 0.048 & 0.244 & 0.157 & 0.913 & 0.817 \\
 &  & GP test ($J^*=20$) & 0.057 & 0.170 & 0.092 & 0.780 & 0.652 \\
 &  & GP test ($J^*=40$) & 0.049 & 0.098 & 0.066 & 0.571 & 0.419 \\
 &  & GP test (Combined) & 0.047 & 0.258 & 0.158 & 0.951 & 0.852 \\

\bottomrule
\end{tabular}%
}
\end{table}

% 2. Unstandardized orthonormal Fourier basis
\begin{table}[htbp]
\centering
\tiny
\caption{%
\textbf{Panel A}: Proportion of times the mean exchangeability assumption was
rejected across different parameter settings, based on the fixed-dimension,
linear projection (LP) test, the generalized projection (GP) tests, and the
conditional independence test of \citet{racine2006testing}.
\textbf{Panel B}: Proportion of times the IV compatibility condition was rejected
across different parameter settings, based on the LP test and the GP tests.
For the GP test, results correspond to the unstandardized tests using the
orthonormal Fourier basis. $J^*$ denotes the number of basis functions for each
covariate. Scenario I corresponds to the null hypothesis case; Scenario II
corresponds to a weak nonlinear misalignment; Scenario III corresponds to a weak
linear misalignment; Scenario IV corresponds to a strong nonlinear misalignment;
Scenario V corresponds to a strong linear misalignment.
}
\label{tab:sim_fourier_unstd}
\renewcommand{\arraystretch}{0.85}

\resizebox{0.90\textwidth}{!}{%
\begin{tabular}{cclccccc}
\toprule
& \begin{tabular}[c]{@{}c@{}}Sample \\ size\end{tabular}
& \begin{tabular}[c]{@{}c@{}}Testing \\ method\end{tabular}
& \begin{tabular}[c]{@{}c@{}}Scenario \\ I\end{tabular}
& \begin{tabular}[c]{@{}c@{}}Scenario \\ II\end{tabular}
& \begin{tabular}[c]{@{}c@{}}Scenario \\ III\end{tabular}
& \begin{tabular}[c]{@{}c@{}}Scenario \\ IV\end{tabular}
& \begin{tabular}[c]{@{}c@{}}Scenario \\ V\end{tabular} \\
\midrule

\multirow{24}{*}{Panel A} & \multirow{6}{*}{250} & LP test & 0.069 & 0.089 & 0.246 & 0.233 & 0.592 \\
 &  & GP test ($J^*=4$) & 0.034 & 0.095 & 0.061 & 0.407 & 0.267 \\
 &  & GP test ($J^*=8$) & 0.013 & 0.041 & 0.037 & 0.238 & 0.145 \\
 &  & GP test ($J^*=16$) & 0.006 & 0.011 & 0.007 & 0.075 & 0.061 \\
 &  & GP test (Combined) & 0.011 & 0.030 & 0.021 & 0.220 & 0.127 \\
 &  & Racine & 0.092 & 0.206 & 0.367 & 0.833 & 0.909 \\

\cmidrule{2-8}

 & \multirow{6}{*}{500} & LP test & 0.062 & 0.059 & 0.428 & 0.393 & 0.922 \\
 &  & GP test ($J^*=4$) & 0.034 & 0.261 & 0.122 & 0.889 & 0.755 \\
 &  & GP test ($J^*=8$) & 0.036 & 0.139 & 0.080 & 0.779 & 0.597 \\
 &  & GP test ($J^*=16$) & 0.012 & 0.060 & 0.036 & 0.556 & 0.388 \\
 &  & GP test (Combined) & 0.012 & 0.126 & 0.054 & 0.799 & 0.568 \\
 &  & Racine & 0.081 & 0.332 & 0.610 & 0.990 & 0.996 \\

\cmidrule{2-8}

 & \multirow{6}{*}{1000} & LP test & 0.065 & 0.062 & 0.739 & 0.695 & 0.996 \\
 &  & GP test ($J^*=5$) & 0.051 & 0.626 & 0.306 & 0.999 & 0.994 \\
 &  & GP test ($J^*=10$) & 0.032 & 0.446 & 0.209 & 0.993 & 0.982 \\
 &  & GP test ($J^*=20$) & 0.012 & 0.235 & 0.114 & 0.961 & 0.890 \\
 &  & GP test (Combined) & 0.020 & 0.448 & 0.181 & 0.994 & 0.979 \\
 &  & Racine & 0.082 & 0.696 & 0.871 & 1.000 & 1.000 \\

\cmidrule{2-8}

 & \multirow{6}{*}{1500} & LP test & 0.061 & 0.057 & 0.911 & 0.892 & 1.000 \\
 &  & GP test ($J^*=5$) & 0.041 & 0.867 & 0.509 & 1.000 & 1.000 \\
 &  & GP test ($J^*=10$) & 0.035 & 0.708 & 0.415 & 1.000 & 0.998 \\
 &  & GP test ($J^*=20$) & 0.028 & 0.497 & 0.249 & 0.998 & 0.993 \\
 &  & GP test (Combined) & 0.021 & 0.742 & 0.363 & 1.000 & 1.000 \\
 &  & Racine & 0.079 & 0.897 & 0.968 & 1.000 & 1.000 \\

\midrule

\multirow{22}{*}{Panel B} & \multirow{5}{*}{1000} & LP test & 0.039 & 0.036 & 0.072 & 0.076 & 0.157 \\
 &  & GP test ($J^*=5$) & 0.047 & 0.056 & 0.057 & 0.162 & 0.100 \\
 &  & GP test ($J^*=10$) & 0.043 & 0.050 & 0.048 & 0.088 & 0.085 \\
 &  & GP test ($J^*=20$) & 0.034 & 0.036 & 0.027 & 0.053 & 0.060 \\
 &  & GP test (Combined) & 0.022 & 0.026 & 0.019 & 0.085 & 0.063 \\

\cmidrule{2-8}

 & \multirow{5}{*}{2000} & LP test & 0.048 & 0.042 & 0.123 & 0.126 & 0.410 \\
 &  & GP test ($J^*=5$) & 0.045 & 0.098 & 0.083 & 0.497 & 0.302 \\
 &  & GP test ($J^*=10$) & 0.046 & 0.074 & 0.062 & 0.370 & 0.258 \\
 &  & GP test ($J^*=20$) & 0.044 & 0.055 & 0.050 & 0.222 & 0.162 \\
 &  & GP test (Combined) & 0.025 & 0.060 & 0.044 & 0.350 & 0.218 \\

\cmidrule{2-8}

 & \multirow{6}{*}{3000} & LP test & 0.038 & 0.055 & 0.175 & 0.200 & 0.662 \\
 &  & GP test ($J^*=5$) & 0.042 & 0.185 & 0.104 & 0.732 & 0.530 \\
 &  & GP test ($J^*=10$) & 0.039 & 0.121 & 0.095 & 0.583 & 0.442 \\
 &  & GP test ($J^*=20$) & 0.029 & 0.085 & 0.077 & 0.423 & 0.303 \\
 &  & GP test ($J^*=40$) & 0.029 & 0.069 & 0.047 & 0.267 & 0.198 \\
 &  & GP test (Combined) & 0.022 & 0.086 & 0.053 & 0.576 & 0.382 \\

\cmidrule{2-8}

 & \multirow{6}{*}{5000} & LP test & 0.045 & 0.038 & 0.359 & 0.385 & 0.917 \\
 &  & GP test ($J^*=5$) & 0.035 & 0.298 & 0.148 & 0.960 & 0.832 \\
 &  & GP test ($J^*=10$) & 0.050 & 0.220 & 0.111 & 0.906 & 0.746 \\
 &  & GP test ($J^*=20$) & 0.039 & 0.174 & 0.092 & 0.769 & 0.620 \\
 &  & GP test ($J^*=40$) & 0.038 & 0.110 & 0.074 & 0.573 & 0.443 \\
 &  & GP test (Combined) & 0.023 & 0.191 & 0.093 & 0.907 & 0.741 \\

\bottomrule
\end{tabular}%
}
\end{table}

% 3. Unstandardized orthonormal Legendre basis
\begin{table}[htbp]
\centering
\tiny
\caption{%
\textbf{Panel A}: Proportion of times the mean exchangeability assumption was
rejected across different parameter settings, based on the fixed-dimension,
linear projection (LP) test, the generalized projection (GP) tests, and the
conditional independence test of \citet{racine2006testing}.
\textbf{Panel B}: Proportion of times the IV compatibility condition was rejected
across different parameter settings, based on the LP test and the GP tests.
For the GP test, results correspond to the unstandardized tests using the
orthonormal Legendre basis. $J^*$ denotes the number of basis functions for each
covariate. Scenario I corresponds to the null hypothesis case; Scenario II
corresponds to a weak nonlinear misalignment; Scenario III corresponds to a weak
linear misalignment; Scenario IV corresponds to a strong nonlinear misalignment;
Scenario V corresponds to a strong linear misalignment.
}
\label{tab:sim_legendre_unstd}
\renewcommand{\arraystretch}{0.85}

\resizebox{0.90\textwidth}{!}{%
\begin{tabular}{cclccccc}
\toprule
& \begin{tabular}[c]{@{}c@{}}Sample \\ size\end{tabular}
& \begin{tabular}[c]{@{}c@{}}Testing \\ method\end{tabular}
& \begin{tabular}[c]{@{}c@{}}Scenario \\ I\end{tabular}
& \begin{tabular}[c]{@{}c@{}}Scenario \\ II\end{tabular}
& \begin{tabular}[c]{@{}c@{}}Scenario \\ III\end{tabular}
& \begin{tabular}[c]{@{}c@{}}Scenario \\ IV\end{tabular}
& \begin{tabular}[c]{@{}c@{}}Scenario \\ V\end{tabular} \\
\midrule

\multirow{24}{*}{Panel A} & \multirow{6}{*}{250} & LP test & 0.069 & 0.089 & 0.246 & 0.233 & 0.592 \\
 &  & GP test ($J^*=4$) & 0.022 & 0.078 & 0.074 & 0.361 & 0.290 \\
 &  & GP test ($J^*=8$) & 0.012 & 0.029 & 0.026 & 0.182 & 0.143 \\
 &  & GP test ($J^*=16$) & 0.003 & 0.008 & 0.007 & 0.052 & 0.056 \\
 &  & GP test (Combined) & 0.002 & 0.021 & 0.018 & 0.181 & 0.133 \\
 &  & Racine & 0.092 & 0.206 & 0.367 & 0.833 & 0.909 \\

\cmidrule{2-8}

 & \multirow{6}{*}{500} & LP test & 0.062 & 0.059 & 0.428 & 0.393 & 0.922 \\
 &  & GP test ($J^*=4$) & 0.036 & 0.205 & 0.152 & 0.836 & 0.757 \\
 &  & GP test ($J^*=8$) & 0.026 & 0.103 & 0.076 & 0.675 & 0.564 \\
 &  & GP test ($J^*=16$) & 0.009 & 0.040 & 0.035 & 0.423 & 0.323 \\
 &  & GP test (Combined) & 0.014 & 0.096 & 0.056 & 0.683 & 0.584 \\
 &  & Racine & 0.081 & 0.332 & 0.610 & 0.990 & 0.996 \\

\cmidrule{2-8}

 & \multirow{6}{*}{1000} & LP test & 0.065 & 0.062 & 0.739 & 0.695 & 0.996 \\
 &  & GP test ($J^*=5$) & 0.040 & 0.544 & 0.330 & 0.993 & 0.988 \\
 &  & GP test ($J^*=10$) & 0.030 & 0.352 & 0.177 & 0.972 & 0.933 \\
 &  & GP test ($J^*=20$) & 0.012 & 0.168 & 0.091 & 0.887 & 0.800 \\
 &  & GP test (Combined) & 0.018 & 0.329 & 0.192 & 0.976 & 0.957 \\
 &  & Racine & 0.082 & 0.696 & 0.871 & 1.000 & 1.000 \\

\cmidrule{2-8}

 & \multirow{6}{*}{1500} & LP test & 0.061 & 0.057 & 0.911 & 0.892 & 1.000 \\
 &  & GP test ($J^*=5$) & 0.046 & 0.798 & 0.584 & 1.000 & 0.998 \\
 &  & GP test ($J^*=10$) & 0.023 & 0.601 & 0.398 & 0.998 & 0.991 \\
 &  & GP test ($J^*=20$) & 0.017 & 0.361 & 0.175 & 0.975 & 0.968 \\
 &  & GP test (Combined) & 0.013 & 0.634 & 0.367 & 0.998 & 0.996 \\
 &  & Racine & 0.079 & 0.897 & 0.968 & 1.000 & 1.000 \\

\midrule

\multirow{22}{*}{Panel B} & \multirow{5}{*}{1000} & LP test & 0.039 & 0.036 & 0.072 & 0.076 & 0.157 \\
 &  & GP test ($J^*=5$) & 0.041 & 0.052 & 0.043 & 0.140 & 0.105 \\
 &  & GP test ($J^*=10$) & 0.019 & 0.037 & 0.039 & 0.060 & 0.068 \\
 &  & GP test ($J^*=20$) & 0.020 & 0.016 & 0.018 & 0.024 & 0.036 \\
 &  & GP test (Combined) & 0.016 & 0.017 & 0.019 & 0.062 & 0.052 \\

\cmidrule{2-8}

 & \multirow{5}{*}{2000} & LP test & 0.048 & 0.042 & 0.123 & 0.126 & 0.410 \\
 &  & GP test ($J^*=5$) & 0.040 & 0.088 & 0.079 & 0.464 & 0.334 \\
 &  & GP test ($J^*=10$) & 0.045 & 0.068 & 0.065 & 0.337 & 0.233 \\
 &  & GP test ($J^*=20$) & 0.036 & 0.063 & 0.052 & 0.189 & 0.128 \\
 &  & GP test (Combined) & 0.029 & 0.054 & 0.037 & 0.314 & 0.212 \\

\cmidrule{2-8}

 & \multirow{6}{*}{3000} & LP test & 0.038 & 0.055 & 0.175 & 0.200 & 0.662 \\
 &  & GP test ($J^*=5$) & 0.041 & 0.169 & 0.110 & 0.723 & 0.599 \\
 &  & GP test ($J^*=10$) & 0.038 & 0.124 & 0.085 & 0.570 & 0.439 \\
 &  & GP test ($J^*=20$) & 0.028 & 0.063 & 0.070 & 0.405 & 0.273 \\
 &  & GP test ($J^*=40$) & 0.034 & 0.040 & 0.032 & 0.195 & 0.158 \\
 &  & GP test (Combined) & 0.017 & 0.088 & 0.055 & 0.546 & 0.403 \\

\cmidrule{2-8}

 & \multirow{6}{*}{5000} & LP test & 0.045 & 0.038 & 0.359 & 0.385 & 0.917 \\
 &  & GP test ($J^*=5$) & 0.032 & 0.283 & 0.188 & 0.967 & 0.889 \\
 &  & GP test ($J^*=10$) & 0.038 & 0.205 & 0.128 & 0.897 & 0.778 \\
 &  & GP test ($J^*=20$) & 0.046 & 0.149 & 0.077 & 0.755 & 0.628 \\
 &  & GP test ($J^*=40$) & 0.041 & 0.085 & 0.052 & 0.539 & 0.387 \\
 &  & GP test (Combined) & 0.030 & 0.171 & 0.089 & 0.899 & 0.786 \\

\bottomrule
\end{tabular}%
}
\end{table}

\subsection{Simulation settings with higher-dimensional covariates}
\label{subsec:high-dimensional simulation}

We consider new simulation settings with higher dimensional covariates. These settings follow the data-generating processes in Sections~\ref{subsec:simulation mean exchangeability} and \ref{subsec:simulation IV}, but augment the covariate vector to include ten observed covariates. Specifically, $X_1,\ldots,X_5$ are independently generated from $\mathrm{Uniform}[-1,1]$, and $X_6,\ldots,X_{10}$ are independently generated from $\mathrm{Bernoulli}(0.5)$. For the high-dimensional mean exchangeability simulation, the source indicator is generated according to
\[
P(S=1\mid X)=\mathrm{expit}(X_1-X_2+0.3X_7-0.3X_9),
\]
and the treatment assignment mechanism is
\[
P(A=1\mid X,S)=S\mathrm{expit}(1.5X_1-0.5X_2)+(1-S)\mathrm{expit}(X_1+0.5X_2).
\]
The potential outcomes are generated as
\begin{align*}
Y(0)
&=\mathbbm{1}\{S=0\}\{X_1+X_2+\mathrm{expit}(X_1)+X_6-X_7\}\\
&\quad+\mathbbm{1}\{S=1\}\{X_1+X_2+\mathrm{expit}(X_1)+X_6-X_7\\
&\quad+\alpha_1[\cos(\pi X_3)+\cos(\pi X_4)]+\alpha_2(X_1+X_2)\}
+0.5\epsilon,\\
Y(1)&=Y(0)+2X_1-2X_2,
\end{align*}
where $\epsilon\sim N(0,1)$, and the observed outcome is $Y=AY(1)+(1-A)Y(0)$. The five scenarios are indexed by $(\alpha_1,\alpha_2)\in\{(0,0),(0.2,0),(0,0.2),(0.4,0.2),(0.2,0.4)\}$, with the same interpretation as in Section~\ref{subsec:simulation mean exchangeability}. The sample sizes are $N=250,500,1000$, and $1500$.

For the high-dimensional IV compatibility simulation, the two instruments are generated independently conditional on $X$ according to
\begin{align*}
P(Z_1=1\mid X)&=\mathrm{expit}\{0.5+0.5\mathbbm{1}(X_1>0)-0.5\mathbbm{1}(X_2>0)+0.3X_7-0.3X_9\},\\
P(Z_2=1\mid X)&=\mathrm{expit}\{0.5+0.5\mathbbm{1}(X_1>0)+0.5\mathbbm{1}(X_2>0)+0.3X_7-0.3X_9\}.
\end{align*}
Let $X^*_j=\mathbbm{1}(X_j>0)$ for $j=1,2$, and generate $U\sim N(-0.3,0.3^2)$. Principal stratum membership is generated from the same multinomial model as in Section~\ref{subsec:simulation IV}, with
\begin{align*}
g_{\mathrm{ANT}}&=\exp\{1-X_2^*+0.3\mathbbm{1}(U>0)\},\\
g_{\mathrm{SCO1}}&=\exp\{3.5+0.5X_1^*+X_2^*+0.3\mathbbm{1}(U>0)\},\\
g_{\mathrm{SCO2}}&=\exp\{3.5+0.5X_1^*+X_2^*+0.3\mathbbm{1}(U>0)\},\\
g_{\mathrm{RCO}}&=\exp\{2+X_1^*+X_2^*+0.3\mathbbm{1}(U>0)\},\\
g_{\mathrm{ECO}}&=\exp\{2+X_1^*+0.5X_2^*+0.3\mathbbm{1}(U>0)\}.
\end{align*}
The observed treatment $D$ is then determined by the realized instruments and the principal stratum. The potential outcomes are
\begin{align*}
Y(0)&=1+X_1+X_2+0.3X_6-0.3X_7+U+\epsilon,\\
Y(1)&=\mathbbm{1}\{S=\mathrm{ANT}\}(1+X_1+X_2+U)\\
&\quad+\mathbbm{1}\{S\in\{\mathrm{SCO1},\mathrm{RCO},\mathrm{ECO}\}\}(1-X_1+X_2+U)\\
&\quad+\mathbbm{1}\{S=\mathrm{SCO2}\}\{1-X_1+X_2+U+\beta_1[\cos(\pi X_3)+\cos(\pi X_4)]\\
&\quad+\beta_2(X_1+X_2)\}+\epsilon,
\end{align*}
where $\epsilon\sim N(0,1)$, and $Y=(1-D)Y(0)+DY(1)$. The five scenarios are indexed by $(\beta_1,\beta_2)\in\{(0,0),(0.3,0),(0,0.3),(0.6,0.3),(0.3,0.6)\}$, where the first scenario satisfies the IV compatibility condition and the remaining scenarios introduce nonlinear, linear, or mixed violations. The sample sizes are $N=1000,2000,3000$, and $5000$.

The simulation results for standardized test are shown in Table \ref{tab:sim_high_fourier_std} (with orthonormal Fourier basis) and Table \ref{tab:sim_high_legendre_std} (with orthonormal Legendre basis). The  simulation results for unstandardized test are shown in \ref{tab:sim_high_fourier_unstd} (with orthonormal Fourier basis) and \ref{tab:sim_high_legendre_unstd} (with orthonormal Legendre basis).

\subsection{Computation time}
\label{subsec:computation time}
The computation time is reported in Table \ref{tab:method-timing} and Table \ref{tab:nuisance-timing}.
% Requires \usepackage{booktabs}. The two original table labels are preserved.

% 4. High-dimensional setting: standardized orthonormal Fourier basis
\begin{table}[htbp]
\centering
\tiny
\caption{%
\textbf{Panel A}: Proportion of times the mean exchangeability assumption was
rejected across different parameter settings in the high-dimensional setting, based on the fixed-dimension,
linear projection (LP) test, the generalized projection (GP) tests, and the
conditional independence test of \citet{racine2006testing}.
\textbf{Panel B}: Proportion of times the IV compatibility condition was rejected
across different parameter settings in the high-dimensional setting, based on the LP test and the GP tests.
For the GP test, results correspond to the standardized tests using the
orthonormal Fourier basis. $J^*$ denotes the number of basis functions for each
covariate. Scenario I corresponds to the null hypothesis case; Scenario II
corresponds to a weak nonlinear misalignment; Scenario III corresponds to a weak
linear misalignment; Scenario IV corresponds to a strong nonlinear misalignment;
Scenario V corresponds to a strong linear misalignment.
}
\label{tab:sim_high_fourier_std}
\renewcommand{\arraystretch}{0.85}

\resizebox{0.95\textwidth}{!}{%
\begin{tabular}{cclccccc}
\toprule
& \begin{tabular}[c]{@{}c@{}}Sample \\ size\end{tabular}
& \begin{tabular}[c]{@{}c@{}}Testing \\ method\end{tabular}
& \begin{tabular}[c]{@{}c@{}}Scenario \\ I\end{tabular}
& \begin{tabular}[c]{@{}c@{}}Scenario \\ II\end{tabular}
& \begin{tabular}[c]{@{}c@{}}Scenario \\ III\end{tabular}
& \begin{tabular}[c]{@{}c@{}}Scenario \\ IV\end{tabular}
& \begin{tabular}[c]{@{}c@{}}Scenario \\ V\end{tabular} \\
\midrule

\multirow{24}{*}{Panel A} & \multirow{6}{*}{250} & LP test & 0.018 & 0.022 & 0.046 & 0.034 & 0.096 \\
 &  & GP test ($J^*=4$) & 0.007 & 0.006 & 0.011 & 0.010 & 0.011 \\
 &  & GP test ($J^*=8$) & 0.002 & 0.002 & 0.002 & 0.000 & 0.002 \\
 &  & GP test ($J^*=16$) & 0.000 & 0.000 & 0.000 & 0.000 & 0.000 \\
 &  & GP test (Combined) & 0.001 & 0.003 & 0.004 & 0.002 & 0.003 \\
 &  & Racine & 0.172 & 0.163 & 0.331 & 0.330 & 0.654 \\

\cmidrule{2-8}

 & \multirow{6}{*}{500} & LP test & 0.063 & 0.055 & 0.200 & 0.159 & 0.629 \\
 &  & GP test ($J^*=4$) & 0.023 & 0.056 & 0.041 & 0.224 & 0.124 \\
 &  & GP test ($J^*=8$) & 0.007 & 0.021 & 0.016 & 0.100 & 0.049 \\
 &  & GP test ($J^*=16$) & 0.001 & 0.006 & 0.003 & 0.028 & 0.010 \\
 &  & GP test (Combined) & 0.011 & 0.019 & 0.019 & 0.124 & 0.068 \\
 &  & Racine & 0.137 & 0.152 & 0.392 & 0.386 & 0.905 \\

\cmidrule{2-8}

 & \multirow{6}{*}{1000} & LP test & 0.066 & 0.077 & 0.452 & 0.364 & 0.974 \\
 &  & GP test ($J^*=5$) & 0.038 & 0.204 & 0.082 & 0.871 & 0.689 \\
 &  & GP test ($J^*=10$) & 0.013 & 0.097 & 0.046 & 0.685 & 0.522 \\
 &  & GP test ($J^*=20$) & 0.006 & 0.025 & 0.016 & 0.351 & 0.239 \\
 &  & GP test (Combined) & 0.016 & 0.114 & 0.050 & 0.774 & 0.560 \\
 &  & Racine & 0.112 & 0.108 & 0.613 & 0.807 & 0.998 \\

\cmidrule{2-8}

 & \multirow{6}{*}{1500} & LP test & 0.076 & 0.064 & 0.690 & 0.594 & 1.000 \\
 &  & GP test ($J^*=5$) & 0.055 & 0.483 & 0.179 & 0.992 & 0.970 \\
 &  & GP test ($J^*=10$) & 0.036 & 0.287 & 0.127 & 0.970 & 0.912 \\
 &  & GP test ($J^*=20$) & 0.014 & 0.134 & 0.053 & 0.842 & 0.725 \\
 &  & GP test (Combined) & 0.030 & 0.347 & 0.114 & 0.981 & 0.934 \\
 &  & Racine & 0.086 & 0.106 & 0.794 & 0.704 & 1.000 \\

\midrule

\multirow{22}{*}{Panel B} & \multirow{5}{*}{1000} & LP test & 0.008 & 0.002 & 0.004 & 0.001 & 0.005 \\
 &  & GP test ($J^*=5$) & 0.003 & 0.001 & 0.003 & 0.001 & 0.007 \\
 &  & GP test ($J^*=10$) & 0.001 & 0.001 & 0.002 & 0.000 & 0.002 \\
 &  & GP test ($J^*=20$) & 0.000 & 0.000 & 0.000 & 0.001 & 0.001 \\
 &  & GP test (Combined) & 0.001 & 0.001 & 0.003 & 0.000 & 0.004 \\

\cmidrule{2-8}

 & \multirow{5}{*}{2000} & LP test & 0.010 & 0.019 & 0.028 & 0.020 & 0.071 \\
 &  & GP test ($J^*=5$) & 0.031 & 0.038 & 0.027 & 0.085 & 0.060 \\
 &  & GP test ($J^*=10$) & 0.025 & 0.021 & 0.027 & 0.065 & 0.039 \\
 &  & GP test ($J^*=20$) & 0.014 & 0.014 & 0.014 & 0.036 & 0.020 \\
 &  & GP test (Combined) & 0.021 & 0.020 & 0.026 & 0.059 & 0.032 \\

\cmidrule{2-8}

 & \multirow{6}{*}{3000} & LP test & 0.019 & 0.023 & 0.050 & 0.056 & 0.203 \\
 &  & GP test ($J^*=5$) & 0.040 & 0.063 & 0.051 & 0.272 & 0.160 \\
 &  & GP test ($J^*=10$) & 0.042 & 0.053 & 0.047 & 0.192 & 0.135 \\
 &  & GP test ($J^*=20$) & 0.033 & 0.039 & 0.040 & 0.112 & 0.089 \\
 &  & GP test ($J^*=40$) & 0.019 & 0.023 & 0.019 & 0.074 & 0.041 \\
 &  & GP test (Combined) & 0.037 & 0.043 & 0.033 & 0.179 & 0.118 \\

\cmidrule{2-8}

 & \multirow{6}{*}{5000} & LP test & 0.030 & 0.031 & 0.132 & 0.114 & 0.582 \\
 &  & GP test ($J^*=5$) & 0.055 & 0.132 & 0.089 & 0.674 & 0.439 \\
 &  & GP test ($J^*=10$) & 0.044 & 0.130 & 0.077 & 0.560 & 0.348 \\
 &  & GP test ($J^*=20$) & 0.043 & 0.082 & 0.065 & 0.373 & 0.253 \\
 &  & GP test ($J^*=40$) & 0.040 & 0.055 & 0.050 & 0.237 & 0.149 \\
 &  & GP test (Combined) & 0.035 & 0.099 & 0.069 & 0.549 & 0.331 \\

\bottomrule
\end{tabular}%
}
\end{table}

% 5. High-dimensional setting: standardized orthonormal Legendre basis
\begin{table}[htbp]
\centering
\tiny
\caption{%
\textbf{Panel A}: Proportion of times the mean exchangeability assumption was
rejected across different parameter settings in the high-dimensional setting, based on the fixed-dimension,
linear projection (LP) test, the generalized projection (GP) tests, and the
conditional independence test of \citet{racine2006testing}.
\textbf{Panel B}: Proportion of times the IV compatibility condition was rejected
across different parameter settings in the high-dimensional setting, based on the LP test and the GP tests.
For the GP test, results correspond to the standardized tests using the
orthonormal Legendre basis. $J^*$ denotes the number of basis functions for each
covariate. Scenario I corresponds to the null hypothesis case; Scenario II
corresponds to a weak nonlinear misalignment; Scenario III corresponds to a weak
linear misalignment; Scenario IV corresponds to a strong nonlinear misalignment;
Scenario V corresponds to a strong linear misalignment.
}
\label{tab:sim_high_legendre_std}
\renewcommand{\arraystretch}{0.85}

\resizebox{0.90\textwidth}{!}{%
\begin{tabular}{cclccccc}
\toprule
& \begin{tabular}[c]{@{}c@{}}Sample \\ size\end{tabular}
& \begin{tabular}[c]{@{}c@{}}Testing \\ method\end{tabular}
& \begin{tabular}[c]{@{}c@{}}Scenario \\ I\end{tabular}
& \begin{tabular}[c]{@{}c@{}}Scenario \\ II\end{tabular}
& \begin{tabular}[c]{@{}c@{}}Scenario \\ III\end{tabular}
& \begin{tabular}[c]{@{}c@{}}Scenario \\ IV\end{tabular}
& \begin{tabular}[c]{@{}c@{}}Scenario \\ V\end{tabular} \\
\midrule

\multirow{24}{*}{Panel A} & \multirow{6}{*}{250} & LP test & 0.018 & 0.022 & 0.046 & 0.034 & 0.096 \\
 &  & GP test ($J^*=4$) & 0.006 & 0.005 & 0.009 & 0.008 & 0.015 \\
 &  & GP test ($J^*=8$) & 0.002 & 0.000 & 0.005 & 0.003 & 0.003 \\
 &  & GP test ($J^*=16$) & 0.000 & 0.000 & 0.000 & 0.000 & 0.000 \\
 &  & GP test (Combined) & 0.001 & 0.002 & 0.005 & 0.001 & 0.002 \\
 &  & Racine & 0.172 & 0.163 & 0.331 & 0.330 & 0.654 \\

\cmidrule{2-8}

 & \multirow{6}{*}{500} & LP test & 0.063 & 0.055 & 0.200 & 0.159 & 0.629 \\
 &  & GP test ($J^*=4$) & 0.021 & 0.045 & 0.032 & 0.212 & 0.144 \\
 &  & GP test ($J^*=8$) & 0.007 & 0.014 & 0.021 & 0.088 & 0.055 \\
 &  & GP test ($J^*=16$) & 0.001 & 0.005 & 0.003 & 0.022 & 0.009 \\
 &  & GP test (Combined) & 0.010 & 0.019 & 0.018 & 0.122 & 0.081 \\
 &  & Racine & 0.137 & 0.152 & 0.392 & 0.386 & 0.905 \\

\cmidrule{2-8}

 & \multirow{6}{*}{1000} & LP test & 0.066 & 0.077 & 0.452 & 0.364 & 0.974 \\
 &  & GP test ($J^*=5$) & 0.029 & 0.167 & 0.099 & 0.846 & 0.741 \\
 &  & GP test ($J^*=10$) & 0.016 & 0.073 & 0.047 & 0.627 & 0.506 \\
 &  & GP test ($J^*=20$) & 0.002 & 0.024 & 0.013 & 0.297 & 0.211 \\
 &  & GP test (Combined) & 0.013 & 0.099 & 0.051 & 0.742 & 0.623 \\
 &  & Racine & 0.112 & 0.108 & 0.613 & 0.807 & 0.998 \\

\cmidrule{2-8}

 & \multirow{6}{*}{1500} & LP test & 0.076 & 0.064 & 0.690 & 0.594 & 1.000 \\
 &  & GP test ($J^*=5$) & 0.053 & 0.433 & 0.225 & 0.985 & 0.977 \\
 &  & GP test ($J^*=10$) & 0.026 & 0.258 & 0.120 & 0.944 & 0.902 \\
 &  & GP test ($J^*=20$) & 0.005 & 0.099 & 0.046 & 0.783 & 0.666 \\
 &  & GP test (Combined) & 0.027 & 0.307 & 0.133 & 0.968 & 0.942 \\
 &  & Racine & 0.086 & 0.106 & 0.794 & 0.704 & 1.000 \\

\midrule

\multirow{22}{*}{Panel B} & \multirow{5}{*}{1000} & LP test & 0.008 & 0.002 & 0.004 & 0.001 & 0.005 \\
 &  & GP test ($J^*=5$) & 0.001 & 0.002 & 0.004 & 0.003 & 0.007 \\
 &  & GP test ($J^*=10$) & 0.002 & 0.001 & 0.002 & 0.000 & 0.003 \\
 &  & GP test ($J^*=20$) & 0.001 & 0.000 & 0.000 & 0.000 & 0.000 \\
 &  & GP test (Combined) & 0.001 & 0.001 & 0.002 & 0.001 & 0.003 \\

\cmidrule{2-8}

 & \multirow{5}{*}{2000} & LP test & 0.010 & 0.019 & 0.028 & 0.020 & 0.071 \\
 &  & GP test ($J^*=5$) & 0.028 & 0.035 & 0.026 & 0.072 & 0.058 \\
 &  & GP test ($J^*=10$) & 0.016 & 0.020 & 0.023 & 0.048 & 0.035 \\
 &  & GP test ($J^*=20$) & 0.010 & 0.009 & 0.007 & 0.026 & 0.012 \\
 &  & GP test (Combined) & 0.016 & 0.020 & 0.015 & 0.049 & 0.036 \\

\cmidrule{2-8}

 & \multirow{6}{*}{3000} & LP test & 0.019 & 0.023 & 0.050 & 0.056 & 0.203 \\
 &  & GP test ($J^*=5$) & 0.043 & 0.064 & 0.061 & 0.263 & 0.170 \\
 &  & GP test ($J^*=10$) & 0.036 & 0.045 & 0.045 & 0.179 & 0.122 \\
 &  & GP test ($J^*=20$) & 0.024 & 0.040 & 0.024 & 0.108 & 0.068 \\
 &  & GP test ($J^*=40$) & 0.019 & 0.015 & 0.013 & 0.049 & 0.022 \\
 &  & GP test (Combined) & 0.032 & 0.043 & 0.030 & 0.155 & 0.100 \\

\cmidrule{2-8}

 & \multirow{6}{*}{5000} & LP test & 0.030 & 0.031 & 0.132 & 0.114 & 0.582 \\
 &  & GP test ($J^*=5$) & 0.042 & 0.120 & 0.094 & 0.678 & 0.536 \\
 &  & GP test ($J^*=10$) & 0.044 & 0.111 & 0.073 & 0.559 & 0.374 \\
 &  & GP test ($J^*=20$) & 0.040 & 0.077 & 0.048 & 0.375 & 0.239 \\
 &  & GP test ($J^*=40$) & 0.026 & 0.050 & 0.042 & 0.190 & 0.128 \\
 &  & GP test (Combined) & 0.031 & 0.087 & 0.060 & 0.553 & 0.368 \\

\bottomrule
\end{tabular}%
}
\end{table}

% 6. High-dimensional setting: unstandardized orthonormal Fourier basis
\begin{table}[htbp]
\centering
\tiny
\caption{%
\textbf{Panel A}: Proportion of times the mean exchangeability assumption was
rejected across different parameter settings in the high-dimensional setting, based on the fixed-dimension,
linear projection (LP) test, the generalized projection (GP) tests, and the
conditional independence test of \citet{racine2006testing}.
\textbf{Panel B}: Proportion of times the IV compatibility condition was rejected
across different parameter settings in the high-dimensional setting, based on the LP test and the GP tests.
For the GP test, results correspond to the unstandardized tests using the
orthonormal Fourier basis. $J^*$ denotes the number of basis functions for each
covariate. Scenario I corresponds to the null hypothesis case; Scenario II
corresponds to a weak nonlinear misalignment; Scenario III corresponds to a weak
linear misalignment; Scenario IV corresponds to a strong nonlinear misalignment;
Scenario V corresponds to a strong linear misalignment.
}
\label{tab:sim_high_fourier_unstd}
\renewcommand{\arraystretch}{0.85}

\resizebox{0.90\textwidth}{!}{%
\begin{tabular}{cclccccc}
\toprule
& \begin{tabular}[c]{@{}c@{}}Sample \\ size\end{tabular}
& \begin{tabular}[c]{@{}c@{}}Testing \\ method\end{tabular}
& \begin{tabular}[c]{@{}c@{}}Scenario \\ I\end{tabular}
& \begin{tabular}[c]{@{}c@{}}Scenario \\ II\end{tabular}
& \begin{tabular}[c]{@{}c@{}}Scenario \\ III\end{tabular}
& \begin{tabular}[c]{@{}c@{}}Scenario \\ IV\end{tabular}
& \begin{tabular}[c]{@{}c@{}}Scenario \\ V\end{tabular} \\
\midrule

\multirow{24}{*}{Panel A} & \multirow{6}{*}{250} & LP test & 0.018 & 0.022 & 0.046 & 0.034 & 0.096 \\
 &  & GP test ($J^*=4$) & 0.001 & 0.003 & 0.007 & 0.002 & 0.006 \\
 &  & GP test ($J^*=8$) & 0.001 & 0.000 & 0.001 & 0.000 & 0.001 \\
 &  & GP test ($J^*=16$) & 0.000 & 0.000 & 0.000 & 0.000 & 0.000 \\
 &  & GP test (Combined) & 0.000 & 0.000 & 0.000 & 0.000 & 0.000 \\
 &  & Racine & 0.172 & 0.163 & 0.331 & 0.330 & 0.654 \\

\cmidrule{2-8}

 & \multirow{6}{*}{500} & LP test & 0.063 & 0.055 & 0.200 & 0.159 & 0.629 \\
 &  & GP test ($J^*=4$) & 0.018 & 0.034 & 0.027 & 0.171 & 0.081 \\
 &  & GP test ($J^*=8$) & 0.002 & 0.013 & 0.011 & 0.064 & 0.034 \\
 &  & GP test ($J^*=16$) & 0.000 & 0.002 & 0.000 & 0.018 & 0.008 \\
 &  & GP test (Combined) & 0.001 & 0.007 & 0.011 & 0.050 & 0.021 \\
 &  & Racine & 0.137 & 0.152 & 0.392 & 0.386 & 0.905 \\

\cmidrule{2-8}

 & \multirow{6}{*}{1000} & LP test & 0.066 & 0.077 & 0.452 & 0.364 & 0.974 \\
 &  & GP test ($J^*=5$) & 0.026 & 0.148 & 0.063 & 0.830 & 0.623 \\
 &  & GP test ($J^*=10$) & 0.010 & 0.069 & 0.038 & 0.607 & 0.442 \\
 &  & GP test ($J^*=20$) & 0.004 & 0.016 & 0.010 & 0.263 & 0.192 \\
 &  & GP test (Combined) & 0.004 & 0.051 & 0.023 & 0.614 & 0.387 \\
 &  & Racine & 0.112 & 0.108 & 0.613 & 0.807 & 0.998 \\

\cmidrule{2-8}

 & \multirow{6}{*}{1500} & LP test & 0.076 & 0.064 & 0.690 & 0.594 & 1.000 \\
 &  & GP test ($J^*=5$) & 0.035 & 0.406 & 0.138 & 0.983 & 0.950 \\
 &  & GP test ($J^*=10$) & 0.026 & 0.235 & 0.096 & 0.954 & 0.882 \\
 &  & GP test ($J^*=20$) & 0.009 & 0.105 & 0.034 & 0.799 & 0.663 \\
 &  & GP test (Combined) & 0.011 & 0.197 & 0.049 & 0.958 & 0.854 \\
 &  & Racine & 0.086 & 0.106 & 0.794 & 0.704 & 1.000 \\

\midrule

\multirow{22}{*}{Panel B} & \multirow{5}{*}{1000} & LP test & 0.008 & 0.002 & 0.004 & 0.001 & 0.005 \\
 &  & GP test ($J^*=5$) & 0.002 & 0.001 & 0.003 & 0.001 & 0.004 \\
 &  & GP test ($J^*=10$) & 0.001 & 0.000 & 0.001 & 0.000 & 0.000 \\
 &  & GP test ($J^*=20$) & 0.000 & 0.000 & 0.000 & 0.000 & 0.000 \\
 &  & GP test (Combined) & 0.000 & 0.000 & 0.000 & 0.000 & 0.000 \\

\cmidrule{2-8}

 & \multirow{5}{*}{2000} & LP test & 0.010 & 0.019 & 0.028 & 0.020 & 0.071 \\
 &  & GP test ($J^*=5$) & 0.024 & 0.028 & 0.022 & 0.065 & 0.040 \\
 &  & GP test ($J^*=10$) & 0.015 & 0.014 & 0.022 & 0.047 & 0.028 \\
 &  & GP test ($J^*=20$) & 0.013 & 0.012 & 0.009 & 0.028 & 0.014 \\
 &  & GP test (Combined) & 0.011 & 0.006 & 0.009 & 0.024 & 0.011 \\

\cmidrule{2-8}

 & \multirow{6}{*}{3000} & LP test & 0.019 & 0.023 & 0.050 & 0.056 & 0.203 \\
 &  & GP test ($J^*=5$) & 0.033 & 0.049 & 0.040 & 0.218 & 0.130 \\
 &  & GP test ($J^*=10$) & 0.031 & 0.042 & 0.042 & 0.164 & 0.112 \\
 &  & GP test ($J^*=20$) & 0.027 & 0.034 & 0.030 & 0.099 & 0.076 \\
 &  & GP test ($J^*=40$) & 0.015 & 0.020 & 0.015 & 0.061 & 0.033 \\
 &  & GP test (Combined) & 0.018 & 0.020 & 0.014 & 0.095 & 0.051 \\

\cmidrule{2-8}

 & \multirow{6}{*}{5000} & LP test & 0.030 & 0.031 & 0.132 & 0.114 & 0.582 \\
 &  & GP test ($J^*=5$) & 0.041 & 0.107 & 0.075 & 0.624 & 0.382 \\
 &  & GP test ($J^*=10$) & 0.033 & 0.107 & 0.068 & 0.518 & 0.311 \\
 &  & GP test ($J^*=20$) & 0.035 & 0.068 & 0.053 & 0.338 & 0.221 \\
 &  & GP test ($J^*=40$) & 0.030 & 0.046 & 0.041 & 0.211 & 0.136 \\
 &  & GP test (Combined) & 0.021 & 0.050 & 0.035 & 0.428 & 0.217 \\

\bottomrule
\end{tabular}%
}
\end{table}

% 7. High-dimensional setting: unstandardized orthonormal Legendre basis
\begin{table}[htbp]
\centering
\tiny
\caption{%
\textbf{Panel A}: Proportion of times the mean exchangeability assumption was
rejected across different parameter settings in the high-dimensional setting, based on the fixed-dimension,
linear projection (LP) test, the generalized projection (GP) tests, and the
conditional independence test of \citet{racine2006testing}.
\textbf{Panel B}: Proportion of times the IV compatibility condition was rejected
across different parameter settings in the high-dimensional setting, based on the LP test and the GP tests.
For the GP test, results correspond to the unstandardized tests using the
orthonormal Legendre basis. $J^*$ denotes the number of basis functions for each
covariate. Scenario I corresponds to the null hypothesis case; Scenario II
corresponds to a weak nonlinear misalignment; Scenario III corresponds to a weak
linear misalignment; Scenario IV corresponds to a strong nonlinear misalignment;
Scenario V corresponds to a strong linear misalignment.
}
\label{tab:sim_high_legendre_unstd}
\renewcommand{\arraystretch}{0.85}

\resizebox{0.90\textwidth}{!}{%
\begin{tabular}{cclccccc}
\toprule
& \begin{tabular}[c]{@{}c@{}}Sample \\ size\end{tabular}
& \begin{tabular}[c]{@{}c@{}}Testing \\ method\end{tabular}
& \begin{tabular}[c]{@{}c@{}}Scenario \\ I\end{tabular}
& \begin{tabular}[c]{@{}c@{}}Scenario \\ II\end{tabular}
& \begin{tabular}[c]{@{}c@{}}Scenario \\ III\end{tabular}
& \begin{tabular}[c]{@{}c@{}}Scenario \\ IV\end{tabular}
& \begin{tabular}[c]{@{}c@{}}Scenario \\ V\end{tabular} \\
\midrule

\multirow{24}{*}{Panel A} & \multirow{6}{*}{250} & LP test & 0.018 & 0.022 & 0.046 & 0.034 & 0.096 \\
 &  & GP test ($J^*=4$) & 0.003 & 0.002 & 0.006 & 0.003 & 0.006 \\
 &  & GP test ($J^*=8$) & 0.001 & 0.000 & 0.000 & 0.000 & 0.001 \\
 &  & GP test ($J^*=16$) & 0.000 & 0.000 & 0.000 & 0.000 & 0.000 \\
 &  & GP test (Combined) & 0.001 & 0.000 & 0.000 & 0.000 & 0.000 \\
 &  & Racine & 0.172 & 0.163 & 0.331 & 0.330 & 0.654 \\

\cmidrule{2-8}

 & \multirow{6}{*}{500} & LP test & 0.063 & 0.055 & 0.200 & 0.159 & 0.629 \\
 &  & GP test ($J^*=4$) & 0.014 & 0.028 & 0.022 & 0.157 & 0.106 \\
 &  & GP test ($J^*=8$) & 0.003 & 0.008 & 0.009 & 0.062 & 0.031 \\
 &  & GP test ($J^*=16$) & 0.001 & 0.002 & 0.002 & 0.014 & 0.006 \\
 &  & GP test (Combined) & 0.001 & 0.003 & 0.007 & 0.050 & 0.025 \\
 &  & Racine & 0.137 & 0.152 & 0.392 & 0.386 & 0.905 \\

\cmidrule{2-8}

 & \multirow{6}{*}{1000} & LP test & 0.066 & 0.077 & 0.452 & 0.364 & 0.974 \\
 &  & GP test ($J^*=5$) & 0.018 & 0.132 & 0.070 & 0.795 & 0.676 \\
 &  & GP test ($J^*=10$) & 0.008 & 0.057 & 0.034 & 0.541 & 0.414 \\
 &  & GP test ($J^*=20$) & 0.001 & 0.012 & 0.007 & 0.212 & 0.160 \\
 &  & GP test (Combined) & 0.002 & 0.038 & 0.022 & 0.571 & 0.446 \\
 &  & Racine & 0.112 & 0.108 & 0.613 & 0.807 & 0.998 \\

\cmidrule{2-8}

 & \multirow{6}{*}{1500} & LP test & 0.076 & 0.064 & 0.690 & 0.594 & 1.000 \\
 &  & GP test ($J^*=5$) & 0.039 & 0.359 & 0.175 & 0.979 & 0.960 \\
 &  & GP test ($J^*=10$) & 0.021 & 0.204 & 0.090 & 0.924 & 0.869 \\
 &  & GP test ($J^*=20$) & 0.002 & 0.077 & 0.037 & 0.725 & 0.598 \\
 &  & GP test (Combined) & 0.008 & 0.159 & 0.063 & 0.933 & 0.888 \\
 &  & Racine & 0.086 & 0.106 & 0.794 & 0.704 & 1.000 \\

\midrule

\multirow{22}{*}{Panel B} & \multirow{5}{*}{1000} & LP test & 0.008 & 0.002 & 0.004 & 0.001 & 0.005 \\
 &  & GP test ($J^*=5$) & 0.001 & 0.001 & 0.003 & 0.001 & 0.005 \\
 &  & GP test ($J^*=10$) & 0.000 & 0.000 & 0.000 & 0.000 & 0.000 \\
 &  & GP test ($J^*=20$) & 0.000 & 0.000 & 0.000 & 0.000 & 0.000 \\
 &  & GP test (Combined) & 0.000 & 0.000 & 0.000 & 0.000 & 0.000 \\

\cmidrule{2-8}

 & \multirow{5}{*}{2000} & LP test & 0.010 & 0.019 & 0.028 & 0.020 & 0.071 \\
 &  & GP test ($J^*=5$) & 0.023 & 0.026 & 0.018 & 0.060 & 0.044 \\
 &  & GP test ($J^*=10$) & 0.013 & 0.010 & 0.018 & 0.035 & 0.025 \\
 &  & GP test ($J^*=20$) & 0.010 & 0.004 & 0.005 & 0.020 & 0.009 \\
 &  & GP test (Combined) & 0.006 & 0.005 & 0.006 & 0.016 & 0.011 \\

\cmidrule{2-8}

 & \multirow{6}{*}{3000} & LP test & 0.019 & 0.023 & 0.050 & 0.056 & 0.203 \\
 &  & GP test ($J^*=5$) & 0.037 & 0.051 & 0.049 & 0.212 & 0.136 \\
 &  & GP test ($J^*=10$) & 0.026 & 0.037 & 0.035 & 0.143 & 0.100 \\
 &  & GP test ($J^*=20$) & 0.018 & 0.030 & 0.018 & 0.087 & 0.052 \\
 &  & GP test ($J^*=40$) & 0.012 & 0.012 & 0.010 & 0.040 & 0.016 \\
 &  & GP test (Combined) & 0.011 & 0.023 & 0.013 & 0.075 & 0.048 \\

\cmidrule{2-8}

 & \multirow{6}{*}{5000} & LP test & 0.030 & 0.031 & 0.132 & 0.114 & 0.582 \\
 &  & GP test ($J^*=5$) & 0.029 & 0.098 & 0.068 & 0.645 & 0.477 \\
 &  & GP test ($J^*=10$) & 0.036 & 0.090 & 0.060 & 0.517 & 0.341 \\
 &  & GP test ($J^*=20$) & 0.035 & 0.065 & 0.042 & 0.338 & 0.221 \\
 &  & GP test ($J^*=40$) & 0.024 & 0.039 & 0.032 & 0.163 & 0.103 \\
 &  & GP test (Combined) & 0.014 & 0.039 & 0.031 & 0.414 & 0.236 \\

\bottomrule
\end{tabular}%
}
\end{table}

\begin{table}[!htbp]
\centering
\caption{Computation time for the testing methods in the usual and high-dimensional settings, in seconds.}
\label{tab:method-timing}
\small
\setlength{\tabcolsep}{3pt}
\begin{tabular}{@{}crlrrrr@{\hspace{12pt}}rrrr@{}}
\toprule
& & & \multicolumn{4}{c}{Usual setting ($d=2$)} & \multicolumn{4}{c}{High-dimensional setting ($d=10$)} \\
\cmidrule(lr){4-7}\cmidrule(l){8-11}
Example & $n$ & Method & Mean & Median & Q25 & Q75 & Mean & Median & Q25 & Q75 \\
\midrule
1 & 250 & GP test (Fourier) & 0.120 & 0.101 & 0.098 & 0.123 & 0.271 & 0.243 & 0.237 & 0.261 \\
 &  & GP test (Legendre) & 0.129 & 0.101 & 0.098 & 0.111 & 0.278 & 0.243 & 0.236 & 0.340 \\
 &  & LP test & 0.004 & 0.001 & 0.001 & 0.002 & 0.004 & 0.001 & 0.001 & 0.002 \\
 &  & Racine test & 1.889 & 1.869 & 1.759 & 1.999 & 15.450 & 15.219 & 13.806 & 16.807 \\
\addlinespace
 & 500 & GP test (Fourier) & 0.105 & 0.099 & 0.096 & 0.100 & 0.265 & 0.239 & 0.235 & 0.246 \\
 &  & GP test (Legendre) & 0.103 & 0.099 & 0.095 & 0.101 & 0.256 & 0.242 & 0.237 & 0.248 \\
 &  & LP test & 0.001 & 0.001 & 0.001 & 0.001 & 0.002 & 0.001 & 0.001 & 0.002 \\
 &  & Racine test & 4.972 & 4.941 & 4.606 & 5.302 & 53.927 & 53.263 & 49.248 & 57.797 \\
\addlinespace
 & 1000 & GP test (Fourier) & 0.130 & 0.119 & 0.117 & 0.121 & 0.335 & 0.296 & 0.289 & 0.308 \\
 &  & GP test (Legendre) & 0.126 & 0.119 & 0.116 & 0.121 & 0.317 & 0.296 & 0.290 & 0.303 \\
 &  & LP test & 0.001 & 0.001 & 0.001 & 0.002 & 0.002 & 0.002 & 0.002 & 0.002 \\
 &  & Racine test & 16.643 & 16.546 & 15.526 & 17.663 & 199.346 & 197.871 & 186.010 & 211.060 \\
\addlinespace
 & 1500 & GP test (Fourier) & 0.135 & 0.120 & 0.118 & 0.122 & 0.337 & 0.299 & 0.293 & 0.311 \\
 &  & GP test (Legendre) & 0.129 & 0.119 & 0.117 & 0.122 & 0.325 & 0.298 & 0.293 & 0.306 \\
 &  & LP test & 0.002 & 0.002 & 0.001 & 0.002 & 0.003 & 0.002 & 0.002 & 0.002 \\
 &  & Racine test & 35.693 & 35.396 & 33.298 & 37.816 & 432.259 & 428.036 & 406.174 & 453.781 \\
\midrule
2 & 1000 & GP test (Fourier) & 0.123 & 0.118 & 0.115 & 0.121 & 0.300 & 0.287 & 0.283 & 0.294 \\
 &  & GP test (Legendre) & 0.124 & 0.118 & 0.115 & 0.122 & 0.295 & 0.287 & 0.282 & 0.293 \\
 &  & LP test & 0.002 & 0.002 & 0.001 & 0.002 & 0.002 & 0.002 & 0.002 & 0.002 \\
\addlinespace
 & 2000 & GP test (Fourier) & 0.126 & 0.120 & 0.118 & 0.124 & 0.305 & 0.295 & 0.290 & 0.301 \\
 &  & GP test (Legendre) & 0.126 & 0.120 & 0.117 & 0.124 & 0.300 & 0.294 & 0.289 & 0.300 \\
 &  & LP test & 0.002 & 0.002 & 0.002 & 0.002 & 0.003 & 0.003 & 0.003 & 0.003 \\
\addlinespace
 & 3000 & GP test (Fourier) & 0.261 & 0.251 & 0.246 & 0.256 & 0.641 & 0.622 & 0.612 & 0.638 \\
 &  & GP test (Legendre) & 0.259 & 0.249 & 0.244 & 0.255 & 0.630 & 0.618 & 0.608 & 0.633 \\
 &  & LP test & 0.003 & 0.002 & 0.002 & 0.003 & 0.004 & 0.003 & 0.003 & 0.004 \\
\addlinespace
 & 5000 & GP test (Fourier) & 0.276 & 0.264 & 0.259 & 0.271 & 0.680 & 0.663 & 0.652 & 0.678 \\
 &  & GP test (Legendre) & 0.271 & 0.262 & 0.256 & 0.269 & 0.666 & 0.657 & 0.646 & 0.671 \\
 &  & LP test & 0.003 & 0.003 & 0.003 & 0.003 & 0.005 & 0.005 & 0.004 & 0.005 \\
\bottomrule
\end{tabular}

\medskip
\begin{minipage}{0.98\linewidth}
\footnotesize
\textit{Note:} The usual and high-dimensional settings contain two and ten covariates, respectively.
For each example and sample size, summaries pool times across five simulation settings and
1,000 replicates per setting in the usual setting (5,000 timing observations), and 999 replicates
per setting in the high-dimensional setting (4,995 timing observations).
One high-dimensional replicate is excluded following a SuperLearner fitting failure.
Each timing observation represents one method block in one setting and one replicate;
times are not summed over settings. Q25 and Q75 are the 25th and 75th percentiles.
Method times exclude nuisance estimation. For each GP basis, the timed block includes basis
construction, all reported values of $J$, both standardized and unstandardized tests,
eigenvalue computation and 10,000 weighted chi-square draws per value of $J$ for the
unstandardized tests, and both combined tests.
\end{minipage}
\end{table}

\begin{table}[!htbp]
\centering
\caption{Computation time for nuisance estimation in the usual and high-dimensional settings, in seconds.}
\label{tab:nuisance-timing}
\small
\setlength{\tabcolsep}{4pt}
\begin{tabular}{@{}crrrrr@{\hspace{12pt}}rrrr@{}}
\toprule
& & \multicolumn{4}{c}{Usual setting ($d=2$)} & \multicolumn{4}{c}{High-dimensional setting ($d=10$)} \\
\cmidrule(lr){3-6}\cmidrule(l){7-10}
Example & $n$ & Mean & Median & Q25 & Q75 & Mean & Median & Q25 & Q75 \\
\midrule
1 & 250 & 0.466 & 0.442 & 0.418 & 0.494 & 0.646 & 0.625 & 0.593 & 0.684 \\
 & 500 & 0.834 & 0.808 & 0.776 & 0.855 & 1.175 & 1.154 & 1.107 & 1.223 \\
 & 1000 & 1.766 & 1.769 & 1.651 & 1.853 & 2.467 & 2.464 & 2.358 & 2.569 \\
 & 1500 & 2.883 & 2.867 & 2.728 & 3.033 & 3.955 & 3.934 & 3.797 & 4.093 \\
\midrule
2 & 1000 & 18.853 & 18.562 & 17.910 & 19.586 & 29.448 & 29.221 & 28.572 & 30.215 \\
 & 2000 & 40.823 & 40.502 & 39.719 & 41.718 & 65.222 & 64.937 & 63.997 & 66.300 \\
 & 3000 & 64.981 & 64.824 & 63.870 & 65.933 & 104.112 & 103.786 & 102.535 & 105.522 \\
 & 5000 & 119.966 & 119.782 & 118.039 & 121.675 & 190.896 & 190.221 & 188.330 & 192.952 \\
\bottomrule
\end{tabular}

\medskip
\begin{minipage}{0.98\linewidth}
\footnotesize
\textit{Note:} The usual and high-dimensional settings contain two and ten covariates, respectively.
Times cover SuperLearner fitting and cross-fitted pseudo-outcome construction.
For each example and sample size, summaries pool times across five simulation settings and
1,000 replicates per setting in the usual setting.
One high-dimensional replicate is excluded following a SuperLearner fitting failure.
There is one timing observation per setting and replicate, counted once across all testing methods.
Q25 and Q75 are the 25th and 75th percentiles.
\end{minipage}
\end{table}}
\end{document}